\documentclass[a4paper,notitlepage,10pt]{article}
\usepackage[sectionbib]{natbib}
\usepackage{amsmath,mathtools}
\usepackage{amssymb,amsthm} 
\usepackage{geometry}
\usepackage{setspace}
\usepackage{enumitem}
\usepackage{dsfont}
\usepackage{algorithm,algpseudocode,float}
\usepackage{lipsum}
\usepackage{graphicx}
\usepackage{multirow}
\usepackage{booktabs}
\usepackage{caption}
\usepackage{xr}
\usepackage{longtable,tabularx}
\usepackage{threeparttable}
\usepackage{adjustbox}
\usepackage{siunitx}
\usepackage[mathscr]{euscript}
\usepackage{array}
\usepackage{makecell}
\usepackage{url}

\usepackage{dcolumn}
\newcolumntype{d}{D{.}{.}{-1}} 

\allowdisplaybreaks[1]     
\usepackage{array}
\usepackage{xr}

\makeatletter
\newcommand*{\addFileDependency}[1]{%
  \typeout{(#1)}
  \@addtofilelist{#1}
  \IfFileExists{#1}{}{\typeout{No file #1.}}
}
\makeatother

\newcommand{\bm}{\mathbf}
\newcommand{\bbm}{\boldsymbol}

\newcommand{\op}{\mathrm{op}}
\newcommand{\F}{\mathrm{F}}
\newcommand{\tr}{\mathrm{tr}}

\newcommand{\vect}{\mathrm{vec}}
\newcommand{\argmin}{\mathrm{argmin}}
\newcommand{\argmax}{\mathrm{argmax}}
\newcommand{\rank}{\mathrm{rank}}
\newcommand{\tucrank}{\mathrm{Tucrank}}
\newcommand{\cm}[1]{\mbox{\boldmath$\mathscr{#1}$}}

\makeatletter
\newenvironment{breakablealgorithm}
{
		\begin{center}
			\refstepcounter{algorithm}
			\hrule height.8pt depth0pt \kern2pt
			\renewcommand{\caption}[2][\relax]{
				{\raggedright\textbf{\ALG@name~\thealgorithm} ##2\par}%
				\ifx\relax##1\relax 
				\addcontentsline{loa}{algorithm}{\protect\numberline{\thealgorithm}##2}%
				\else 
				\addcontentsline{loa}{algorithm}{\protect\numberline{\thealgorithm}##1}%
				\fi
				\kern2pt\hrule\kern2pt
			}
		}{
		\kern2pt\hrule\relax
	\end{center}
}
\makeatother

\usepackage[
  colorlinks,
  linkcolor=blue,
  citecolor=blue,
  filecolor=magenta,
  urlcolor=cyan,
  pageanchor=true   
]{hyperref}

\makeatletter
\renewcommand*{\l@section}[2]{%
  \addpenalty{-\@highpenalty}%
  \addvspace{0.7em}%
  \@dottedtocline{1}{1em}{2em}{\bfseries #1}{#2}}
\renewcommand*{\l@subsection}[2]{%
  \@dottedtocline{2}{3em}{2em}{#1}{#2}}
\makeatother

\newtheorem{assumption}{Assumption}
\newtheorem{definition}{Definition}

\newtheorem{theorem}{Theorem}
\newtheorem{lemma}{Lemma}

\title{\vspace{-2cm}Personalized Federated Learning for Tensor Regression}
\author{Kejun Chen, Xianqi Wei and Qianqian Zhu\thanks{Address for correspondence: Qianqian Zhu, School of Statistics and Data Science, Institute of Big Data Research, Shanghai University of Finance and Economics, Shanghai, China. Email: zhu.qianqian@mail.shufe.edu.cn}  \\ \vspace{-0.3cm}\textit{Shanghai University of Finance and Economics}}

\begin{document}

\setlength{\parindent}{16pt}

\maketitle

\begin{abstract}
	The growing availability of tensor-valued data across multiple institutions creates opportunities for collaborative analysis, but also raises challenges related to data privacy, high dimensionality, and client heterogeneity. This paper introduces a personalized federated tensor regression framework that addresses all three simultaneously. Each client's coefficient tensor is decomposed into a globally shared low-Tucker-rank component and a locally sparse deviation, estimated via a two-stage privacy-preserving procedure. We establish finite-sample upper bounds and minimax lower bounds that quantify the privacy-accuracy trade-off, and prove the consistency of the supporting initialization and rank-selection steps. Simulation studies confirm that the federated approach improves estimation and prediction over purely local methods, especially when per-client data are scarce, and an MRI-based ADHD study illustrates its strong performance under real privacy constraints.
\end{abstract}

\textit{Keywords}: tensor regression; federated learning; differential privacy; high-dimensional estimation; non-asymptotic theory.

\newpage



\section{Introduction}\label{sec:Introduction}

Recent advances in data collection and scientific measurement have made rich and high-dimensional datasets increasingly common in modern applications. In many fields, such as neuroimaging and medical imaging, observations are naturally recorded as multiway arrays indexed by spatial locations, imaging modalities, or other structured features. These tensor-valued data contain useful information for scientific discovery and clinical decision-making. For example, imaging data can be used to predict disease severity or other clinical outcomes \citep{hou2015hierarchical,guhaniyogi2017bayesian}. Tensor regression provides a natural framework for modeling the relationship between tensor-valued covariates and responses.

Tensor-valued data contain rich multiway information that is typically lost if the data are vectorized for standard regression analysis \citep{seber2003linear,negahban2009unified}. To fully exploit this structure, tensor regression directly models the relationship between the tensor-valued response and predictors, leading to several well-studied sub-frameworks: scalar-on-tensor regression \citep{zhou2013tensor,li2018tucker}, vector-on-tensor regression \citep{ding2018matrix,pfeiffer2021least}, and tensor-on-tensor regression \citep{lock2018tensor,luo2024tensor}. However, the high dimensionality of tensor data makes unconstrained estimation impractical. Consequently, a central theme in the literature is to impose low-dimensional structural assumptions on the coefficient tensor, with the most common choices being low-rankness, sparsity, or a combination of both \citep{li2017parsimonious,sun2017store,garvesh2019convex}.

In practice, however, tensor-valued data are often collected across multiple sites, and the sample size available at each individual site can be limited. In the federated learning framework, we regard each participating site as a client. For instance, a single hospital or imaging center may only have a small number of subjects with available medical images. When the tensor dimension is large relative to the local sample size, relying solely on site-specific data may be insufficient for stable estimation and reliable prediction. A natural remedy is to borrow information across related sites. Federated learning provides a useful framework for this purpose by allowing clients to collaboratively estimate a model while retaining their raw data locally \citep{konecny2015federated,mcmahan2017communication}. In a standard federated procedure, each client computes model parameters or gradients from its local data and communicates these updates to a central server, which aggregates them to update the global model \citep{kairouz2021advances}.

Although federated learning avoids direct raw-data pooling, the communicated model updates may still reveal sensitive information about local records. Decentralization alone therefore does not provide a formal privacy guarantee \citep{geyer2017differentially,kaissis2020secure}. Several mechanisms have been developed to protect information during distributed computation, including secure multiparty computation \citep{yao1982protocols}, homomorphic encryption \citep{gentry2009fully}, and secure aggregation \citep{bonawitz2017practical}. These cryptographic mechanisms protect local inputs or client updates during computation and aggregation, but generally do not limit the information that may be revealed by the final model or other released outputs. Differential privacy (DP) provides a complementary form of protection by formally and quantitatively limiting the influence of any individual record on released results \citep{dwork2006calibrating,dwork2014algorithmic}. In this paper, we adopt differential privacy by adding carefully calibrated random noise to the communicated updates during federated learning \citep{abadi2016deep,dong2022gaussian}.

Beyond privacy concerns, multi-site tensor data are often highly heterogeneous. Differences in scanners, acquisition protocols, populations, or measurement procedures can introduce systematic site effects into imaging variables and their relationships with clinical outcomes \citep{fortin2018harmonization,yamashita2019harmonization}. Because the coefficient tensor captures this association, such site effects naturally lead to client-specific coefficient tensors, making a single shared tensor overly restrictive in federated settings \citep{smith2017federated,li2020federated}.

Recent studies have extended scalar-on-tensor and tensor-on-tensor regression to federated settings \citep{konyar2024federated,zhang2024federated}. These methods facilitate decentralized tensor analysis under data-sharing constraints, but they do not jointly provide formal privacy protection, heterogeneity accommodation, and theoretical guarantees. To address these gaps, we develop a personalized federated learning framework for tensor regression that integrates formal differential privacy, personalized modeling of client heterogeneity, and finite-sample theoretical guarantees within a unified approach.
Our main contributions are threefold.
\begin{enumerate}
    \item[(i)] We introduce a personalized federated framework for tensor regression that simultaneously addresses high-dimensionality, differential privacy and client heterogeneity. By decomposing each coefficient tensor into a shared low Tucker-rank component and a client-specific weakly sparse component, the framework captures common low-dimensional structure across clients while preserving local deviations. This construction provides a flexible and general approach for modeling tensor-valued responses and/or predictors in distributed environments.

    \item[(ii)] We develop two-stage estimation algorithms for both the federated setting and a single-client benchmark. In the federated regime, the shared component is first learned via differentially private gradient updates; subsequently, each client refines its own sparse deviation locally. The single-client procedure follows an analogous two-stage scheme without privacy protection. To enable stable implementation, we further provide an initialization estimator and a Tucker-rank selection strategy.

    \item[(iii)] We establish theoretical guarantees for both federated and single-client settings. For the federated procedure, we derive non-asymptotic upper bounds and matching minimax lower bounds, confirming rate optimality. Parallel upper and lower bounds are provided for the single-client case, allowing us to characterize when the statistical benefit of cross-client information sharing outweighs the additional cost of privacy protection. In addition, we establish convergence of the initialization step and consistency of the rank selection criterion.
\end{enumerate}

The rest of this paper is organized as follows. Section~\ref{sec:Preliminaries-and-Problem-setup} introduces notation, tensor algebra, and the model setup. Section~\ref{sec:two-stage-est} presents the proposed two-stage estimation procedures for federated and single-client learning. Section~\ref{sec:Implementary_issues} discusses implementation issues, including initialization, Tucker-rank selection, and tuning-parameter selection. Section~\ref{subsec:Theory} provides theoretical analysis for the proposed estimators. Sections~\ref{sec:Simulation} and \ref{sec:RealData} contain simulation studies and an empirical application, respectively. Additional notation, algorithmic details, implementation procedures, auxiliary theoretical results, and proofs are provided in the Supplementary Material. The dataset and computer programs for the analysis in Section \ref{sec:RealData} are available at \url{https://github.com/weishenqi/PFTR}.

\section{Preliminaries and Problem Setup}\label{sec:Preliminaries-and-Problem-setup}

\subsection{Notation and Tensor Algebra}

Vectors and matrices are denoted by boldface lowercase and uppercase letters, respectively (e.g., $\bbm a$, $\bbm \xi$, $\bm A$, $\bbm \Delta$). 
For a matrix $\bm A \in \mathbb R^{m \times n}$, denote its transpose, Frobenius norm, and nuclear norm by $\bm A^\top$, $\|\bm A\|_\F$, and $\|\bm A\|_*$, respectively. For $\nu \in [1,\infty)$, the $\ell_\nu$ norm of $\bm A$ is defined as $\|\bm A\|_\nu =( \sum_{i=1}^m \sum_{j=1}^n |A_{ij}|^\nu)^{1/\nu}$; the same formula for $\nu \in (0,1)$ defines the entrywise $\ell_\nu$ quasi-norm. When $\nu=0$, $\|\bm A\|_0=\#\{(i,j): A_{ij} \neq 0\}$ is the number of nonzero entries of $\bm A$.
For sequences $\{x_n\}$ and $\{y_n\}$, we write $x_n\gtrsim y_n$ if there exists a constant $C>0$ such that $x_n\geq C y_n$ for all $n$, and $x_n\asymp y_n$ if both $x_n\gtrsim y_n$ and $y_n\gtrsim x_n$ hold. For any positive integer $n$, let $[n] = \{1,2,\dots,n\}$.

For mode-$\ell$ tensor $\cm{M}\in\mathbb R^{p_1\times p_2\times\cdots\times p_\ell}$ and $S \subset [\ell]$, its multi-mode matricization $\cm{M}_{[S]}$ is a $\prod_{j\in S}p_j\times \prod_{j\notin S}p_j$ matrix with the $(i,j)$-th entry mapped from the $(i_1,i_2,\ldots,i_\ell)$-th entry of $\cm{M}$, where
\[
  i = 1 + \sum_{s\in S}(i_s-1)I_s \ \text{and} \ j = 1 + \sum_{s\notin S}(i_s-1)J_s \ \text{with}\ I_s = \prod_{\substack{m\in S\\m<s}}p_m \ \text{and} \ J_s = \prod_{\substack{m\notin S\\m<s}}p_m.
\]
For notational simplicity, write $\cm{M}_{[s]}=\cm{M}_{[\{s\}]}$ for the mode-$s$ unfolding of $\cm{M}$. The Tucker rank of $\cm{M}$ is defined as $\tucrank(\cm{M})=(r_1,\ldots,r_\ell)$ with $r_s=\mathrm{rank}(\cm{M}_{[s]})$ for $s\in[\ell]$. 
Under the above indexing convention, it holds that $\cm{M}_{[S]} = \cm{M}_{[S^c]}^\top$, where $S^c=[\ell]\setminus S$.
For any two tensors $\cm{X}\in\mathbb R^{q_1\times\cdots\times q_m\times p_1\times\cdots\times p_d}$ and $\cm{Y}\in\mathbb R^{p_1\times\cdots\times p_d}$, their generalized inner product $\langle \cm{X},\cm{Y}\rangle$ is defined as a mode-$m$ tensor in $\mathbb R^{q_1\times\cdots\times q_m}$ with entries $\langle \cm{X},\cm{Y}\rangle_{i_1,\ldots,i_m} = \sum_{\ell_1=1}^{p_1}\cdots\sum_{\ell_d=1}^{p_d} \cm{X}_{i_1,\ldots,i_m,\ell_1,\ldots,\ell_d}\,\cm{Y}_{\ell_1,\ldots,\ell_d}$ for $(i_1,\ldots,i_m)\in[q_1]\times \cdots \times [q_m]$.
Similarly, for any two tensors $\cm{X}\in\mathbb R^{q_1\times\cdots\times q_m}$ and $\cm{Y}\in\mathbb R^{p_1\times\cdots\times p_d}$, their tensor outer product $\cm X\circ\cm Y$ is defined as a mode-$(d+m)$ tensor in $\mathbb R^{q_1\times\cdots\times q_m\times p_1\times\cdots\times p_d}$ with entries $(\cm X\circ\cm Y)_{i_1,\ldots,i_m,\ell_1,\ldots,\ell_d} = \cm X_{i_1,\ldots,i_m}\,\cm Y_{\ell_1,\ldots,\ell_d}$ for $(i_1,\ldots,i_m)\in[q_1]\times\cdots\times[q_m]$ and $(\ell_1,\ldots,\ell_d)\in[p_1]\times\cdots\times[p_d]$.
For any tensor $\cm{M}$, its Frobenius norm is defined as $\|\cm{M}\|_{\F} = \sqrt{\langle \cm{M},\cm{M}\rangle}$. For any $\nu\in[0,\infty)$, the entrywise $\ell_\nu$ norm (or quasi-norm when $\nu\in[0,1)$) of $\cm M$ is identical to the $\ell_\nu$ norm of any matricization of $\cm M$. That is, $\|\cm M\|_\nu = \|\cm M_{[S]}\|_\nu$ for any nonempty subset $S\subset[\ell]$.

\subsection{Model Settings}\label{sec:Setting}
Suppose there are $K$ clients, indexed by $k\in[K]$. Client $k$ holds a local dataset $\mathsf D_k=\{(\cm{X}_{k,i},\cm{Y}_{k,i})\}_{i=1}^{n_k}$ with sample size $n_k$, where $\cm{X}_{k,i}\in\mathbb R^{p_1\times\cdots\times p_d}$ is a mode-$d$ covariate tensor and $\cm{Y}_{k,i}\in\mathbb R^{q_1\times\cdots\times q_m}$ is a mode-$m$ response tensor. For each fixed client $k\in[K]$, we assume that $(\cm X_{k,i},\cm Y_{k,i})$ are independent and identically distributed ($i.i.d.$) realizations generated from the following client-specific tensor regression model: 
\begin{align}\label{eq:model}
  \cm{Y}_{k,i} = \bigl\langle \cm{A}_k,\cm{X}_{k,i}\bigr\rangle + \cm{E}_{k,i},\quad k\in[K],\ \ i\in[n_k], 
\end{align}
where $\cm{A}_k\in \mathbb{R}^{q_1\times\cdots\times q_m\times p_1\times\cdots\times p_d}$ is the unknown coefficient tensor for client $k$, and $\cm{E}_{k,i}\in\mathbb R^{q_1\times\cdots\times q_m}$ is a mean-zero random noise tensor, which is $i.i.d.$ across indices $i$. Furthermore, the noise sequences $\{\cm{E}_{k,i}\}_{i=1}^{n_k}$ are assumed to be independent across different clients $k\in[K]$.

To characterize the structural similarity among heterogeneous clients, we assume that each coefficient tensor $\cm{A}_k$ can be decomposed as follows:
\begin{align}\label{eq:decomposition}
  \cm{A}_k = \cm{A}_0 + \cm{B}_k, \quad k\in[K],
\end{align}
where $\cm{A}_0$ is a global component shared by all clients, and $\cm{B}_k$ is a client-specific deviation. To address the high dimensionality of $\cm{A}_k$, we further assume that $\cm{A}_0$ admits a low Tucker-rank structure and each $\cm{B}_k$ is sparse. 
It is worth noting that $\cm A_0$ and $\cm B_k$ are not uniquely identifiable. To alleviate this issue, we introduce the weak identifiability condition in Assumption~\ref{assump:weak-identifiability} to ensure the decomposition remains meaningful without requiring strict uniqueness.

The model in \eqref{eq:model} defines a unified regression framework whose specific form is determined by the mode dimensions of the predictor and response tensors. Assuming the nondegenerate setting where $p_j>1$ for all $j\in[d]$ and $q_j>1$ for all $j\in[m]$, the framework subsumes several classical regression models. In particular, it reduces to regression with multivariate responses and covariates \citep{seber2003linear,montgomery2021introduction} when $(d,m)=(1,1)$, to regression with multivariate responses and matrix-valued covariates \citep{ding2018matrix,pfeiffer2021least} when $(d,m)=(2,1)$, and to matrix-on-matrix regression \citep{lock2018tensor} when $(d,m)=(2,2)$.
The framework also naturally accommodates scalar response scenarios. Specifically, it includes scalar-on-vector regression, which corresponds to the ordinary linear model \citep{seber2003linear,montgomery2021introduction}; scalar-on-matrix regression, which reduces to trace regression and is fundamental to low-rank matrix estimation and matrix completion \citep{koltchinskii2011nuclear,negahban2012restricted}; and higher-order scalar-on-tensor regression, a model widely adopted in neuroimaging and medical imaging studies \citep{zhou2013tensor,li2018tucker}.

\section{Two-Stage Estimation Procedure}\label{sec:two-stage-est}

To fit model \eqref{eq:model} with the decomposition \eqref{eq:decomposition}, we proceed in two stages. First, we estimate the shared low Tucker-rank tensor $\cm A_0$ as a global representation. 
Given this shared component, we then estimate the sparse deviation tensor $\cm{B}_k$ for each client and obtain an estimate for the personalized coefficient tensor $\cm{A}_k$.
Sections \ref{sec:fed-learning} and \ref{sec:single-client-learning} describe how this two-stage strategy is tailored to federated and single-client settings, respectively.

\subsection{Federated Learning}\label{sec:fed-learning}
For federated learning, Stage-I learns the shared component via differentially private federated procedure, whereas Stage-II performs local refinement using the shared estimate from Stage-I. 

\textbf{Stage-I: Differentially Private Representation Learning.} To separate the shared component $\cm A_0$ from client-specific deviations $\cm B_k$ in the decomposition \eqref{eq:decomposition}, we temporarily treat $\cm B_k$ as nuisance parameters. 
Under \eqref{eq:decomposition}, the original model \eqref{eq:model} can be rewritten as
\begin{align}\label{eq:reformulated-model}
  \cm Y_{k,i} = \langle \cm A_0,\cm X_{k,i}\rangle + \cm E^{\dagger}_{k,i},
  \ \text{with} \
  \cm E^{\dagger}_{k,i} = \langle \cm B_k,\cm X_{k,i}\rangle + \cm E_{k,i}.
\end{align}
Based on \eqref{eq:reformulated-model}, we define the local least-squares loss
$\ell_k(\cm A) = (2n_k)^{-1}\sum_{i=1}^{n_k}\|\cm Y_{k,i}-\langle \cm A,\cm X_{k,i}\rangle\|_{\F}^2$.
For privacy protection and sensitivity control, we construct a truncated version of the local gradient by clipping both the covariate tensors and the residual tensors. For client $k$ at iteration $t$, the truncated gradient of $\ell_k(\cm A)$ is given by
\[
\cm G_{\mathscr A,k}^{\vee}(\cm A) = \frac{1}{n_k}\sum_{i=1}^{n_k}\cm R_{k,i}^{\vee}(\cm A;\tau_{\mathscr E,k}^{(t)})\circ \cm X_{k,i}^{\vee}(\tau_{\mathscr X,k}),
\]
where the clipped covariate tensor and the clipped residual tensor are respectively defined by
\[
\cm X_{k,i}^{\vee}(\tau_{\mathscr X,k}) = \frac{\tau_{\mathscr X,k}\wedge\|\cm X_{k,i}\|_\F}{\|\cm X_{k,i}\|_\F}\cm X_{k,i}, \;\;\text{and}\;\;
  \cm R_{k,i}^{\vee}(\cm A;\tau_{\mathscr E,k}^{(t)})
  =
  \frac{\tau_{\mathscr E,k}^{(t)}\wedge\|\langle\cm A,\cm X_{k,i}\rangle-\cm Y_{k,i}\|_\F}
  {\|\langle\cm A,\cm X_{k,i}\rangle-\cm Y_{k,i}\|_\F}
  (\langle\cm A,\cm X_{k,i}\rangle-\cm Y_{k,i}).
\]
Here, the truncation levels $\tau_{\mathscr X,k}>0$ and $\tau_{\mathscr E,k}^{(t)}>0$ control the maximal allowed Frobenius norm of the covariate tensor and the residual tensor, respectively, and are crucial in calibrating the sensitivity of the gradient for subsequent privacy perturbation; see Section \ref{sec:tuning-param-selection} for further discussion.

Denote by $\cm A_0^{(t)}$ the estimate of $\cm A_0$ at iteration $t$. The initialization $\cm A_0^{(0)}$ is specified in Section~\ref{sec:alg-ini}.
At each iteration $t$, client $k$ computes the truncated local gradient $\cm G_{\mathscr A,k}^{\vee}(\cm A_0^{(t)})$ and adds an $i.i.d.$ Gaussian noise tensor $\cm W_{\mathscr A,k}^{(t)}$ to protect the local data. The noisy truncated gradient is then projected onto the tangent space $\mathcal T_{\bbm r}(\cm A_0^{(t)})$ of the Tucker-rank manifold, which keeps the update direction compatible with the prescribed low-rank structure:
\begin{align}\label{eq:private-gradient}
  \widetilde{\cm G}_{\mathscr A,k}^{(t)}
  =
  \mathcal P_{\mathcal T_{\bbm r}(\boldsymbol{\mathscr{A}}_0^{(t)})}
  \bigl(
  \cm G_{\mathscr A,k}^{\vee}(\cm A_0^{(t)})+\cm W_{\mathscr A,k}^{(t)}
  \bigr),
  \ \text{with} \ 
  [\cm W_{\mathscr A,k}^{(t)}]_{i_1,\ldots,i_{d+m}}\overset{i.i.d.}{\sim}\mathscr N\bigl(0,(\sigma_k^{(t)})^2\bigr).
\end{align}
Here, the projection operator $\mathcal P_{\mathcal T_{\bbm r}(\boldsymbol{\mathscr{A}}_0^{(t)})}(\cdot)$ denotes the orthogonal projection onto this tangent space; its explicit form is provided in Section~\ref{sec:basic-notation} of the Supplementary Material. The noise variance $(\sigma_k^{(t)})^2$ is calibrated according to the desired privacy parameters $(\varepsilon,\delta)$; see Theorem~\ref{thm:federated_representation_error} for details. The Tucker rank $\bbm r$ of $\cm A_0$ is a key structural parameter, and its selection is discussed in Section~\ref{sec:tucker-rank-selection}.
The server then collects $\widetilde{\cm G}_{\mathscr A,k}^{(t)}$ from all clients and aggregates them using sample-size weights $n_k/n$ with $n=\sum_{k=1}^K n_k$, which ensures that clients with larger local datasets contribute proportionally more to the global update.
Based on the aggregated gradient, the server performs a gradient descent step followed by a retraction onto the Tucker-rank-$\bbm r$ manifold:
\begin{align}\label{eq:rgd-update}
  \cm A_0^{(t+1)}
  =
  \mathcal R_{\bbm r}\Bigl(
  \cm A_0^{(t)}-\eta_{\mathscr A}\sum_{k=1}^K\frac{n_k}{n}\widetilde{\cm G}_{\mathscr A,k}^{(t)}
  \Bigr),
\end{align}
with step size $\eta_{\mathscr A}>0$. The retraction operator $\mathcal R_{\bbm r}(\cdot)$ ensures that the updated iterate remains on the Tucker-rank-$\bbm r$ manifold, thereby enabling efficient Riemannian-style updates. Following \citet{luo2024tensor}, the operator $\mathcal R_{\bbm r}(\cdot)$ can be implemented by either truncated high-order singular value decomposition (T-HOSVD) or sequentially truncated high-order singular value decomposition (ST-HOSVD); see Algorithms~\ref{alg:thosvd} and \ref{alg:sthosvd} in Section~\ref{sec:THOSVD_STHOSVD} of the Supplementary Material for details.
After $T_g$ iterations, we set $\widehat{\cm A}_0=\cm A_0^{(T_g)}$ as the estimated shared tensor.

\textbf{Stage-II: Personalized Refinement.} Given the shared low Tucker-rank tensor estimate $\widehat{\cm A}_0$ from Stage-I, Stage-II recovers client-specific deviations $\cm B_k$ and the corresponding personalized coefficient tensors $\cm A_k$. The refinement is performed on each client using only its own local data. For each client $k\in[K]$, a natural estimator for the sparse deviation $\cm B_k$ is obtained by solving
\begin{align}\label{eq:stage2-opt}
  \widehat{\cm B}_k^{\mathrm{opt}}
  \in
  \argmin_{\boldsymbol{\mathscr B}_k}
  \left\{
  \frac{1}{2n_k}\sum_{i=1}^{n_k}
  \bigl\|\cm Y_{k,i}-\langle \widehat{\cm A}_0+\cm B_k,\cm X_{k,i}\rangle\bigr\|_{\F}^2
  +\omega_k\|\cm B_k\|_1
  \right\},
  \ \text{for} \ k\in[K].
\end{align}
Here, $\omega_k>0$ are client-specific regularization parameters that control the sparsity level. Note that the optimization problem in \eqref{eq:stage2-opt} is convex and can therefore be solved efficiently by first-order methods, we adopt a FISTA-based algorithm to compute the estimator.
Conditioned on $\widehat{\cm A}_0$, define the empirical loss for client $k\in[K]$ as $\mathcal L_k(\cm B_k) =(2n_k)^{-1}\sum_{i=1}^{n_k}
\bigl\|\cm Y_{k,i}-\langle \widehat{\cm A}_0+\cm B_k,\cm X_{k,i}\rangle\bigr\|_{\F}^2$, with gradient tensor
\[
  \cm G_{\mathscr B,k}(\cm B_k)
  =
  \frac{1}{n_k}\sum_{i=1}^{n_k}
  \Bigl(
  \langle \widehat{\cm A}_0+\cm B_k,\cm X_{k,i}\rangle-\cm Y_{k,i}
  \Bigr)\circ \cm X_{k,i}.
\]
Following the standard FISTA algorithm \citep{beck2009fast}, each iteration of the optimization proceeds in two steps. First, starting from the extrapolated point $\cm U_k^{(t)}$, we take a gradient step with respect to $\mathcal L_k(\cm B_k)$ and apply the entrywise soft-thresholding operator to enforce sparsity:
\begin{align}\label{eq:personalized-estimator}
  \cm B_k^{(t+1)}
  =
  \operatorname{Soft}_{\eta_{\mathscr B,k}\omega_k}
  \Bigl(
  \cm U_k^{(t)}-\eta_{\mathscr B,k}\cm G_{\mathscr B,k}(\cm U_k^{(t)})
  \Bigr),
\end{align}
where $[\operatorname{Soft}_{\tau}(\cm T)]_{i_1,\ldots,i_{d+m}} = \operatorname{sign}([\cm T]_{i_1,\ldots,i_{d+m}})(|[\cm T]_{i_1,\ldots,i_{d+m}}|-\tau \vee 0)$. The momentum parameter and the extrapolated iterate are then updated following the standard FISTA extrapolation rule:
\begin{align}\label{eq:momentum-param-extra-iter}
  q_{k,t+1}
  =
  \frac{1+\sqrt{1+4q_{k,t}^2}}{2},
  \ \text{and} \ 
  \cm U_k^{(t+1)}
  =
  \cm B_k^{(t+1)}
  + 
  \frac{q_{k,t}-1}{q_{k,t+1}}
  \bigl(
  \cm B_k^{(t+1)}-\cm B_k^{(t)}
  \bigr).
\end{align}
After $T_l^{(k)}$ local iterations, the personalized deviation estimator is given by $\widehat{\cm B}_k=\cm B_k^{(T_l^{(k)})}$, and the personalized coefficient tensor is constructed as $\widehat{\cm A}_k = \widehat{\cm A}_0+\widehat{\cm B}_k$ for $k\in[K]$. 

The proposed two-stage estimation procedure is summarized in Algorithm~\ref{alg:fed_two_stage}. The detailed configuration of the inputs is provided in Section \ref{sec:Implementary_issues}.

\vspace{4mm}

\begin{breakablealgorithm}
\caption{Federated Learning for Personalized Tensor Regression}
\label{alg:fed_two_stage}
\begingroup
\begin{algorithmic}[1]
  \State \textbf{Input:} Local datasets $\mathsf D_k=\{(\cm X_{k,i},\cm Y_{k,i})\}_{i=1}^{n_k}$ for $k\in[K]$; global and local iteration numbers $T_g$ and $\{T_l^{(k)}\}_{k=1}^K$; Tucker rank $\bbm r$; Stage-I step size $\eta_{\mathscr A}$ and Stage-II step sizes $\{\eta_{\mathscr B,k}\}_{k=1}^K$; Gaussian noise levels $\{\sigma_k^{(t)}\}$; sparsity penalties $\{\omega_k\}_{k=1}^K$; initialization $\cm A_0^{(0)}$.

  \State \textbf{Stage-I:} For $t=0,\cdots,T_g-1$, each client computes, privatizes, and projects its truncated gradient as in \eqref{eq:private-gradient}.
  \State The server aggregates the privatized gradients and updates $\cm A_0^{(t+1)}$ by \eqref{eq:rgd-update}.
  \State Set $\widehat{\cm A}_0\gets\cm A_0^{(T_g)}$.

  \State \textbf{Stage-II:} For each client $k\in[K]$ in parallel, initialize $\cm B_k^{(0)}=\cm U_k^{(0)}=\bm 0$ and $q_{k,0}=1$.
  \State For $t=0,\cdots,T_l^{(k)}-1$, update $\cm B_k^{(t+1)}$ by \eqref{eq:personalized-estimator}, and update $q_{k,t+1}$ and $\cm U_k^{(t+1)}$ by \eqref{eq:momentum-param-extra-iter}.
  \State Set $\widehat{\cm B}_k\gets\cm B_k^{(T_l^{(k)})}$ and $\widehat{\cm A}_k\gets\widehat{\cm A}_0+\widehat{\cm B}_k$ for all $k\in[K]$.

  \State \textbf{Output:} $\widehat{\cm A}_0$ and $\{\widehat{\cm A}_k\}_{k=1}^K$.
\end{algorithmic}
\endgroup
\end{breakablealgorithm}

\subsection{Single-Client Learning}\label{sec:single-client-learning}
As a benchmark, we also consider a single-client learning strategy that relies solely on local data. In this setting, we adopt the same two-stage procedure, with the key differences that all computations are performed locally and no privacy-preserving mechanisms are required.

\textbf{Stage-I: Local Representation Learning.} For a fixed client $k\in[K]$, Stage-I estimates the shared low Tucker-rank component using only its own dataset $\mathsf D_k$. As in the federated setting, treating $\cm B_k$ in \eqref{eq:decomposition} as a nuisance parameter leads to the local least-squares loss $\ell_k(\cm A) = (2n_k)^{-1}\sum_{i=1}^{n_k}\|\cm Y_{k,i}-\langle \cm A,\cm X_{k,i}\rangle\|_{\F}^2$,
and its gradient $\cm G_{\mathscr A,k}(\cm A) = n_k^{-1}\sum_{i=1}^{n_k}(\langle \cm A,\cm X_{k,i}\rangle-\cm Y_{k,i})\circ \cm X_{k,i}$ for each client $k\in[K]$.
Starting from an initial value $\cm A_{0,k}^{\mathrm{loc},(0)}$, whose selection is specified in Section~\ref{sec:alg-ini}, client $k$ updates the iterate by Riemannian gradient descent over the Tucker-rank-$\bbm r$ manifold,
\begin{align}\label{eq:local-rgd-update}
  \cm A_{0,k}^{\mathrm{loc},(t+1)}
  =
  \mathcal R_{\bbm r}\Bigl(
  \cm A_{0,k}^{\mathrm{loc},(t)}
  -
  \eta_{\mathscr A}\,\mathcal P_{\mathcal T_{\bbm r}(\boldsymbol{\mathscr A}_{0,k}^{\mathrm{loc},(t)})}
  \bigl(\cm G_{\mathscr A,k}^{\mathrm{loc},(t)}\bigr)
  \Bigr),
\end{align}
and sets $\widehat{\cm A}_{0,k}^{\mathrm{loc}}=\cm A_{0,k}^{\mathrm{loc},(T_g^{(k)})}$ after $T_g^{(k)}$ iterations.

\textbf{Stage-II: Personalized Refinement.}  
Given the locally estimated shared component $\widehat{\cm A}_{0,k}^{\mathrm{loc}}$, Stage-II proceeds exactly as its federated counterpart. Specifically, one replaces the federated shared component $\widehat{\cm A}_{0}$ in \eqref{eq:personalized-estimator} with $\widehat{\cm A}_{0,k}^{\mathrm{loc}}$ and solves for the sparse deviation $\widehat{\cm B}_k^{\mathrm{loc}}$, yielding the personalized estimator $\widehat{\cm A}_k^{\mathrm{loc}} = \widehat{\cm A}_{0,k}^{\mathrm{loc}} + \widehat{\cm B}_k^{\mathrm{loc}}$. Algorithm~\ref{alg:single_two_stage} in Section~\ref{sec:two-stage-single-est} of the Supplementary Material summarizes the complete two-stage single-client procedure.

\section{Implementary Issues}\label{sec:Implementary_issues}


\subsection{Algorithm Initialization}\label{sec:alg-ini}
Both the federated and single-client two-stage procedures require a suitable initial value for the low Tucker-rank component $\cm A_0$. We begin by constructing a preliminary one-stage estimator for each client using its local dataset.

Recall that each $\cm A_k$ in \eqref{eq:decomposition} decomposes into a low Tucker-rank shared part and a weakly sparse deviation. A natural approach is to estimate both components jointly via a penalized least-squares formulation. While the sum of mode-wise nuclear norms, $\sum_{s=1}^{d+m}\|\cm A_{[s]}\|_*$, is a classical convex surrogate for the Tucker rank, it is known to perform poorly in recovering the underlying Tucker-rank structure \citep{mu2014square,garvesh2019convex}. To circumvent this issue, we exploit an equivalent vectorized representation of model \eqref{eq:model}:
\begin{align}\label{eq:model-vectorized}
  \vect(\cm Y_{k,i}) = \left((\cm A_0)_{[S_X]} + (\cm B_k)_{{[S_X]}}\right) \vect(\cm X_{k,i}) + \vect(\cm E_{k,i}),
\end{align}
where $S_X=[d+m]\setminus [m]$ denotes the set of modes corresponding to the predictor tensors $\{\cm X_{k,i}\}$.
Under this representation, $(\cm A_0)_{[S_X]}$ is of low rank because $\cm A_0$ has low Tucker rank $\bbm r$ and $\rank((\cm A_0)_{[S_X]}) \leq \min(r_{\bbm q},r_{\bbm p})$ with $r_{\bbm q}=\prod_{\ell=1}^{m}r_\ell$ and $r_{\bbm p}=\prod_{\ell=1}^{d}r_{m+\ell}$ (see Lemma~4 in \cite{mu2014square}). Meanwhile, $(\cm B_k)_{[S_X]}$ inherits the weak sparsity of $\cm B_k$, as matricization preserves sparsity patterns.
Consequently, it is natural to regularize the nuclear norm of $(\cm A_0)_{[S_X]}$ and the $\ell_1$ norm of $\cm B_k$ (i.e., $(\cm B_k)_{{[S_X]}}$). This leads to the preliminary one-stage estimator for client $k$:
\begin{align}\label{eq:alg-ini}
  (\widetilde{\cm A}_{0,k},\widetilde{\cm B}_k)
  =
  \argmin_{\boldsymbol{\mathscr A}_{0,k},\boldsymbol{\mathscr B}_k}
  \Biggl\{
  \frac{1}{2n_k}\sum_{i=1}^{n_k}
  \bigl\|\cm Y_{k,i}-\langle \cm A_{0,k}+\cm B_k,\cm X_{k,i}\rangle\bigr\|_{\F}^2
  + \lambda_k\|(\cm A_{0,k})_{[S_X]}\|_*
  + \varpi_k\|\cm B_k\|_1
  \Biggr\},
\end{align}
subject to $\|(\cm B_{k})_{[S_X]}\|_{\op}\leq \zeta$, where $\lambda_k$ and $\varpi_k$ are tuning parameters that control the respective regularization strengths. The operator-norm bound imposes a mild weak-identifiability condition that prevents the sparse deviation from spuriously acting as a low-rank component \citep{agarwal2012noisy}.
This convex optimization problem can be solved efficiently via the alternating direction method of multipliers \citep{neal2011distributed}; see Algorithm \ref{alg:admm-single} in Section~\ref{sec:ADMM} of the Supplementary Material for details.

For the single-client setting, the $\widetilde{\cm A}_{0,k}$ directly serves as the local initialization, i.e., $\cm A_{0,k}^{\mathrm{loc},(0)}=\widetilde{\cm A}_{0,k}$.
In the federated setting, we initialize the shared component $\cm A_0$ with the estimate from the client having the largest sample size, i.e., $\cm A_0^{(0)}=\widetilde{\cm A}_{0,k^\star}$ with $k^\star=\argmax_{k\in[K]} n_k$.

\subsection{Tucker-Rank Selection}\label{sec:tucker-rank-selection}
The Tucker rank of $\cm A_0\in\mathbb R^{q_1\times\cdots\times q_m\times p_1\times\cdots\times p_d}$ is $\bbm r=(r_1,\ldots,r_{d+m})$ with $r_s=\rank((\cm A_0)_{[s]})$ and $(\cm A_0)_{[s]}$ being the mode-$s$ matricization for $s\in[d+m]$. Hence, selecting the Tucker rank of $\cm A_0$ reduces to estimating the ranks of its mode-wise matricizations. We adopt the ridge-type ratio criterion \citep{xia2015consistently} to determine these matrix ranks.

For the single-client case, the Tucker rank $\bbm r$ is selected based on the preliminary one-stage estimator $\widetilde{\cm A}_{0,k}$ in Section~\ref{sec:alg-ini}. Let $\widetilde{\sigma}_{k,s,r}$ denote the $r$-th singular value of the mode-$s$ matricization of $\widetilde{\cm A}_{0,k}$.
Using the local dataset $\mathsf D_k$ of client $k$, we estimate the rank along mode $s$ (i.e. $r_s$) by the following ridge-type ratio criterion 
\begin{align}\label{eq:Ridge-type estimator}
  \widehat r_{k,s}
  =
  \argmin_{1\leq r\leq \bar r_s-1}
  \frac{\widetilde{\sigma}_{k,s,r+1}+c_s(\bbm p,\bbm q,n_k)}
       {\widetilde{\sigma}_{k,s,r}+c_s(\bbm p,\bbm q,n_k)},
  \ \text{for}\  s\in[d+m],
\end{align}
where $\bar r_s$ is a prescribed upper bound for $r_s$, and $c_s(\bbm p,\bbm q,n_k)$ is a ridge-type penalty that stabilizes the ratio, with $\bbm p=(p_1,\ldots,p_d)$ and $\bbm q=(q_1,\ldots,q_m)$ for simplicity. Motivated by the consistency conditions for ridge-type ratio estimators in Theorem~\ref{thm:rank_selection}, we recommend setting $c_s(\bbm p,\bbm q,n_k)=10^{-2}\sqrt{pq/n_k}$, where $p=\prod_{\ell=1}^d p_\ell$ and $q=\prod_{\ell=1}^m q_\ell$. The resulting Tucker-rank estimator for client $k$ is given by $\widehat{\bbm r}_k^{\mathrm{loc}}=(\widehat r_{k,1},\ldots,\widehat r_{k,d+m})$.

In the federated setting, we apply \eqref{eq:Ridge-type estimator} to each preliminary estimator $\widetilde{\cm A}_{0,k}$, thereby obtaining client-specific candidate ranks $\{\widehat r_{k,s}\}_{k=1}^K$ for each mode $s\in[d+m]$. To robustify the rank selection against unstable local estimates, we aggregate these candidates by taking the mode across clients: $\widehat r_s^{\mathrm{fed}} = \operatorname{mode}(\{\widehat r_{1,s},\ldots,\widehat r_{K,s}\})$ for $s\in[d+m]$.
The Tucker-rank estimate for the federated procedure is given by $\widehat{\bbm r}^{\mathrm{fed}} = (\widehat r_1^{\mathrm{fed}},\ldots,\widehat r_{d+m}^{\mathrm{fed}})$.

\subsection{Tuning Parameter Selection}\label{sec:tuning-param-selection}
The proposed federated two-stage procedure needs to choose the tuning parameters including the step size $\eta_{\mathscr A}$ in Stage-I, and the regularization parameters $\{\omega_k\}_{k=1}^K$ in Stage-II.
Before selecting these parameters, we first fix hyper-parameters such as the truncation levels $\tau_{\mathscr X,k}$ and $\tau_{\mathscr E,k}^{(t)}$, the standard deviation $\sigma_k^{(t)}$ of the Gaussian noise for privacy, the iteration numbers $T_g, T_l^{(k)}$, and the step sizes $\{\eta_{\mathscr B,k}\}_{k=1}^K$ in Stage-II, based on theoretical guidance and empirical stability. 

For the truncation levels, we set $\tau_{\mathscr X,k}$ to the maximum of $\{\|\cm X_{k,i}\|_{\F}\}_{i=1}^{n_k}$ for each client $k$. At global iteration $t$, we similarly set $\tau_{\mathscr E,k}^{(t)}$ to the maximum of the current residual norms $\{\|\langle\cm A_0^{(t)},\cm X_{k,i}\rangle-\cm Y_{k,i}\|_{\F}\}_{i=1}^{n_k}$. 
We also consider the empirical $0.90$th, $0.95$th, and $0.99$th quantiles as alternative truncation levels and find that the results are relatively insensitive to these choices. 
We therefore use the empirical maxima to avoid information loss.
To ensure that the Stage-I procedure satisfies the desired $(\varepsilon,\delta)$-DP guarantee, we set the standard deviation of the Gaussian noise as
\begin{align}\label{eq:noise-scale}
  \sigma_k^{(t)} = \frac{2\tau_{\mathscr E,k}^{(t)}\tau_{\mathscr X,k}}{n_k}\frac{T_g\sqrt{2\log(1.25T_g/\delta)}}{\varepsilon}.
\end{align}
Guided by Theorem~\ref{thm:federated_representation_error}, we set the number of global iterations to $T_g=\lceil 10\log(n)\rceil$ with $n=\sum_{k=1}^{K}n_k$, beyond which further iterations provide negligible improvement. Since FISTA iterations typically stabilize within about 100 steps, we set the number of local iterations to $T_l^{(k)}=200$.
Following standard practice for FISTA \citep{beck2009fast}, the step size $\eta_{\mathscr B,k}$ for client $k$ is taken as $\eta_{\mathscr B,k}=(2n_k^{-1}\|\sum_{i=1}^{n_k}\vect(\cm X_{k,i})\vect^\top(\cm X_{k,i})\|_{\op})^{-1}$.

With the truncation levels, privacy noise scale, iteration numbers, and local step sizes fixed, we select the step size $\eta_{\mathscr A}$ and regularization parameters $\{\omega_k\}_{k=1}^K$ by $V$-fold cross validation. For each client $k\in[K]$, the local dataset $\mathsf D_k$ is randomly split into training and validation folds $\mathsf D_{k,v}^{\mathrm{tr}}$ and $\mathsf D_{k,v}^{\mathrm{val}}$.
Given each candidate combination $(\eta_{\mathscr A},\{\omega_k\}_{k=1}^K)$ over a grid ranging from $0.001$ to $0.05$, we employ Algorithm \ref{alg:fed_two_stage} using the training sets $\{\mathsf D_{k,v}^{\mathrm{tr}}\}_{k=1}^K$ and evaluate the prediction performance via validation sets $\{\mathsf D_{k,v}^{\mathrm{val}}\}_{k=1}^K$. 
Specifically, we run the federated Stage-I to obtain the shared estimator $\widehat{\cm A}_{0}^{(-v)}(\eta_{\mathscr A})$ and perform the personalized Stage-II to obtain the client-specific deviation estimators $\{\widehat{\cm B}_{k}^{(-v)}(\widehat{\cm A}_{0}^{(-v)}(\eta_{\mathscr A}),\omega_k)\}_{k=1}^K$. 
Then the performance of each candidate is evaluated by the average cross-validated squared Frobenius prediction error $\mathrm{Error}_{\mathrm{cv}}=V^{-1}\sum_{v=1}^V\mathrm{Error}_{\mathrm{cv}}^{(v)}$, where
\[
  \mathrm{Error}_{\mathrm{cv}}^{(v)}
  =
  \frac{1}{\sum_{k=1}^K |\mathsf D_{k,v}^{\mathrm{val}}|}
  \sum_{k=1}^K
  \sum_{(\boldsymbol{\mathscr X}_{k,i},\boldsymbol{\mathscr Y}_{k,i})\in \mathsf D_{k,v}^{\mathrm{val}}}
  \left\|
  \cm Y_{k,i}
  -
  \left\langle
  \widehat{\cm A}_{0}^{(-v)}(\eta_{\mathscr A})+
  \widehat{\cm B}_{k}^{(-v)}(\widehat{\cm A}_{0}^{(-v)}(\eta_{\mathscr A}),\omega_k),
  \cm X_{k,i}
  \right\rangle
  \right\|_{\F}^2.
\]
The optimal tuning parameters are those that minimize $\mathrm{Error}_{\mathrm{cv}}$ over the prescribed grid. In our numerical studies, we adopt $V=5$ as a practical trade-off between bias, variance, and computational cost \citep{james2013introduction}.

The tuning of step sizes and regularization parameters for the single-client procedure in Section~\ref{sec:single-client-learning}, as well as the tuning involved in initialization in Section~\ref{sec:alg-ini}, follows the same validation-based principle. Hence detailed descriptions are deferred to Section~\ref{sec:tuning-single} of the Supplementary Material.

\section{Theoretical Analysis}\label{subsec:Theory}

We first specify the structural parameter space induced by the decomposition \eqref{eq:decomposition}. The shared component $\cm A_0^*$ is assumed to have Tucker rank $\bbm r=(r_1,\cdots,r_{d+m})$ with all $r_s$ fixed and independent of the tensor dimensions. The client-specific sparse deviations satisfy $\cm B_k^*\in\mathbb B_\nu$ for all $k\in[K]$, where, for $\nu\in[0,1)$,
\[
  \mathbb B_\nu = \left\{\cm B \in \mathbb{R}^{q_1\times\cdots\times q_m\times p_1\times\cdots\times p_d}: \mathds 1_{\{\nu=0\}}\|\cm B\|_0/s_0 + \mathds 1_{\{0<\nu<1\}}\|\cm B\|_\nu^\nu/s_\nu\leq 1\right\}.
\]
Under this sparse condition on the client-specific deviations, we define $h_{\mathscr B,k} = \|\cm B_k^*\|_{\infty}^{1-\nu/2}s_\nu^{1/2}$ for $k\in[K]$, which bounds the Frobenius norm via $\|\cm B_k^*\|_{\F}\leq h_{\mathscr B,k}$ for each client $k$. 
The largest deviation bound across clients is denoted by $\bar h_{\mathscr B}=\max_{k\in[K]}h_{\mathscr B,k}$.

The decomposition \eqref{eq:decomposition} is not automatically identifiable, because a sparse tensor may also have low-rank structure, and conversely a low-rank tensor may contain sparse patterns. Motivated by the equivalent vectorized representation \eqref{eq:model-vectorized}, we introduce the following weak-identifiability condition to resolve this ambiguity.

\begin{assumption}[Weak identifiability]\label{assump:weak-identifiability}
There exists a constant $\zeta>0$ such that
\[
  \max_{k\in[K]}\|(\cm B_k^*)_{[S_X]}\|_{\op}\leq \zeta.
\]
\end{assumption}

Assumption \ref{assump:weak-identifiability} restricts the sparse deviation from displaying a spuriously strong low-rank component along the matricization; see \cite{agarwal2012noisy} for analogous discussions.

Next, we impose regularity conditions on the covariates and noise tensors. These assumptions underpin the non-asymptotic error bounds developed in the subsequent analysis.

\begin{assumption}[Sub-Gaussian design]\label{assump:subg-design}
  The collections $\{\cm X_{k,i}\}_{i=1}^{n_k}$ are independent across clients $k\in[K]$, and for each client the observations $\{\cm X_{k,i}\}_{i=1}^{n_k}$ are i.i.d. across $i$. Moreover, 
  the vectorized design tensor admits the representation $\vect(\cm X_{k,i}) = \bbm\Sigma_{\mathscr X,k}^{1/2}\bbm\xi_{k,i}$, where $\bbm\Sigma_{\mathscr X,k}\in\mathbb R^{p\times p}$ with $p=\prod_{\ell=1}^d p_\ell$, and $\bbm\xi_{k,i}\in\mathbb R^p$ has $i.i.d.$ standardized sub-Gaussian entries with parameter $\sigma_{\mathscr X,k}$. That is, $\mathbb E(\bbm\xi_{k,i})=\bm 0$, $\mathrm{Cov}(\bbm\xi_{k,i})=\bm I_p$, and for every $j\in[p]$ and $\mu\in\mathbb R$, $\mathbb E\exp(\mu\xi_{k,i,j}) \leq \exp(\mu^2\sigma_{\mathscr X,k}^2/2)$.
\end{assumption}

\begin{assumption}[Conditionally sub-Gaussian noise]\label{assump:subg-noise}
  The collections $\{\cm{E}_{k,i}\}_{i=1}^{n_k}$ are independent across clients $k\in[K]$, and for each client the noise tensors $\{\cm{E}_{k,i}\}_{i=1}^{n_k}$ are $i.i.d.$ across $i$. Conditional on $\cm X_{k,i}$, the random vector $\vect(\cm E_{k,i})$ is mean-zero and sub-Gaussian with covariance proxy $\bbm\Sigma_{\mathscr E,k}$ and parameter $\sigma_{\mathscr E,k}$. That is, $\mathbb E[\vect(\cm E_{k,i})\mid \cm X_{k,i}]=\bm 0$ and, for any $\bbm u\in\mathbb R^q$ with $q=\prod_{\ell=1}^m q_\ell$ and $\mu\in\mathbb R$, $\mathbb E\left[\exp\left(\mu\langle \bbm u,\vect(\cm E_{k,i})\rangle\right)\mid \cm X_{k,i}\right] \leq \exp(\mu^2\sigma_{\mathscr E,k}^2\bbm u^\top\bbm\Sigma_{\mathscr E,k}\bbm u/2)$, where $\bbm\Sigma_{\mathscr E,k}\in\mathbb R^{q\times q}$.
\end{assumption}

Assumption~\ref{assump:subg-design} requires the design tensor $\cm X_{k,i}$ to satisfy a sub-Gaussian condition with client-specific covariance structure, a standard requirement in high-dimensional regression analysis \citep{negahban2009unified,cai2020semisupervised}.
Assumption~\ref{assump:subg-noise} further assumes that the noise tensor $\cm E_{k,i}$ is conditionally mean-zero and conditionally sub-Gaussian given the design tensor. Notably, this is weaker than the assumption of full independence between the design and noise adopted in many high-dimensional regression settings \citep{negahban2009unified,cai2020semisupervised}.

\subsection{Theory for Stage-I Representation Learning}\label{sec:stage1-theory}

We begin by recalling the standard notion of differential privacy, which quantifies the information leakage of the federated Stage-I procedure. Two local datasets $\mathsf D_k=\{(\cm X_{k,i},\cm Y_{k,i})\}_{i=1}^{n_k}$ and $\mathsf D_k'=\{(\cm X_{k,i}',\cm Y_{k,i}')\}_{i=1}^{n_k}$ are called \emph{adjacent} if they differ in exactly one observation pair. 

\begin{definition}[$(\varepsilon,\delta)$-differential privacy \citep{dwork2006calibrating}]\label{def:dp}
A randomized mechanism $\mathcal M$ satisfies $(\varepsilon,\delta)$-differential privacy if, for any two adjacent datasets $\mathsf D_k$ and $\mathsf D_k'$ and any measurable set $\mathcal O$ in the output space of $\mathcal M$,
\[
  \mathbb P(\mathcal M(\mathsf D_k)\in\mathcal O)
  \leq
  \exp(\varepsilon)\mathbb P(\mathcal M(\mathsf D_k')\in\mathcal O)+\delta.
\]
\end{definition}

In Definition~\ref{def:dp}, the datasets $\mathsf D_k$ and $\mathsf D_k'$ are treated as fixed inputs, and the probabilities are taken only over the randomness of the mechanism $\mathcal M$ \citep{lu2026versatile}. In our setting, this randomness stems exclusively from the Gaussian perturbations added during the federated updates, rather than from the sampling randomness of the observed data.

We now present the theoretical guarantees for Stage-I representation learning in Theorems \ref{thm:federated_representation_error} and \ref{thm:representation_single_client}. To simplify the exposition, we collect the relevant model and algorithmic quantities that will appear throughout the analysis. For each client $k$, let $\lambda_{\mathscr X,k}^{\min}$ and $\lambda_{\mathscr X,k}^{\max}$ denote the minimum and maximum eigenvalues of $\bbm\Sigma_{\mathscr X,k}$, respectively, and let $\lambda_{\mathscr E,k}^{\max}$ denote the maximum eigenvalue of $\bbm\Sigma_{\mathscr E,k}$. The corresponding global envelope quantities are defined as $\underline\lambda_{\mathscr X}^{\min}=\min_{k\in[K]}\lambda_{\mathscr X,k}^{\min}$ and $\bar\lambda_{\mathscr X}^{\max}=\max_{k\in[K]}\lambda_{\mathscr X,k}^{\max}$. We further introduce $\bar\kappa_{\mathscr X}=\bar\lambda_{\mathscr X}^{\max}/\underline\lambda_{\mathscr X}^{\min}$, which serves as a uniform upper bound on the client-specific design condition numbers. In addition, set $\bar\sigma_{\mathscr X}=\max_{k\in[K]}\sigma_{\mathscr X,k}$, $\bar\sigma_{\mathscr E}=\max_{k\in[K]}\sigma_{\mathscr E,k}$, and $\bar\lambda_{\mathscr E}^{\max}=\max_{k\in[K]}\lambda_{\mathscr E,k}^{\max}$. Let $\mu_{\mathscr A}^{\min}=\min_{s\in[d+m]}\sigma_{r_s}((\cm A_0^*)_{[s]})$ be the smallest nonzero singular value of the true low-rank component across all matricizations. Write $r_{\bbm q}=\prod_{\ell=1}^{m}r_\ell$ and $r_{\bbm p}=\prod_{\ell=1}^{d}r_{m+\ell}$ for the Tucker-rank-induced upper bounds on the response-side and predictor-side matricization ranks, respectively, and $df_{\bbm r} = \prod_{s=1}^{d+m}r_s + \sum_{s=1}^{m} (q_s-r_s)r_s + \sum_{s=1}^{d} (p_s-r_{m+s})r_{m+s}$ for the effective degrees of freedom of the Tucker-rank-$\bbm r$ component.
On the algorithmic side, recall that $\tau_{\mathscr X,k}$ and $\tau_{\mathscr E,k}^{(t)}$ are the truncation levels for the covariate tensor and the current residual tensor, respectively. We also fix the constant $c_{d,m}=1/\{4(\sqrt{d+m}+1)\}$. For clarity of presentation, the admissible Stage-I step-size interval $\mathcal I_{\mathscr A}^{\mathrm{fed}}$, the initialization radius $R_{\mathscr A}$, and the auxiliary constants $\mathsf M_{\mathscr A}^{\mathrm{fed}}$ and $\Gamma_{\mathscr A}^{\mathrm{fed}}$ are specified in Section~\ref{sec:theorems-const} of the Supplementary Material.

\begin{theorem}[Federated Stage-I representation learning]\label{thm:federated_representation_error}
For each client $k$ and iteration $t$, suppose Gaussian noise level $\sigma_k^{(t)}$ added to the transmitted gradient is calibrated according to \eqref{eq:noise-scale}. Then the federated Stage-I procedure satisfies $(\varepsilon,\delta)$-differential privacy for each client.
Moreover, suppose Assumptions~\ref{assump:weak-identifiability}--\ref{assump:subg-noise} hold for all clients, and $\bar\kappa_{\mathscr X}<(1+c_{d,m})/(1-c_{d,m})$. Take the Stage-I step size $\eta_{\mathscr A}\in\mathcal I_{\mathscr A}^{\mathrm{fed}}$. Assume the initialization satisfies $\|\cm A_0^{(0)}-\cm A_0^*\|_\F\leq R_{\mathscr A}$, the weak-identifiability level satisfies $\zeta \leq c_{d,m}R_{\mathscr A}/(\eta_{\mathscr A}\bar\lambda_{\mathscr X}^{\max}\sqrt{r_{\bbm q}+r_{\bbm p}})$, and the pooled sample size satisfies
\[
  n \gtrsim \mathsf M_{\mathscr A}^{\mathrm{fed}}df_{\bbm r} \vee \frac{\Gamma_{\mathscr A}^{\mathrm{fed}}}{\mu_{\mathscr A}^{\min}\bar\lambda_{\mathscr X}^{\max}}\frac{T_g\sqrt{\log(T_g/\delta)}}{\varepsilon}\sqrt{K(df_{\bbm r}+\log T_g)}\max_{0\leq t<T_g}\max_{k\in[K]}\tau_{\mathscr E,k}^{(t)}\tau_{\mathscr X,k}.
\]
Suppose $T_g\asymp\log n$, $\tau_{\mathscr X,k} \asymp \sigma_{\mathscr X,k}\sqrt{\lambda_{\mathscr X,k}^{\max}}\sqrt{p+\log n}$ and $\tau_{\mathscr E,k}^{(t)} \asymp (R_{\mathscr A}+h_{\mathscr B,k})\sigma_{\mathscr X,k}\sqrt{\lambda_{\mathscr X,k}^{\max}}\sqrt{p+\log n} + \sigma_{\mathscr E,k}\sqrt{\lambda_{\mathscr E,k}^{\max}}\sqrt{q+\log n}$. Then, with probability at least $1 - C\exp(-Cdf_{2\bbm r}) - C\exp(-C(df_{\bbm r}+\log \log n)) - C\exp(-C\log n)$,
\begin{align*}
  \|\widehat{\cm A}_0-\cm A_0^*\|_\F &\lesssim \left\{\bar\sigma_{\mathscr X}^2\bar h_B \vee \bar\sigma_{\mathscr X}\bar\sigma_{\mathscr E}\left(\frac{\bar\lambda_{\mathscr E}^{\max}}{\bar\lambda_{\mathscr X}^{\max}}\right)^{1/2}\right\}\sqrt{\frac{df_{\bbm r}}{n}} + \sqrt{r_{\bbm q}+r_{\bbm p}}\,\zeta\\
  &\quad+ \frac{\log n\sqrt{\log(\log n/\delta)}\sqrt{K(df_{\bbm r}+\log\log n)}\max_{0\leq t<T_g}\max_{k\in[K]}\tau_{\mathscr E,k}^{(t)}\tau_{\mathscr X,k}}{\bar\lambda_{\mathscr X}^{\max}\varepsilon n}.
\end{align*}
\end{theorem}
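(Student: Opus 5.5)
The proof has two parts: a privacy argument and a deterministic-plus-probabilistic error recursion for the Riemannian gradient iterates.

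\textbf{Privacy.} Fix client $k$ and iteration $t$. After clipping, each summand $\cm R_{k,i}^{\vee}\circ\cm X_{k,i}^{\vee}$ has Frobenius norm at most $\tau_{\mathscr E,k}^{(t)}\tau_{\mathscr X,k}$. Replacing one observation therefore changes $\cm G_{\mathscr A,k}^{\vee}(\cm A_0^{(t)})$ by at most $2\tau_{\mathscr E,k}^{(t)}\tau_{\mathscr X,k}/n_k$ in $\ell_2$-sensitivity. By the Gaussian mechanism, the noise level \eqref{eq:noise-scale} makes each round $(\varepsilon/T_g,\delta/T_g)$-DP. The tangent-space projection and server aggregation are post-processing of the released noisy gradient. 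Basic composition over $T_g$ rounds then gives $(\varepsilon,\delta)$-DP for client $k$.

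\textbf{Error recursion.} Write $\cm A^{(t)}=\cm A_0^{(t)}$ and $\cm\Delta^{(t)}=\cm A^{(t)}-\cm A_0^*$. Decompose the aggregated update as the population gradient step plus three perturbation terms:
\begin{itemize}
\item[(a)] the empirical-minus-population gradient at $\cm A_0^*$, driven by $\cm E_{k,i}$ and by $\langle\cm B_k^*,\cm X_{k,i}\rangle$;
\item[(b)] the bias $\sum_k (n_k/n)\bbm\Sigma_{\mathscr X,k}$ applied to $\cm B_k^*$, whose $[S_X]$ matricization is controlled by $\zeta$ through Assumption~\ref{assump:weak-identifiability};
\item[(c)] the truncation bias together with the privacy noise $\sum_k(n_k/n)\cm W_{\mathscr A,k}^{(t)}$.
\end{itemize}
Next, use a retraction perturbation bound for T-HOSVD/ST-HOSVD (as in \citet{luo2024tensor}): $\|\mathcal R_{\bbm r}(\cm A^{(t)}-\eta\cm G)-\cm A_0^*\|_\F\le(1+c\|\cm\Delta^{(t)}\|_\F/\mu_{\mathscr A}^{\min})\|\cm A^{(t)}-\eta\mathcal P_{\mathcal T}\cm G-\cm A_0^*\|_\F$, up to second-order terms. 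Combine this with restricted strong convexity and smoothness on the tangent space, i.e.\ eigenvalues of the pooled Gram operator lying in $[\underline\lambda_{\mathscr X}^{\min},\bar\lambda_{\mathscr X}^{\max}]$ uniformly over rank-$2\bbm r$ directions. This holds with probability $1-C\exp(-Cdf_{2\bbm r})$ once $n\gtrsim\mathsf M_{\mathscr A}^{\mathrm{fed}}df_{\bbm r}$, by an $\epsilon$-net argument over the Tucker manifold. Together these give a contraction
\[
\|\cm\Delta^{(t+1)}\|_\F\le(1-\rho)\|\cm\Delta^{(t)}\|_\F+C\eta\bigl(\xi_{\mathrm{stat}}+\bar\lambda_{\mathscr X}^{\max}\sqrt{r_{\bbm q}+r_{\bbm p}}\,\zeta+\xi^{(t)}_{\mathrm{priv}}\bigr).
\]
The condition $\bar\kappa_{\mathscr X}<(1+c_{d,m})/(1-c_{d,m})$ together with $\eta\in\mathcal I_{\mathscr A}^{\mathrm{fed}}$ makes $\rho>0$ after absorbing the factor $\sqrt{d+m}+1$ from projection and retraction. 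The bound on $\zeta$ and the sample-size condition then keep the iterates inside the radius $R_{\mathscr A}$ by induction.

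\textbf{Bounding the perturbations.} Each term is measured in the dual norm $\sup\{\langle\cdot,\cm M\rangle:\tucrank(\cm M)\le 2\bbm r,\ \|\cm M\|_\F=1\}$.
\begin{itemize}
\item The statistical term $\xi_{\mathrm{stat}}$ is a sum of sub-exponential products. A covering argument of size $\exp(Cdf_{\bbm r})$ gives $\{\bar\sigma_{\mathscr X}^2\bar h_{\mathscr B}\vee\bar\sigma_{\mathscr X}\bar\sigma_{\mathscr E}(\bar\lambda_{\mathscr E}^{\max}\bar\lambda_{\mathscr X}^{\max})^{1/2}\}\sqrt{df_{\bbm r}/n}$. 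Dividing by the curvature $\bar\lambda_{\mathscr X}^{\max}$ (after the step-size normalization) yields the stated first term.
\item The truncation bias vanishes with probability $1-C\exp(-C\log n)$. The stated choices of $\tau_{\mathscr X,k}$ and $\tau^{(t)}_{\mathscr E,k}$ dominate $\max_i\|\cm X_{k,i}\|_\F$ and the residual norms while the iterates stay in the $R_{\mathscr A}$-ball, so no clipping occurs.
\item For the privacy noise, $\sum_k(n_k/n)\sigma_k^{(t)}\cm W$ is Gaussian with variance $\sum_k(n_k/n)^2(\sigma_k^{(t)})^2\lesssim K\max_k(\tau\tau)^2T_g^2\log(T_g/\delta)/(\varepsilon^2n^2)$. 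Its tangent-space projection has Frobenius norm $\lesssim\sqrt{df_{\bbm r}+\log T_g}$ times the standard deviation, uniformly over $t\le T_g$ with probability $1-C\exp(-C(df_{\bbm r}+\log\log n))$.
\end{itemize}

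\textbf{Unrolling and the main obstacle.} Unroll the recursion over $T_g\asymp\log n$ steps so that $(1-\rho)^{T_g}R_{\mathscr A}\lesssim n^{-1/2}$, and sum the geometric series. This gives the three-term bound; the second sample-size condition ensures the privacy term is below $c\,\mu_{\mathscr A}^{\min}$, so the retraction bound remains valid. I expect the main obstacle to be the retraction step. One must show the nonlinear HOSVD retraction, applied after a projected step that includes the non-low-rank noise and the $\cm B_k^*$ bias, still contracts with constant $c_{d,m}$. I would try the first-order expansion of the mode-wise projectors and bound the second-order remainder by $\|\cdot\|_\F^2/\mu_{\mathscr A}^{\min}$.
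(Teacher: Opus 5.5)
Apart from the retraction step, your outline matches the paper's proof. That includes the same sensitivity and composition argument, pooled events uniform over $\mathbb S_{2\bbm r}$, inactive truncation on the $R_{\mathscr A}$-ball, and Hanson--Wright for the projected noise with $\dim\mathcal T_{\bbm r}(\cm A_0^{(t)})\le df_{\bbm r}$. The step you call the main obstacle is settled in the paper without any expansion. The quasi-projection property of T-HOSVD/ST-HOSVD (Lemma~\ref{lem:quasi-projection-thosvd-sthosvd}), with $\cm A_0^*$ as a feasible rank-$\bbm r$ comparison point, gives $\|\mathcal R_{\bbm r}(\cm Z)-\cm A_0^*\|_\F\le(\sqrt{d+m}+1)\|\cm Z-\cm A_0^*\|_\F$ for every $\cm Z$, and the rest is linear. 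With $\mathcal T=\mathcal T_{\bbm r}(\cm A_0^{(t)})$, $\cm Z-\cm A_0^*$ splits into four pieces:
\begin{itemize}
\item the normal part $\mathcal P_{\mathcal T^\perp}(\cm A_0^{(t)}-\cm A_0^*)$, of size at most $2(d+m)\|\cm A_0^{(t)}-\cm A_0^*\|_\F^2/\mu_{\mathscr A}^{\min}$ (Lemma~\ref{lem:normal-component-bound});
\item the tangent contraction;
\item the cross term feeding the normal part back into $\mathcal T$, whose population part is controlled by $(\bar\lambda_{\mathscr X}^{\max}-\underline\lambda_{\mathscr X}^{\min})/2$ via orthogonality and the shift $\bbm\Sigma_{\mathscr X,k}-c\bm I$;
\item the remainders.
\end{itemize}
The crude factor $\sqrt{d+m}+1$ is exactly what the hypotheses pay for. $\eta_{\mathscr A}\in\mathcal I_{\mathscr A}^{\mathrm{fed}}$ and $\bar\kappa_{\mathscr X}<(1+c_{d,m})/(1-c_{d,m})$ force the tangent factor below $c_{d,m}=1/\{4(\sqrt{d+m}+1)\}$, so $(\sqrt{d+m}+1)\rho_{\mathscr A}^{\mathrm{fed}}\le 1/4$. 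The choice $R_{\mathscr A}=\mu_{\mathscr A}^{\min}/(4\Gamma_{\mathscr A}^{\mathrm{fed}})$ makes the quadratic terms contribute another $1/4$, giving the contraction factor $1/2$.

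Your second-order route is also sound, and sharper. Every transmitted gradient, privacy noise included, is projected onto $\mathcal T$ before aggregation, so the step $\xi$ is tangent; the non-low-rank noise you worry about never reaches the retraction. The bound you need is the additive one, $\|\mathcal R_{\bbm r}(\cm A_0^{(t)}+\xi)-\cm A_0^{(t)}-\xi\|_\F\lesssim\|\xi\|_\F^2/\mu_{\mathscr A}^{\min}$ for $\|\xi\|_\F\le c\,\mu_{\mathscr A}^{\min}$. You can get it without expanding mode-wise projectors: apply the same quasi-projection lemma, but compare with the rank-$\bbm r$ tensor $(\cm S+\cm F)\times_{s}(\bm U_s+\bm U_{s,\perp}\bm D_s)$. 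This tensor is built from the tangent parametrization of $\xi$ at $\cm A_0^{(t)}=\cm S\times_s\bm U_s$, and differs from $\cm A_0^{(t)}+\xi$ only by terms of order $\|\xi\|_\F^2/\mu_{\mathscr A}^{\min}$. This removes the $\sqrt{d+m}+1$ loss, so any tangent factor $\rho<1$ suffices and the near-isotropy condition on $\bar\kappa_{\mathscr X}$ could be relaxed. The cost is tracking $(d+m)$-dependent constants so the induction closes inside the prescribed $R_{\mathscr A}$.

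The two mechanisms are alternatives, though. With the second-order bound there is no factor $\sqrt{d+m}+1$ to ``absorb'', and with the crude bound nothing needs expanding. In either route, a few points need tightening:
\begin{itemize}
\item You must bound the normal part and the cross term explicitly, since $\cm A_0^{(t)}-\cm A_0^*\notin\mathcal T$.
\item The factor $\sqrt{r_{\bbm q}+r_{\bbm p}}$ on $\zeta$ comes from the tangent-space nuclear-norm bound (Lemma~\ref{lem:tangent-space-nuclear-bound}), not from the rank-$2\bbm r$ dual norm.
\item The $\cm B_k^*$ fluctuation is $\bar\sigma_{\mathscr X}^2\bar\lambda_{\mathscr X}^{\max}\bar h_{\mathscr B}\sqrt{df_{\bbm r}/n}$, so the $\bar\lambda_{\mathscr X}^{\max}$ you divide by must appear there too.
\end{itemize}
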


We now interpret the conditions and the error bound in Theorem~\ref{thm:federated_representation_error}. The Gaussian noise level $\sigma_k^{(t)}$ is chosen to ensure that each transmitted gradient satisfies the Gaussian-mechanism privacy guarantee, and composition over $T_g$ iterations yields the claimed $(\varepsilon,\delta)$-differential privacy for each client. The Stage-I step-size condition $\eta_{\mathscr A}\in\mathcal I_{\mathscr A}^{\mathrm{fed}}$ and the upper bound on the federated design condition number $\bar\kappa_{\mathscr X}$ guarantee sufficient local curvature of the population loss around $\cm A_0^*$ and contractivity of the Riemannian gradient update. The initialization requirement $\|\cm A_0^{(0)}-\cm A_0^*\|_\F\leq R_{\mathscr A}$ places the initial iterate inside this region of contraction. 
The upper bound on $\zeta$ controls the bias induced by the non-identifiable sparse deviations, preventing them from dominating the low-rank update and pushing the iterates outside the contraction region. The pooled sample-size condition has two parts: the term $\mathsf M_{\mathscr A}^{\mathrm{fed}}df_{\bbm r}$ is the standard requirement in high-dimensional tensor regression, reflecting that the low Tucker-rank structure reduces the effective complexity from the ambient tensor dimension to the degrees of freedom $df_{\bbm r}$. The remainder ensures that the privacy perturbations accumulated over $T_g$ iterations are sufficiently small for every noisy Stage-I iterate to remain within the contraction region.

Under these conditions, Theorem~\ref{thm:federated_representation_error} establishes the estimation guarantee for the federated Stage-I estimator.
The resulting error bound has three components. The first term is the statistical error for estimating the shared low Tucker-rank component from the pooled sample. The second term $\sqrt{r_{\bbm q}+r_{\bbm p}}\,\zeta$ is the unavoidable cost of weak identifiability: without a strict separation between low-rank and sparse structures, a bias of this order is generally unavoidable, as confirmed by the minimax lower bound in Theorem~\ref{thm:joint-minimax-lower-bound}.
The third term is the privacy cost induced by the Gaussian perturbations injected into the transmitted gradients. This term scales as $1/\varepsilon$ with respect to the privacy budget $\varepsilon$ and only as $\sqrt{\log(\log n/\delta)}$ with respect to the failure probability $\delta$, reflecting the standard privacy--utility trade-off with a mild logarithmic dependence on $1/\delta$. Plugging in the theoretical orders of $\tau_{\mathscr X,k}$ and $\tau_{\mathscr E,k}^{(t)}$, the privacy cost is, up to logarithmic factors and model-dependent constants, of order $\sqrt{Kdf_{\bbm r}}(p+\sqrt{pq})/(\varepsilon n)$. Therefore, the sample size required to make the privacy overhead negligible needs only to exceed the dimension-reduced quantity $\varepsilon^{-1}\sqrt{Kdf_{\bm r}}(p+\sqrt{pq})$, rather than the full ambient dimension $pq$. This demonstrates that privacy can be achieved at a modest sample-size premium.

Before stating the single-client analogue, we introduce the corresponding client-specific notation. The same rank and signal-strength symbols $r_{\bbm q}$, $r_{\bbm p}$, $df_{\bbm r}$, and $\mu_{\mathscr A}^{\min}$ are reused. 
For a fixed client $k$, let $\kappa_{\mathscr X,k}=\lambda_{\mathscr X,k}^{\max}/\lambda_{\mathscr X,k}^{\min}$ denote the client-specific design condition number. The admissible Stage-I step-size interval $\mathcal I_{\mathscr A,k}$, the initialization radius $R_{\mathscr A,k}$, and the sample-size constant $\mathsf M_{\mathscr A,k}^{\mathrm{single}}$ are collected in Section~\ref{sec:theorems-const} of the Supplementary Material for readability.

\begin{theorem}[Single-client Stage-I representation learning]\label{thm:representation_single_client}
For a fixed client $k$, suppose Assumptions~\ref{assump:weak-identifiability}--\ref{assump:subg-noise} hold and $\kappa_{\mathscr X,k}<(1+c_{d,m})/(1-c_{d,m})$. Take the Stage-I step size $\eta_{\mathscr A,k}\in\mathcal I_{\mathscr A,k}$. Assume the initialization satisfies $\|\cm A_{0,k}^{\mathrm{loc},(0)}-\cm A_0^*\|_\F\leq R_{\mathscr A,k}$, the weak-identifiability level satisfies $\zeta \leq c_{d,m}R_{\mathscr A,k}/(\eta_{\mathscr A,k}\lambda_{\mathscr X,k}^{\max}\sqrt{r_{\bbm q}+r_{\bbm p}})$, and the local sample size satisfies $n_k \gtrsim \mathsf M_{\mathscr A,k}^{\mathrm{single}}df_{\bbm r}$. Suppose the number of local Stage-I iterations is $T_g^{(k)}\asymp\log n_k$. Then, with probability at least $1-C\exp(-Cdf_{\bbm r})$,
\[
  \|\widehat{\cm A}_{0,k}^{\mathrm{loc}}-\cm A_0^*\|_\F \lesssim \left\{\sigma_{\mathscr X,k}^2 h_{\mathscr B,k} \vee \sigma_{\mathscr E,k}\left(\frac{\lambda_{\mathscr E,k}^{\max}}{\lambda_{\mathscr X,k}^{\max}}\right)^{1/2}\right\}\sqrt{\frac{df_{\bbm r}}{n_k}} + \sqrt{r_{\bbm q}+r_{\bbm p}}\,\zeta.
\]
\end{theorem}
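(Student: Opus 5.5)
The plan is to follow the federated argument of Theorem~\ref{thm:federated_representation_error} with the privacy noise and cross-client aggregation removed. Write $\cm A^{(t)}=\cm A_{0,k}^{\mathrm{loc},(t)}$ and $\bbm\Delta^{(t)}=\cm A^{(t)}-\cm A_0^*$. By \eqref{eq:reformulated-model}, the local gradient at $\cm A^{(t)}$ splits into three pieces:
\[
\cm G_{\mathscr A,k}(\cm A^{(t)})=\widehat{\mathcal H}_k(\bbm\Delta^{(t)})-\widehat{\mathcal H}_k(\cm B_k^*)-\frac1{n_k}\sum_{i=1}^{n_k}\cm E_{k,i}\circ\cm X_{k,i},
\]
where $\widehat{\mathcal H}_k(\cm M)=n_k^{-1}\sum_i\langle\cm M,\cm X_{k,i}\rangle\circ\cm X_{k,i}$ is the empirical Hessian operator. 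Its population counterpart $\mathcal H_k$ acts as $\cm M_{[S_X]}\mapsto\cm M_{[S_X]}\bbm\Sigma_{\mathscr X,k}$. So $\mathcal H_k$ has spectrum in $[\lambda^{\min}_{\mathscr X,k},\lambda^{\max}_{\mathscr X,k}]$.

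\textbf{One-step contraction.} I will show that with high probability, uniformly in $t$,
\[
\|\bbm\Delta^{(t+1)}\|_\F\le\rho\|\bbm\Delta^{(t)}\|_\F+C\eta_{\mathscr A,k}(\mathsf e_{\rm stat}+\mathsf e_{\zeta}),\qquad\rho<1.
\]
The argument proceeds as follows.
\begin{itemize}
\item Use the quasi-optimality of T-HOSVD/ST-HOSVD: the retraction satisfies $\|\mathcal R_{\bbm r}(\cm Z)-\cm A_0^*\|_\F\le(1+\sqrt{d+m})\|\cm Z-\cm A_0^*\|_\F$. A sharper perturbation version (as in \citet{luo2024tensor}) is needed, and it holds as long as $\|\cm Z-\cm A_0^*\|_\F$ is small relative to $\mu^{\min}_{\mathscr A}$.
\item Write $\cm Z=\cm A^{(t)}-\eta\mathcal P_{\mathcal T}(\cm G)$ and split $\cm Z-\cm A_0^*$ into three terms:
\begin{itemize}
\item the deterministic part $(\mathcal I-\eta\mathcal P_{\mathcal T}\mathcal H_k)\bbm\Delta^{(t)}$;
\item the curvature remainder $\mathcal P_{\mathcal T}^\perp\bbm\Delta^{(t)}$, which is $O(\|\bbm\Delta^{(t)}\|_\F^2/\mu^{\min}_{\mathscr A})$;
\item the fluctuation terms.
\end{itemize}
\item The condition $\kappa_{\mathscr X,k}<(1+c_{d,m})/(1-c_{d,m})$ together with $\eta\in\mathcal I_{\mathscr A,k}$ gives $\|\mathcal I-\eta\mathcal H_k\|\le 1-\eta\lambda^{\min}$, and this is small enough that multiplying by the factor $(1+\sqrt{d+m})$ from the retraction still leaves $\rho<1$. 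This is where the constant $c_{d,m}$ comes from.
\end{itemize}

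\textbf{Fluctuation terms.} Each stochastic term is bounded in the dual norm over the set of tangent directions, i.e.\ tensors of Tucker rank at most $2\bbm r$ with unit Frobenius norm.
\begin{itemize}
\item \emph{Restricted isometry of $\widehat{\mathcal H}_k-\mathcal H_k$ on rank-$2\bbm r$ tensors.} This follows from an $\epsilon$-net over Tucker-rank-$2\bbm r$ tensors, whose metric entropy is $\lesssim df_{\bbm r}$, combined with Hanson–Wright/Bernstein bounds for the sub-Gaussian quadratic forms $\langle\cm M,\cm X\rangle^2$ under Assumption~\ref{assump:subg-design}. The deviation is $\lesssim\sigma_{\mathscr X,k}^2\lambda^{\max}_{\mathscr X,k}\sqrt{df_{\bbm r}/n_k}$. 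Once $n_k\gtrsim\mathsf M^{\rm single}_{\mathscr A,k}df_{\bbm r}$, this can be absorbed into $\rho$.
\item \emph{Noise cross term.} By Assumption~\ref{assump:subg-noise} and conditioning on the design, $\sup_{\cm M}\langle n_k^{-1}\sum_i\cm E_{k,i}\circ\cm X_{k,i},\cm M\rangle$ is sub-exponential. A net argument gives $\sigma_{\mathscr E,k}\sigma_{\mathscr X,k}(\lambda^{\max}_{\mathscr E,k}\lambda^{\max}_{\mathscr X,k})^{1/2}\sqrt{df_{\bbm r}/n_k}$.
\item \emph{Nuisance term $\widehat{\mathcal H}_k(\cm B_k^*)$.} Split it into the population part and the centered part.
\begin{itemize}
\item The population part is $\mathcal H_k(\cm B_k^*)$. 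Its projection onto a tangent space has matricization rank at most $r_{\bbm q}+r_{\bbm p}$, so its norm is at most $\sqrt{r_{\bbm q}+r_{\bbm p}}\,\|(\cm B_k^*)_{[S_X]}\bbm\Sigma_{\mathscr X,k}\|_{\op}\le\sqrt{r_{\bbm q}+r_{\bbm p}}\,\lambda^{\max}_{\mathscr X,k}\zeta$ by Assumption~\ref{assump:weak-identifiability}.
\item The centered part is handled as in the restricted isometry bullet, with $\|\cm B_k^*\|_\F\le h_{\mathscr B,k}$, giving $\sigma^2_{\mathscr X,k}\lambda^{\max}_{\mathscr X,k}h_{\mathscr B,k}\sqrt{df_{\bbm r}/n_k}$.
\end{itemize}
\end{itemize}
Dividing the stationary error by the contraction gap $\eta\lambda^{\max}$ yields exactly the displayed terms.

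\textbf{Induction and the main obstacle.} An induction on $t$ shows every iterate stays in the ball of radius $R_{\mathscr A,k}$. The upper bound on $\zeta$ ensures the bias term does not push iterates out, and the sample-size condition controls the stochastic terms. Unrolling the recursion gives $\|\bbm\Delta^{(T)}\|_\F\le\rho^T R_{\mathscr A,k}+(1-\rho)^{-1}C\eta(\cdots)$. With $T^{(k)}_g\asymp\log n_k$, the term $\rho^T R_{\mathscr A,k}$ is $O(n_k^{-1/2})$ and is absorbed. Since the bounds are uniform over rank-$2\bbm r$ directions, no union bound over $t$ is needed, which gives probability $1-Ce^{-Cdf_{\bbm r}}$.

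I expect the main obstacle to be the retraction perturbation bound. It must be sharp enough to combine with $1-\eta\lambda^{\min}$ into a genuine contraction under the stated condition-number restriction. The naive factor $1+\sqrt{d+m}$ alone is not sufficient; one must use that the first-order error of HOSVD lies on the tangent space and that the second-order error is $O(\|\cdot\|^2/\mu^{\min}_{\mathscr A})$.
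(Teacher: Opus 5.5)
Your proposal is correct and follows the paper's proof in its main steps:
\begin{itemize}
\item the quasi-projection bound $\|\mathcal R_{\bbm r}(\cm Z)-\cm A_0^*\|_\F\le(\sqrt{d+m}+1)\|\cm Z-\cm A_0^*\|_\F$ from Lemma~\ref{lem:quasi-projection-thosvd-sthosvd};
\item the normal-component bound $\|\mathcal P_{\mathcal T}^\perp\bbm\Delta^{(t)}\|_\F\le 2(d+m)\|\bbm\Delta^{(t)}\|_\F^2/\mu_{\mathscr A}^{\min}$ from Lemma~\ref{lem:normal-component-bound};
\item the split of the $\cm B_k^*$ term into a population part and a centered part, with the population part bounded by $\sqrt{r_{\bbm q}+r_{\bbm p}}\,\lambda_{\mathscr X,k}^{\max}\zeta$ via $\|\cm U_{[S_X]}\|_*\le\sqrt{r_{\bbm q}+r_{\bbm p}}\|\cm U\|_\F$;
\item net bounds over Tucker-rank-$2\bbm r$ classes that hold uniformly in $t$;
\item an induction that keeps the iterates in the ball of radius $R_{\mathscr A,k}$.
\end{itemize}

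The only organizational difference is in the Hessian term. You write $\bbm\Delta-\eta\mathcal P_{\mathcal T}\mathcal H_k\bbm\Delta=\mathcal P_{\mathcal T}(\mathcal I-\eta\mathcal H_k)\bbm\Delta+\mathcal P_{\mathcal T}^\perp\bbm\Delta$. You then control $(\widehat{\mathcal H}_k-\mathcal H_k)\bbm\Delta$ by a bilinear deviation over rank-$2\bbm r$ tensors, which is legitimate because $\bbm\Delta$ itself has Tucker rank at most $2\bbm r$. The paper instead splits into its Terms IV and V. That route uses the self-adjoint contraction Lemma~\ref{lem:tangent-contraction-tensor-general} on the tangent space, the rank bound Lemma~\ref{lem:normal-projection-tucker-rank-2r}, and a separate tangent--normal cross term. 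Your version is slightly leaner. With $\delta_{2\bbm r}\le\bar\delta_{\mathscr A,k}$ it gives the same factor, $\|\bm I-\eta_{\mathscr A,k}\bbm\Sigma_{\mathscr X,k}\|_{\op}+\eta_{\mathscr A,k}\delta_{2\bbm r}\le c_{d,m}$.

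You should drop your closing claim that the naive factor $1+\sqrt{d+m}$ is insufficient and that a sharper HOSVD perturbation bound is the main obstacle. The paper uses exactly the naive factor, and the recursion closes:
\begin{itemize}
\item Since $c_{d,m}=1/\{4(\sqrt{d+m}+1)\}$, the condition $\kappa_{\mathscr X,k}<(1+c_{d,m})/(1-c_{d,m})$ together with $\eta_{\mathscr A,k}\in\mathcal I_{\mathscr A,k}$ forces the tangent contraction factor below $c_{d,m}$. Hence (tangent factor)$\,\times(\sqrt{d+m}+1)\le1/4$.
\item Inside the ball $\|\bbm\Delta^{(t)}\|_\F\le R_{\mathscr A,k}=\mu_{\mathscr A}^{\min}/(4\Gamma_{\mathscr A,k})$, the curvature contribution is at most another $1/4$.
\item So the recursion closes with factor $1/2$.
\end{itemize}
The only second-order input is the normal-component bound you already listed, and your own third bullet makes this argument.

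Two small precision points:
\begin{itemize}
\item The rates come from $\eta_{\mathscr A,k}\le(1+c_{d,m})/\lambda_{\mathscr X,k}^{\max}$ multiplying the stochastic remainders, with contraction gap $1/2$. They do not come from ``dividing by $\eta\lambda^{\max}$''.
\item Your noise term, like the paper's own proof, yields $\sigma_{\mathscr X,k}\sigma_{\mathscr E,k}(\lambda_{\mathscr E,k}^{\max}/\lambda_{\mathscr X,k}^{\max})^{1/2}$. The displayed statement omits $\sigma_{\mathscr X,k}\ge1$, so you recover it only up to that factor, not ``exactly''.
\end{itemize}
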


Theorem~\ref{thm:representation_single_client} provides the corresponding Stage-I guarantee when estimation is performed solely on the local data of client $k$. The conditions on the design condition number $\kappa_{\mathscr X,k}$, step size $\eta_{\mathscr A,k}$, initialization radius $R_{\mathscr A,k}$, and weak-identifiability level $\zeta$ play exactly the same roles as their federated counterparts, now expressed in client-specific form. Likewise, the requirement on the local sample size $n_k$ is the single-client analogue of the pooled sample size condition in Theorem~\ref{thm:federated_representation_error}.
Since the single-client procedure does not communicate any information across clients, it requires no privacy perturbation. Consequently, the error bound contains only the local statistical error (scaled by $n_k$) and the same weak-identifiability cost $\sqrt{r_{\bbm q}+r_{\bbm p}}\,\zeta$, with no privacy term.

\subsection{Theory for Stage-II Personalized Refinement}\label{sec:stage2-theory}

Given a Stage-I estimator $\overline{\cm A}_0$ of $\cm A_0^*$ (i.e., $\widehat{\cm A}_0$ or $\widehat{\cm A}_{0,k}^{\mathrm{loc}}$), client $k$ estimates its sparse deviation $\cm B_k^*$ via the $\ell_1$-penalized problem. Theorem~\ref{thm:personalized_error} provides the error bound for the FISTA solution $\overline{\cm B}_k(\overline{\cm A}_0)$. 

\begin{theorem}[Stage-II personalized refinement]\label{thm:personalized_error}
Suppose Assumptions~\ref{assump:weak-identifiability}--\ref{assump:subg-noise} hold. For a client $k\in[K]$, the Stage-II local step size is set to $\eta_{\mathscr B,k}=(2n_k^{-1}\|\sum_{i=1}^{n_k}\vect(\cm X_{k,i})\vect^\top(\cm X_{k,i})\|_{\op})^{-1}$, the sparsity regularization parameter is chosen as $\omega_k\asymp\vartheta_k(\mathds 1_{\{\nu=0\}}\sqrt{\log(pq/s_0)/n_k} + \mathds 1_{\{0<\nu<1\}}\sqrt{\log(pq)/n_k})$ with $\vartheta_k = \sigma_{\mathscr X,k}\sigma_{\mathscr E,k}(\lambda_{\mathscr X,k}^{\max}\lambda_{\mathscr E,k}^{\max})^{1/2}$, 
and the number of local FISTA iterations is $T_l^{(k)}\asymp \sqrt{\kappa_{\mathscr X,k}}(1+h_{\mathscr B,k}/\mathsf{Error}_{\mathscr B}^{(k)}(\overline{\cm A}_0))$, where 
\[\mathsf{Error}_{\mathscr B}^{(k)}(\overline{\cm A}_0) = \kappa_{\mathscr X,k}\|\overline{\cm A}_0-\cm A_0^*\|_\F + \sqrt{s_\nu}(\omega_k/\lambda_{\mathscr X,k}^{\min})^{1-\nu/2}.\]
If the local sample size satisfies $n_k\gtrsim \mathsf L_\nu \vee (\sigma_{\mathscr X,k}^{4}\kappa_{\mathscr X,k}^{2}\vee\sigma_{\mathscr X,k}^{2}\kappa_{\mathscr X,k}\vee1)p$ with $\mathsf L_\nu = \mathds 1_{\{\nu=0\}}s_0\log(pq/s_0) + \mathds 1_{\{0<\nu<1\}}\log(pq)$, then, with probability at least $1-C\exp(-C\mathsf L_\nu)-C\exp(-Cp)$, we have $\|\overline{\cm B}_k(\overline{\cm A}_0)-\cm B_k^*\|_\F \lesssim \mathsf{Error}_{\mathscr B}^{(k)}(\overline{\cm A}_0)$, and consequently, $\|\overline{\cm A}_0-\cm A_0^*\|_\F + \|\overline{\cm B}_k(\overline{\cm A}_0)-\cm B_k^*\|_\F \lesssim \mathsf{Error}_{\mathscr B}^{(k)}(\overline{\cm A}_0)$.
\end{theorem}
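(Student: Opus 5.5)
We treat Stage-II as a lasso-type problem in the vectorized representation \eqref{eq:model-vectorized}, with an extra deterministic bias coming from the plug-in error $\bbm\Delta_0:=\overline{\cm A}_0-\cm A_0^*$. Write $\bm B=(\cm B_k)_{[S_X]}\in\mathbb R^{q\times p}$ and $\bm x_i=\vect(\cm X_{k,i})$. The working model is then $\vect(\cm Y_{k,i})-(\overline{\cm A}_0)_{[S_X]}\bm x_i=\bm B_k^*\bm x_i-(\bbm\Delta_0)_{[S_X]}\bm x_i+\vect(\cm E_{k,i})$. The loss $\mathcal L_k$ is a quadratic with Hessian $\widehat{\bbm\Sigma}_k\otimes\bm I_q$, where $\widehat{\bbm\Sigma}_k=n_k^{-1}\sum_i\bm x_i\bm x_i^\top$.

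\textbf{Step 1: design concentration.} Under Assumption~\ref{assump:subg-design} and $n_k\gtrsim(\sigma_{\mathscr X,k}^4\kappa_{\mathscr X,k}^2\vee\sigma_{\mathscr X,k}^2\kappa_{\mathscr X,k}\vee1)p$, the standard sub-Gaussian covariance bound gives $\|\widehat{\bbm\Sigma}_k-\bbm\Sigma_{\mathscr X,k}\|_{\op}\le\lambda_{\mathscr X,k}^{\min}/2$ with probability at least $1-C\exp(-Cp)$. Hence $\lambda_{\min}(\widehat{\bbm\Sigma}_k)\ge\lambda_{\mathscr X,k}^{\min}/2$ and $\lambda_{\max}(\widehat{\bbm\Sigma}_k)\le 3\lambda_{\mathscr X,k}^{\max}/2$. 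This gives strong convexity of $\mathcal L_k$ with condition number $\lesssim\kappa_{\mathscr X,k}$, which is stronger than a restricted eigenvalue condition. It also shows that $\eta_{\mathscr B,k}$ equals $1/L$ for the actual smoothness constant $L$.

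\textbf{Step 2: gradient at the truth.} We have $\nabla\mathcal L_k(\cm B_k^*)=\widehat{\bbm\Sigma}_k$-weighted bias plus noise term. The bias part is $(\bbm\Delta_0)_{[S_X]}\widehat{\bbm\Sigma}_k$, which we do not put into the $\ell_\infty$ bound. The noise part is $\bm N=-n_k^{-1}\sum_i\vect(\cm E_{k,i})\bm x_i^\top$. Each entry of $\bm N$ is an average of products of conditionally sub-Gaussian variables, so it is sub-exponential with scale $\vartheta_k$. A Bernstein bound plus a union over $pq$ entries gives $\|\bm N\|_\infty\le\omega_k/2$ with probability $1-C\exp(-C\mathsf L_\nu)$. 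In the $\ell_0$ case we use the $\log(pq/s_0)$ refinement via a peeling/sorted-entries argument, as in the SLOPE-type analyses, or simply accept $\log(pq)$ when $s_0\le (pq)^{c}$. The condition $n_k\gtrsim\mathsf L_\nu$ keeps the sub-exponential tail in its Gaussian regime.

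\textbf{Step 3: deterministic error of the exact minimizer.} Let $\bbm\Delta=\widehat{\cm B}^{\mathrm{opt}}_k-\cm B_k^*$. From the basic inequality,
\[
\tfrac{\lambda_{\mathscr X,k}^{\min}}{4}\|\bbm\Delta\|_\F^2\le \tfrac{\omega_k}{2}\|\bbm\Delta\|_1+\tfrac32\lambda_{\mathscr X,k}^{\max}\|\bbm\Delta_0\|_\F\|\bbm\Delta\|_\F+\omega_k(\|\cm B_k^*\|_1-\|\cm B_k^*+\bbm\Delta\|_1).
\]
For weak sparsity we threshold at level $\rho=\omega_k/\lambda_{\mathscr X,k}^{\min}$. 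Let $S$ be the set of entries of $\cm B_k^*$ exceeding $\rho$, so that $|S|\le s_\nu\rho^{-\nu}$ and $\|\cm B^*_{S^c}\|_1\le s_\nu\rho^{1-\nu}$. The usual cone decomposition then yields
\[
\|\bbm\Delta\|_\F\lesssim \kappa_{\mathscr X,k}\|\bbm\Delta_0\|_\F+\sqrt{s_\nu}\,\rho^{1-\nu/2}.
\]
This is exactly $\mathsf{Error}^{(k)}_{\mathscr B}(\overline{\cm A}_0)$. The $\ell_0$ case follows in the same way with $\nu=0$. Assumption~\ref{assump:weak-identifiability} is not needed here.

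\textbf{Step 4: optimization error.} For FISTA on a $\mu$-strongly convex, $L$-smooth composite objective with $L/\mu\lesssim\kappa_{\mathscr X,k}$, we start from $\bm 0$, whose distance to the minimizer is $\lesssim h_{\mathscr B,k}+\mathsf{Error}_{\mathscr B}^{(k)}$. The standard $O(1/t^2)$ rate, $F(\cm B^{(t)})-F^*\le 2L\|\bm 0-\widehat{\cm B}^{\mathrm{opt}}\|_\F^2/t^2$, combined with strong convexity gives
\[
\|\cm B^{(t)}-\widehat{\cm B}^{\mathrm{opt}}\|_\F\lesssim\sqrt{\kappa_{\mathscr X,k}}\,(h_{\mathscr B,k}+\mathsf{Error})/t.
\]
Therefore $T_l^{(k)}\asymp\sqrt{\kappa_{\mathscr X,k}}(1+h_{\mathscr B,k}/\mathsf{Error})$ makes the optimization error at most $\mathsf{Error}$. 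The triangle inequality then gives the claim. The final statement follows because $\|\bbm\Delta_0\|_\F\le\kappa_{\mathscr X,k}\|\bbm\Delta_0\|_\F\le\mathsf{Error}$.

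The main obstacle is Step 2's sharp $\log(pq/s_0)$ rate under the weaker, conditionally sub-Gaussian noise. If the peeling argument becomes delicate, we will first prove the result with $\log(pq)$ and then refine it.
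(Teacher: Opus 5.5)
For $0<\nu<1$ and for the optimization error, your argument matches the paper's proof in all essentials. Both use the basic inequality with the plug-in cross term bounded by $\tfrac32\lambda_{\mathscr X,k}^{\max}\|\bbm\Delta_0\|_\F\|\bbm\Delta\|_\F$, threshold $\cm B_k^*$ at $\omega_k/\lambda_{\mathscr X,k}^{\min}$, and use full (not merely restricted) strong convexity from $n_k\gtrsim p$. Both then convert the FISTA $O(1/t^2)$ objective gap into a Frobenius bound via strong convexity. Bernstein for the entries of $\bm N$ is a valid substitute for the paper's conditioning/Chernoff argument, since $\mathbb E[\vect(\cm E_{k,i})\mid\cm X_{k,i}]=\bm 0$ makes each summand a centered sub-exponential variable with scale $\vartheta_k$. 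One minor slip: $\eta_{\mathscr B,k}=1/(2\|\widehat{\bbm\Sigma}_k\|_{\op})$ is $1/(2L)$, not $1/L$; this is harmless because FISTA only needs $\eta\le 1/L$.

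The genuine gap is the exact-sparse case, and it cannot be closed by sharpening the $\ell_\infty$ bound in Step~2. With $\omega_k\asymp\vartheta_k\sqrt{\log(pq/s_0)/n_k}$, the event $\|\bm N\|_\infty\le\omega_k/2$ is false in general. Already in the isotropic Gaussian submodel, $\|\bm N\|_\infty$ is of exact order $\sqrt{\log(pq)/n_k}$, and no constant brings this below $\omega_k$ when, e.g., $s_0=pq/\log(pq)$. Moreover, no union bound over the $pq$ entries gives failure probability $\exp(-Cs_0\log(pq/s_0))$. So your fallback for $s_0\le(pq)^c$ recovers the rate but not the stated probability.

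The peeling/sorted-entries idea you mention is the right tool, but it must bound the inner product, not $\|\bm N\|_\infty$. This is what the paper's refined sparse deviation lemma does: with probability $1-\exp(-Cs_0\log(pq/s_0))$, uniformly in $\bm H$:
\begin{itemize}
\item if $\|\bm H\|_1\le\sqrt{s_0}\|\bm H\|_\F$, then $|\langle\bm N,\bm H\rangle|\lesssim\vartheta_k\sqrt{s_0\log(pq/s_0)/n_k}\,\|\bm H\|_\F$;
\item otherwise, $|\langle\bm N,\bm H\rangle|\lesssim\vartheta_k\sqrt{\log(pq/s_0)/n_k}\,\|\bm H\|_1$.
\end{itemize}
Its proof has three ingredients:
\begin{itemize}
\item nets on $s_0$-sparse unit vectors, with entropy $s_0\log(pq/s_0)$;
\item a sorted-block decomposition placing $\{\|\bm U\|_1\le\sqrt{s},\ \|\bm U\|_\F\le 1\}$ inside twice the convex hull of $s$-sparse vectors of unit Frobenius norm;
\item dyadic peeling over $\|\bm H\|_1/\|\bm H\|_\F$.
\end{itemize}

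Step~3 must then be rerun with $|\langle\bm N,\bbm\Delta\rangle|\le\tfrac{\omega_k}{2}\|\bbm\Delta\|_1+C\omega_k\sqrt{s_0}\|\bbm\Delta\|_\F$ in place of H\"older's inequality. The extra term is absorbed by Young's inequality thanks to full strong convexity, and it produces exactly the $\sqrt{s_0}\,\omega_k/\lambda_{\mathscr X,k}^{\min}$ term of $\mathsf{Error}_{\mathscr B}^{(k)}(\overline{\cm A}_0)$.
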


The tuning parameters and sample-size condition in Theorem~\ref{thm:personalized_error} follow standard prescriptions for sparse regression. Particularly, the step size $\eta_{\mathscr B,k}$ is the inverse of the Lipschitz constant of the quadratic loss, the regularization parameter $\omega_k$ is chosen at the noise level to guarantee optimal sparse recovery, and the number of FISTA iterations $T_l^{(k)}$ ensures sufficient convergence.

The error bound $\mathsf{Error}_{\mathscr B}^{(k)}(\overline{\cm A}_0)$ consists of two parts. The first term, $\kappa_{\mathscr X,k}\|\overline{\cm A}_0-\cm A_0^*\|_\F$, reflects the propagation of the Stage-I estimation error: an inaccurate shared estimator directly degrades the personalized refinement. The second term, $\sqrt{s_\nu}(\omega_k/\lambda_{\mathscr X,k}^{\min})^{1-\nu/2}$, is the local sparse estimation error for estimating $\cm B_k^*$. This term adapts to the sparsity regime, with a sharper rate when $\nu=0$ (exact sparsity) compared to the weak sparsity case $\nu\in(0,1)$.

We now instantiate the general Stage-II bound with the specific Stage-I estimators.
Substituting the federated Stage-I estimator $\widehat{\cm A}_0$ into Theorem~\ref{thm:personalized_error} yields $\widehat{\cm B}_k=\overline{\cm B}_k(\widehat{\cm A}_0)$. Then by Theorem~\ref{thm:federated_representation_error} and the condition number bound $\kappa_{\mathscr X,k}\leq\bar\kappa_{\mathscr X}<(1+c_{d,m})/(1-c_{d,m})$, we obtain, with high probability, 
\begin{align*}
  \|\widehat{\cm A}_0-\cm A_0^*\|_\F + \|\widehat{\cm B}_k-\cm B_k^*\|_\F
  &\lesssim
  \left(\bar\sigma_{\mathscr X}^2\bar h_{\mathscr B} \vee \frac{\bar\vartheta}{\lambda_{\mathscr X,k}^{\min}}\right)\sqrt{\frac{df_{\bbm r}}{n}}
  + \sqrt{r_{\bbm q}+r_{\bbm p}}\,\zeta
  + \sqrt{s_\nu}\left(\frac{\omega_k}{\lambda_{\mathscr X,k}^{\min}}\right)^{1-\nu/2}\\
  &+
  \frac{\log n\sqrt{\log(\log n/\delta)}\sqrt{K(df_{\bbm r}+\log\log n)}}
       {\bar\lambda_{\mathscr X}^{\max}\varepsilon n}
  \max_{0\leq t<T_g}\max_{k\in[K]}\tau_{\mathscr E,k}^{(t)}\tau_{\mathscr X,k},
\end{align*}
where $\bar\vartheta=\bar\sigma_{\mathscr X}\bar\sigma_{\mathscr E}(\bar\lambda_{\mathscr X}^{\max}\bar\lambda_{\mathscr E}^{\max})^{1/2}$. Similarly, using the single-client Stage-I estimator $\widehat{\cm A}_{0,k}^{\mathrm{loc}}$ gives $\widehat{\cm B}_k^{\mathrm{loc}}=\overline{\cm B}_k(\widehat{\cm A}_{0,k}^{\mathrm{loc}})$, which combined with Theorem~\ref{thm:representation_single_client} leads to, with high probability,
\[
  \|\widehat{\cm A}_{0,k}^{\mathrm{loc}}-\cm A_0^*\|_\F + \|\widehat{\cm B}_k^{\mathrm{loc}}-\cm B_k^*\|_\F
  \lesssim
  \left(\sigma_{\mathscr X,k}^2 h_{\mathscr B,k} \vee \frac{\vartheta_k}{\lambda_{\mathscr X,k}^{\min}}\right)\sqrt{\frac{df_{\bbm r}}{n_k}}
  + \sqrt{r_{\bbm q}+r_{\bbm p}}\,\zeta
  + \sqrt{s_\nu}\left(\frac{\omega_k}{\lambda_{\mathscr X,k}^{\min}}\right)^{1-\nu/2}.
\]
The minimax lower bound in Theorem~\ref{thm:joint-minimax-lower-bound} shows that both the federated and single-client personalized error bounds are rate-optimal in their respective regimes. Hence, a direct comparison of the two bounds provides a fair way to evaluate the benefit of federation.

The federated procedure improves the Stage-I statistical term by replacing the local sample size $n_k$ with the pooled size $n$, but it also incurs an additional privacy cost. 
When the client-specific quantities $h_{\mathscr B,k}$, $\sigma_{\mathscr X,k}$, $\sigma_{\mathscr E,k}$, and the eigenvalue bounds are of the same order across clients, the federated bound enjoys a smaller dominant statistical error. 
Thus, the federated estimator is preferable when this pooled-sample gain dominates the privacy cost. A sufficient condition for this to hold for client $k$ is 
\begin{align*}
  n \gtrsim
  \frac{\sqrt{K n_k\log(\log n/\delta)}}{\bar\lambda_{\mathscr X}^{\max}\varepsilon}
  \frac{\max_{0\leq t<T_g}\max_{k\in[K]}\tau_{\mathscr E,k}^{(t)}\tau_{\mathscr X,k}}
  {\sigma_{\mathscr X,k}^2 h_{\mathscr B,k}
  \vee
  \sigma_{\mathscr X,k}\sigma_{\mathscr E,k}(\lambda_{\mathscr X,k}^{\max}\lambda_{\mathscr E,k}^{\max})^{1/2}/\lambda_{\mathscr X,k}^{\min}}.
\end{align*}
This condition is mild in high-dimensional settings and transparently captures the privacy-utility trade-off. A smaller privacy budget $\varepsilon$ (stronger privacy) inflates the required pooled sample size as $1/\varepsilon$, while the failure probability $\delta$ exerts only a logarithmic influence through $\sqrt{\log(\log n/\delta)}$.

\subsection{Theory for Initialization and Tucker-Rank Selection}\label{sec:one-stage-est-theory}

We next justify the initialization and Tucker-rank selection procedures in Sections~\ref{sec:alg-ini}--\ref{sec:tucker-rank-selection}.

\begin{theorem}[One-stage initialization error]\label{thm:one-stage-initialization-error}
Suppose Assumptions~\ref{assump:weak-identifiability}--\ref{assump:subg-noise} hold for client $k$, and assume the local sample size satisfies $n_k \gtrsim (p+q) \vee \mathsf L_\nu \vee (\sigma_{\mathscr X,k}^{4}\kappa_{\mathscr X,k}^{2} \vee \sigma_{\mathscr X,k}^{2}\kappa_{\mathscr X,k})p$, where $\mathsf L_\nu$ is defined in Theorem~\ref{thm:personalized_error}. Consider the one-stage initialization estimator $(\widetilde{\cm A}_{0,k},\widetilde{\cm B}_{k})$ defined in \eqref{eq:alg-ini} with tuning parameters chosen as 
\[
	\lambda_k \asymp \vartheta_{k}\sqrt{\frac{p+q}{n_k}} + \lambda_{\mathscr X,k}^{\min}\zeta
	\ \text{and} \ 
	\varpi_k \asymp \vartheta_{k}\left(\mathds 1_{\{\nu=0\}}\sqrt{\frac{\log(pq/s_0)}{n_k}}+\mathds 1_{\{0<\nu<1\}}\sqrt{\frac{\log(pq)}{n_k}}\right),
\]
where $\vartheta_{k} = \sigma_{\mathscr X,k}\sigma_{\mathscr E,k}(\lambda_{\mathscr X,k}^{\max}\lambda_{\mathscr E,k}^{\max})^{1/2}$. Then, with probability at least $1-C\exp(-C(p+q))-C\exp(-C\mathsf L_\nu)-C\exp(-Cp)$,
\[
	\|\widetilde{\cm A}_{0,k}-\cm A_0^*\|_{\F} + \|\widetilde{\cm B}_k-\cm B_k^*\|_{\F}
	\lesssim
	\sqrt{r_{\bbm q} + r_{\bbm p}}\frac{\lambda_k}{\lambda_{\mathscr X,k}^{\min}}
	+
	\sqrt{s_\nu}\left(\frac{\varpi_k}{\lambda_{\mathscr X,k}^{\min}}\right)^{1-\nu/2}.
\]
\end{theorem}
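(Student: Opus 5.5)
The plan is to follow the noisy low-rank-plus-sparse decomposition analysis of \cite{agarwal2012noisy}, written in the vectorized form \eqref{eq:model-vectorized}. Work with the matricizations $\bm\Theta=(\cm A_{0,k})_{[S_X]}$ and $\bm\Gamma=(\cm B_k)_{[S_X]}$, both $q\times p$. Let $\widehat{\bbm\Sigma}_k=n_k^{-1}\sum_i\vect(\cm X_{k,i})\vect(\cm X_{k,i})^\top$. The loss is $\frac{1}{2}\tr\{(\bm\Theta+\bm\Gamma)\widehat{\bbm\Sigma}_k(\bm\Theta+\bm\Gamma)^\top\}-\langle \cdot\rangle$, a quadratic in $\bm\Theta+\bm\Gamma$. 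Write $\widehat{\bbm\Delta}_A=\widetilde{\bm\Theta}-\bm\Theta^*$ and $\widehat{\bbm\Delta}_B=\widetilde{\bm\Gamma}-\bm\Gamma^*$. Note that $\bm\Gamma^*$ is feasible for the operator-norm constraint by Assumption~\ref{assump:weak-identifiability}, so the basic inequality $F(\widetilde{\bm\Theta},\widetilde{\bm\Gamma})\le F(\bm\Theta^*,\bm\Gamma^*)$ is available. Expanding it gives
\[
\tfrac12\|(\widehat{\bbm\Delta}_A+\widehat{\bbm\Delta}_B)\widehat{\bbm\Sigma}_k^{1/2}\|_\F^2\le \langle \bm Z_k,\widehat{\bbm\Delta}_A+\widehat{\bbm\Delta}_B\rangle+\lambda_k(\|\bm\Theta^*\|_*-\|\widetilde{\bm\Theta}\|_*)+\varpi_k(\|\bm\Gamma^*\|_1-\|\widetilde{\bm\Gamma}\|_1),
\]
with $\bm Z_k=n_k^{-1}\sum_i\vect(\cm E_{k,i})\vect(\cm X_{k,i})^\top$.

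\textbf{Step 1: stochastic terms.} I will show that, with the stated probabilities, $\|\bm Z_k\|_{\op}\lesssim\vartheta_k\sqrt{(p+q)/n_k}$ and $\|\bm Z_k\|_\infty\lesssim\vartheta_k\sqrt{\log(pq)/n_k}$ (or $\sqrt{\log(pq/s_0)/n_k}$ for $\nu=0$, after localizing to sparse directions). The operator-norm bound comes from an $\varepsilon$-net over $\mathbb S^{q-1}\times\mathbb S^{p-1}$, using the conditional sub-Gaussianity in Assumption~\ref{assump:subg-noise} together with sub-exponential Bernstein bounds. The entrywise bound comes from a union bound. These bounds make $\lambda_k\ge 2\|\bm Z_k\|_{\op}$ and $\varpi_k\ge2\|\bm Z_k\|_\infty$.

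\textbf{Step 2: restricted strong convexity.} Assumption~\ref{assump:subg-design} and $n_k\gtrsim(\sigma_{\mathscr X,k}^4\kappa_{\mathscr X,k}^2\vee\sigma_{\mathscr X,k}^2\kappa_{\mathscr X,k})p$ give $\|\widehat{\bbm\Sigma}_k-\bbm\Sigma_{\mathscr X,k}\|_{\op}\le\lambda_{\mathscr X,k}^{\min}/2$ with probability $1-Ce^{-Cp}$. Hence $\|\bm\Delta\widehat{\bbm\Sigma}_k^{1/2}\|_\F^2\ge\frac{\lambda_{\mathscr X,k}^{\min}}{2}\|\bm\Delta\|_\F^2$ for every $\bm\Delta$; since $n_k\gtrsim p$, no restriction is needed. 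The essential difficulty is passing from $\|\widehat{\bbm\Delta}_A+\widehat{\bbm\Delta}_B\|_\F^2$ to $\|\widehat{\bbm\Delta}_A\|_\F^2+\|\widehat{\bbm\Delta}_B\|_\F^2$. I will use the decomposition $\|\bm\Delta_A+\bm\Delta_B\|_\F^2\ge\|\bm\Delta_A\|_\F^2+\|\bm\Delta_B\|_\F^2-2|\langle\bm\Delta_A,\bm\Delta_B\rangle|$ together with $|\langle\bm\Delta_A,\bm\Delta_B\rangle|\le\|\bm\Delta_A\|_*\|\bm\Delta_B\|_{\op}\le 2\zeta\|\bm\Delta_A\|_*$. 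The last bound uses the box constraint for both $\widetilde{\bm\Gamma}$ and $\bm\Gamma^*$. This is precisely why $\lambda_k$ carries the extra term $\lambda_{\mathscr X,k}^{\min}\zeta$: the cross term $\lambda_{\mathscr X,k}^{\min}\zeta\|\widehat{\bbm\Delta}_A\|_*$ is absorbed into the nuclear-norm penalty.

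\textbf{Step 3: cone condition and conclusion.} Combining Steps 1--2 with the standard decomposability arguments yields a cone condition. For the nuclear norm, project onto the row and column spaces of $\bm\Theta^*$, whose rank is $\le r_{\bbm q}\wedge r_{\bbm p}$, so $\sqrt{2\,\mathrm{rank}}\lesssim\sqrt{r_{\bbm q}+r_{\bbm p}}$. For the $\ell_1$ norm, use either the support of size $s_0$ or, when $\nu\in(0,1)$, a threshold set $\{|\Gamma^*_{ij}|>\varpi_k/\lambda_{\mathscr X,k}^{\min}\}$ of size $\le s_\nu(\varpi_k/\lambda^{\min}_{\mathscr X,k})^{-\nu}$ plus an approximation error $\|\bm\Gamma^*_{\text{off}}\|_1\le s_\nu(\varpi_k/\lambda^{\min})^{1-\nu}$. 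The cone condition gives
\[
\lambda_{\mathscr X,k}^{\min}(\|\widehat{\bbm\Delta}_A\|_\F^2+\|\widehat{\bbm\Delta}_B\|_\F^2)\lesssim \lambda_k\sqrt{r_{\bbm q}+r_{\bbm p}}\|\widehat{\bbm\Delta}_A\|_\F+\varpi_k\sqrt{|S|}\|\widehat{\bbm\Delta}_B\|_\F+\varpi_k\|\bm\Gamma^*_{\text{off}}\|_1.
\]
Solving this quadratic inequality produces the claimed rate. Finally, the Frobenius norm is invariant under matricization, which transfers the bound back to the tensors. I expect Step 2, the control of the low-rank/sparse cross term through $\zeta$ so that the needed curvature survives, to be the main obstacle. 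The concentration in Step 1 is routine, although the conditional noise model requires conditioning on the $\cm X_{k,i}$ first.
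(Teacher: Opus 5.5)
Your proposal follows the paper's proof essentially step for step: the $S_X$-matricized basic inequality (valid because $\bm B_k^*$ is feasible by Assumption~\ref{assump:weak-identifiability}), the operator-norm and entrywise deviation bounds, and the unrestricted RSC from $n_k\gtrsim p$ all match. So do the cross term $|\langle\widehat{\bbm\Delta}_A,\widehat{\bbm\Delta}_B\rangle|\le 2\zeta\|\widehat{\bbm\Delta}_A\|_*$, absorbed through the $\lambda_{\mathscr X,k}^{\min}\zeta$ part of $\lambda_k$ together with the cone condition, and the thresholding at $\kappa\asymp\varpi_k/\lambda_{\mathscr X,k}^{\min}$ when $\nu\in(0,1)$.

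The one step to repair is $\nu=0$. With $\varpi_k\asymp\vartheta_k\sqrt{\log(pq/s_0)/n_k}$, the claim $\varpi_k\ge 2\|\bm Z_k\|_\infty$ fails in general, because the sup-norm is of order $\vartheta_k\sqrt{\log(pq)/n_k}$ (for example when $s_0\asymp pq$). As the paper does via Lemma~\ref{lem:deviation-condition-sparse}, you should split into two cases:
\begin{itemize}
\item If $\|\widehat{\bbm\Delta}_B\|_1>\sqrt{s_0}\|\widehat{\bbm\Delta}_B\|_\F$, then $|\langle\bm Z_k,\widehat{\bbm\Delta}_B\rangle|\le\varpi_k\|\widehat{\bbm\Delta}_B\|_1/2$, which the $\ell_1$ penalty absorbs.
\item Otherwise, $|\langle\bm Z_k,\widehat{\bbm\Delta}_B\rangle|\lesssim\sqrt{s_0}\,\varpi_k\|\widehat{\bbm\Delta}_B\|_\F$, and this term is carried through the cone condition and the final quadratic inequality at the same rate.
\end{itemize}
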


Theorem~\ref{thm:one-stage-initialization-error} characterizes the statistical accuracy for the optimizer of convex problem in \eqref{eq:alg-ini}, which serves as the preliminary estimator for initializing the two-stage procedures. The optimization error of the ADMM solver can be made arbitrarily small relative to the statistical error by running sufficient iterations, so the theorem focuses on the statistical performance. The low-rank error component in the bound is driven by $\lambda_k\asymp \vartheta_k\sqrt{(p+q)/n_k}$ and is therefore of a larger order than the single-client Stage-I rate established in Theorem~\ref{thm:representation_single_client}. This is expected because the convex formulation regularizes only the $S_X$-matricization in \eqref{eq:model-vectorized} through its nuclear norm, rather than directly exploiting the full low Tucker-rank structure of $\cm A_0^*$. Nevertheless, this rate is sufficient for initialization purposes, and both simulation and empirical results suggest that the resulting preliminary estimator provides a stable warm start for the subsequent two-stage procedures.

\begin{theorem}[Tucker-rank selection consistency]\label{thm:rank_selection}
Suppose $r_s^*\leq \bar r_s$ for each $s\in[d+m]$ and the conditions of Theorem~\ref{thm:one-stage-initialization-error} hold. For each mode $s\in[d+m]$, assume that  the ridge-type penalty $c_s(\bbm p,\bbm q,n_k)$ satisfies
\[
  \sqrt{r_{\bbm q}+r_{\bbm p}}\frac{\lambda_k}{\lambda_{\mathscr X,k}^{\min}}
  + 
  \sqrt{s_\nu}\left(\frac{\varpi_k}{\lambda_{\mathscr X,k}^{\min}}\right)^{1-\nu/2}
  \ll
  c_s(\bbm p,\bbm q,n_k)
  \ll
  \sigma_{r_s^*}((\cm A_0^*)_{[s]})
  \min_{1\leq j\leq r_s^*-1}
  \frac{\sigma_{j+1}((\cm A_0^*)_{[s]})}{\sigma_j((\cm A_0^*)_{[s]})}.
\]
Then, $\mathbb P\left(\widehat{\bbm r}_k^{\mathrm{loc}}=\bbm r^*\right)\to1$ as $n_k\to\infty$ and the tensor dimensions $(\bbm p,\bbm q)$ diverge.
\end{theorem}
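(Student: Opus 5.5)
The plan is the standard ridge-ratio argument of \citet{xia2015consistently}, driven by Theorem~\ref{thm:one-stage-initialization-error}. Fix a mode $s$ and write $\sigma_j=\sigma_j((\cm A_0^*)_{[s]})$ and $\widetilde\sigma_j=\widetilde\sigma_{k,s,j}$. Let $\Delta_k$ denote the right-hand side of Theorem~\ref{thm:one-stage-initialization-error}, and work on the high-probability event of that theorem. On this event $\|\widetilde{\cm A}_{0,k}-\cm A_0^*\|_\F\lesssim\Delta_k$. Since matricization preserves the Frobenius norm, $\|(\widetilde{\cm A}_{0,k})_{[s]}-(\cm A_0^*)_{[s]}\|_{\op}\le\|\widetilde{\cm A}_{0,k}-\cm A_0^*\|_\F$. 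Weyl's inequality then gives $\max_j|\widetilde\sigma_j-\sigma_j|\lesssim\Delta_k$, and in particular $\widetilde\sigma_j\lesssim\Delta_k$ for $j>r_s^*$. The event has probability at least $1-C\exp(-C(p+q))-C\exp(-C\mathsf L_\nu)-C\exp(-Cp)\to1$ as the dimensions diverge, and a union bound over the finitely many modes $s\in[d+m]$ preserves this. It therefore suffices to show deterministically, on this event, that the ratio $\rho_r=(\widetilde\sigma_{r+1}+c_s)/(\widetilde\sigma_r+c_s)$ is uniquely minimized at $r=r_s^*$ once $n_k$ and the dimensions are large.

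\emph{Value at $r=r_s^*$.} Here $\widetilde\sigma_{r_s^*+1}+c_s=c_s(1+o(1))$, because $\Delta_k\ll c_s$. Also $\widetilde\sigma_{r_s^*}+c_s\ge\sigma_{r_s^*}(1-o(1))$, because $\Delta_k\ll c_s\ll\sigma_{r_s^*}$; the last relation follows from the upper condition, since the minimal ratio is at most $1$. Hence $\rho_{r_s^*}\lesssim c_s/\sigma_{r_s^*}=o(1)$.

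\emph{Values at $r<r_s^*$.} Both singular values are perturbed by at most $\Delta_k=o(\sigma_{r_s^*})\le o(\sigma_{r+1})$. Thus $\rho_r\ge(1-o(1))\,(\sigma_{r+1}+c_s)/(\sigma_r+c_s)\ge(1-o(1))\,\sigma_{r+1}/(\sigma_r+c_s)$. Since $c_s=o(\sigma_r)$, this is at least $(1-o(1))\min_{j<r_s^*}\sigma_{j+1}/\sigma_j$. The upper condition on $c_s$ says exactly that $c_s/\sigma_{r_s^*}\ll\min_j\sigma_{j+1}/\sigma_j$, so $\rho_{r_s^*}=o(\rho_r)$.

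\emph{Values at $r_s^*<r\le\bar r_s-1$.} Both $\widetilde\sigma_r$ and $\widetilde\sigma_{r+1}$ are $O(\Delta_k)=o(c_s)$, so $\rho_r=1-o(1)$. This is bounded away from $\rho_{r_s^*}=o(1)$.

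Combining the three cases gives $\widehat r_{k,s}=r_s^*$ for every $s$ on the event. Hence $\mathbb P(\widehat{\bbm r}_k^{\mathrm{loc}}=\bbm r^*)\to1$.

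The main obstacle I expect is the case $r<r_s^*$. There the comparison must be uniform in the possibly varying singular values of $\cm A_0^*$, which may shrink with the dimensions. This is why the upper condition on $c_s$ is stated with the product $\sigma_{r_s^*}\min_j\sigma_{j+1}/\sigma_j$. I will verify that all $o(1)$ terms above are relative to this quantity and not to fixed constants. I will also ensure that $\Delta_k\ll c_s$ already implies $\Delta_k=o(\sigma_{r_s^*})$, so that the Weyl perturbations of the leading singular values are negligible in ratio form.
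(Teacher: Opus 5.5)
Your proposal is correct and follows essentially the same route as the paper. Both transfer the Frobenius bound of Theorem~\ref{thm:one-stage-initialization-error} to the mode-$s$ unfoldings via Weyl's inequality, then analyze the ridge ratio in three cases (close to $1$ for $r>r_s^*$, close to $0$ at $r=r_s^*$, bounded below for $r<r_s^*$) and combine over the finitely many modes; your comparison for $r<r_s^*$ through $\rho_{r_s^*}\lesssim c_s/\sigma_{r_s^*}=o(\min_{j}\sigma_{j+1}/\sigma_j)$ is slightly more careful than the paper's limiting statement when the singular values vary with the dimensions, and, like the paper, you implicitly need $r_s^*\le\bar r_s-1$ so that $r_s^*$ lies in the candidate range of the argmin.
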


Theorem~\ref{thm:rank_selection} provides a consistency guarantee for the Tucker-rank selection step based on the preliminary estimator $\widetilde{\cm A}_{0,k}$. The lower bound on the ridge penalty $c_s(\bbm p,\bbm q,n_k)$ ensures that the estimation error from Theorem~\ref{thm:one-stage-initialization-error} is dominated by the ridge term, which stabilizes the spurious small singular values beyond the true rank. The upper bound requires the ridge penalty to be smaller than the signal gap among the nonzero singular values of $(\cm A_0^*)_{[s]}$, so that the true rank location is not masked. Under this separation, the ridge-type ratio criterion in \eqref{eq:Ridge-type estimator} consistently selects the true Tucker rank $\bbm r^*$ with probability tending to one.

\subsection{Minimax Lower Bounds}\label{sec:minimax-lower-bound}

We finally establish minimax lower bounds for both the federated and single-client regimes under model \eqref{eq:model} and decomposition \eqref{eq:decomposition}. 
To compare these two regimes on a common footing, we specialize to the isotropic Gaussian submodel with $\vect(\cm X_{k,i})\sim N(\bm 0,\bm I_p)$ and $\vect(\cm E_{k,i})\sim N(\bm 0,\bm I_q)$, independent across clients and samples. Let the federated parameter space be 
\[
  \Theta^{\mathrm{fed}}
  =
  \left\{
  (\cm A_0,\{\cm B_k\}_{k=1}^K):
  \tucrank(\cm A_0)\leq \bbm r,\ 
  \cm B_k\in\mathbb B_\nu,\ 
  \|(\cm B_k)_{[S_X]}\|_{\op}\leq\zeta
  \ \text{for all }k\in[K]
  \right\},
\]
and for a fixed client $k$, define the single-client class $\Theta_k^{\mathrm{single}}$ by restricting to the pair $(\cm A_0,\cm B_k)$ with the same constraints.
The joint minimax risk for estimating the common component $\cm A_0^*$ and the client's deviation $\cm B_k^*$ is given by $\mathcal R_k(\Theta)=\inf_{\widehat{\boldsymbol{\mathscr A}}_0,\widehat{\boldsymbol{\mathscr B}}_k}\sup_{(\boldsymbol{\mathscr A}_0^*,\boldsymbol{\mathscr B}_k^*)\in\Theta}\mathbb E_{\boldsymbol{\mathscr A}_0^*,\boldsymbol{\mathscr B}_k^*}\left[\|\widehat{\cm A}_0-\cm A_0^*\|_{\F}+\|\widehat{\cm B}_k-\cm B_k^*\|_{\F}\right]$.
Set $\mathfrak s_{\nu,k}=\mathds 1_{\{\nu=0\}}\sqrt{s_0\log(pq/s_0)/n_k}+\mathds 1_{\{0<\nu<1\}}\{s_\nu(\log(pq)/n_k)^{1-\nu/2}\}^{1/2}$. 

\begin{theorem}[Joint minimax lower bounds]\label{thm:joint-minimax-lower-bound}
Under the isotropic Gaussian submodel stated above, assume $q_\ell\geq2r_\ell$ for $\ell\in[m]$, $p_\ell\geq2r_{m+\ell}$ for $\ell\in[d]$, and $\zeta\geq \mathfrak s_{\nu,k}$. For exact sparsity $\nu=0$, suppose $2(r_{\bbm q}\wedge r_{\bbm p})\leq s_0\leq pq/2$; for weak sparsity $\nu\in(0,1)$, assume $s_\nu\geq2(r_{\bbm q}\wedge r_{\bbm p})\zeta^\nu$. Then, for the single-client and federated settings, the following lower bounds hold, respectively,
\[
  \mathcal R_k(\Theta) \gtrsim
  \begin{cases}
  \displaystyle
  \sqrt{\frac{df_{\bbm r}}{n_k}} + \mathfrak s_{\nu,k} + \sqrt{r_{\bbm q}\wedge r_{\bbm p}}\,\zeta,
  & \Theta=\Theta_k^{\mathrm{single}},\\[3mm]
  \displaystyle
  \sqrt{\frac{df_{\bbm r}}{n}} + \mathfrak s_{\nu,k} + \sqrt{r_{\bbm q}\wedge r_{\bbm p}}\,\zeta,
  & \Theta=\Theta^{\mathrm{fed}}.
  \end{cases}
\]
\end{theorem}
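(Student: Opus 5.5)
Since $\max(a,b,c)\asymp a+b+c$, I will establish each of the three terms separately as a lower bound on $\mathcal R_k(\Theta)$. Each comes from a sub-family of $\Theta$ on which one component is fixed and the other varies. Because the loss adds $\|\widehat{\cm A}_0-\cm A_0^*\|_\F$ and $\|\widehat{\cm B}_k-\cm B_k^*\|_\F$, it suffices to lower bound the error of one component on each sub-family. The tool throughout is Fano's inequality, or Assouad's lemma, together with a Varshamov--Gilbert packing. Under the isotropic Gaussian submodel the KL divergence between the data distributions indexed by $\cm A_k=\cm A_0+\cm B_k$ and $\cm A_k'$ from client $k$ is $\tfrac{n_k}{2}\|\cm A_k-\cm A_k'\|_\F^2$. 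In the federated case, with all $\cm B_j$ fixed, the pooled KL for perturbations of $\cm A_0$ is $\tfrac n2\|\cm A_0-\cm A_0'\|_\F^2$.

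\emph{Low-rank term.} I fix $\cm B_j=0$ for all $j$ and run the standard Tucker-rank packing, as in \citet{luo2024tensor}. Take a core $\cm S$ with $\prod r_s$ free entries and factors $\bm U_s=[\bm I_{r_s};\bm V_s]/\sqrt{\cdot}$ with $\bm V_s$ in a $(q_s-r_s)\times r_s$ (respectively $(p_s-r_{m+s})\times r_{m+s}$) hypercube packing. The condition $q_\ell\ge 2r_\ell$, $p_\ell\ge 2r_{m+\ell}$ makes these blocks nontrivial. Handling each block of parameters separately gives a $\delta$-packing of cardinality $\exp(c\,df_{\bbm r})$ whose pairwise Frobenius distances are $\asymp\delta$. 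Taking $\delta\asymp\sqrt{df_{\bbm r}/n}$ in the federated case ($n_k$ in the single-client case) keeps the KL at most $c\,df_{\bbm r}$, and Fano gives the first term.

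\emph{Sparse term.} I fix $\cm A_0=0$ and the other $\cm B_j$, and vary only $\cm B_k$, so only client $k$'s $n_k$ samples are informative in both regimes. This is why the federated bound still carries $n_k$. For $\nu=0$, take $s_0$-sparse sign patterns with amplitude $a\asymp\sqrt{\log(pq/s_0)/n_k}$ and a Varshamov--Gilbert packing of size $(pq/s_0)^{cs_0}$. For $\nu\in(0,1)$, use the usual $\ell_\nu$-ball construction with support size $s\asymp s_\nu(\log(pq)/n_k)^{-\nu/2}$. The operator-norm constraint $\|(\cm B_k)_{[S_X]}\|_{\op}\le\zeta$ must also hold. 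I would choose supports with at most one entry per row of the $S_X$-matricization, so that $\|\cdot\|_{\op}$ equals the maximal entry. Alternatively one can bound $\|\cdot\|_{\op}\le\|\cdot\|_\F\asymp\mathfrak s_{\nu,k}\le\zeta$, which uses the assumption $\zeta\ge\mathfrak s_{\nu,k}$ directly.

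\emph{Identifiability term; the main obstacle.} The idea is non-identifiability: exhibit two parameter pairs with identical $\cm A_k=\cm A_0+\cm B_k$ for every client, but whose components differ by $\asymp\sqrt{r_{\bbm q}\wedge r_{\bbm p}}\,\zeta$. The data then carry no information to separate them, and Le Cam's two-point bound gives the rate. Let $r=r_{\bbm q}\wedge r_{\bbm p}$ and define $\cm D$ through $\cm D_{[S_X]}=\zeta\sum_{j\le r}\bm e_{i_j}\bm e_{l_j}^\top$, with distinct rows and columns whose multi-indices lie on a grid. The grid is chosen so that $\cm D$ is a Tucker-rank-$\le\bbm r$ tensor, for instance a sum of $r$ standard-basis outer products aligned with an $r_1\times\cdots$ index block. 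Then $\|\cm D_{[S_X]}\|_{\op}=\zeta$, $\|\cm D\|_\F=\sqrt r\,\zeta$ and $\|\cm D\|_0=r$. Compare $(\cm A_0,\cm B_j)=(0,\cm D)$ for all $j$ with $(\cm D,0)$. Both lie in $\Theta$ precisely because $s_0\ge2r$ (or $s_\nu\ge 2r\zeta^\nu$ for $\nu\in(0,1)$) and $\operatorname{Tucrank}(\cm D)\le\bbm r$. The delicate point is verifying simultaneously that this $\cm D$ has Tucker rank at most $\bbm r$ and is $r$-sparse with operator norm $\zeta$. Diagonal blocks $\sum_{j}\bm e_{j}\circ\cdots$ need care; I would take $\cm D$ supported on a product index set of size $\prod_s r_s$ arranged so that the $S_X$-matricization is a partial permutation of rank $r$. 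If exact Tucker-rank control fails, a fallback is the subfamily $\{(\cm D,0),(0,\cm D)\}$ with $\cm D$ of matricization rank one, which loses only the $\sqrt r$ factor; the sparsity budget $2r$ suggests, however, that the full construction is intended. Finally, summing the three bounds gives both displayed cases.
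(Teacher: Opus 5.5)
Your proposal is correct and has the same architecture as the paper's proof. You prove three separate lower bounds and combine them via $\max\asymp$ sum: a Tucker-rank packing with Fano (KL $\tfrac{n_k}{2}\|\cdot\|_\F^2$, or $\tfrac n2\|\cdot\|_\F^2$ after pooling with all $\cm B_j=\bm 0$); a sparse packing for $\cm B_k$, which only client $k$'s data can inform, so the federated bound keeps $n_k$; and a two-point non-identifiability argument. The differences lie in two sub-steps.

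\emph{Identifiability term.} The paper uses two coordinate-disjoint blocks $\cm T^{(0)},\cm T^{(1)}$ and compares $(-\cm T^{(0)},\bm 0)$ with $(\cm T^{(1)},-\cm T^{(0)}-\cm T^{(1)})$. That is where $q_\ell\ge 2r_\ell$, $p_\ell\ge 2r_{m+\ell}$ and $s_0\ge 2(r_{\bbm q}\wedge r_{\bbm p})$ are used. Your pair $(\bm 0,\cm D)$ versus $(\cm D,\bm 0)$ is leaner and needs only $s_0\ge r_{\bbm q}\wedge r_{\bbm p}$ (or $s_\nu\ge (r_{\bbm q}\wedge r_{\bbm p})\zeta^\nu$). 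Your hedging there is unnecessary, and no fallback is needed. Any tensor supported on the block $[r_1]\times\cdots\times[r_{d+m}]$ has mode-$s$ unfoldings with at most $r_s$ nonzero rows, hence Tucker rank at most $\bbm r$. Its $S_X$-matricization is an $r_{\bbm q}\times r_{\bbm p}$ block, in which a partial permutation of rank $r_{\bbm q}\wedge r_{\bbm p}$ fits; this is exactly the paper's $\cm T^{(0)}$.

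\emph{Weak-sparse term.} For $\nu\in(0,1)$ the paper bounds the mutual information via Yang--Barron, using packing and covering entropies of the $\ell_\nu$ ball. Your direct $s$-sparse construction with $s a^\nu\le s_\nu$ is more elementary. It costs the mild regime condition $\log(pq/s)\asymp\log(pq)$, and the paper's ``admissible range'' plays the same role.

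\emph{One small slip.} A single nonzero per row of the $S_X$-matricization does not make the operator norm equal to the largest entry; that needs at most one nonzero per row \emph{and} per column. Your second option, $\|\cdot\|_{\op}\le\|\cdot\|_\F\le c\,\mathfrak s_{\nu,k}\le\zeta$ with $c\le 1$, is exactly what the paper uses and suffices.
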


Theorem~\ref{thm:joint-minimax-lower-bound} shows that, excluding the additional error induced by a finite privacy budget, the upper bounds derived above are rate-optimal under the current minimax framework.
Specifically, the low Tucker-rank term of the lower bound scales as $\sqrt{df_{\bbm r}/n}$ in the federated regime and $\sqrt{df_{\bbm r}/n_k}$ in the single-client regime, matching the corresponding terms in the upper bounds up to constants that depend on the design and noise parameters. The sparse term $\mathfrak s_{\nu,k}$ also coincides with the local sparse estimation rate obtained in the upper bounds. Finally, the weak-identifiability term in the lower bound is $\sqrt{r_{\bbm q}\wedge r_{\bbm p}}\,\zeta$, whereas the upper bounds contain $\sqrt{r_{\bbm q}+r_{\bbm p}}\,\zeta$. Since each mode rank $r_s$ is fixed and does not grow with the tensor dimensions, both $r_{\bbm q}$ and $r_{\bbm p}$ are finite constants, and hence these two terms are of the same order.

\section{Simulation Studies}\label{sec:Simulation}

We conduct simulation studies to evaluate the finite-sample performance of the proposed personalized federated tensor regression method. We first compare its estimation accuracy with several benchmark methods across varying per-client sample sizes, and then examine the privacy–utility trade-off under different privacy parameters. 
In all experiments, the federated and single‑client algorithms are initialized using the procedure in Section~\ref{sec:alg-ini}, the Tucker rank is selected according to Section~\ref{sec:tucker-rank-selection}, and remaining tuning parameters are chosen as described in Section~\ref{sec:tuning-param-selection}. 
ADMM‑based initialization tuning follows Section~\ref{sec:tuning-single} of the Supplementary Material.

We consider $K=5$ clients, with coefficient tensors $\cm A_k=\cm A_0+\cm B_k\in\mathbb R^{5\times5\times10\times5}$. The generation of the common and client-specific components proceeds as follows.
The shared component $\cm A_0$ has Tucker rank $\bbm r=(3,3,3,3)$. We first draw a core tensor $\cm G\in\mathbb R^{3\times3\times3\times3}$ with $i.i.d.$ standard normal entries and rescale it to satisfy $\|\cm G\|_{\F}=1$. We then independently generate random Gaussian matrices of sizes $5\times3$, $5\times3$, $10\times3$, and $5\times3$, and orthonormalize their columns to obtain factor matrices $\bm U_s$ such that $\bm U_s^\top\bm U_s=\bm I_{r_s}$ for $s\in[4]$. The shared component is formed as $\cm A_0 = \cm G \times_{s=1}^4 \bm U_s$. For each client $k$, the sparse deviation $\cm B_k$ is constructed by selecting $50$ entries uniformly at random, drawing their values independently from $\mathscr N(0,1)$, and setting all remaining entries to zero. The resulting tensor is then rescaled so that $\|\cm B_k\|_{\F}/\|\cm A_0\|_{\F}=0.5$.  
Given the true coefficient tensors $\cm A_k=\cm A_0+\cm B_k$, we generate the predictor and response tensors $\{(\cm X_{k,i},\cm Y_{k,i})\}$ from model \eqref{eq:model}. 
The entries of $\cm X_{k,i}$ and the noise tensor $\cm E_{k,i}$ are $i.i.d.$ from $\mathscr N(0,1)$ and $\mathscr N(0,0.01)$, respectively. The response $\cm Y_{k,i}$ is then computed according to \eqref{eq:model}.

To evaluate estimation accuracy, we compute the Frobenius-norm error of the shared component as $\|\overline{\cm A}_0-\cm A_0\|_\F$, and the average errors for the client-specific deviations and personalized coefficient tensors as $5^{-1}\sum_{k=1}^5\|\overline{\cm B}_k-\cm B_k\|_\F$ and $5^{-1}\sum_{k=1}^5\|\overline{\cm A}_k-\cm A_k\|_\F$, respectively. Here $\overline{\cm A}_0$, $\overline{\cm B}_k$, and $\overline{\cm A}_k$ denote the corresponding estimators. 
We compare four estimation strategies: (i) the non-private personalized federated estimator with $(\varepsilon,\delta)=(\infty,1)$, (ii) the private personalized federated estimator with $(\varepsilon,\delta)=(20,0.1)$, (iii) the single-client two-stage estimator, and (iv) the ADMM initialization obtained by averaging the five client-specific initial estimators.

Figure~\ref{fig:sim-error-boxplot} displays the estimation errors for $\cm A_0$, $\{\cm B_k\}_{k=1}^5$, and $\{\cm A_k\}_{k=1}^5$ via boxplots over 1,000 replications at different per-client sample sizes. 
For all methods, the errors decrease as the per-client sample size grows, corroborating the theoretical results. At each sample size, the non-private federated estimator has the smallest error, followed by the private federated estimator, the single-client two-stage estimator, and the ADMM initialization. The gap between the two federated estimators reflects the cost of privacy protection, while their improvement over the single-client estimator demonstrates the benefit of cross-client information sharing. The advantage of the single-client two-stage estimator over the ADMM initialization further highlights the value of the two-stage refinement. Moreover, the gap between the private and non-private federated estimators shrinks as the sample size increases, indicating that the effect of the added Gaussian noise becomes less pronounced as more observations become available at each client.

\begin{figure}[H]
\centering
\includegraphics[width=0.8\textwidth]{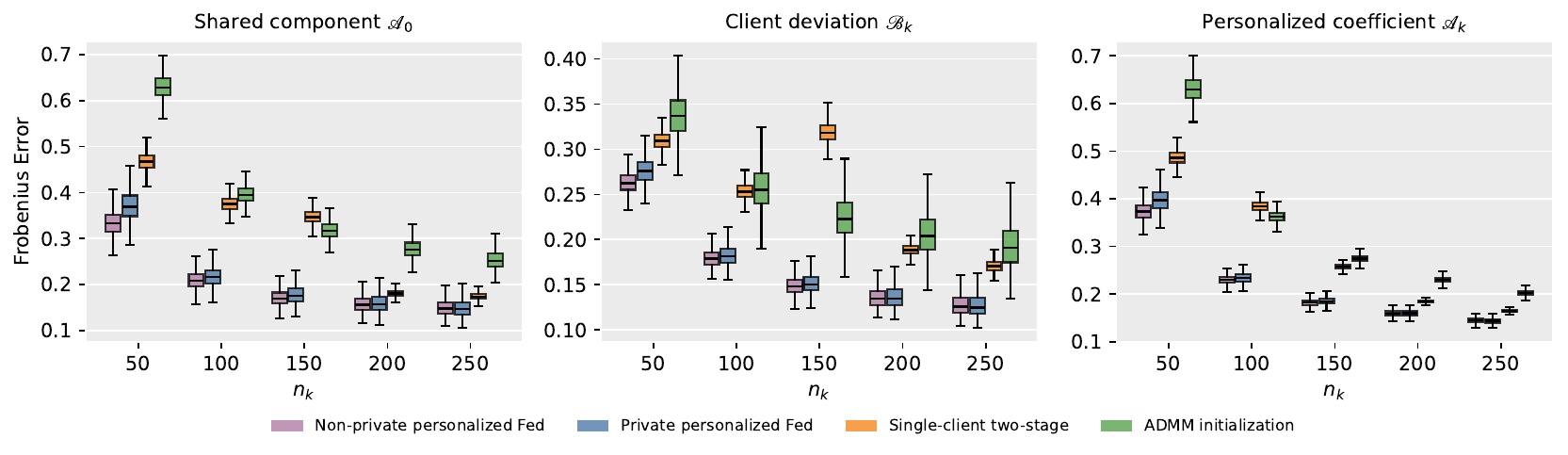}
\caption{Frobenius-norm estimation errors for $\cm A_0$, $\{\cm B_k\}_{k=1}^5$, and $\{\cm A_k\}_{k=1}^5$.}
\label{fig:sim-error-boxplot}
\end{figure}

We also investigate the privacy-utility trade-off by varying the privacy parameters $(\varepsilon,\delta)$ for the federated personalized estimators in Figure~\ref{fig:privacy-sensitivity}, keeping $n_k=250$ per client and averaging over 1,000 replications. 
The errors increase as either $\varepsilon$ or $\delta$ decreases, confirming the expected cost of stronger privacy protection. The impact of $\varepsilon$ is more pronounced than that of $\delta$, consistent with the dependence structure of the theoretical error bound in Theorem~\ref{thm:federated_representation_error}. 
The three estimation targets exhibit similar privacy-utility patterns because the privacy perturbation enters solely through Stage-I estimation of $\cm A_0$ and subsequently propagates to the estimates of $\{\cm B_k\}_{k=1}^5$ and $\{\cm A_k\}_{k=1}^5$.

\begin{figure}[H]
\centering
\includegraphics[width=0.6\textwidth]{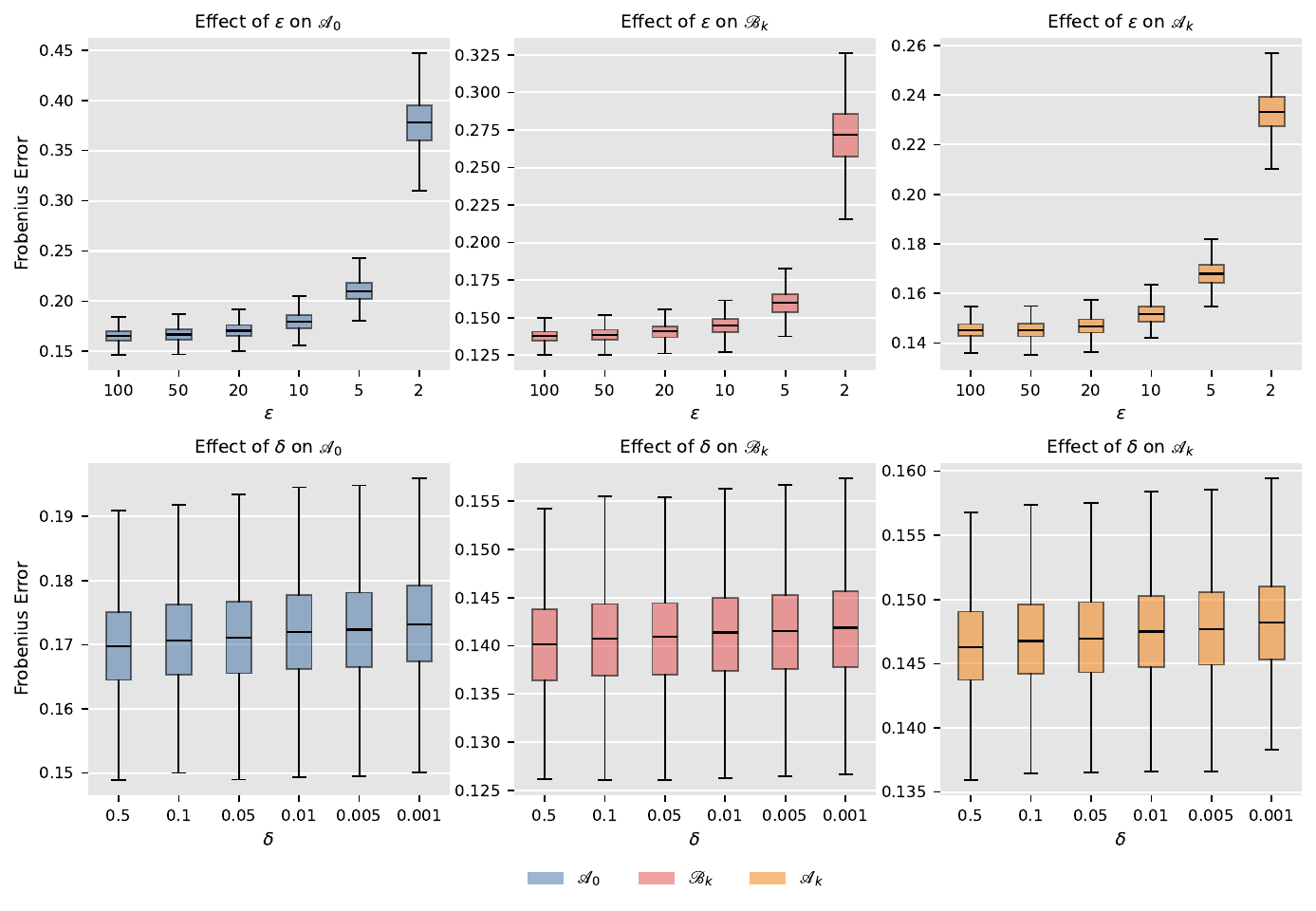}
\caption{Privacy-utility tradeoff for $\cm A_0$, $\{\cm B_k\}_{k=1}^5$, and $\{\cm A_k\}_{k=1}^5$. The upper panel varies $\varepsilon$ with $\delta=0.1$ fixed, whereas the lower panel varies $\delta$ with $\varepsilon=20$ fixed.}
\label{fig:privacy-sensitivity}
\end{figure}

\section{Empirical Analysis}\label{sec:RealData}

ADHD symptoms are clinically meaningful and important for early identification, monitoring, and individualized intervention. Empirical evidence shows that structural MRI features are associated with ADHD symptoms \citep{zhou2013tensor}. This section therefore investigates the role of MRI in analyzing and predicting ADHD symptom severity. Because the participating sites collect comparable imaging and clinical variables, borrowing information across sites may improve analysis and prediction. However, privacy and data-sharing restrictions prevent direct pooling of individual-level records. Moreover, MRI information is typically recorded as a tensor. These considerations make our DP federated learning framework of tensor regression well suited for this application.

We use data from three sites participating in the ADHD-200 Sample Initiative: the Kennedy Krieger Institute (KKI), Peking University (Peking), and New York University (NYU). The data are publicly available at \url{http://fcon_1000.projects.nitrc.org/indi/adhd200/}. After excluding subjects with missing or invalid measurements, the final sample comprises 556 subjects: 80 from KKI, 221 from Peking, and 255 from NYU. Following the Daubechies D4 wavelet-based image-reduction procedure used in \cite{hou2015hierarchical}, we transform each preprocessed structural MRI scan into a $12\times14\times12$ tensor and denote the resulting covariate for subject $i$ at site $k$ by $\cm X_{k,i}\in \mathbb R^{12\times 14 \times 12}$. Let $Y_{k,i}\in\mathbb R$ denote the ADHD symptom score, a quantitative index in the dataset that measures ADHD symptom severity. We consider the scalar-on-tensor regression
\begin{align}\label{eq:adhd-model}
  Y_{k,i} = \left\langle \cm A_k,\cm X_{k,i}\right\rangle+E_{k,i},
  \quad \text{with} \ k\in\{\text{KKI},\text{NYU},\text{Peking}\},
\end{align}
where $E_{k,i}\in \mathbb R$ is mean-zero random noise and $\cm A_k\in\mathbb R^{12\times14\times12}$ is the site-specific coefficient tensor. Because structural MRI measurements exhibit strong spatial dependence and common neuroanatomical organization, the imaging-symptom association shared across sites is expected to admit a low-dimensional Tucker-rank structure. In contrast, differences in scanners, acquisition protocols, or subject populations are expected to affect only a limited subset of imaging features, motivating sparse site-specific deviations. Therefore, we decompose $\cm A_k$ into a shared low Tucker-rank component $\cm A_0$ and a weakly sparse site-specific deviation $\cm B_k$ as in \eqref{eq:decomposition}.

To implement the proposed federated procedure in Section~\ref{sec:fed-learning}, we first compute the local one-stage estimators in Section~\ref{sec:alg-ini} and use the preliminary estimate from NYU, the site with the largest sample size, to initialize Stage-I. Applying the ridge-type ratio criterion in Section~\ref{sec:tucker-rank-selection} to the preliminary estimators $\widetilde{\cm A}_{0,k}$ and aggregating the site-specific rank candidates yields the federated Tucker-rank estimate $\widehat{\bbm r}^{\mathrm{fed}}=(1,3,1)$. We set the privacy parameters to $(\varepsilon,\delta)=(20,0.1)$. Following Section~\ref{sec:tuning-param-selection}, the remaining tuning parameters are selected by five-fold cross-validation, and the iteration numbers are set according to the implementation guidelines therein.

Figure~\ref{fig:adhd-heatmaps} visualizes the mode-3 unfoldings of the estimated shared component $\widehat{\cm A}_0$ and the three site-specific deviations $\{\widehat{\cm B}_k\}$. Consistent with the structural decomposition in \eqref{eq:decomposition}, $\widehat{\cm A}_0$ exhibits low Tucker-rank structure and captures the imaging--symptom association shared across sites through a few multilinear spatial factors. By projecting the high-magnitude entries of $\widehat{\cm A}_0$ back to the standard brain space, we identify approximate correspondence with the left middle and superior frontal gyri, cerebellar Crus regions, temporal/fusiform areas, and the left temporal pole. These frontal, cerebellar, and temporal regions are well-documented in the ADHD neuroimaging literature as sites of structural alteration \citep{castellanos1996quantitative,berquin1998cerebellum,kobel2010temporal}. In contrast, the nonzero coefficients of each $\widehat{\cm B}_k$ constitute localized adjustments to the shared pattern, allowing the model to flexibly absorb site-specific effects induced by differences in acquisition protocols or scanner characteristics.

\begin{figure}[!htbp]
\centering
\includegraphics[width=\textwidth]{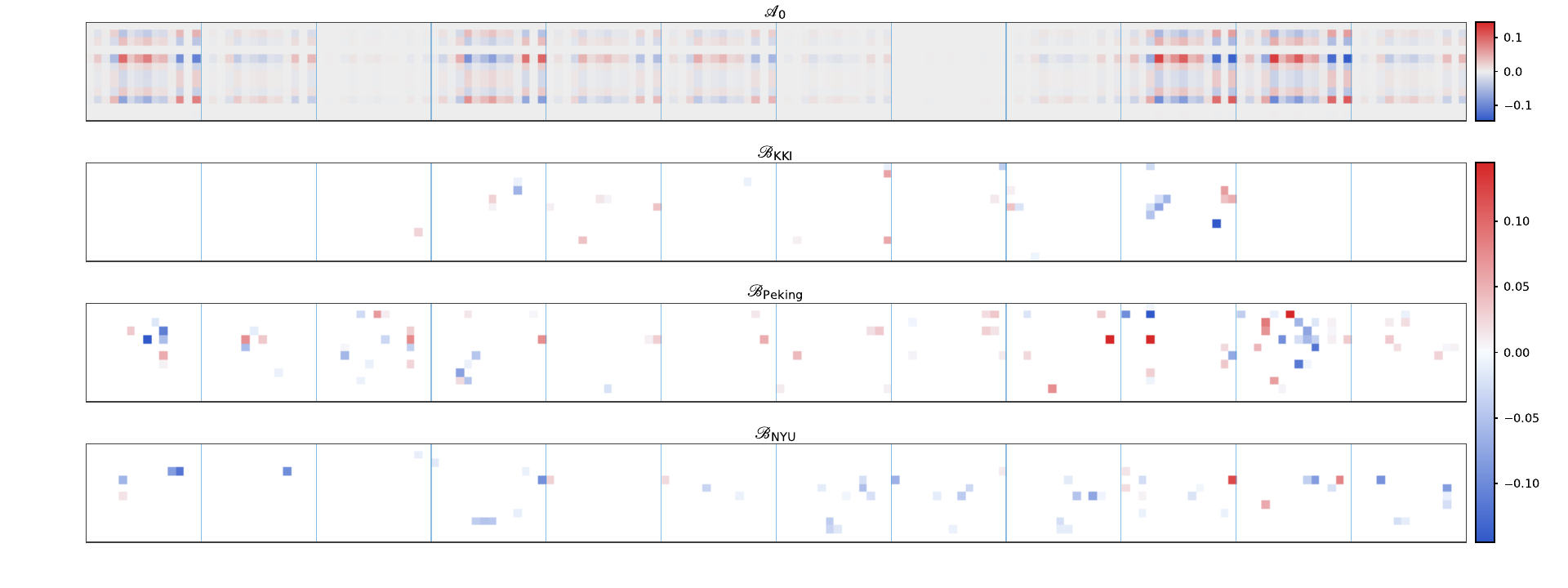}
\caption{Mode-3 unfoldings of $\widehat{\cm A}_0$ and $\{\widehat{\cm B}_k\}$ in the ADHD application.}
\label{fig:adhd-heatmaps}
\end{figure}

We next evaluate the prediction of ADHD symptom score under model~\eqref{eq:adhd-model} by comparing the proposed personalized federated two-stage method (\textsf{Fed-2Stage}) with three benchmarks: 
(i) federated averaging method (\textsf{Fed-Avg}), which aggregates the local estimates by simple averaging, and uses the resulting tensor $\widehat{\cm A}^{\mathrm{avg}}=K^{-1}\sum_{k=1}^K\widehat{\cm A}_k^{\mathrm{loc}}$ as a common predictor \citep{mcmahan2017communication};
(ii) federated common method (\textsf{Fed-Common}), which omits the site-specific deviations and uses the shared low Tucker-rank estimator from Stage-I of the federated procedure in Section~\ref{sec:fed-learning} for all sites; 
and (iii) single-client two-stage method (\textsf{Single-2Stage}) in Section~\ref{sec:single-client-learning}, which is fitted at each site using only local data. For a fair comparison, all four methods use the same fold partitions, and their tuning parameters are selected according to the validation procedure in Section~\ref{sec:tuning-param-selection}.

Prediction performance is evaluated by five-fold cross-validation. Let $\mathcal I_{k,v}^{\mathrm{val}}$ be the validation set for site $k$, and let $\widehat{\cm A}_k^{(-v)}$ be the coefficient tensor estimator for site $k$ fitted without fold $v$. The validation mean squared error is calculated by $\mathrm{VE}_{k,v} = |\mathcal I_{k,v}^{\mathrm{val}}|^{-1}\sum_{i\in\mathcal I_{k,v}^{\mathrm{val}}}\{Y_{k,i} - \langle\widehat{\cm A}_k^{(-v)},\cm X_{k,i}\rangle\}^2$ for site $k$ without fold $v$.
Then for site $k$, the mean validation error is defined by the five-fold mean $\overline{\mathrm{VE}}_k=5^{-1}\sum_{v=1}^5\mathrm{VE}_{k,v}$, and the fold-to-fold standard deviation is $\mathrm{SD}_k=\{\sum_{v=1}^5(\mathrm{VE}_{k,v}-\overline{\mathrm{VE}}_k)^2/4\}^{1/2}$. To summarize the overall prediction performance across the three sites, we also consider an equal-weight aggregate measure. For each fold $v\in[5]$, define $\mathrm{VE}_{\mathrm{avg},v}=(\mathrm{VE}_{\mathrm{KKI},v}+\mathrm{VE}_{\mathrm{Peking},v}+\mathrm{VE}_{\mathrm{NYU},v})/3$.
The mean and fold-to-fold standard deviation of this aggregate measure are $\overline{\mathrm{VE}}_{\mathrm{avg}}=5^{-1}\sum_{v=1}^5\mathrm{VE}_{\mathrm{avg},v}$ and $\mathrm{SD}_{\mathrm{avg}}=\{\sum_{v=1}^5(\mathrm{VE}_{\mathrm{avg},v}-\overline{\mathrm{VE}}_{\mathrm{avg}})^2/4\}^{1/2}$, respectively.

Table~\ref{tab:adhd_baseline_validation} reports the site-specific and aggregate validation measures defined above for all four methods. The proposed method \textsf{Fed-2Stage} achieves the smallest site-specific validation error at all three sites and the smallest equal-weight aggregate validation error. Its aggregate validation error is 0.805, representing reductions of 23.9\%, 42.3\%, and 27.5\% relative to \textsf{Fed-Avg}, \textsf{Fed-Common}, and \textsf{Single-2Stage}, respectively. The comparison with \textsf{Fed-Avg} further suggests that jointly estimating a shared low Tucker-rank component and sparse site-specific deviations is more effective than simply averaging independently fitted local coefficients. The improvement over \textsf{Fed-Common} indicates that a common federated component alone does not adequately account for site heterogeneity, while the improvement over \textsf{Single-2Stage} demonstrates the benefit of borrowing information across sites. In addition, \textsf{Fed-2Stage} has the smallest fold-to-fold standard deviations for NYU and the equal-weight aggregate measure and the second smallest for KKI, indicating generally stable predictive performance. Overall, these results support combining differentially private cross-site estimation of the shared structure with local personalization under $(\varepsilon,\delta) = (20,0.1)$.

\begin{table}[t]
\centering
\caption[Validation errors in the ADHD application.]{Five-fold site-specific and equal-weight aggregate validation errors for four methods, with fold-to-fold standard deviations in parentheses. 
Bold and underlining indicate the smallest and second smallest values, respectively.}
\label{tab:adhd_baseline_validation}
\begingroup
\setlength{\tabcolsep}{5pt}
\renewcommand{\arraystretch}{1.2}
\begin{tabular}{lcccc}
\toprule
Method & $\overline{\mathrm{VE}}_{\mathrm{KKI}}\;(\mathrm{SD}_{\mathrm{KKI}})$ & $\overline{\mathrm{VE}}_{\mathrm{Peking}}\;(\mathrm{SD}_{\mathrm{Peking}})$ & $\overline{\mathrm{VE}}_{\mathrm{NYU}}\;(\mathrm{SD}_{\mathrm{NYU}})$ & $\overline{\mathrm{VE}}_{\mathrm{avg}}\;(\mathrm{SD}_{\mathrm{avg}})$ \\
\midrule
\textsf{Fed-2Stage}
& \textbf{0.903} (\underline{0.340}) & \textbf{0.738} (0.151) & \textbf{0.773} (\textbf{0.091}) & \textbf{0.805} (\textbf{0.105}) \\
\textsf{Fed-Avg}
& \underline{1.360} (\textbf{0.225}) & \underline{0.791} (\underline{0.149}) & \underline{1.022} (\underline{0.160}) & \underline{1.058} (\underline{0.124}) \\
\textsf{Fed-Common}
& 1.504 (1.043) & 1.372 (0.326) & 1.308 (0.295) & 1.395 (0.528) \\
\textsf{Single-2Stage}
& 1.384 (0.584) & 0.811 (\textbf{0.137}) & 1.138 (0.170) & 1.111 (0.159) \\
\bottomrule
\end{tabular}
\endgroup
\end{table}

\section{Conclusion and Discussion}\label{sec:Conclusions}
This paper develops a privacy-preserving personalized federated learning framework for high-dimensional tensor regression. The proposed method decomposes each client-specific coefficient tensor into a low Tucker-rank component shared across clients and a sparse client-specific deviation, thereby capturing common multilinear structure while accommodating client heterogeneity. Estimation proceeds in two stages: differentially private federated representation learning for the shared component through noisy gradient aggregation, followed by local personalized estimation of the client-specific deviations. We establish non-asymptotic error bounds and matching minimax lower bounds for both the federated and single-client estimators, together with theoretical guarantees for the initialization estimator and Tucker-rank selection procedure. A direct comparison of the federated and single-client rates reveals that federated learning improves upon single-client estimation when the reduction in statistical error from cross-client information sharing outweighs the additional cost of privacy protection. Simulation studies corroborate the theoretical findings and demonstrate the advantages of the proposed method across varying levels of local sample sizes and privacy budgets. An empirical analysis of ADHD symptom index further shows that the federated method achieves desirable predictive performance while respecting privacy constraints, highlighting its practical value for privacy-sensitive and heterogeneous tensor data.

Several promising directions remain for future research. First, the proposed framework can be extended to accommodate mixed covariates of different tensor orders, including scalars, vectors, matrices, and higher-order tensors. This would allow the inclusion of standard control variables and demographic factors, thereby yielding a more comprehensive characterization of the association with the response. Second, it is of interest to generalize the methodology to settings with dependent data. Both the estimation procedures and the theoretical guarantees should be adapted to handle temporal or spatial dependence, which is common in longitudinal and spatially organized neuroimaging studies. Third, the current linear specification can be extended to nonlinear tensor regression models, such as additive, kernel-based, and neural network models. Such extensions would capture more flexible relationships between tensor-valued covariates and the response, broadening the scope of potential applications.





\appendix
\setcounter{section}{0}
\setcounter{subsection}{0}
\setcounter{subsubsection}{0}

\renewcommand\thesection{S.\arabic{section}}
\renewcommand\thesubsection{S.\arabic{section}.\arabic{subsection}}
\renewcommand\thesubsubsection{S.\arabic{section}.\arabic{subsection}.\arabic{subsubsection}}

\makeatletter
\renewcommand{\theHsection}{appendix.\arabic{section}}
\renewcommand{\theHsubsection}{appendix.\arabic{section}.\arabic{subsection}}
\renewcommand{\theHsubsubsection}{appendix.\arabic{section}.\arabic{subsection}.\arabic{subsubsection}}

\section*{Appendix}
\makeatother
	This supplementary material provides supporting technical details for the main theoretical and algorithmic results. Section~\ref{sec:notation} introduces the notation and theoretical constants used throughout the main paper and the supplement. Section~\ref{append:Primary lemmas} collects the auxiliary results required for proving the main theorems. Section~\ref{append:proofs-of-upper-bounds-and-consistency} establishes the upper bounds for the federated and single-client procedures, as well as the consistency of Tucker-rank selection. Section~\ref{sec:minimax-lower bound} proves the corresponding minimax lower bounds. Section~\ref{append:proofs of primary lemmas} gives the detailed proofs of the auxiliary lemmas. Section~\ref{append:technical lemmas} presents additional technical tools used in the theoretical analysis. Finally, Section~\ref{sec:Algorithmic_Details} describes the implementation of the proposed estimation procedures and the tuning-parameter selection strategy.

	\setcounter{lemma}{0}
	\setcounter{equation}{0}

\section{Notation}\label{sec:notation}

\subsection{Basic Notation}\label{sec:basic-notation}
	Throughout the paper, vectors and matrices are denoted by boldface lowercase and uppercase letters, respectively, such as $\bbm a$, $\bbm \xi$, $\bm A$, and $\bbm \Delta$. For a matrix $\bm A\in\mathbb R^{m\times n}$, $\bm A^\top$, $\|\bm A\|_\F$, and $\|\bm A\|_*$ denote its transpose, Frobenius norm, and nuclear norm. For $\nu\in[1,\infty)$, define $\|\bm A\|_\nu=(\sum_{i=1}^m\sum_{j=1}^n |A_{ij}|^\nu)^{1/\nu}$; for $\nu\in(0,1)$, the same expression is used as the entrywise $\ell_\nu$ quasi-norm. We also write $\|\bm A\|_0=\#\{(i,j):A_{ij}\neq0\}$. For positive sequences $x_n$ and $y_n$, $x_n\gtrsim y_n$ means $x_n\geq Cy_n$ for a universal constant $C>0$, and $x_n\asymp y_n$ means both $x_n\gtrsim y_n$ and $y_n\gtrsim x_n$. For a positive integer $n$, let $[n]=\{1,2,\cdots,n\}$.

	We next recall the tensor operations used in the proofs. Let $\cm M\in\mathbb R^{p_1\times\cdots\times p_\ell}$ and $S\subset[\ell]$. The multi-mode matricization $\cm M_{[S]}$ is the $\prod_{j\in S}p_j\times\prod_{j\notin S}p_j$ matrix formed by grouping the modes in $S$ as rows and the remaining modes as columns. The entry indexed by $(i_1,\cdots,i_\ell)$ is mapped to $(i,j)$ with $i=1+\sum_{s\in S}(i_s-1)I_s$ and $j=1+\sum_{s\notin S}(i_s-1)J_s$, where $I_s=\prod_{m\in S,m<s}p_m$ and $J_s=\prod_{m\notin S,m<s}p_m$. Under this convention, $\cm M_{[S]}=\cm M_{[S^c]}^\top$, where $S^c=[\ell]\setminus S$. Conversely, for any matrix $\bm Z$ conformable with the $S$-matricization, $\mathrm{Fold}_{[S]}(\bm Z)$ denotes the inverse tensorization under the same indexing convention; equivalently, if $\bm Z=\cm Z_{[S]}$, then $\mathrm{Fold}_{[S]}(\bm Z)=\cm Z$. For a single index $s\in[\ell]$, we write $\cm M_{[s]}=\cm M_{[\{s\}]}$ for the mode-$s$ unfolding and $\mathrm{Fold}_{[s]}(\cdot)=\mathrm{Fold}_{[\{s\}]}(\cdot)$ for the corresponding inverse folding operation. 

	For $\cm X\in\mathbb R^{q_1\times\cdots\times q_m\times p_1\times\cdots\times p_d}$ and $\cm Y\in\mathbb R^{p_1\times\cdots\times p_d}$, their generalized inner product is the tensor $\langle\cm X,\cm Y\rangle\in\mathbb R^{q_1\times\cdots\times q_m}$ whose $(i_1,\cdots,i_m)$ entry is $\sum_{\ell_1=1}^{p_1}\cdots\sum_{\ell_d=1}^{p_d}\cm X_{i_1,\cdots,i_m,\ell_1,\cdots,\ell_d}\cm Y_{\ell_1,\cdots,\ell_d}$. For $\cm X\in\mathbb R^{q_1\times\cdots\times q_m}$ and $\cm Y\in\mathbb R^{p_1\times\cdots\times p_d}$, their outer product $\cm X\circ\cm Y$ is the tensor in $\mathbb R^{q_1\times\cdots\times q_m\times p_1\times\cdots\times p_d}$ with entries $(\cm X\circ\cm Y)_{i_1,\cdots,i_m,\ell_1,\cdots,
	\ell_d}=\cm X_{i_1,\cdots,i_m}\cm Y_{\ell_1,\cdots,\ell_d}$. For any tensor $\cm M$, $\|\cm M\|_\F$ denotes the Frobenius norm. Entrywise $\ell_\nu$ norms are invariant under matricization, i.e., $\|\cm M\|_\nu=\|\cm M_{[S]}\|_\nu$ for any nonempty $S\subset[\ell]$ and any $\nu\in[0,\infty]$.

	We now summarize the Tucker-rank notation. If $\cm M\in\mathbb R^{p_1\times\cdots\times p_\ell}$ has Tucker rank $\bbm r=(r_1,\cdots,r_\ell)$, then $r_s=\rank(\cm M_{[s]})$ and $\tucrank(\cm M)=\bbm r$. Equivalently, $\cm M$ admits a Tucker decomposition $\cm M=\cm S\times_1\bm U_1\times_2\cdots\times_\ell\bm U_\ell=\cm S\times_{s=1}^{\ell}\bm U_s$, where $\cm S\in\mathbb R^{r_1\times\cdots\times r_\ell}$ is the core tensor and $\bm U_s\in\mathbb R^{p_s\times r_s}$ has orthonormal columns. Under the matricization convention above, $\cm M_{[s]}=\bm U_s\cm S_{[s]}(\otimes_{i\neq s}\bm U_i)^\top$ and $\cm M_{[S]}=(\otimes_{i\in S}\bm U_i)\cm S_{[S]}(\otimes_{i\notin S}\bm U_i)^\top$, where the Kronecker products follow the reverse order induced by the matricization convention.

	For tensors in $\mathbb R^{q_1\times\cdots\times q_m\times p_1\times\cdots\times p_d}$, let $S_X=[d+m]\setminus[m]$, so that the output modes form the rows and the predictor modes form the columns of the $S_X$-matricization. Thus, $\cm A_{[S_X]}\in\mathbb R^{q\times p}$, where $q=\prod_{a=1}^{m}q_a$ and $p=\prod_{b=1}^{d}p_b$, and $\vect(\langle\cm A,\cm X\rangle)=\cm A_{[S_X]}\vect(\cm X)$. For a Tucker-rank vector $\bbm r=(r_1,\cdots,r_{d+m})$, define $r_{\bbm q}=\prod_{\ell=1}^{m}r_\ell$ and $r_{\bbm p}=\prod_{\ell=1}^{d}r_{m+\ell}$. The Tucker effective dimension is $df_{\bbm r}=\prod_{s=1}^{d+m}r_s+\sum_{s=1}^{m}r_s(q_s-r_s)+\sum_{s=1}^{d}r_{m+s}(p_s-r_{m+s})$.

	Finally, we recall the tangent-space projector for the Tucker-rank manifold. Let $\cm X=\cm S\times_{k=1}^{d+m}\bm U_k$ be a Tucker-rank-$\bbm r$ tensor with orthonormal factor matrices $\{\bm U_k\}_{k=1}^{d+m}$. For each $k\in[d+m]$, set $P_{\bm U_k}=\bm U_k\bm U_k^\top$ and $P_{\bm U_k^\perp}=\bm I_{d_k}-\bm U_k\bm U_k^\top$. Let $\bm V_k\in\mathbb R^{(\prod_{j\neq k}r_j)\times r_k}$ be an orthonormal basis for the row space of the mode-$k$ unfolding $\cm S_{[k]}$, and define $\bm W_k=(\bm U_{d+m}\otimes\cdots\otimes\bm U_{k+1}\otimes\bm U_{k-1}\otimes\cdots\otimes\bm U_1)\bm V_k$ and $P_{\bm W_k}=\bm W_k\bm W_k^\top$. Thus, $P_{\bm W_k}$ projects onto the $r_k$-dimensional row space induced jointly by the core tensor and the factor matrices in all modes other than $k$.

	Let $\mathcal M_{\bbm r}$ denote the smooth manifold of tensors with Tucker rank $\bbm r$, and let $T_{\boldsymbol{\mathscr X}}\mathcal M_{\bbm r}$ be its tangent space at $\cm X$. With the matricization and folding notation introduced above, the orthogonal projection onto $T_{\boldsymbol{\mathscr X}}\mathcal M_{\bbm r}$ is
	\[
	P_{T_{\boldsymbol{\mathscr X}}\mathcal M_{\bbm r}}(\cm Z)
	=
	\cm Z\times_{k=1}^{d+m}P_{\bm U_k}
	+
	\sum_{k=1}^{d+m}
	\mathrm{Fold}_{[k]}\left(P_{\bm U_k^\perp}\,\cm Z_{[k]}\,P_{\bm W_k}\right).
	\]

	\subsection{Theoretical Constants}\label{sec:theorems-const}

	We collect the auxiliary constants used in the single-client and federated theoretical results. These constants are introduced only to state the high probability recursion and sample size requirements compactly: the quantities $\bar\delta_{\mathscr A,k}$ and $\bar\delta_{\mathscr A}$ describe the admissible curvature perturbations, the intervals $\mathcal I_{\mathscr A,k}$ and $\mathcal I_{\mathscr A}^{\mathrm{fed}}$ specify the allowable step sizes, and $R_{\mathscr A,k}$ and $R_{\mathscr A}$ give the local radii in which the contraction arguments are carried out.

	For each client $k\in[K]$, define
	\begin{align*}
		\bar\delta_{\mathscr A,k} = \frac{\lambda_{\mathscr X,k}^{\min}}{2}\left[c_{d,m}(1+\kappa_{\mathscr X,k})-(\kappa_{\mathscr X,k}-1)\right]
		\ \text{and} \ 
		\mathcal I_{\mathscr A,k} = \left[\frac{1-c_{d,m}}{\lambda_{\mathscr X,k}^{\min}-\bar\delta_{\mathscr A,k}},\frac{1+c_{d,m}}{\lambda_{\mathscr X,k}^{\max}+\bar\delta_{\mathscr A,k}}\right].
	\end{align*}
	Here $\lambda_{\mathscr X,k}^{\min}$, $\lambda_{\mathscr X,k}^{\max}$, and $\kappa_{\mathscr X,k}$ are given in Section~\ref{subsec:Theory}, and $c_{d,m}=(\sqrt{d+m}-1)/(\sqrt{d+m}+1)$ is a dimension-dependent constant. The curvature tolerance $\bar\delta_{\mathscr A,k}$ is positive whenever the client-specific condition number is controlled whenever $\kappa_{\mathscr X,k}<(1+c_{d,m})/(1-c_{d,m})$; in that case, the interval $\mathcal I_{\mathscr A,k}$ is nonempty and contains step sizes compatible with the local curvature bounds.
	For any $\eta_{\mathscr A,k}\in\mathcal I_{\mathscr A,k}$, set
	\begin{align*}
		\Gamma_{\mathscr A,k} = (\sqrt{d+m}+1)(d+m)\left[2 + \eta_{\mathscr A,k}\left(\lambda_{\mathscr X,k}^{\max}-\lambda_{\mathscr X,k}^{\min}+2\bar\delta_{\mathscr A,k}\right)\right]
		\ \text{and} \ 
		R_{\mathscr A,k}=\frac{\mu_{\mathscr A}^{\min}}{4\Gamma_{\mathscr A,k}}.
	\end{align*}
	The client-specific sample size multiplier is
	\begin{align*}
		\mathsf M_{\mathscr A,k}^{\mathrm{single}} = \max\left\{1, \frac{\sigma_{\mathscr X,k}^{4}(\lambda_{\mathscr X,k}^{\max})^{2}}{1\wedge \bar\delta_{\mathscr A,k}^{2}},\frac{\sigma_{\mathscr X,k}^{4}h_{\mathscr B,k}^{2} \vee \sigma_{\mathscr X,k}^{2}\sigma_{\mathscr E,k}^{2}\lambda_{\mathscr E,k}^{\max}/\lambda_{\mathscr X,k}^{\max}}{(\mu_{\mathscr A}^{\min})^2}\right\}.
	\end{align*}
	This multiplier gathers the constants that enter the single-client sample size lower bound.

	Define the pooled curvature tolerance $\overline \delta_{\mathscr A}$ and the federated admissible step-size interval by
	\begin{align*}
		\bar\delta_{\mathscr A} = \frac{\underline\lambda_{\mathscr X}^{\min}}{2}\left[c_{d,m}(1+\bar\kappa_{\mathscr X})-(\bar\kappa_{\mathscr X}-1)\right]
		\ \text{and} \ 
		\mathcal I_{\mathscr A}^{\mathrm{fed}} = \left[\frac{1-c_{d,m}}{\underline\lambda_{\mathscr X}^{\min}-\bar\delta_{\mathscr A}},\frac{1+c_{d,m}}{\bar\lambda_{\mathscr X}^{\max}+\bar\delta_{\mathscr A}}\right].
	\end{align*}
	To ensure $\bar\delta_{\mathscr A}>0$ and the pooled step-size interval is well defined, we require the pooled condition number $\bar\kappa_{\mathscr X}=\overline\lambda_{\mathscr X}^{\max}/\underline\lambda_{\mathscr X}^{\min}<(1+c_{d,m})/(1-c_{d,m})$. For any $\eta_{\mathscr A}\in\mathcal I_{\mathscr A}^{\mathrm{fed}}$, set
	\begin{align*}
		\Gamma_{\mathscr A}^{\mathrm{fed}} = (\sqrt{d+m}+1)(d+m)\left[2 + \eta_{\mathscr A}\left(\bar\lambda_{\mathscr X}^{\max}-\underline\lambda_{\mathscr X}^{\min}+2\bar\delta_{\mathscr A}\right)\right]
		\ \text{and} \ 
		R_{\mathscr A}=\frac{\mu_{\mathscr A}^{\min}}{4\Gamma_{\mathscr A}^{\mathrm{fed}}}.
	\end{align*}
	These are the federated analogues of $\Gamma_{\mathscr A,k}$ and $R_{\mathscr A,k}$, obtained by replacing client-specific quantities with their uniform envelopes.
	The federated sample size multiplier is
	\begin{align*}
		\mathsf M_{\mathscr A}^{\mathrm{fed}} = \max\left\{1, \frac{\bar\sigma_{\mathscr X}^{4}(\bar\lambda_{\mathscr X}^{\max})^{2}}{1\wedge \bar\delta_{\mathscr A}^{2}},\frac{\bar\sigma_{\mathscr X}^{4}\bar h_{\mathscr B}^{2} \vee \bar\sigma_{\mathscr X}^{2}\bar\sigma_{\mathscr E}^{2}\bar\lambda_{\mathscr E}^{\max}/\bar\lambda_{\mathscr X}^{\max}}{(\mu_{\mathscr A}^{\min})^2}\right\}.
	\end{align*}
	It plays the same role as $\mathsf M_{\mathscr A,k}^{\mathrm{single}}$, but uniformly over clients.

\section{Primary Lemmas}\label{append:Primary lemmas}
	\renewcommand{\theequation}{B.\arabic{equation}}
	\renewcommand{\theHequation}{B.\arabic{equation}}

	\renewcommand{\thelemma}{B.\arabic{lemma}}
	\renewcommand{\theHlemma}{B.\arabic{lemma}}

	\setcounter{lemma}{0}
	\setcounter{equation}{0}

	\begin{lemma}[Uniform empirical RSC]\label{lem:rsc-condition}
		Suppose Assumption~\ref{assump:subg-design} holds. If $n_k \gtrsim\{\sigma_{\mathscr X,k}^{4}\kappa_{\mathscr X,k}^{2}\vee \sigma_{\mathscr X,k}^{2}\kappa_{\mathscr X,k}\vee 1\}p$ with $p=\prod_{j=1}^{d}p_j$, then, with probability at least $1-\exp(-Cp)$, for all $\cm T\in\mathbb R^{q_1\times\cdots\times q_m\times p_1\times\cdots\times p_d}$,
		\[
			\frac{1}{n_k}\sum_{i=1}^{n_k}\left\|\langle\cm T,\cm X_{k,i}\rangle\right\|_\F^2 \geq \frac{1}{2}\lambda_{\mathscr X,k}^{\min}\|\cm T\|_\F^2.
		\]
	\end{lemma}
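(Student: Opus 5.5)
The plan is to reduce the statement to a lower bound on the smallest eigenvalue of the empirical design covariance. Write $\bm T=\cm T_{[S_X]}\in\mathbb R^{q\times p}$. Then $\vect(\langle\cm T,\cm X_{k,i}\rangle)=\bm T\vect(\cm X_{k,i})$, so
\[
\frac{1}{n_k}\sum_{i=1}^{n_k}\|\langle\cm T,\cm X_{k,i}\rangle\|_\F^2=\tr\bigl(\bm T\widehat{\bbm\Sigma}_k\bm T^\top\bigr),\qquad \widehat{\bbm\Sigma}_k=\frac{1}{n_k}\sum_{i=1}^{n_k}\vect(\cm X_{k,i})\vect^\top(\cm X_{k,i}).
\]
Since $\tr(\bm T\widehat{\bbm\Sigma}_k\bm T^\top)\ge\lambda_{\min}(\widehat{\bbm\Sigma}_k)\|\bm T\|_\F^2$ and $\|\bm T\|_\F=\|\cm T\|_\F$, it suffices to show $\lambda_{\min}(\widehat{\bbm\Sigma}_k)\ge\lambda_{\mathscr X,k}^{\min}/2$ with probability at least $1-\exp(-Cp)$. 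This gives the bound uniformly over all $\cm T$ on a single event, so no union over $\cm T$ is needed. The dimension $q$ never enters, which is consistent with the sample-size condition involving only $p$.

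Next I use Assumption~\ref{assump:subg-design} to write $\widehat{\bbm\Sigma}_k=\bbm\Sigma_{\mathscr X,k}^{1/2}\widehat{\bm S}_k\bbm\Sigma_{\mathscr X,k}^{1/2}$, where $\widehat{\bm S}_k=n_k^{-1}\sum_i\bbm\xi_{k,i}\bbm\xi_{k,i}^\top$. This gives
\[
\lambda_{\min}(\widehat{\bbm\Sigma}_k)\ge\lambda_{\mathscr X,k}^{\min}-\lambda_{\mathscr X,k}^{\max}\|\widehat{\bm S}_k-\bm I_p\|_{\op},
\]
so it is enough to prove $\|\widehat{\bm S}_k-\bm I_p\|_{\op}\le 1/(2\kappa_{\mathscr X,k})$. (Alternatively, one could bound $\bbm u^\top(\widehat{\bbm\Sigma}_k-\bbm\Sigma_{\mathscr X,k})\bbm u$ directly over unit $\bbm u$.)

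For this I use the standard covariance concentration for isotropic sub-Gaussian vectors with independent entries. Each $\bbm\xi_{k,i}$ is sub-Gaussian with parameter $\lesssim\sigma_{\mathscr X,k}$. Take a $1/4$-net $\mathcal N$ of $\mathbb S^{p-1}$ with $|\mathcal N|\le 9^p$. For fixed $\bbm u$, the variables $(\bbm u^\top\bbm\xi_{k,i})^2-1$ are independent, mean zero, and sub-exponential with norm $\lesssim\sigma_{\mathscr X,k}^2$. Bernstein's inequality then gives
\[
\mathbb P\Bigl(|\bbm u^\top(\widehat{\bm S}_k-\bm I)\bbm u|>t\Bigr)\le2\exp\Bigl(-cn_k\min\{t^2/\sigma_{\mathscr X,k}^4,\,t/\sigma_{\mathscr X,k}^2\}\Bigr).
\]
Take $t=1/(4\kappa_{\mathscr X,k})$, union bound over the net, and lift the bound to the whole sphere with the factor $2$ coming from the net argument. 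The exponent becomes $-cn_k\min\{(16\sigma_{\mathscr X,k}^4\kappa_{\mathscr X,k}^2)^{-1},(4\sigma_{\mathscr X,k}^2\kappa_{\mathscr X,k})^{-1}\}+p\log 9$. Under $n_k\gtrsim(\sigma_{\mathscr X,k}^4\kappa_{\mathscr X,k}^2\vee\sigma_{\mathscr X,k}^2\kappa_{\mathscr X,k}\vee1)p$ with a large enough constant, this exponent is at most $-Cp$.

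The only delicate point is bookkeeping. The deviation tolerance must scale as $1/\kappa_{\mathscr X,k}$ rather than be a constant, and this is exactly why the sample-size condition carries $\kappa_{\mathscr X,k}^2$ in the Gaussian regime of Bernstein and $\kappa_{\mathscr X,k}$ in the exponential regime. I also need to note that $\sigma_{\mathscr X,k}\gtrsim1$, since the entries have unit variance, so the $\vee1$ term is harmless. I do not expect a genuine obstacle beyond keeping these constants straight.
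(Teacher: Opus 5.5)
Your proposal is correct and follows essentially the same route as the paper: reduce to $\lambda_{\min}$ of the empirical design covariance, control quadratic forms on a $1/4$-net with a tolerance of order $\lambda_{\mathscr X,k}^{\min}$ (equivalently $1/\kappa_{\mathscr X,k}$ after whitening), lift to the whole sphere with the factor $2$, and finish with Weyl's inequality. The only difference is cosmetic. You whiten first and apply scalar Bernstein to $(\bbm u^\top\bbm\xi_{k,i})^2-1$, whereas the paper applies Hanson--Wright directly to $\bbm v^\top(\bm S_{X,k}-\bbm\Sigma_{\mathscr X,k})\bbm v$, written as the block quadratic form $\bbm\xi_k^\top(\bm I_{n_k}\otimes\bm M_{\bbm v})\bbm\xi_k$; both give the same exponent $n_k\min\{\sigma_{\mathscr X,k}^{-4}\kappa_{\mathscr X,k}^{-2},\sigma_{\mathscr X,k}^{-2}\kappa_{\mathscr X,k}^{-1}\}$.
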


	\begin{lemma}[Operator-norm deviation bound]\label{lem:deviation-condition-op}
	Suppose Assumptions~\ref{assump:subg-design} and \ref{assump:subg-noise} hold. If $n_k \gtrsim p+q$ with $p=\prod_{j=1}^{d}p_j$ and $q=\prod_{j=1}^{m}q_j$, then, with probability at least $1-\exp(-C(p+q))$,
	\[
		\left\|\frac{1}{n_k}\sum_{i=1}^{n_k}(\cm E_{k,i}\circ \cm X_{k,i})_{[S_X]}\right\|_{\op}\lesssim \sigma_{\mathscr X,k}\sigma_{\mathscr E,k}(\lambda_{\mathscr X,k}^{\max}\lambda_{\mathscr E,k}^{\max})^{1/2}\sqrt{\frac{p+q}{n_k}}.
	\]
	\end{lemma}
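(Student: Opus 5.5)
The plan is a variational (covering) argument for the operator norm of the $q\times p$ matrix $\bm M=n_k^{-1}\sum_{i=1}^{n_k}\vect(\cm E_{k,i})\vect(\cm X_{k,i})^\top$. This matrix is exactly $n_k^{-1}\sum_i(\cm E_{k,i}\circ\cm X_{k,i})_{[S_X]}$ under the indexing convention of Section~\ref{sec:basic-notation}. We write $\|\bm M\|_{\op}=\sup_{\bbm u\in\mathbb S^{q-1},\bbm v\in\mathbb S^{p-1}}\bbm u^\top\bm M\bbm v$. Next, take $1/4$-nets $\mathcal N_q$ and $\mathcal N_p$ of the two unit spheres, of cardinalities at most $9^q$ and $9^p$. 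The standard net lemma then gives $\|\bm M\|_{\op}\le 2\max_{\bbm u\in\mathcal N_q,\bbm v\in\mathcal N_p}\bbm u^\top\bm M\bbm v$. It therefore suffices to obtain, for each fixed pair $(\bbm u,\bbm v)$, a tail bound of the form $\exp(-Cn_k\min(x^2,x))$ for the scalar average, and to take a union bound over $81^{p+q}$ pairs.

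For fixed $(\bbm u,\bbm v)$, set $a_i=\langle\bbm u,\vect(\cm E_{k,i})\rangle$ and $b_i=\langle\bbm v,\vect(\cm X_{k,i})\rangle=\langle\bbm\Sigma_{\mathscr X,k}^{1/2}\bbm v,\bbm\xi_{k,i}\rangle$. By Assumption~\ref{assump:subg-design}, $b_i$ is sub-Gaussian with parameter at most $\sigma_{\mathscr X,k}(\lambda_{\mathscr X,k}^{\max})^{1/2}$, because a linear combination of independent sub-Gaussian entries is sub-Gaussian with parameter scaled by the norm of the coefficient vector. By Assumption~\ref{assump:subg-noise}, conditionally on $\cm X_{k,i}$ the variable $a_i$ is mean zero and sub-Gaussian with parameter $\sigma_{\mathscr E,k}(\lambda_{\mathscr E,k}^{\max})^{1/2}$. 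The product $a_ib_i$ is then centered, since $\mathbb E[a_ib_i]=\mathbb E[b_i\mathbb E(a_i\mid\cm X_{k,i})]=0$. For the key step I would condition on the design. We have $\mathbb E[\exp(\mu a_ib_i)\mid\cm X_{k,i}]\le\exp(\mu^2\sigma_{\mathscr E,k}^2\lambda_{\mathscr E,k}^{\max}b_i^2/2)$, and $\mathbb E\exp(cb_i^2)$ is finite for small $c$ because $b_i^2$ is sub-exponential. Together these show that $a_ib_i$ is sub-exponential with $\psi_1$-norm $\lesssim\vartheta_k=\sigma_{\mathscr X,k}\sigma_{\mathscr E,k}(\lambda_{\mathscr X,k}^{\max}\lambda_{\mathscr E,k}^{\max})^{1/2}$; this handles the dependence between noise and design that is allowed by the conditional assumption. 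Alternatively, one can invoke the product-of-sub-Gaussians fact $\|ab\|_{\psi_1}\le\|a\|_{\psi_2}\|b\|_{\psi_2}$ after noting that the unconditional $\psi_2$-norm of $a_i$ is controlled by the conditional bound. Independence across $i$ then lets Bernstein's inequality give
\[
\mathbb P\Bigl(\Bigl|\frac1{n_k}\sum_{i=1}^{n_k}a_ib_i\Bigr|\ge\vartheta_k x\Bigr)\le2\exp\{-Cn_k\min(x^2,x)\}.
\]

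To finish, choose $x=C'\sqrt{(p+q)/n_k}$ with $C'$ large. Under $n_k\gtrsim p+q$ we have $x\le1$, so the exponent is $-CC'^2(p+q)$. This beats the union-bound factor $81^{p+q}$ and leaves probability at least $1-\exp(-C(p+q))$ for the event $\|\bm M\|_{\op}\lesssim\vartheta_k\sqrt{(p+q)/n_k}$. The main obstacle I expect is the sub-exponential bound for $a_ib_i$ under only conditional sub-Gaussianity, specifically ensuring that the MGF bound holds uniformly in $\mu$ over a range of order $1/\vartheta_k$. I would handle this through the conditional MGF followed by the bound on $\mathbb E\exp(c\mu^2b_i^2)$, which is valid for $|\mu|\lesssim1/\vartheta_k$ and is exactly the range Bernstein's inequality requires.
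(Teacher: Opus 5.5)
Your proposal is correct and shares the paper's outer skeleton: write the operator norm over rank-one directions $\bbm u\bbm v^\top$, pass to $1/4$-nets of $\mathbb S^{q-1}$ and $\mathbb S^{p-1}$ at the cost of a factor $2$, and finish with a union bound. The union bound is actually over at most $9^{p+q}$ pairs, not $81^{p+q}$, which is harmless.

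The genuine difference is the fixed-direction tail bound. The paper never shows that the products $a_ib_i$ are sub-exponential. Instead it introduces the design event $B_v=\{n_k^{-1}\sum_i b_i^2\le v\}$ with $v\asymp\sigma_{\mathscr X,k}^2\lambda_{\mathscr X,k}^{\max}$, and bounds $\mathbb P(B_v^c)\le 2\exp(-Cn_k)$ by Hanson--Wright. On $B_v$ it runs a Chernoff argument conditional on $\cm X_{k,1},\ldots,\cm X_{k,n_k}$, which gives a purely sub-Gaussian tail $\exp\{-n_ku^2/(C\vartheta_k^2)\}$. The condition $n_k\gtrsim p+q$ is then needed to absorb the additive $9^{p+q}\cdot 2\exp(-Cn_k)$ coming from the design event.

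You instead absorb the design randomness into the MGF of each summand. You take the conditional MGF in $\mu b_i$ and then use $\mathbb E\exp(tb_i^2)\le(1-2t\sigma_{\mathscr X,k}^2\lambda_{\mathscr X,k}^{\max})^{-1/2}$, which gives $\|a_ib_i\|_{\psi_1}\lesssim\vartheta_k$ for $|\mu|\lesssim 1/\vartheta_k$, and then apply Bernstein. Your alternative via $\|ab\|_{\psi_1}\le\|a\|_{\psi_2}\|b\|_{\psi_2}$ is also valid. That inequality needs no independence, the unconditional MGF of $a_i$ inherits the conditional bound by the tower property, and centering follows from $\mathbb E[a_i\mid\cm X_{k,i}]=0$.

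What your route buys: it is self-contained, with no Hanson--Wright step and no auxiliary event. Without any sample-size condition it already gives $\vartheta_k\{\sqrt{(p+q)/n_k}+(p+q)/n_k\}$ with probability $1-\exp(-C(p+q))$, so $n_k\gtrsim p+q$ serves only to drop the second term. What the paper's route buys: its conditioning template is reused almost verbatim for Lemmas~\ref{lem:deviation-condition-infty} and \ref{lem:deviation-condition-sparse}, with only the union bound or covering changed.
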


	\begin{lemma}[Entrywise deviation bound]\label{lem:deviation-condition-infty}
	Suppose Assumptions~\ref{assump:subg-design} and \ref{assump:subg-noise} hold. If $n_k\gtrsim \log(pq)$ with $p=\prod_{j=1}^{d}p_j$ and $q=\prod_{j=1}^{m}q_j$, then with probability at least $1-\exp(-C\log(pq))$,
	\[
		\left\|\frac{1}{n_k}\sum_{i=1}^{n_k}\cm E_{k,i}\circ \cm X_{k,i}\right\|_\infty \lesssim \sigma_{\mathscr X,k}\sigma_{\mathscr E,k}(\lambda_{\mathscr X,k}^{\max}\lambda_{\mathscr E,k}^{\max})^{1/2}\sqrt{\frac{\log(pq)}{n_k}}.
	\]
	\end{lemma}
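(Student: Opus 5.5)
The plan is to reduce the entrywise bound to a union bound over the $pq$ coordinates of the matrix $(n_k^{-1}\sum_i \cm E_{k,i}\circ\cm X_{k,i})_{[S_X]}\in\mathbb R^{q\times p}$. Its $(a,b)$ entry is $n_k^{-1}\sum_{i=1}^{n_k}\bbm e_a^\top\vect(\cm E_{k,i})\,\bbm e_b^\top\vect(\cm X_{k,i})$. Since the $\ell_\infty$ norm is invariant under matricization, it suffices to bound each such scalar average uniformly over $(a,b)\in[q]\times[p]$.

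\textbf{Step 1: each summand is sub-exponential.} Fix $(a,b)$ and set $u_i=\bbm e_b^\top\bbm\Sigma_{\mathscr X,k}^{1/2}\bbm\xi_{k,i}$ and $v_i=\bbm e_a^\top\vect(\cm E_{k,i})$.
\begin{itemize}
\item By Assumption~\ref{assump:subg-design}, $u_i$ is sub-Gaussian with parameter $\sigma_{\mathscr X,k}\|\bbm\Sigma_{\mathscr X,k}^{1/2}\bbm e_b\|_2\le\sigma_{\mathscr X,k}(\lambda_{\mathscr X,k}^{\max})^{1/2}$, because a linear combination of independent sub-Gaussian entries is sub-Gaussian.
\item By Assumption~\ref{assump:subg-noise} with $\bbm u=\bbm e_a$, conditional on $\cm X_{k,i}$, $v_i$ is mean-zero sub-Gaussian with parameter $\sigma_{\mathscr E,k}(\lambda_{\mathscr E,k}^{\max})^{1/2}$.
\end{itemize}
Because the conditional mean of $v_i$ is zero, $\mathbb E(u_iv_i)=0$, so the summands are centered without any independence between design and noise. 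To get the tail, condition on $\cm X_{k,i}$: then $\mathbb E[\exp(\mu u_iv_i)\mid\cm X_{k,i}]\le\exp(\mu^2u_i^2\sigma_{\mathscr E,k}^2\lambda_{\mathscr E,k}^{\max}/2)$. Taking expectations and using the moment generating function of $u_i^2$ for a sub-Gaussian $u_i$, valid for $|\mu|\lesssim 1/\vartheta_k$, gives
\[
\mathbb E\exp(\mu u_iv_i)\le\exp(C\mu^2\vartheta_k^2),\qquad \vartheta_k=\sigma_{\mathscr X,k}\sigma_{\mathscr E,k}(\lambda_{\mathscr X,k}^{\max}\lambda_{\mathscr E,k}^{\max})^{1/2}.
\]
So $u_iv_i$ is centered sub-exponential with parameters of order $(\vartheta_k,\vartheta_k)$. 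Equivalently, one can bound the $\psi_1$ norm by the product of the $\psi_2$ norms, handling the conditional structure as above.

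\textbf{Step 2: Bernstein inequality and union bound.} The pairs are i.i.d. over $i$, so Bernstein's inequality for sub-exponential variables gives
\[
\mathbb P\Big(\Big|n_k^{-1}\sum_i u_iv_i\Big|\ge t\Big)\le 2\exp\{-Cn_k\min(t^2/\vartheta_k^2,\,t/\vartheta_k)\}.
\]
Take $t=C'\vartheta_k\sqrt{\log(pq)/n_k}$. The condition $n_k\gtrsim\log(pq)$ ensures $t/\vartheta_k\le1$, so the quadratic regime applies and the tail is at most $2\exp(-CC'^2\log(pq))$. A union bound over the $pq$ entries multiplies this by $pq=\exp(\log(pq))$. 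Choosing $C'$ large enough yields the overall failure probability $\exp(-C\log(pq))$ stated in the lemma.

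\textbf{Main obstacle.} The delicate point is Step 1. Under the paper's assumptions the noise is only conditionally sub-Gaussian given the design, not independent of it, so the standard product-of-independent-sub-Gaussians argument does not apply directly. The conditioning argument above handles this: it needs only the sub-Gaussianity of $u_i$ and the conditional bound on $v_i$, and it produces a sub-exponential moment generating function bound on a neighborhood of zero. That local bound is exactly what Bernstein's inequality requires.
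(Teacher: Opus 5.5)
Your proof is correct, but it takes a different route from the paper.

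\textbf{The paper's route.} It never shows that the individual products $u_iv_i$ are sub-exponential. For each fixed entry it introduces the design event $B_v=\{n_k^{-1}\sum_i u_i^2\le v\}$. It then runs a Chernoff bound conditionally on $\cm X_{k,1},\dots,\cm X_{k,n_k}$, restricted to $B_v$, which gives a purely sub-Gaussian tail $\exp\{-n_ku^2/(2\sigma_{\mathscr E,k}^2\lambda_{\mathscr E,k}^{\max}v)\}$. Separately, it controls $\mathbb P(B_v^c)\le 2\exp(-Cn_k)$ by Hanson--Wright with $v\asymp\sigma_{\mathscr X,k}^2\lambda_{\mathscr X,k}^{\max}$. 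The hypothesis $n_k\gtrsim\log(pq)$ is used there only to absorb the $\exp(-Cn_k)$ failure term, and the same union bound over the $pq$ entries finishes the argument.

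\textbf{Your route.} You integrate out the design inside each summand, using $\mathbb E\exp(c\mu^2u_i^2)\le(1-2c\mu^2\sigma_{\mathscr X,k}^2\lambda_{\mathscr X,k}^{\max})^{-1/2}$, to obtain a local MGF bound. You then apply Bernstein, with $n_k\gtrsim\log(pq)$ keeping you in the quadratic regime. Even if the implied constant is small, the linear regime still gives an exponent of order $C'\log(pq)$, so nothing is lost.

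\textbf{What each buys.} Your argument is the more standard and self-contained one, and it extends directly to independent, non-identically distributed sums such as pooled ones. The paper's conditioning-on-a-good-design-event template is chosen because it is reused for the operator-norm and sparse deviation bounds (Lemmas~\ref{lem:deviation-condition-op} and \ref{lem:deviation-condition-sparse}). There the fixed test direction is a matrix $\bm W$ rather than a coordinate pair.

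\textbf{On your ``main obstacle''.} It is milder than you suggest. The inequality $\|u_iv_i\|_{\psi_1}\le\|u_i\|_{\psi_2}\|v_i\|_{\psi_2}$ holds for arbitrarily dependent variables. Also, $v_i$ is unconditionally sub-Gaussian with the same parameter by the tower property. The conditional mean-zero assumption is therefore needed only to center the product, so your ``Equivalently'' remark already gives a complete alternative proof of Step~1.
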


	\begin{lemma}[Refined sparse deviation bound]\label{lem:deviation-condition-sparse}
		Suppose Assumptions~\ref{assump:subg-design} and \ref{assump:subg-noise} hold. For any integer $1\leq s\leq pq$ with $p=\prod_{j=1}^{d}p_j$ and $q=\prod_{j=1}^{m}q_j$, if $n_k\gtrsim s\log(pq/s)$, then, with probability at least $1-\exp(-Cs\log(pq/s))$, the following bounds hold uniformly over any $\cm H\in\mathbb R^{q_1\times\cdots\times q_m\times p_1\times\cdots\times p_d}$:
		\[
			\left|\left\langle \frac{1}{n_k}\sum_{i=1}^{n_k}\cm E_{k,i}\circ\cm X_{k,i},\cm H\right\rangle\right| \lesssim \vartheta_k
			\begin{cases}
				\displaystyle
				\sqrt{\dfrac{s\log(pq/s)}{n_k}}\|\cm H\|_\F,
				& \text{if } \|\cm H\|_1\leq\sqrt{s}\|\cm H\|_\F,\\[1.2ex]
				\displaystyle
				\sqrt{\dfrac{\log(pq/s)}{n_k}}\|\cm H\|_1,
				& \text{if } \|\cm H\|_1>\sqrt{s}\|\cm H\|_\F.
			\end{cases}
		\]
	\end{lemma}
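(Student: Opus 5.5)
We prove Lemma~\ref{lem:deviation-condition-sparse} by peeling the statement down to one entrywise concentration bound plus a convex-hull argument. The key object is the noise-design cross tensor $\cm Z_k = n_k^{-1}\sum_{i=1}^{n_k}\cm E_{k,i}\circ\cm X_{k,i}$, with vectorized entries
\[
[\cm Z_k]_{[S_X]} = n_k^{-1}\sum_i \vect(\cm E_{k,i})\bbm\xi_{k,i}^\top\bbm\Sigma_{\mathscr X,k}^{1/2}.
\]
For any fixed unit-Frobenius $\cm H$, the scalar $\langle \cm Z_k,\cm H\rangle = n_k^{-1}\sum_i \vect(\cm E_{k,i})^\top \cm H_{[S_X]}\vect(\cm X_{k,i})$ is an average of i.i.d.\ mean-zero summands.

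\textbf{Step 1: single-direction concentration.} Conditional on $\cm X_{k,i}$, each summand is sub-Gaussian with variance proxy $\sigma_{\mathscr E,k}^2\lambda_{\mathscr E,k}^{\max}\|\cm H_{[S_X]}\vect(\cm X_{k,i})\|_2^2$. By Assumption~\ref{assump:subg-design}, $\|\cm H_{[S_X]}\vect(\cm X_{k,i})\|_2^2$ is a quadratic form in a sub-Gaussian vector. Its mean is at most $\lambda_{\mathscr X,k}^{\max}$ and it has sub-exponential tails at scale $\sigma_{\mathscr X,k}^2\lambda_{\mathscr X,k}^{\max}$. Hence each summand is sub-exponential with $\psi_1$-norm $\lesssim\vartheta_k$. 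Bernstein's inequality then gives
\[
|\langle\cm Z_k,\cm H\rangle|\lesssim \vartheta_k\bigl(\sqrt{x/n_k}+x/n_k\bigr)
\]
with probability $1-2e^{-x}$.

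\textbf{Step 2: uniformity over $s$-sparse unit tensors.} Let $\mathcal K_s=\{\cm H:\|\cm H\|_0\le s,\|\cm H\|_\F\le1\}$. For each support we take a $1/2$-net of size $5^s$, and there are $\binom{pq}{s}\le (epq/s)^s$ supports. The standard net argument, $\sup_{\mathcal K_s}\le 2\max_{\text{net}}$ on each support, combined with a union bound at $x\asymp s\log(pq/s)$, yields
\[
\sup_{\mathcal K_s}|\langle\cm Z_k,\cm H\rangle|\lesssim\vartheta_k\sqrt{s\log(pq/s)/n_k}.
\]
This holds with the stated probability, using $n_k\gtrsim s\log(pq/s)$ so that $x/n_k\lesssim\sqrt{x/n_k}$ (the sub-Gaussian regime dominates).

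\textbf{Step 3: extension to general $\cm H$.} This step uses the cone inclusion
\[
\{\cm H:\|\cm H\|_1\le\sqrt s,\|\cm H\|_\F\le1\}\subset 2\,\overline{\mathrm{conv}}(\mathcal K_s).
\]
This is the standard Plan--Vershynin inclusion, proved by sorting entries into blocks of size $s$ and bounding the tail blocks' Frobenius norms by the previous blocks' $\ell_1$ norms divided by $\sqrt s$. Since a linear functional attains its supremum over a convex hull at the extreme points, the first case follows by homogeneity in $\|\cm H\|_\F$.

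For the second case, $\|\cm H\|_1>\sqrt s\|\cm H\|_\F$, we rescale $\widetilde{\cm H}=\sqrt{s}\,\cm H/\|\cm H\|_1$. Then $\|\widetilde{\cm H}\|_1=\sqrt s$ and $\|\widetilde{\cm H}\|_\F<1$, so $\widetilde{\cm H}$ lies in the same cone set. The first case gives
\[
|\langle\cm Z_k,\cm H\rangle|\lesssim\vartheta_k\sqrt{s\log(pq/s)/n_k}\cdot\|\cm H\|_1/\sqrt s,
\]
which is the claimed second bound.

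\textbf{Main obstacle.} The delicate part is Step~1. Assumption~\ref{assump:subg-noise} gives only conditional sub-Gaussianity, so the product structure must be handled by conditioning on $\cm X_{k,i}$. The plan is to bound the conditional MGF by $\exp\bigl(\mu^2\sigma_{\mathscr E,k}^2\lambda_{\mathscr E,k}^{\max}\|\cm H_{[S_X]}\vect(\cm X_{k,i})\|_2^2/2\bigr)$ and then integrate over $\cm X_{k,i}$. That integration uses a Hanson--Wright-type MGF bound for $\bbm\xi^\top\bm M\bbm\xi$ with $\bm M=\bbm\Sigma^{1/2}\cm H_{[S_X]}^\top\cm H_{[S_X]}\bbm\Sigma^{1/2}$, noting $\|\bm M\|_{\F}\le\|\bm M\|_*\le\lambda_{\mathscr X,k}^{\max}$. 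This shows the summands are sub-exponential with the claimed scale and validates the Bernstein step. If the MGF route becomes messy, the fallback is the product-of-sub-Gaussians bound $\|UV\|_{\psi_1}\le\|U\|_{\psi_2}\|V\|_{\psi_2}$ applied after conditioning.
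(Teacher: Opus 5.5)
Your proof is correct. For the first case it is essentially the paper's argument: a fixed-direction tail bound, a support-wise net over $s$-sparse unit tensors with a union bound of size $\binom{pq}{s}C^s$, and the sorted-block decomposition giving $\sum_j\|\bm U_{S_j}\|_\F\le 2$.

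Two things differ. The first is the fixed-direction bound. The paper does not show that the summands are sub-exponential. Instead it conditions on the design event $B_v=\{n_k^{-1}\sum_i\|\bm U\vect(\cm X_{k,i})\|_2^2\le C\sigma_{\mathscr X,k}^2\lambda_{\mathscr X,k}^{\max}\}$ and runs a conditional Chernoff bound for the noise. It then controls $B_v^c$ with the tail form of Hanson--Wright, which costs an extra $\exp(-Cn_k)$ that is absorbed by $n_k\gtrsim s\log(pq/s)$. Your unconditional route needs the MGF form of Hanson--Wright (or your product-of-$\psi_2$ fallback). After that, Bernstein handles the sub-Gaussian and sub-exponential regimes in one step. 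Both work.

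The second difference, in the second case, is more substantive. The paper handles it with a separate argument. It bounds $Z(t)=\sup\{|\langle\bm W_k,\bm U\rangle|:\|\bm U\|_1\le t,\ \|\bm U\|_\F\le 1\}$ through $\lceil t^2\rceil$-sparse classes and a dyadic peeling over $t\in(\sqrt s,\sqrt{pq}]$. At the end it discards the sharper $\log(pq/t^2)$ factor anyway.

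Your rescaling $\widetilde{\cm H}=\sqrt s\,\cm H/\|\cm H\|_1$ amounts to $Z(t)\le (t/\sqrt s)\,Z(\sqrt s)$ for $t\ge\sqrt s$. It reduces the second case to the first on the same event. So a single event of probability $1-\exp(-Cs\log(pq/s))$, under exactly the stated condition $n_k\gtrsim s\log(pq/s)$, covers both cases.

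This is more than a shortcut. The paper's fixed-radius step reuses the first-case union bound at sparsity $m=\lceil t^2\rceil$. That absorbs the design-failure term only if $n_k\gtrsim m\log(pq/m)$, which the hypothesis does not give when $t^2\gg s$. Your reduction never needs sparsity levels above $s$.

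One caveat applies equally to both proofs. Absorbing $\log\binom{pq}{s}+s\log 5$ into $Cs\log(pq/s)$ tacitly requires $pq/s$ to be bounded away from $1$, i.e.\ $s\le c\,pq$ for some $c<1$.
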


	\begin{lemma}[Restricted quadratic concentration over Tucker-rank tensors]\label{lem:rsc-condition-tucker-rank-r}
		Suppose Assumption~\ref{assump:subg-design} holds. Set $\delta_{\bbm r} = C\sigma_{\mathscr X,k}^{2}\lambda_{\mathscr X,k}^{\max}\sqrt{df_{\bbm r}/n_k}$. If $n_k \gtrsim df_{\bbm r}$ with $df_{\bbm r}=\prod_{s=1}^{d+m}r_s+\sum_{s=1}^{m}r_s(q_s-r_s)+\sum_{s=1}^{d}r_{m+s}(p_s-r_{m+s})$, then, with probability at least $1-\exp(-Cdf_{\bbm r})$, the following restricted empirical quadratic bound holds uniformly over all $\cm T\in\mathbb R^{q_1\times\cdots\times q_m\times p_1\times\cdots\times p_d}$ satisfying $\tucrank(\cm T)\leq\bbm r$:
		\[
			\left|\frac{1}{n_k}\sum_{i=1}^{n_k}\left\|\left\langle\cm T,\cm X_{k,i}\right\rangle\right\|_\F^2 - \mathbb E\left\|\left\langle\cm T,\cm X_{k,i}\right\rangle\right\|_\F^2\right| \leq \delta_{\bbm r}\|\cm T\|_\F^2,
		\]
	\end{lemma}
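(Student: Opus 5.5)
The plan is a standard fixed-tensor concentration bound followed by a covering argument over the (non-convex) set of unit-norm Tucker-rank-$\bbm r$ tensors. By homogeneity it suffices to prove the bound on $\mathcal S_{\bbm r}=\{\cm T:\tucrank(\cm T)\leq\bbm r,\ \|\cm T\|_\F=1\}$.

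\textbf{Step 1: fixed $\cm T$.} Using $\vect(\langle\cm T,\cm X_{k,i}\rangle)=\cm T_{[S_X]}\vect(\cm X_{k,i})$ and Assumption~\ref{assump:subg-design}, we have $\|\langle\cm T,\cm X_{k,i}\rangle\|_\F^2=\bbm\xi_{k,i}^\top\bm Q\bbm\xi_{k,i}$ with
\[
\bm Q=\bbm\Sigma_{\mathscr X,k}^{1/2}\cm T_{[S_X]}^\top\cm T_{[S_X]}\bbm\Sigma_{\mathscr X,k}^{1/2}.
\]
Since $\|\cm T\|_\F=1$, $\|\bm Q\|_\F\leq\|\bm Q\|_*\leq\lambda_{\mathscr X,k}^{\max}$ and $\|\bm Q\|_{\op}\leq\lambda_{\mathscr X,k}^{\max}$. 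Stack $\bbm\xi_{k,1},\ldots,\bbm\xi_{k,n_k}$ into one vector with i.i.d.\ sub-Gaussian entries, and apply Hanson--Wright to the block-diagonal matrix $n_k^{-1}(\bm I_{n_k}\otimes\bm Q)$, whose Frobenius norm is at most $\lambda_{\mathscr X,k}^{\max}/\sqrt{n_k}$ and operator norm at most $\lambda_{\mathscr X,k}^{\max}/n_k$. This gives
\[
\mathbb P\Bigl(\Bigl|\tfrac{1}{n_k}\textstyle\sum_i\|\langle\cm T,\cm X_{k,i}\rangle\|_\F^2-\mathbb E\|\langle\cm T,\cm X_{k,i}\rangle\|_\F^2\Bigr|>t\Bigr)\leq2\exp\Bigl(-c\,n_k\min\Bigl\{\tfrac{t^2}{\sigma_{\mathscr X,k}^4(\lambda_{\mathscr X,k}^{\max})^2},\tfrac{t}{\sigma_{\mathscr X,k}^2\lambda_{\mathscr X,k}^{\max}}\Bigr\}\Bigr).
\]

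\textbf{Step 2: net and union bound.} For the covering I would invoke the standard Tucker-manifold covering bound (in the style of Rauhut--Schneider--Stojanac): $\mathcal S_{\bbm r}$ admits an $\epsilon$-net $\mathcal N_\epsilon$ in Frobenius norm with
\[
\log|\mathcal N_\epsilon|\leq\Bigl(\prod_s r_s+\sum_{s\leq m}q_sr_s+\sum_{s\leq d}p_sr_{m+s}\Bigr)\log\bigl(3(d+m+1)/\epsilon\bigr).
\]
Because the $r_s$ are fixed, this exponent is $\asymp df_{\bbm r}$. Take $t=C\sigma_{\mathscr X,k}^2\lambda_{\mathscr X,k}^{\max}\sqrt{df_{\bbm r}/n_k}$. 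Under $n_k\gtrsim df_{\bbm r}$ we have $t\lesssim\sigma_{\mathscr X,k}^2\lambda_{\mathscr X,k}^{\max}$, so the sub-Gaussian branch of the tail is active. A union bound over $\mathcal N_\epsilon$ with a small constant $\epsilon$ then yields the bound on the net with probability $1-\exp(-Cdf_{\bbm r})$.

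\textbf{Step 3: net to all of $\mathcal S_{\bbm r}$ (main obstacle).} The difficulty is that $\mathcal S_{\bbm r}$ is not convex, so the usual convexity argument is unavailable. Write $\mathcal D(\cm T)$ for the centered empirical quadratic form and let $\Delta=\sup_{\cm T\in\mathcal S_{\bbm r}}|\mathcal D(\cm T)|$. Pick $\cm T'\in\mathcal N_\epsilon$ with $\|\cm T-\cm T'\|_\F\leq\epsilon$. By polarization,
\[
\mathcal D(\cm T)-\mathcal D(\cm T')=\mathcal D_{\mathrm{bil}}(\cm T-\cm T',\cm T+\cm T'),
\]
where $\mathcal D_{\mathrm{bil}}$ is the associated centered bilinear form. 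The difference $\cm T-\cm T'$ has Tucker rank at most $2\bbm r$. I would split it by projecting each mode onto two complementary $r_s$-dimensional pieces of its (at most $2r_s$-dimensional) mode-$s$ column space. This writes $\cm T-\cm T'$ as a sum of at most $2^{d+m}$ mutually orthogonal Tucker-rank-$\bbm r$ tensors, each of Frobenius norm at most $\epsilon$; $\cm T+\cm T'$ is handled the same way. Applying polarization again to each pair of rank-$\bbm r$ pieces bounds each bilinear term by a constant multiple of $\Delta$ times the product of their norms. This gives
\[
\Delta\leq\max_{\mathcal N_\epsilon}|\mathcal D|+C_{d,m}\,\epsilon\,\Delta .
\]
Choosing $\epsilon=1/(2C_{d,m})$, a constant depending only on $d+m$, absorbs the last term and gives $\Delta\leq2\max_{\mathcal N_\epsilon}|\mathcal D|\leq\delta_{\bbm r}$ after adjusting $C$. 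The step I expect to need the most care is verifying that this orthogonal splitting keeps every piece inside the rank-$\bbm r$ class with norm controlled by $\|\cm T-\cm T'\|_\F$. If it fails, the fallback is to cover the rank-$2\bbm r$ set directly, which only changes $df_{\bbm r}$ by a constant factor since the $r_s$ are fixed.
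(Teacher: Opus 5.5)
Your Steps 1 and 2 are fine. The splitting you flag as the main risk also works: project each mode of the rank-$\le 2\bbm r$ difference $\cm T-\cm T'$ onto two orthogonal subspaces of dimension $\le r_s$. This gives $2^{d+m}$ mutually orthogonal pieces $\cm D_J$ of Tucker rank $\le\bbm r$ with $\sum_J\|\cm D_J\|_\F\le 2^{(d+m)/2}\|\cm T-\cm T'\|_\F$.

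\textbf{The gap is the next step.} You claim that polarization bounds $|\mathcal D_{\mathrm{bil}}(\cm U,\cm V)|$ by $C\Delta\|\cm U\|_\F\|\cm V\|_\F$ for rank-$\bbm r$ pieces, where $\Delta$ is the supremum of the \emph{quadratic} form over rank-$\bbm r$ tensors. But $4\mathcal D_{\mathrm{bil}}(\cm U,\cm V)=\mathcal D(\cm U+\cm V)-\mathcal D(\cm U-\cm V)$, and $\cm U\pm\cm V$ has Tucker rank up to $2\bbm r$, where $\Delta$ gives no control. Re-splitting $\cm U\pm\cm V$ only regenerates bilinear cross terms.

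This is not a technicality. On a non-convex cone, the off-diagonal values of a symmetric bilinear form are not controlled by its diagonal values. For $2\times 2$ matrices with $\bbm r=(1,1)$, the quadratic form $\det(\cdot)$ vanishes on every rank-one matrix. Yet its associated symmetric bilinear form equals $1/2$ at $(\bm e_1\bm e_1^\top,\bm e_2\bm e_2^\top)$.

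Neither fallback rescues the argument:
\begin{enumerate}
\item Covering the rank-$2\bbm r$ set only bounds the rank-$\bbm r$ supremum by the rank-$2\bbm r$ one. That in turn needs rank $4\bbm r$, so the recursion never closes.
\item Bounding the cross terms by Cauchy--Schwarz on the two positive semidefinite parts of $\mathcal D_{\mathrm{bil}}$ leaves an additive term of order $\epsilon\lambda_{\mathscr X,k}^{\max}$ that is not proportional to $\Delta$. Absorbing it forces $\epsilon\asymp\sqrt{df_{\bbm r}/n_k}$ and costs a $\log(n_k/df_{\bbm r})$ factor, so the stated $\delta_{\bbm r}$ is not reached.
\end{enumerate}

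\textbf{The missing idea is to run the whole argument on the bilinear form, which is what the paper does.} Let $\Delta_B=\sup_{\cm U,\cm V\in\mathcal S_{\bbm r}}|\mathcal D_{\mathrm{bil}}(\cm U,\cm V)|$.

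For a fixed pair, $\mathcal D_{\mathrm{bil}}(\cm U,\cm V)$ is again a centered quadratic form in the stacked $\bbm\xi_{k,i}$. Its matrix is $\bbm\Sigma_{\mathscr X,k}^{1/2}(\cm U_{[S_X]}^\top\cm V_{[S_X]}+\cm V_{[S_X]}^\top\cm U_{[S_X]})\bbm\Sigma_{\mathscr X,k}^{1/2}/2$, whose Frobenius and operator norms are at most $\lambda_{\mathscr X,k}^{\max}$. So your Step 1 tail applies verbatim, and you union-bound over the $|\mathcal N_\epsilon|^2\le\exp(Cdf_{\bbm r})$ pairs.

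Bilinearity and your orthogonal splitting then give
\[
\Delta_B\leq\max_{\cm U',\cm V'\in\mathcal N_\epsilon}|\mathcal D_{\mathrm{bil}}(\cm U',\cm V')|+2\cdot 2^{(d+m)/2}\epsilon\,\Delta_B .
\]
This closes with a constant $\epsilon$, and $\Delta\le\Delta_B$.

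The paper uses a product net (core net plus Grassmann nets) and a telescoping decomposition of $\cm T-\cm T^{\mathcal N}$ into $d+m+1$ rank-$\bbm r$ pieces, instead of your orthogonal splitting. Once the bilinear supremum is used, your splitting is a valid alternative that works with any Frobenius net. The cost is a $2^{(d+m)/2}$ constant instead of $\sqrt2(d+m)+1$.
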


	\begin{lemma}[Deviation bound over Tucker-rank tensors]\label{lem:deviation-condition-tucker-rank-r}
		Suppose Assumptions~\ref{assump:subg-design} and \ref{assump:subg-noise} hold. Let $\mathbb S_{\bbm r} =\{\cm T \in \mathbb R^{q_1\times\cdots\times q_m\times p_1\times\cdots\times p_d}:\tucrank(\cm T)\leq\bbm r,\ \|\cm T\|_\F=1\}$. If $n_k \gtrsim df_{\bbm r}$ with $df_{\bbm r}=\prod_{s=1}^{d+m}r_s+\sum_{s=1}^{m}r_s(q_s-r_s)+\sum_{s=1}^{d}r_{m+s}(p_s-r_{m+s})$, with probability at least $1-\exp(-Cdf_{\bbm r})$,
		\[
			\sup_{\boldsymbol{\mathscr T} \in\mathbb S_{\bbm r}}\left|\left\langle\frac{1}{n_k}\sum_{i=1}^{n_k}\cm E_{k,i}\circ \cm X_{k,i},\cm T\right\rangle\right| \lesssim \sigma_{\mathscr X,k}\sigma_{\mathscr E,k}(\lambda_{\mathscr X,k}^{\max}\lambda_{\mathscr E,k}^{\max})^{1/2}\sqrt{\frac{df_{\bbm r}}{n_k}}.
		\]
	\end{lemma}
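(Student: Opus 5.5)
The statement to prove is Lemma~\ref{lem:deviation-condition-tucker-rank-r}, and I would argue by a covering argument over $\mathbb S_{\bbm r}$ combined with a sub-exponential (Bernstein) tail bound for each fixed $\cm T$.

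\textbf{Step 1: fixed $\cm T$.} Fix $\cm T\in\mathbb S_{\bbm r}$. By the vectorization $\vect(\langle\cm A,\cm X\rangle)=\cm A_{[S_X]}\vect(\cm X)$, each summand is
\[
Z_i=\langle \cm E_{k,i}\circ\cm X_{k,i},\cm T\rangle=\vect(\cm E_{k,i})^\top \cm T_{[S_X]}\vect(\cm X_{k,i}).
\]
Condition on $\cm X_{k,i}$ and use Assumption~\ref{assump:subg-noise}. Then $Z_i$ is conditionally mean-zero and sub-Gaussian with variance proxy $\sigma_{\mathscr E,k}^2\lambda_{\mathscr E,k}^{\max}\|\cm T_{[S_X]}\bbm\Sigma_{\mathscr X,k}^{1/2}\bbm\xi_{k,i}\|_2^2$. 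The factor $\|\cm T_{[S_X]}\bbm\Sigma_{\mathscr X,k}^{1/2}\bbm\xi_{k,i}\|_2$ is sub-Gaussian with $\psi_2$-norm $\lesssim \sigma_{\mathscr X,k}(\lambda_{\mathscr X,k}^{\max})^{1/2}\|\cm T\|_\F$, by Hanson--Wright, since its square has mean at most $\lambda_{\mathscr X,k}^{\max}\|\cm T\|_\F^2$. It follows that $Z_i$ is sub-exponential with $\psi_1$-norm $\lesssim\vartheta_k$. Bernstein's inequality then gives, for $x\le n_k$,
\[
\mathbb P\Bigl(\Bigl|n_k^{-1}\textstyle\sum_i Z_i\Bigr|>C\vartheta_k\sqrt{x/n_k}\Bigr)\le 2e^{-x}.
\]
The restriction $x\le n_k$ is where the condition $n_k\gtrsim df_{\bbm r}$ enters.

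\textbf{Step 2: $\epsilon$-net of $\mathbb S_{\bbm r}$.} Write $\cm T=\cm S\times_{s}\bm U_s$ with $\|\cm S\|_\F=1$ and orthonormal $\bm U_s$. Netting the core at scale $\epsilon/(d+m+1)$ and each $\bm U_s$ at that scale in operator norm (Stiefel manifold, Koltchinskii--Xia/Zhang type construction) gives a net $\mathcal N$ of cardinality at most $(C/\epsilon)^{c\,df'}$. Here $df'=\prod r_s+\sum_s r_sq_s+\sum_s r_{m+s}p_s$, and $df'\asymp df_{\bbm r}$ because the ranks are fixed and $q_s,p_s\ge r_s$. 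For the approximation: given $\cm T$ and its nearest net point $\cm T'$, the difference $\cm T-\cm T'$ telescopes into $d+m+1$ tensors, each of Tucker rank at most $\bbm r$ and Frobenius norm at most $\epsilon/(d+m+1)$. Hence
\[
M:=\sup_{\mathbb S_{\bbm r}}|\langle\cdot,\cm T\rangle| \le \max_{\mathcal N}|\langle\cdot,\cm T'\rangle|+\epsilon M,
\]
and taking $\epsilon=1/2$ gives $M\le 2\max_{\mathcal N}$.

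\textbf{Step 3: union bound.} Apply Step 1 with $x=C'df_{\bbm r}$ and take a union bound over $\mathcal N$. With $C'$ chosen larger than the net exponent, the failure probability is $\exp(-Cdf_{\bbm r})$ and the bound is $\vartheta_k\sqrt{df_{\bbm r}/n_k}$, as claimed.

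The main obstacle is Step 1's product structure under only conditional sub-Gaussian noise: we need $Z_i$ to be genuinely sub-exponential, not merely conditionally sub-Gaussian with a random proxy. I would handle this by bounding the moment generating function directly. Conditioning gives
\[
\mathbb E e^{\mu Z_i}\le \mathbb E\exp\bigl(\mu^2\sigma_{\mathscr E,k}^2\lambda_{\mathscr E,k}^{\max}\|\cm T_{[S_X]}\vect(\cm X_{k,i})\|_2^2/2\bigr),
\]
and the right-hand side is finite and $\le\exp(C\mu^2\vartheta_k^2)$ for $|\mu|\lesssim1/\vartheta_k$, by the sub-exponential tail of the quadratic form in $\bbm\xi_{k,i}$ (Hanson--Wright). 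A secondary, routine point is checking that the telescoped differences in the net step keep Tucker rank at most $\bbm r$ rather than $2\bbm r$. This holds because each telescoped term replaces exactly one factor or the core, so the other factors are still shared with $\cm T$.
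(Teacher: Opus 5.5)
Your proof is correct. Its skeleton matches the paper's: a net of $\mathbb S_{\bbm r}$, a union bound, and the self-bounding reduction $M\le 2\max_{\mathcal N}$ via a telescoped difference of Tucker-rank-$\bbm r$ pieces.

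\textbf{Fixed-$\cm T$ tail bound.} The genuine difference is here. The paper conditions on the design. It introduces the uniform event $\mathcal E_{\mathscr X}=\{\sup_{\mathbb S_{\bbm r}}n_k^{-1}\sum_i\|\cm T_{[S_X]}\vect(\cm X_{k,i})\|_2^2\lesssim\sigma_{\mathscr X,k}^2\lambda_{\mathscr X,k}^{\max}\}$, obtained from the restricted quadratic concentration result (Lemma~\ref{lem:rsc-condition-tucker-rank-r}). On that event each net inner product is conditionally sub-Gaussian with proxy $\vartheta_k^2/n_k$. The condition $n_k\gtrsim df_{\bbm r}$ enters through $\delta_{\bbm r}\lesssim\sigma_{\mathscr X,k}^2\lambda_{\mathscr X,k}^{\max}$.

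You instead integrate the design out of the moment generating function. This shows each $Z_i$ is unconditionally sub-exponential with $\psi_1$-norm $\lesssim\vartheta_k$, and you then apply Bernstein. In your route the sample-size condition enters through the requirement $x\le n_k$.

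What each buys:
\begin{itemize}
\item Your route is self-contained. It needs no appeal to the quadratic-concentration lemma and no conditioning on a design event. The paper treats that conditioning somewhat loosely: it places $\mathbb P(\mathcal E_{\mathscr X}^c)$ inside the per-point bound and then multiplies by $|\mathcal N_{\bbm r}|$.
\item The paper's route reuses a lemma it needs anyway. One design event serves all net points, so only the noise is union-bounded.
\item For the stated bound the two give the same result. The paper's conditional tails are sub-Gaussian at every level, but its unconditional failure probability is floored at $\mathbb P(\mathcal E_{\mathscr X}^c)\approx e^{-Cdf_{\bbm r}}$.
\end{itemize}

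\textbf{The net.} A smaller difference is how the net is built. You net Stiefel factors directly in operator norm. This needs no rotations and gives exponent $df_{\bbm r}+\sum_s r_s^2\asymp df_{\bbm r}$, which is fine because the ranks are fixed. The paper nets Grassmannians (Szarek) with exponent exactly $df_{\bbm r}$, but must then align bases with orthogonal rotations and a rotated core.

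\textbf{Two small corrections.}
\begin{itemize}
\item The telescoped terms $\cm S'\times_{j<s}\bm U_j'\times_s(\bm U_s-\bm U_s')\times_{j>s}\bm U_j$ mix factors of $\cm T'$ and $\cm T$. So the reason their Tucker rank is at most $\bbm r$ is not that they share factors with $\cm T$. It is simply that each is a Tucker product with an $r_1\times\cdots\times r_{d+m}$ core and factor matrices with $r_j$ columns.
\item To absorb $\tr(\bbm\Sigma_{\mathscr X,k}^{1/2}\cm T_{[S_X]}^\top\cm T_{[S_X]}\bbm\Sigma_{\mathscr X,k}^{1/2})\le\lambda_{\mathscr X,k}^{\max}$ into $\sigma_{\mathscr X,k}^2\lambda_{\mathscr X,k}^{\max}$, you implicitly use $\sigma_{\mathscr X,k}\gtrsim 1$. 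This follows from the unit-variance normalization of $\bbm\xi_{k,i}$ and should be stated.
\end{itemize}
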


		\begin{lemma}[Restricted bilinear empirical-process bound]\label{lem:restricted-bilinear-event-2r}
			Suppose Assumption~\ref{assump:subg-design} holds. Define $\mathbb S_{\bbm r} =\{\cm T:\tucrank(\cm T)\leq \bbm r,\ \|\cm T\|_\F=1\}$. If $n_k\gtrsim \sigma_{\mathscr X,k}^{4}df_{\bbm r}$ with $df_{\bbm r}=\prod_{s=1}^{d+m}r_s+\sum_{s=1}^{m}r_s(q_s-r_s)+\sum_{s=1}^{d}r_{m+s}(p_s-r_{m+s})$, then, with probability at least $1-\exp(-Cdf_{\bbm r})$,
			\[
				\sup_{\boldsymbol{\mathscr U},\boldsymbol{\mathscr V}\in\mathbb S_{\bbm r}}\left|\frac{1}{n_k}\sum_{i=1}^{n_k}\left\langle\langle\cm U,\cm X_{k,i}\rangle,\langle\cm V,\cm X_{k,i}\rangle\right\rangle - \mathbb E\!\left[\left\langle\langle\cm U,\cm X_{k,i}\rangle,\langle\cm V,\cm X_{k,i}\rangle\right\rangle\right]\right| \lesssim \sigma_{\mathscr X,k}^{2}\lambda_{\mathscr X,k}^{\max}\sqrt{\frac{df_{\bbm r}}{n_k}}.
			\]
		\end{lemma}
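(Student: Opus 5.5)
We prove the statement for a fixed client $k$ by linearising the centred bilinear form and controlling it with an $\varepsilon$-net plus a Bernstein inequality. Write $\bm x_i=\vect(\cm X_{k,i})=\bbm\Sigma_{\mathscr X,k}^{1/2}\bbm\xi_{k,i}$.

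\textbf{Step 1 (reduction to a quadratic form in $\bbm\xi$).} For any $\cm U,\cm V$ we have $\langle\langle\cm U,\cm X\rangle,\langle\cm V,\cm X\rangle\rangle=\bm x^\top \cm U_{[S_X]}^\top\cm V_{[S_X]}\bm x$. Hence the summand equals $\bbm\xi_{k,i}^\top \bm M_{\boldsymbol{\mathscr U},\boldsymbol{\mathscr V}}\bbm\xi_{k,i}$ with
\[
\bm M_{\boldsymbol{\mathscr U},\boldsymbol{\mathscr V}}=\bbm\Sigma_{\mathscr X,k}^{1/2}\cm U_{[S_X]}^\top\cm V_{[S_X]}\bbm\Sigma_{\mathscr X,k}^{1/2}.
\]
Its expectation is $\tr(\bm M)$. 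Moreover, $\|\bm M\|_\F\le \lambda_{\mathscr X,k}^{\max}\|\cm U\|_\F\|\cm V\|_\F$ and $\|\bm M\|_{\op}\le\lambda_{\mathscr X,k}^{\max}$ on $\mathbb S_{\bbm r}$. The Hanson--Wright inequality, applied to $\bbm\xi_{k,i}$ with sub-Gaussian parameter $\sigma_{\mathscr X,k}$, makes each centred summand sub-exponential with $\psi_1$-norm $\lesssim\sigma_{\mathscr X,k}^2\lambda_{\mathscr X,k}^{\max}$. Bernstein's inequality over $i\in[n_k]$ then gives, for fixed $(\cm U,\cm V)$,
\[
\mathbb P\bigl(|Z_{\boldsymbol{\mathscr U},\boldsymbol{\mathscr V}}|>\sigma_{\mathscr X,k}^2\lambda_{\mathscr X,k}^{\max}\,t\bigr)\le 2\exp\{-cn_k(t^2\wedge t)\}.
\]

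\textbf{Step 2 (covering $\mathbb S_{\bbm r}$).} We invoke the standard covering bound for Tucker-rank-$\bbm r$ tensors of unit Frobenius norm: there is an $\epsilon$-net $\mathcal N_\epsilon$ with $|\mathcal N_\epsilon|\le (C/\epsilon)^{C\,df_{\bbm r}}$. It is built by covering the core in a Frobenius ball and each factor $\bm U_s$ on the Stiefel manifold; the log-count is of order $\prod_s r_s+\sum_s r_s\cdot\text{dim}_s$, which is $\asymp df_{\bbm r}$ because the ranks are fixed. A union bound over $\mathcal N_\epsilon\times\mathcal N_\epsilon$ with $t\asymp\sqrt{df_{\bbm r}/n_k}$ works out as follows. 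The condition $n_k\gtrsim\sigma_{\mathscr X,k}^4 df_{\bbm r}$ places $t\le1$, so we are in the sub-Gaussian regime. The failure probability is then $\exp(2C\,df_{\bbm r}\log(C/\epsilon)-cn_kt^2)\le\exp(-C\,df_{\bbm r})$ once the constant in $t$ is large.

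\textbf{Step 3 (net-to-supremum argument).} This is the main obstacle, because $\mathbb S_{\bbm r}$ is not convex: differences of Tucker-rank-$\bbm r$ tensors have Tucker rank at most $2\bbm r$, not $\bbm r$. Let $\Delta_{\bbm r}=\sup_{\mathbb S_{\bbm r}\times\mathbb S_{\bbm r}}|Z|$. For $\cm U$ with nearest net point $\cm U'$, the difference $\cm U-\cm U'$ has Tucker rank at most $2\bbm r$. Using HOSVD-type splitting, write it as a sum of at most $2^{d+m}$ mutually orthogonal Tucker-rank-$\bbm r$ pieces, each of norm $\le\epsilon$; for instance, split each factor space into two $r_s$-dimensional blocks and take the block sub-tensors. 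The same applies to $\cm V$. Bilinearity then gives
\[
\Delta_{\bbm r}\le \max_{\text{net}}|Z|+2\cdot 2^{d+m}\epsilon\,\Delta_{\bbm r}.
\]
Choosing $\epsilon=2^{-(d+m)-3}$, a constant since $d,m$ are fixed, absorbs the last term and yields $\Delta_{\bbm r}\le 2\max_{\text{net}}|Z|\lesssim\sigma_{\mathscr X,k}^2\lambda_{\mathscr X,k}^{\max}\sqrt{df_{\bbm r}/n_k}$.

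If the block splitting fails to give orthogonal rank-$\bbm r$ pieces cleanly, the fallback is to work on $\mathbb S_{2\bbm r}$ directly, using $df_{2\bbm r}\asymp df_{\bbm r}$; this is consistent with the $\exp(-Cdf_{2\bbm r})$ term appearing in Theorem~\ref{thm:federated_representation_error}. Alternatively, one can use a polarization identity, $4Z_{\boldsymbol{\mathscr U},\boldsymbol{\mathscr V}}=Q(\cm U+\cm V)-Q(\cm U-\cm V)$, which reduces the problem to Lemma~\ref{lem:rsc-condition-tucker-rank-r} on the set of rank-$2\bbm r$ tensors.
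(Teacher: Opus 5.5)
Your proposal is correct and follows essentially the paper's route: a pointwise Hanson--Wright bound for the centred quadratic form in $\bbm\xi_{k,i}$, a product core/factor net of log-cardinality $\lesssim df_{\bbm r}$, a union bound over pairs of net points, and absorption via bilinearity. The paper applies Hanson--Wright once to the stacked vector with $\bm I_{n_k}\otimes\bm A$, which is equivalent to your Hanson--Wright-then-Bernstein step. The only real difference is the net-to-supremum decomposition: the paper writes $\cm T-\cm T^{\mathcal N}$ as a telescoping sum of $d+m+1$ Tucker-rank-$\bbm r$ terms (one core term and one factor term per mode, after Grassmann-net alignment), with total norm $\le(\sqrt2(d+m)+1)\varepsilon$, whereas you split the rank-$2\bbm r$ difference into $2^{d+m}$ orthogonal rank-$\bbm r$ blocks, which is equally valid and changes only the constant in $\epsilon$.
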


		\begin{lemma}[Tangent-space bilinear deviation bound]\label{lem:tangent-deviation-bilinear-bound}
		Suppose Assumption~\ref{assump:subg-design} holds and recall that $h_{\mathscr B,k} = \|\cm B_k^*\|_{\infty}^{1-\nu/2}s_\nu^{1/2}$ such that $\|\cm B_k^*\|_\F\leq h_{\mathscr B,k}$ for any $\nu\in[0,1)$. If $n_k\gtrsim df_{\bbm r}$, then with probability at least $1-\exp(-Cdf_{\bbm r})$,
			\begin{align}\label{eq:tangent-deviation-bilinear-bound}
				\sup_{\substack{\boldsymbol{\mathscr U}\in\mathcal T_{\bbm r}(\boldsymbol{\mathscr A}_{0,k}^{(t)})\\ \|\boldsymbol{\mathscr U}\|_\F=1}}\left|\frac{1}{n_k}\sum_{i=1}^{n_k}\left\langle\langle \cm B_k^*,\cm X_{k,i}\rangle,\langle \cm U,\cm X_{k,i}\rangle\right\rangle - \mathbb E\bigl[\left\langle\langle \cm B_k^*,\cm X_{k,i}\rangle,\langle \cm U,\cm X_{k,i}\rangle\right\rangle\bigr]\right| \lesssim \sigma_{\mathscr X,k}^2\lambda_{\mathscr X,k}^{\max}h_{\mathscr B,k}\sqrt{\frac{df_{\bbm r}}{n_k}}.
			\end{align}
		\end{lemma}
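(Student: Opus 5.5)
The target is Lemma~\ref{lem:tangent-deviation-bilinear-bound}. I would reduce it to Lemma~\ref{lem:restricted-bilinear-event-2r}, or rather to the generic empirical-process argument behind it, with $\cm B_k^*$ held fixed.

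\textbf{Reduction to a one-dimensional process.} Write $\cm Z_i(\cm U)=\langle\langle \cm B_k^*,\cm X_{k,i}\rangle,\langle \cm U,\cm X_{k,i}\rangle\rangle$. In vectorized form,
\[
\cm Z_i(\cm U)=\bbm\xi_{k,i}^\top \bbm\Sigma_{\mathscr X,k}^{1/2}(\cm B_k^*)_{[S_X]}^\top(\cm U)_{[S_X]}\bbm\Sigma_{\mathscr X,k}^{1/2}\bbm\xi_{k,i}.
\]
Each $\cm Z_i(\cm U)$ is therefore a quadratic form in a sub-Gaussian vector. Equivalently, it is a product of the two sub-Gaussian variables $\langle \bm b_j,\bbm\xi\rangle$ and $\langle \bm u_j,\bbm\xi\rangle$, summed over output coordinates $j$. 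This makes $\cm Z_i(\cm U)-\mathbb E\,\cm Z_i(\cm U)$ sub-exponential. Hanson--Wright, or the product-of-sub-Gaussians bound, gives $\psi_1$-norm at most $C\sigma_{\mathscr X,k}^2\lambda_{\mathscr X,k}^{\max}\|\cm B_k^*\|_\F\|\cm U\|_\F$. Bernstein's inequality then gives, for each fixed $\cm U$, a deviation of order $\sigma_{\mathscr X,k}^2\lambda_{\mathscr X,k}^{\max}h_{\mathscr B,k}(\sqrt{x/n_k}+x/n_k)$ with probability $1-2e^{-x}$. Here I use $\|\cm B_k^*\|_\F\le h_{\mathscr B,k}$.

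\textbf{Uniformity over the tangent space.} The key structural fact is that every element of $\mathcal T_{\bbm r}(\cm A_{0,k}^{(t)})$ has Tucker rank at most $2\bbm r$. This holds because each summand in the projector $P_{T}$ has mode-$s$ factor contained in $\mathrm{span}(\bm U_s)$ or in a rank-$r_s$ complement. Moreover, the tangent space is a linear subspace of dimension of order $df_{\bbm r}$. I would discretize this way:
\begin{itemize}
\item Fix the point $\cm A_{0,k}^{(t)}$.
\item The unit sphere of the tangent space has a $1/4$-net of size $9^{\dim\mathcal T}\le e^{Cdf_{\bbm r}}$.
\item Apply the pointwise Bernstein bound with $x\asymp df_{\bbm r}$ and take a union bound.
\item Pass from the net to the whole sphere using linearity in $\cm U$: $\sup\le 2\max_{\text{net}}$.
\end{itemize}
Under $n_k\gtrsim df_{\bbm r}$ the sub-exponential term $x/n_k$ is dominated by $\sqrt{x/n_k}$, which yields the stated rate.

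\textbf{Main obstacle.} The iterate $\cm A_{0,k}^{(t)}$ depends on the data, so a fixed tangent space does not suffice. I would handle this by taking the supremum over the union of all tangent spaces. That union is contained in $\{\cm U:\tucrank(\cm U)\le 2\bbm r\}$. The standard covering of Tucker-rank-$2\bbm r$ unit tensors (core net times factor nets) has cardinality $\exp(C\,df_{2\bbm r})\le\exp(C'df_{\bbm r})$, because the ranks are fixed. The net-to-sphere step then uses the usual splitting: the difference of two rank-$2\bbm r$ tensors decomposes into a bounded number of rank-$2\bbm r$ pieces, mode by mode, as in the proofs of Lemmas~\ref{lem:deviation-condition-tucker-rank-r}--\ref{lem:restricted-bilinear-event-2r}. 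This makes the bound uniform in $t$ and in the data, at the cost only of constants, with probability at least $1-\exp(-Cdf_{\bbm r})$.
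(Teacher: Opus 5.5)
Your proposal is correct and follows essentially the paper's argument. The paper bounds the supremum over the data-dependent tangent space by the supremum over the unit-Frobenius Tucker-rank-$2\bbm r$ class $\mathbb S_{2\bbm r}$, which it covers with a core-times-factor product net of size $\exp(C\,df_{2\bbm r})$. It combines this with a pointwise Hanson--Wright tail using $\|\cm B_k^*\|_\F\le h_{\mathscr B,k}$ (your Bernstein bound on the sub-exponential summands is equivalent), a union bound, and the mode-wise splitting of $\cm U-\cm U^{\mathcal N}$ together with linearity to pass from the net to the whole class. Your preliminary net over a fixed tangent space is unnecessary, but you rightly replace it with the uniform $\mathbb S_{2\bbm r}$ bound, which is exactly how the paper handles the dependence of $\cm A_{0,k}^{(t)}$ on the data.
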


	\begin{lemma}[high probability truncation-level bound]\label{lem:truncation-level-choice}
		Suppose Assumptions~\ref{assump:subg-design} and~\ref{assump:subg-noise} hold. Then, with probability at least $1-C\exp(-C\log n)$, uniformly over $i\in[n_k]$, $\|\cm X_{k,i}\|_{\F}\lesssim \sigma_{\mathscr X,k}\sqrt{\lambda_{\mathscr X,k}^{\max}}\sqrt{p+\log n}$ and 
		\begin{align*}
			\|\langle \cm A,\cm X_{k,i}\rangle-\cm Y_{k,i}\|_{\F} \lesssim \left((R_{\mathscr A,k}+h_{\mathscr B,k})\sigma_{\mathscr X,k}\sqrt{\lambda_{\mathscr X,k}^{\max}}\sqrt{p+\log n} + \sigma_{\mathscr E,k}\sqrt{\lambda_{\mathscr E,k}^{\max}}\sqrt{q+\log n}\right),
		\end{align*}
		for all $\cm A$ satisfying $\|\cm A-\cm A_0^*\|_{\F}\leq R_{\mathscr A,k}$ where $R_{\mathscr A,k}$ is defined in Section~\ref{sec:theorems-const} and $h_{\mathscr B,k} = \|\cm B_k^*\|_{\infty}^{1-\nu/2}s_\nu^{1/2}$.
	\end{lemma}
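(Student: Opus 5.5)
We prove the two claims in Lemma~\ref{lem:truncation-level-choice} by a union bound over $i\in[n_k]$ applied to sub-Gaussian norm concentration.

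\textbf{Step 1: covariate norms.} By Assumption~\ref{assump:subg-design},
\[
\|\cm X_{k,i}\|_\F=\|\bbm\Sigma_{\mathscr X,k}^{1/2}\bbm\xi_{k,i}\|_2\leq (\lambda_{\mathscr X,k}^{\max})^{1/2}\|\bbm\xi_{k,i}\|_2 .
\]
The vector $\bbm\xi_{k,i}$ has independent sub-Gaussian entries with parameter $\sigma_{\mathscr X,k}$. A standard $\varepsilon$-net bound over the unit sphere $\mathbb S^{p-1}$ (a $1/2$-net of size $5^p$), or equivalently Hanson--Wright, gives
\[
\mathbb P\bigl(\|\bbm\xi_{k,i}\|_2\geq C\sigma_{\mathscr X,k}(\sqrt p+u)\bigr)\leq e^{-u^2}.
\]
Take $u=\sqrt{C'\log n}$ and sum over the $n_k\leq n$ indices. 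The failure probability is at most $n\,e^{-C'\log n}=\exp(-(C'-1)\log n)$. Hence, uniformly over $i\in[n_k]$,
\[
\|\cm X_{k,i}\|_\F\lesssim\sigma_{\mathscr X,k}\sqrt{\lambda_{\mathscr X,k}^{\max}}\sqrt{p+\log n}.
\]

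\textbf{Step 2: noise norms.} The same argument applies conditionally on $\cm X_{k,i}$, using Assumption~\ref{assump:subg-noise}. For a fixed unit vector $\bbm u$, $\langle\bbm u,\vect(\cm E_{k,i})\rangle$ is sub-Gaussian with variance proxy at most $\sigma_{\mathscr E,k}^2\lambda_{\mathscr E,k}^{\max}$. A net argument on $\mathbb S^{q-1}$ then gives
\[
\|\cm E_{k,i}\|_\F\lesssim\sigma_{\mathscr E,k}\sqrt{\lambda_{\mathscr E,k}^{\max}}(\sqrt q+u)
\]
with conditional probability at least $1-e^{-u^2}$. This bound does not depend on the conditioning value, so it also holds unconditionally. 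A union bound over $i$ with $u\asymp\sqrt{\log n}$ finishes this step.

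\textbf{Step 3: residual decomposition.} Using the model \eqref{eq:model} and decomposition \eqref{eq:decomposition}, write
\[
\langle\cm A,\cm X_{k,i}\rangle-\cm Y_{k,i}=\langle\cm A-\cm A_0^*-\cm B_k^*,\cm X_{k,i}\rangle-\cm E_{k,i}.
\]
The generalized inner product satisfies $\|\langle\cm T,\cm X\rangle\|_\F=\|\cm T_{[S_X]}\vect(\cm X)\|_2\leq\|\cm T\|_\F\|\cm X\|_\F$. For $\|\cm A-\cm A_0^*\|_\F\leq R_{\mathscr A,k}$ and $\|\cm B_k^*\|_\F\leq h_{\mathscr B,k}$, this yields
\[
\|\langle\cm A,\cm X_{k,i}\rangle-\cm Y_{k,i}\|_\F\leq(R_{\mathscr A,k}+h_{\mathscr B,k})\|\cm X_{k,i}\|_\F+\|\cm E_{k,i}\|_\F .
\]
This is deterministic given the events of Steps 1--2, so it holds uniformly over all admissible $\cm A$ with no extra covering. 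Combining the two events completes the proof, with total failure probability at most $C\exp(-C\log n)$.

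The only delicate point is the norm concentration for $\bbm\xi_{k,i}$. Only sub-Gaussian entries are assumed, not a Gaussian law, so the net or Hanson--Wright argument must be invoked explicitly rather than Gaussian norm concentration. The constant $C'$ must then be chosen large enough to absorb the factor $n$ from the union bound.
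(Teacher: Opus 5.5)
Your proof is correct and follows essentially the same route as the paper. The paper also proves a per-sample norm bound for $\cm X_{k,i}$, and one for $\cm E_{k,i}$ via a $1/2$-net of $\mathbb S^{q-1}$ conditionally on the design plus the tower property; it then takes a union bound over $i\in[n_k]$ at level $\log n$ using $n_k\le n$, and uses the deterministic residual bound $(R_{\mathscr A,k}+h_{\mathscr B,k})\|\cm X_{k,i}\|_\F+\|\cm E_{k,i}\|_\F$, which is automatically uniform in $\cm A$. The only cosmetic difference is in the covariate step: you pull out $\sqrt{\lambda_{\mathscr X,k}^{\max}}$ and concentrate the isotropic vector $\bbm\xi_{k,i}$, whereas the paper applies Hanson--Wright directly to $\bbm\xi_{k,i}^\top\bbm\Sigma_{\mathscr X,k}\bbm\xi_{k,i}$; both give the same $\sqrt{p+\log n}$ rate.
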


		\begin{lemma}[Contraction on the Tucker tangent space]\label{lem:tangent-contraction-tensor-general}
			Let $\cm X_1,\cdots,\cm X_n\in\mathbb R^{p_1\times\cdots\times p_d}$, and define $\mathcal H_n(\cm Z) = n^{-1}\sum_{i=1}^n \langle \cm Z,\cm X_i\rangle\circ\cm X_i$ for $\cm Z\in\mathbb R^{q_1\times\cdots\times q_m\times p_1\times\cdots\times p_d}$.
			Fix any Tucker-rank-$\bbm r$ tensor $\cm A\in\mathbb R^{q_1\times\cdots\times q_m\times p_1\times\cdots\times p_d}$. Assume that, for any $\cm U\in \mathcal T_{\bbm r}(\cm A)$,
			\[
			C_{\mathrm{tan}}^{\min}\|\cm U\|_\F^2 \leq \frac{1}{n}\sum_{i=1}^n \|\langle \cm U,\cm X_i\rangle\|_\F^2 \leq C_{\mathrm{tan}}^{\max}\|\cm U\|_\F^2.
			\]
			Then, for any $\cm Z\in\mathcal T_{\bbm r}(\cm A)$ and $\eta>0$,
			\[
				\Big\|\cm Z-\eta\,\mathcal P_{\mathcal T_{\bbm r}(\boldsymbol{\mathscr A})}\bigl(\mathcal H_n(\cm Z)\bigr)\Big\|_\F \leq \max\left\{|1-\eta C_{\mathrm{tan}}^{\min}|,|1-\eta C_{\mathrm{tan}}^{\max}|\right\}\|\cm Z\|_\F.
			\]
		\end{lemma}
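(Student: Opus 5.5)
The plan is to reduce the claim to a spectral statement about a self-adjoint operator on the linear subspace $\mathcal T_{\bbm r}(\cm A)$. Since $\mathcal T_{\bbm r}(\cm A)$ is a linear subspace and $\mathcal P_{\mathcal T_{\bbm r}(\boldsymbol{\mathscr A})}$ is the orthogonal projection onto it, define the restricted operator $\mathcal M:\mathcal T_{\bbm r}(\cm A)\to\mathcal T_{\bbm r}(\cm A)$ by $\mathcal M(\cm Z)=\mathcal P_{\mathcal T_{\bbm r}(\boldsymbol{\mathscr A})}(\mathcal H_n(\cm Z))$. The quantity to bound is then $\|(\mathcal I-\eta\mathcal M)\cm Z\|_\F$, with $\cm Z$ in the subspace.

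\textbf{Step 1: the quadratic form of $\mathcal H_n$.} Expand the entries of $\mathcal H_n(\cm Z)$ using the definitions of the generalized inner product and the outer product. This should give, for any $\cm Z,\cm W$ in the full space,
\[
\langle \mathcal H_n(\cm Z),\cm W\rangle=\frac1n\sum_{i=1}^n\bigl\langle\langle\cm Z,\cm X_i\rangle,\langle\cm W,\cm X_i\rangle\bigr\rangle .
\]
The right-hand side is symmetric in $(\cm Z,\cm W)$ and nonnegative when $\cm W=\cm Z$. Hence $\mathcal H_n$ is self-adjoint and positive semidefinite with respect to the Frobenius inner product. Equivalently, in the $S_X$-matricization $\mathcal H_n(\cm Z)_{[S_X]}=\cm Z_{[S_X]}\widehat{\bbm\Sigma}$ with $\widehat{\bbm\Sigma}=n^{-1}\sum_i\vect(\cm X_i)\vect(\cm X_i)^\top$, which makes both properties evident.

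\textbf{Step 2: Rayleigh quotients of $\mathcal M$.} For $\cm Z,\cm W\in\mathcal T_{\bbm r}(\cm A)$, self-adjointness of the orthogonal projection gives $\langle\mathcal M\cm Z,\cm W\rangle=\langle\mathcal H_n\cm Z,\cm W\rangle$. So $\mathcal M$ is self-adjoint on the subspace, and its quadratic form is $\langle\mathcal M\cm U,\cm U\rangle=n^{-1}\sum_i\|\langle\cm U,\cm X_i\rangle\|_\F^2$. By the hypothesis, every Rayleigh quotient of $\mathcal M$ lies in $[C_{\mathrm{tan}}^{\min},C_{\mathrm{tan}}^{\max}]$. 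The spectral theorem on this finite-dimensional space then places all eigenvalues $\lambda_j$ of $\mathcal M$ in that interval.

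\textbf{Step 3: conclusion.} The operator $\mathcal I-\eta\mathcal M$ is self-adjoint on $\mathcal T_{\bbm r}(\cm A)$ with eigenvalues $1-\eta\lambda_j$. Its operator norm is therefore $\max_j|1-\eta\lambda_j|$. Because $x\mapsto|1-\eta x|$ is convex, its maximum over the interval is attained at an endpoint, giving the bound $\max\{|1-\eta C_{\mathrm{tan}}^{\min}|,|1-\eta C_{\mathrm{tan}}^{\max}|\}$. Applying this operator-norm bound to $\cm Z$ yields the claim.

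The only point requiring care is Step 1, namely checking the index bookkeeping for the generalized inner product and outer product so that the bilinear-form identity holds. Everything after that is standard linear algebra.
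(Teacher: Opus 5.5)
Your proposal is correct and follows essentially the same route as the paper's proof. Both arguments show that $\mathcal H_n$ is self-adjoint, deduce that $\cm Z\mapsto\mathcal P_{\mathcal T_{\bbm r}(\boldsymbol{\mathscr A})}(\mathcal H_n(\cm Z))$ is self-adjoint on $\mathcal T_{\bbm r}(\cm A)$, and use the spectral theorem together with the hypothesis at unit eigenvectors to place every eigenvalue in $[C_{\mathrm{tan}}^{\min},C_{\mathrm{tan}}^{\max}]$, then conclude from the eigen-expansion of $\cm Z$; the only difference is that you state the last step as an operator-norm bound with an explicit convexity remark, which the paper uses implicitly.
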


	\begin{lemma}[Nuclear-norm bound on tangent vectors]\label{lem:tangent-space-nuclear-bound}
		Let $\cm A\in\mathbb R^{q_1\times\cdots\times q_m\times p_1\times\cdots\times p_d}$ with $\tucrank(\cm A)\leq\bbm r$. Then, for any $\cm U\in\mathcal T_{\bbm r}(\cm A)$, $\|\cm U_{[S_X]}\|_* \leq \sqrt{r_{\bbm q}+r_{\bbm p}}\,\|\cm U\|_\F$.
	\end{lemma}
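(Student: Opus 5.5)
We work with the Tucker decomposition of the base point: $\cm A=\cm S\times_{s=1}^{d+m}\bm U_s$ with orthonormal $\bm U_s$. From the explicit tangent-space projector in Section~\ref{sec:basic-notation}, every $\cm U\in\mathcal T_{\bbm r}(\cm A)$ can be written as
\[
\cm U=\cm U_0+\sum_{k=1}^{d+m}\cm U_k,\qquad \cm U_0=\cm U\times_{s=1}^{d+m}P_{\bm U_s},\qquad \cm U_k=\mathrm{Fold}_{[k]}\bigl(P_{\bm U_k^\perp}\cm U_{[k]}P_{\bm W_k}\bigr).
\]
These $d+m+1$ pieces are mutually orthogonal in Frobenius inner product, so $\sum_{k=0}^{d+m}\|\cm U_k\|_\F^2=\|\cm U\|_\F^2$.

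The bound will not come from summing nuclear norms of the pieces, since that would give a $\sqrt{d+m+1}$-type constant. Instead, we split $\cm U_{[S_X]}$ into a part with row space in the response-side factor span and a part with column space in the predictor-side span. Set $\bm P_q=\otimes_{a\in[m]}P_{\bm U_a}$, a projector of rank $r_{\bbm q}$ on $\mathbb R^q$, and $\bm P_p=\otimes_{b\in S_X}P_{\bm U_b}$, a projector of rank $r_{\bbm p}$ on $\mathbb R^p$. The key structural claim is
\[
\bm M:=\cm U_{[S_X]}=\bm P_q\bm M+\bm P_q^\perp\bm M\bm P_p.
\]
This is equivalent to $\bm P_q^\perp\bm M\bm P_p^\perp=0$, which we check piece by piece.
\begin{itemize}
\item[] For $\cm U_0$: $\bm P_q^\perp$ annihilates it.
\item[] For $\cm U_k$ with $k\in[m]$ (a response mode): $\cm U_k$ has mode-$j$ fibers in $\mathrm{span}(\bm U_j)$ for every $j\neq k$, because $P_{\bm W_k}$ lies in the span of $\otimes_{j\neq k}\bm U_j$. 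In particular every predictor mode is projected, so $\cm U_k{}_{[S_X]}\bm P_p^\perp=0$.
\item[] For $k\in S_X$: all response modes are projected, so $\bm P_q^\perp\cm U_k{}_{[S_X]}=0$.
\end{itemize}
Hence the decomposition holds. It also shows $\bm P_q^\perp\bm M\bm P_p=\bm P_q^\perp\bm M$. So, writing $\bm M_1=\bm P_q\bm M$ and $\bm M_2=\bm P_q^\perp\bm M$, these two blocks have orthogonal column spaces, with $\|\bm M_1\|_\F^2+\|\bm M_2\|_\F^2=\|\bm M\|_\F^2$.

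Next we bound ranks: $\rank(\bm M_1)\le r_{\bbm q}$ and $\rank(\bm M_2)=\rank(\bm P_q^\perp\bm M\bm P_p)\le r_{\bbm p}$. Then
\[
\|\bm M\|_*\le\|\bm M_1\|_*+\|\bm M_2\|_*\le\sqrt{r_{\bbm q}}\|\bm M_1\|_\F+\sqrt{r_{\bbm p}}\|\bm M_2\|_\F\le\sqrt{r_{\bbm q}+r_{\bbm p}}\,\|\bm M\|_\F,
\]
where the last step is Cauchy--Schwarz. Since matricization is an isometry, $\|\bm M\|_\F=\|\cm U\|_\F$, which gives the claim.

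The main obstacle is the careful verification that $\bm P_q^\perp\bm M\bm P_p^\perp=0$. This requires tracking, under the paper's reverse-Kronecker matricization convention, that $P_{\bm W_k}$ restricts every non-$k$ mode to its factor span. It also requires handling the case where the base point has Tucker rank strictly below $\bbm r$, which we would address by padding the factors to rank exactly $\bbm r$ or by noting that the tangent-space formula still applies with the given factors. If the precise mixed-product identities become messy, we would first verify the claim entrywise on rank-one tangent directions $\bm u_1\circ\cdots\circ\bm u_{d+m}$, where at most one factor lies outside its span, and then conclude by linearity.
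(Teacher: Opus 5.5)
Your proof is correct and follows the paper's route. You split $\cm U_{[S_X]}$ into a piece with column space in the response-side factor span (rank at most $r_{\bbm q}$) and a piece with row space in the predictor-side factor span (rank at most $r_{\bbm p}$), then combine $\|\cdot\|_*\le\sqrt{\rank}\,\|\cdot\|_\F$ with Cauchy--Schwarz. The differences are cosmetic: the paper groups the core-variation term with the output-factor variations and invokes orthogonality of the tangent-space components, whereas your projector split $\bm P_q\bm M$ versus $\bm P_q^\perp\bm M$ puts the core term on the other side and makes the Pythagorean identity automatic.
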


	\begin{lemma}[Tucker rank of the normal projection]\label{lem:normal-projection-tucker-rank-2r}
		Let $\cm A,\cm A^*\in\mathbb R^{q_1\times\cdots\times q_m\times p_1\times\cdots\times p_d}$ be two Tucker-rank-$\bbm r$ tensors. Then
		$
			\tucrank\left(\mathcal P_{\mathcal T_{\bbm r}^{\perp}(\boldsymbol{\mathscr A})}(\cm A-\cm A^*)\right) \leq 2\bbm r.
		$
	\end{lemma}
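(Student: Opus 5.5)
The plan is to reduce the claim to a column-space inclusion, mode by mode. Write $\cm A=\cm S\times_{j=1}^{d+m}\bm U_j$ and $\cm A^*=\cm S^*\times_{j=1}^{d+m}\bm U_j^*$ with orthonormal factors, so that $\bm U_k,\bm U_k^*$ have $r_k$ columns. It suffices to show that, for each mode $k\in[d+m]$, the column space of the mode-$k$ unfolding of $\mathcal P_{\mathcal T_{\bbm r}^{\perp}(\boldsymbol{\mathscr A})}(\cm A-\cm A^*)$ lies in $\mathrm{span}(\bm U_k)+\mathrm{span}(\bm U_k^*)$. That subspace has dimension at most $2r_k$, which gives the claimed bound on the $k$-th Tucker rank.

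\textbf{Step 1 (reduce to $\cm A^*$).} Since $\cm A=\cm A\times_{j}P_{\bm U_j}$, the first term of the tangent projector reproduces $\cm A$. For each $k$, $P_{\bm U_k^\perp}\cm A_{[k]}=0$, so the remaining terms vanish on $\cm A$. Hence $\cm A\in\mathcal T_{\bbm r}(\cm A)$, and therefore
\[
\mathcal P_{\mathcal T_{\bbm r}^{\perp}(\boldsymbol{\mathscr A})}(\cm A-\cm A^*)=-\cm A^*+\mathcal P_{\mathcal T_{\bbm r}(\boldsymbol{\mathscr A})}(\cm A^*).
\]

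\textbf{Step 2 (mode-$k$ column spaces, term by term).} Fix $k$ and use the explicit formula for $\mathcal P_{\mathcal T_{\bbm r}(\boldsymbol{\mathscr A})}$ from Section~\ref{sec:basic-notation}.
\begin{enumerate}[label=(\roman*)]
\item The term $\cm A^*$ has $(\cm A^*)_{[k]}=\bm U_k^*(\cdots)$, so its mode-$k$ column space is $\mathrm{span}(\bm U_k^*)$.
\item The term $\cm A^*\times_j P_{\bm U_j}$ has mode-$k$ unfolding $P_{\bm U_k}(\cdots)$, so its column space is contained in $\mathrm{span}(\bm U_k)$.
\item The $k$-th correction term $\mathrm{Fold}_{[k]}(P_{\bm U_k^\perp}(\cm A^*)_{[k]}P_{\bm W_k})$ has column space inside $P_{\bm U_k^\perp}\mathrm{span}(\bm U_k^*)$. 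Every vector there has the form $\bm v-P_{\bm U_k}\bm v$ with $\bm v\in\mathrm{span}(\bm U_k^*)$, so this space lies in $\mathrm{span}(\bm U_k)+\mathrm{span}(\bm U_k^*)$.
\item For each $j\neq k$, consider the $j$-th correction term $\mathrm{Fold}_{[j]}(P_{\bm U_j^\perp}(\cm A^*)_{[j]}P_{\bm W_j})$. Its mode-$j$ rows lie in the column space of $\bm W_j=(\otimes_{i\neq j}\bm U_i)\bm V_j$. That Kronecker factor includes $\bm U_k$ in the mode-$k$ slot, so every mode-$k$ fiber of this tensor lies in $\mathrm{span}(\bm U_k)$.
\end{enumerate}

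\textbf{Expected obstacle.} Case (iv) is the step that needs care: the Kronecker-ordering convention must be tracked to justify that a row-space constraint in the mode-$j$ unfolding becomes a mode-$k$ fiber constraint. I would make this rigorous by writing $P_{\bm W_j}=(\otimes_{i\neq j}\bm U_i)\bm V_j\bm V_j^\top(\otimes_{i\neq j}\bm U_i)^\top$. This shows the tensor equals $\cm C\times_{i\neq j}\bm U_i$ for some tensor $\cm C$, with mode $j$ left free. Consequently its mode-$k$ unfolding factors as $\bm U_k(\cdots)$.

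\textbf{Conclusion.} Summing (i)--(iv), the mode-$k$ unfolding of $\mathcal P_{\mathcal T_{\bbm r}^{\perp}(\boldsymbol{\mathscr A})}(\cm A-\cm A^*)$ has column space inside $\mathrm{span}(\bm U_k)+\mathrm{span}(\bm U_k^*)$, so its rank is at most $2r_k$. Since $k$ was arbitrary, the Tucker rank is at most $2\bbm r$.
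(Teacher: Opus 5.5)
Your proposal is correct and follows essentially the same route as the paper. You reduce via $\cm A\in\mathcal T_{\bbm r}(\cm A)$ to $-\cm A^*+\mathcal P_{\mathcal T_{\bbm r}(\boldsymbol{\mathscr A})}(\cm A^*)$, then check term by term, using the same factorization $\cm C\times_{i\neq j}\bm U_i$ for the off-mode correction terms, that every mode-$k$ column space lies in $\mathrm{span}(\bm U_k)+\mathrm{span}(\bm U_k^*)$; the paper writes this subspace in the equivalent form $\mathrm{span}(\bm U_k)+\mathrm{span}(P_{\bm U_k^\perp}\bm U_k^*)$.
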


	\begin{lemma}[Tucker rank of tangent-space elements]\label{lem:tangent-space-tucker-rank-2r}
		Let $\cm A\in\mathbb R^{q_1\times\cdots\times q_m\times p_1\times\cdots\times p_d}$ be a Tucker-rank-$\bbm r$ tensor. Then, for any $\cm T\in\mathcal T_{\bbm r}(\cm A)$, $\tucrank(\cm T)\leq 2\bbm r$.
	\end{lemma}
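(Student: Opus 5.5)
The plan is to write $\cm A$ in Tucker form and show directly that every tangent vector is a sum of $d+m+1$ tensors, each of Tucker rank at most $\bbm r$. For this we take $\cm A=\cm S\times_{k=1}^{d+m}\bm U_k$ with orthonormal $\bm U_k\in\mathbb R^{d_k\times r_k}$, where $d_k$ is the $k$-th dimension. By the explicit projector in Section~\ref{sec:basic-notation}, every $\cm T\in\mathcal T_{\bbm r}(\cm A)$ satisfies $\cm T=\mathcal P_{\mathcal T_{\bbm r}(\boldsymbol{\mathscr A})}(\cm T)$. Hence
\[
\cm T=\cm C\times_{k=1}^{d+m}\bm U_k+\sum_{k=1}^{d+m}\cm S\times_{j\neq k}\bm U_j\times_k \bm D_k,
\]
where $\cm C=\cm T\times_{k}\bm U_k^\top\in\mathbb R^{r_1\times\cdots\times r_{d+m}}$ and $\bm D_k\in\mathbb R^{d_k\times r_k}$ satisfies $\bm U_k^\top\bm D_k=\bm 0$. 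The $k$-th summand comes from $\mathrm{Fold}_{[k]}(P_{\bm U_k^\perp}\cm T_{[k]}P_{\bm W_k})$. This is because $P_{\bm W_k}=\bm W_k\bm W_k^\top$ and $\bm W_k=(\otimes_{j\neq k}\bm U_j)\bm V_k$, with $\bm V_k$ spanning the row space of $\cm S_{[k]}$. So the matrix $P_{\bm U_k^\perp}\cm T_{[k]}\bm W_k\bm V_k^\top(\otimes_{j\ne k}\bm U_j)^\top$ can be rewritten as $\bm D_k\cm S_{[k]}(\otimes_{j\neq k}\bm U_j)^\top$ for a suitable $\bm D_k$. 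Here we use $\bm V_k^\top=\bm M_k\cm S_{[k]}$ for some $r_k\times r_k$ matrix $\bm M_k$, which holds because $\cm S_{[k]}$ has full row rank $r_k$ since $\tucrank(\cm A)=\bbm r$.

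Next we fix a mode $s$ and bound $\rank(\cm T_{[s]})$. The term $\cm C\times_k\bm U_k$ and all summands with $k\neq s$ have mode-$s$ unfolding with column space contained in $\mathrm{col}(\bm U_s)$, since mode $s$ is multiplied by $\bm U_s$ in each. The summand with $k=s$ has mode-$s$ unfolding $\bm D_s\cm S_{[s]}(\cdots)^\top$, whose column space lies in $\mathrm{col}(\bm D_s)$, of dimension at most $r_s$. Therefore $\mathrm{col}(\cm T_{[s]})\subseteq\mathrm{col}(\bm U_s)+\mathrm{col}(\bm D_s)$, which has dimension at most $2r_s$. This gives $\rank(\cm T_{[s]})\le 2r_s$ for every $s\in[d+m]$, i.e.\ $\tucrank(\cm T)\le 2\bbm r$.

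The only delicate step is the identification of the folded projector term with the form $\cm S\times_{j\neq k}\bm U_j\times_k\bm D_k$. It requires care with the reverse-order Kronecker convention in the matricization, and in particular a check that $P_{\bm W_k}$ restricts the row space to $\mathrm{row}(\cm S_{[k]}(\otimes_{j\ne k}\bm U_j)^\top)$. The rank argument itself needs only column-space inclusions, so it is insensitive to the exact ordering. If the identification is awkward, there is a cruder route that avoids it: the column space of $\mathrm{Fold}_{[s]}(P_{\bm U_s^\perp}\cm T_{[s]}P_{\bm W_s})$ in mode $s$ has dimension at most $\rank(P_{\bm W_s})=r_s$, and every other term has its mode-$s$ column space in $\mathrm{col}(\bm U_s)$. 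Either route yields the same bound.
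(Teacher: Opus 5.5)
Your argument is correct and is essentially the paper's proof: both write a tangent vector as a core term $\cm F\times_{a=1}^{d+m}\bm U_a$ plus one factor-perturbation term $\cm S\times_j\bm U_{j,\perp}\bm D_j\times_{a\neq j}\bm U_a$ per mode, and then bound the mode-$s$ column space by $\mathrm{span}(\bm U_s)$ plus a space of dimension at most $r_s$. The only difference is how the parameterization is obtained. The paper cites the tangent-space characterization of \citet{luo2024tensor} directly, whereas you derive it from the explicit projector via $\cm T=\mathcal P_{\mathcal T_{\bbm r}(\boldsymbol{\mathscr A})}(\cm T)$; your cruder route through $\rank(P_{\bm W_s})\le r_s$ also works.
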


\section{Proofs of Upper Bounds and Consistency Results}\label{append:proofs-of-upper-bounds-and-consistency}

	\renewcommand{\theequation}{C.\arabic{equation}}
	\renewcommand{\theHequation}{C.\arabic{equation}}

	\renewcommand{\thetheorem}{C.\arabic{theorem}}
	\renewcommand{\theHtheorem}{C.\arabic{theorem}}

	\setcounter{theorem}{0}
	\setcounter{equation}{0}

	We organize this section as follows. We first prove the single-client Stage-I representation error bound in Theorem~\ref{thm:representation_single_client} and then extend the same arguments to the federated setting in Theorem~\ref{thm:federated_representation_error}. Next, we prove the personalized estimation error bound in Theorem~\ref{thm:personalized_error}, which refines the shared representation estimator by estimating the client-specific sparse deviation. Finally, we establish the one-stage initialization error bound in Theorem~\ref{thm:one-stage-initialization-error} and the Tucker-rank selection consistency result in Theorem~\ref{thm:rank_selection}.

	\begin{proof}[\textbf{Proof of Theorem~\ref{thm:representation_single_client}}]
		We organize the proof into four steps. Step~1 introduces the deterministic notation used throughout the arguments. Step~2 defines the high probability event on which all empirical processes are controlled uniformly over deterministic rank classes. Conditional on these events, Step~3 derives the one-step recursion directly for the $t$-th projected-gradient iterate. Step~4 uses this recursion to derive the final error bound.
		
		\noindent
		\textbf{\emph{Step 1: notation and auxiliary constants.}} We first write the Euclidean gradient at the $t$-th local Stage-I iterate as
		\[
			\cm G_{\mathscr A,k}^{loc,(t)} = \frac{1}{n_k}\sum_{i=1}^{n_k}\bigl\langle\cm A_0^{(t)}-\cm A_0^*-\cm B_k^*,\cm X_{k,i}\bigr\rangle\circ\cm X_{k,i} - \frac{1}{n_k}\sum_{i=1}^{n_k}\cm E_{k,i}\circ\cm X_{k,i}.
		\]
		For completeness, we recall the client-specific constants used in the recursion. Recall that $S_X=[d+m]\setminus[m]$, $r_{\bbm q}=\prod_{\ell=1}^{m}r_\ell$, $r_{\bbm p}=\prod_{\ell=1}^{d}r_{m+\ell}$, and the Tucker degrees of freedom at rank $\bbm r$ is $df_{\bbm r} = \prod_{s=1}^{d+m}r_s + \sum_{s=1}^{m}r_s(q_s-r_s) + \sum_{s=1}^{d}r_{m+s}(p_s-r_{m+s})$. Also recall that $\mu_{\mathscr A}^{\min} = \min_{s\in[d+m]}\sigma_{r_s}\bigl((\cm A_0^*)_{[s]}\bigr)$,
		\[
			\delta_{2\bbm r} = C\sigma_{\mathscr X,k}^{2}\lambda_{\mathscr X,k}^{\max}\sqrt{\frac{df_{2\bbm r}}{n_k}}
			\ \text{and} \ 
			\bar\delta_{\mathscr A,k} = \frac{\lambda_{\mathscr X,k}^{\min}}{2}\left[c_{d,m}(1+\kappa_{\mathscr X,k})-(\kappa_{\mathscr X,k}-1)\right].
		\]
		Under $\kappa_{\mathscr X,k}<(1+c_{d,m})/(1-c_{d,m})$, $\overline \delta_{\mathscr A,k}>0$ and the admissible step-size interval is
		\[
			\mathcal I_{\mathscr A,k} = \left[\frac{1-c_{d,m}}{\lambda_{\mathscr X,k}^{\min}-\bar\delta_{\mathscr A,k}},\frac{1+c_{d,m}}{\lambda_{\mathscr X,k}^{\max}+\bar\delta_{\mathscr A,k}}\right].
		\]
		For any $\eta_{\mathscr A,k}\in\mathcal I_{\mathscr A,k}$, recall that $\Gamma_{\mathscr A,k} = (\sqrt{d+m}+1)(d+m)[2+\eta_{\mathscr A,k}(\lambda_{\mathscr X,k}^{\max}-\lambda_{\mathscr X,k}^{\min}+2\bar\delta_{\mathscr A,k})]$ and $R_{\mathscr A,k} = \mu_{\mathscr A}^{\min}/(4\Gamma_{\mathscr A,k})$. Finally, under the weak sparsity condition on the client-specific deviation, $\|\cm B_k^*\|_{\F}\leq h_{\mathscr B,k}$ with $h_{\mathscr B,k}=\|\cm B_k^*\|_{\infty}^{1-\nu/2}s_{\nu}^{1/2}$.

		\noindent
		\textbf{\emph{Step 2: the high probability event.}} Since $df_{2\bbm r}\asymp df_{\bbm r}$, the theorem's sample size condition implies that
		\begin{align}\label{eq:sample size-condition-tucker-rank-2r}
			n_k \gtrsim \max\left\{1,\sigma_{\mathscr X,k}^{4},\sigma_{\mathscr X,k}^{4}(\lambda_{\mathscr X,k}^{\max})^2,\frac{\sigma_{\mathscr X,k}^{4}(\lambda_{\mathscr X,k}^{\max})^2}{\bar\delta_{\mathscr A,k}^2},\sigma_{\mathscr X,k}^{4}h_{\mathscr B,k}^{2},\sigma_{\mathscr E,k}^{2}\lambda_{\mathscr X,k}^{\max}\lambda_{\mathscr E,k}^{\max}\right\}df_{2\bbm r}.
		\end{align}
		We next define the following four events, which control the empirical processes over the Tucker-rank-$2\bbm r$ class:
		\begin{align*}
			&\mathcal E_{\mathscr A,k}^{(1)} = \Biggl\{\sup_{\boldsymbol{\mathscr U}\in\mathbb S_{2\bbm r}}\left|\frac{1}{n_k}\sum_{i=1}^{n_k}\left\langle\langle \cm B_k^*,\cm X_{k,i}\rangle,\langle \cm U,\cm X_{k,i}\rangle\right\rangle - \mathbb E\!\left[\left\langle\langle \cm B_k^*,\cm X_{k,i}\rangle,\langle \cm U,\cm X_{k,i}\rangle\right\rangle\right]\right| \leq C\sigma_{\mathscr X,k}^{2}\lambda_{\mathscr X,k}^{\max}h_{\mathscr B,k}\sqrt{\frac{df_{2\bbm r}}{n_k}}\Biggr\},\\
			&\mathcal E_{\mathscr A,k}^{(2)} = \Biggl\{\sup_{\boldsymbol{\mathscr U}\in\mathbb S_{2\bbm r}}\left|\left\langle\frac{1}{n_k}\sum_{i=1}^{n_k}\cm E_{k,i}\circ\cm X_{k,i},\cm U\right\rangle\right| \leq C\sigma_{\mathscr X,k}\sigma_{\mathscr E,k}(\lambda_{\mathscr X,k}^{\max}\lambda_{\mathscr E,k}^{\max})^{1/2}\sqrt{\frac{df_{2\bbm r}}{n_k}}\Biggr\},\\
			&\mathcal E_{\mathscr A,k}^{(3)} = \Biggl\{\left|\frac{1}{n_k}\sum_{i=1}^{n_k}\left\|\left\langle\cm U,\cm X_{k,i}\right\rangle\right\|_\F^2 - \mathbb E\left\|\left\langle\cm U,\cm X_{k,i}\right\rangle\right\|_\F^2\right| \leq \delta_{2\bbm r}\|\cm U\|_\F^2,\forall\, \tucrank(\cm U)\leq 2\bbm r\Biggr\},\\
			&\mathcal E_{\mathscr A,k}^{(4)} = \Biggl\{\sup_{\boldsymbol{\mathscr U},\boldsymbol{\mathscr V}\in\mathbb S_{2\bbm r}}\left|\frac{1}{n_k}\sum_{i=1}^{n_k}\left\langle\langle\cm U,\cm X_{k,i}\rangle,\langle\cm V,\cm X_{k,i}\rangle\right\rangle - \mathbb E\!\left[\left\langle\langle\cm U,\cm X_{k,i}\rangle,\langle\cm V,\cm X_{k,i}\rangle\right\rangle\right]\right| \leq \delta_{2\bbm r}\Biggr\}.
		\end{align*}

		Given the sample size condition in \eqref{eq:sample size-condition-tucker-rank-2r}, the four events hold with probability at least $1-\exp(-Cdf_{2\bbm r})$, respectively, after applying Lemmas~\ref{lem:tangent-deviation-bilinear-bound}, \ref{lem:deviation-condition-tucker-rank-r}, \ref{lem:rsc-condition-tucker-rank-r}, and \ref{lem:restricted-bilinear-event-2r} to the Tucker-rank-$2\bbm r$ class. Define $\mathcal E_{\mathscr A,k} = \mathcal E_{\mathscr A,k}^{(1)} \cap \mathcal E_{\mathscr A,k}^{(2)} \cap \mathcal E_{\mathscr A,k}^{(3)} \cap \mathcal E_{\mathscr A,k}^{(4)}$. By the union bound, after adjusting the universal constants, $\mathbb P(\mathcal E_{\mathscr A,k}) \geq 1-C\exp(-Cdf_{2\bbm r})$. The rest of the proof is conditional on $\mathcal E_{\mathscr A,k}$.

		\noindent
		\textbf{\emph{Step 3: one-step recursion on $\mathcal E_{\mathscr A,k}$.}} Conditional on $\mathcal E_{\mathscr A,k}$, then for any $k\in[K]$, we have
		\begin{align*}
			\|\cm A_{0,k}^{(t+1)} - \cm A_0^*\|_\F & = \Bigl\|\mathcal R_{\bbm r}\Bigl(\cm A_{0,k}^{(t)} - \eta_{\mathscr A,k} \mathcal P_{\mathcal T_{\bbm r}(\boldsymbol{\mathscr A}_{0,k}^{(t)})}(\cm G_{\mathscr A,k}^{loc,(t)})\Bigr) - \cm A_0^*\Bigr\|_\F \\
			& \leq \Bigl\|\mathcal R_{\bbm r}\Bigl(\cm A_{0,k}^{(t)} - \eta_{\mathscr A,k} \mathcal P_{\mathcal T_{\bbm r}(\boldsymbol{\mathscr A}_{0,k}^{(t)})}(\cm G_{\mathscr A,k}^{loc,(t)})\Bigr) - \Bigl(\cm A_{0,k}^{(t)} - \eta_{\mathscr A,k} \mathcal P_{\mathcal T_{\bbm r}(\boldsymbol{\mathscr A}_{0,k}^{(t)})}(\cm G_{\mathscr A,k}^{loc,(t)})\Bigr)\Bigr\|_\F \\
			&\quad + \Bigl\|\cm A_{0,k}^{(t)} - \eta_{\mathscr A,k} \mathcal P_{\mathcal T_{\bbm r}(\boldsymbol{\mathscr A}_{0,k}^{(t)})}(\cm G_{\mathscr A,k}^{loc,(t)}) - \cm A_0^*\Bigr\|_\F \\
			&\overset{(i)}{\leq}\sqrt{d+m}\,\Bigl\|\mathcal P_{\mathcal M_{\bbm r}}\Bigl(
			\cm A_{0,k}^{(t)} - \eta_{\mathscr A,k} \mathcal P_{\mathcal T_{\bbm r}(\boldsymbol{\mathscr A}_{0,k}^{(t)})}(\cm G_{\mathscr A,k}^{loc,(t)})\Bigr) - \Bigl(\cm A_{0,k}^{(t)} - \eta_{\mathscr A,k} \mathcal P_{\mathcal T_{\bbm r}(\boldsymbol{\mathscr A}_{0,k}^{(t)})}(\cm G_{\mathscr A,k}^{loc,(t)})\Bigr)\Bigr\|_\F \\
			&\quad + \Bigl\|\cm A_{0,k}^{(t)} - \eta_{\mathscr A,k} \mathcal P_{\mathcal T_{\bbm r}(\boldsymbol{\mathscr A}_{0,k}^{(t)})}(\cm G_{\mathscr A,k}^{loc,(t)}) - \cm A_0^*\Bigr\|_\F \\
			&\overset{(ii)}{\leq}(\sqrt{d+m}+1)\Bigl\|\cm A_{0,k}^{(t)} - \eta_{\mathscr A,k} \mathcal P_{\mathcal T_{\bbm r}(\boldsymbol{\mathscr A}_{0,k}^{(t)})}(\cm G_{\mathscr A,k}^{loc,(t)}) - \cm A_0^*\Bigr\|_\F \\
			& = (\sqrt{d+m}+1)\Biggl\|\mathcal P_{\mathcal T_{\bbm r}(\boldsymbol{\mathscr A}_{0,k}^{(t)})}(\cm A_{0,k}^{(t)}-\cm A_0^*) + \mathcal P_{\mathcal T_{\bbm r}^{\perp}(\boldsymbol{\mathscr A}_{0,k}^{(t)})}(\cm A_{0,k}^{(t)}-\cm A_0^*) \\
			&\qquad - \eta_{\mathscr A,k}\mathcal P_{\mathcal T_{\bbm r}(\boldsymbol{\mathscr A}_{0,k}^{(t)})}\Biggl(\frac{1}{n_k}\sum_{i=1}^{n_k}\bigl\langle\cm A_{0,k}^{(t)}-\cm A_0^*-\cm B_k^*,\cm X_{k,i}\bigr\rangle\circ \cm X_{k,i} - \frac{1}{n_k}\sum_{i=1}^{n_k}\cm E_{k,i}\circ \cm X_{k,i}\Biggr)\Biggr\|_\F \\
			& \leq (\sqrt{d+m}+1)\Biggl(\underbrace{\Bigl\|\mathcal P_{\mathcal T_{\bbm r}^{\perp}(\boldsymbol{\mathscr A}_{0,k}^{(t)})}(\cm A_{0,k}^{(t)}-\cm A_0^*)\Bigr\|_\F}_{\mathrm{Term\ I}} \\
			&\qquad + \underbrace{\eta_{\mathscr A,k}\Bigl\|\mathcal P_{\mathcal T_{\bbm r}(\boldsymbol{\mathscr A}_{0,k}^{(t)})}\Bigl(\frac{1}{n_k}\sum_{i=1}^{n_k}\langle \cm B_k^*,\cm X_{k,i}\rangle \circ \cm X_{k,i}\Bigr)\Bigr\|_\F}_{\mathrm{Term\ II}} \\
			&\qquad + \underbrace{\eta_{\mathscr A,k}\Bigl\|\mathcal P_{\mathcal T_{\bbm r}(\boldsymbol{\mathscr A}_{0,k}^{(t)})}\Bigl(\frac{1}{n_k}\sum_{i=1}^{n_k}\cm E_{k,i}\circ \cm X_{k,i}\Bigr)\Bigr\|_\F}_{\mathrm{Term\ III}} \\
			&\qquad + \underbrace{\Bigl\|\mathcal P_{\mathcal T_{\bbm r}(\boldsymbol{\mathscr A}_{0,k}^{(t)})}\Bigl(\cm A_{0,k}^{(t)}-\cm A_0^* - \eta_{\mathscr A,k} \frac{1}{n_k}\sum_{i=1}^{n_k}\bigl\langle\mathcal P_{\mathcal T_{\bbm r}(\boldsymbol{\mathscr A}_{0,k}^{(t)})}(\cm A_{0,k}^{(t)}-\cm A_0^*),\cm X_{k,i}\bigr\rangle \circ \cm X_{k,i}\Bigr)\Bigr\|_\F}_{\mathrm{Term\ IV}} \\
			&\qquad + \underbrace{\eta_{\mathscr A,k}\Bigl\|\mathcal P_{\mathcal T_{\bbm r}(\boldsymbol{\mathscr A}_{0,k}^{(t)})}\Bigl(\frac{1}{n_k}\sum_{i=1}^{n_k}\bigl\langle\mathcal P_{\mathcal T_{\bbm r}^{\perp}(\boldsymbol{\mathscr A}_{0,k}^{(t)})}(\cm A_{0,k}^{(t)}-\cm A_0^*),\cm X_{k,i}\bigr\rangle\circ \cm X_{k,i}\Bigr)\Bigr\|_\F}_{\mathrm{Term\ V}}\Biggr).
		\end{align*}
		In inequality $(i)$, $\mathcal M_{\bbm r}=\{\cm A:\tucrank(\cm A)\leq\bbm r\}$ denotes the Tucker-rank-$\bbm r$ tensor set, and $\mathcal P_{\mathcal M_{\bbm r}}(\cm Z)$ denotes a best Frobenius-norm projection of $\cm Z$ onto $\mathcal M_{\bbm r}$. Inequality $(i)$ follows from the quasi-projection property in Lemma~\ref{lem:quasi-projection-thosvd-sthosvd}. Inequality $(ii)$ follows from Lemma~\ref{lem:quasi-projection-thosvd-sthosvd} by using $\cm A_0^*\in\mathcal M_{\bbm r}$ as a feasible Tucker-rank-$\bbm r$ approximation of $\cm A_{0,k}^{(t)} - \eta_{\mathscr A,k}\mathcal P_{\mathcal T_{\bbm r}(\boldsymbol{\mathscr A}_{0,k}^{(t)})}(\cm G_{\mathscr A,k}^{loc,(t)})$. Next, we derive upper bounds for \textbf{Term I}--\textbf{Term V} separately.

		For \textbf{Term I}, by Lemma \ref{lem:normal-component-bound}, we have
		\begin{align}\label{eq:term-I-bound}
			\Bigl\|\mathcal P_{\mathcal T_{\bbm r}^{\perp}(\boldsymbol{\mathscr A}_{0,k}^{(t)})}\bigl(\cm A_{0,k}^{(t)}-\cm A_0^*\bigr)\Bigr\|_\F \leq \frac{2(d+m)}{\mu_{\mathscr A}^{\min}}\|\cm A_{0,k}^{(t)}-\cm A_0^*\|_\F^2.
		\end{align}

		For \textbf{Term II}, by the variational characterization of the Frobenius norm on the tangent space,
		\begin{align*}
			&\Bigl\|\mathcal P_{\mathcal T_{\bbm r}(\boldsymbol{\mathscr A}_{0,k}^{(t)})}\Bigl(\frac{1}{n_k}\sum_{i=1}^{n_k}\langle \cm B_k^*,\cm X_{k,i}\rangle\circ \cm X_{k,i}\Bigr)\Bigr\|_\F = \sup_{\substack{\boldsymbol{\mathscr U}\in\mathcal T_{\bbm r}(\boldsymbol{\mathscr A}_{0,k}^{(t)})\\ \|\boldsymbol{\mathscr U}\|_\F=1}}\left|\frac{1}{n_k}\sum_{i=1}^{n_k}\left\langle\langle \cm B_k^*,\cm X_{k,i}\rangle,\langle \cm U,\cm X_{k,i}\rangle\right\rangle\right| \\
			&\quad\leq \sup_{\substack{\boldsymbol{\mathscr U}\in\mathcal T_{\bbm r}(\boldsymbol{\mathscr A}_{0,k}^{(t)})\\ \|\boldsymbol{\mathscr U}\|_\F=1}}\left|\mathbb E\bigl[\left\langle\langle \cm B_k^*,\cm X_{k,i}\rangle,\langle \cm U,\cm X_{k,i}\rangle\right\rangle\bigr]\right| + \sup_{\substack{\boldsymbol{\mathscr U}\in\mathcal T_{\bbm r}(\boldsymbol{\mathscr A}_{0,k}^{(t)})\\ \|\boldsymbol{\mathscr U}\|_\F=1}}\left|\frac{1}{n_k}\sum_{i=1}^{n_k}\left\langle\langle \cm B_k^*,\cm X_{k,i}\rangle,\langle \cm U,\cm X_{k,i}\rangle\right\rangle - \mathbb E\bigl[\left\langle\langle \cm B_k^*,\cm X_{k,i}\rangle,\langle \cm U,\cm X_{k,i}\rangle\right\rangle\bigr]\right|.
		\end{align*}
		Let $\bbm x_{k,i}=\vect(\cm X_{k,i})$. Under the $S_X$-matricization convention $\vect(\langle\cm A,\cm X\rangle)=\cm A_{[S_X]}\vect(\cm X)$, we have
		\begin{align}\label{eq:expectation-B-k-U-X}
			\left|\mathbb E\left[\left\langle\langle\cm B_k^*,\cm X_{k,i}\rangle,\langle\cm U,\cm X_{k,i}\rangle\right\rangle\right]\right| &= \left|\mathbb E\left[\left((\cm B_k^*)_{[S_X]}\bbm x_{k,i}\right)^{\top}\left(\cm U_{[S_X]}\bbm x_{k,i}\right)\right]\right|\notag \\
			&= \left|\mathbb E\left[\bbm x_{k,i}^{\top}(\cm B_k^*)_{[S_X]}^{\top}\cm U_{[S_X]}\bbm x_{k,i}\right]\right|\notag \\
			&= \left|\mathbb E\left[\tr\left((\cm B_k^*)_{[S_X]}^{\top}\cm U_{[S_X]}\bbm x_{k,i}\bbm x_{k,i}^{\top}\right)\right]\right|\notag \\
			&= \left|\tr\left((\cm B_k^*)_{[S_X]}^{\top}\cm U_{[S_X]}\mathbb E\left(\bbm x_{k,i}\bbm x_{k,i}^{\top}\right)\right)\right|\notag \\
			&= \left|\tr\left((\cm B_k^*)_{[S_X]}^{\top}\cm U_{[S_X]}\bbm\Sigma_{\mathscr X,k}\right)\right|\notag \\
			&= \left|\tr\left((\cm B_k^*)_{[S_X]}\bbm\Sigma_{\mathscr X,k}\cm U_{[S_X]}^{\top}\right)\right|\notag \\
			&\leq \left\|(\cm B_k^*)_{[S_X]}\bbm\Sigma_{\mathscr X,k}\right\|_{\op}\left\|\cm U_{[S_X]}\right\|_*\notag \\
			&\leq \lambda_{\mathscr X,k}^{\max}\left\|(\cm B_k^*)_{[S_X]}\right\|_{\op}\left\|\cm U_{[S_X]}\right\|_*.
		\end{align}
		Since $\cm U\in\mathcal T_{\bbm r}(\boldsymbol{\mathscr A}_{0,k}^{(t)})$, Lemma~\ref{lem:tangent-space-nuclear-bound} gives $\|\cm U_{[S_X]}\|_*\leq \sqrt{r_{\bbm q}+r_{\bbm p}}\|\cm U\|_\F$. Together with the weak-identifiability condition in Assumption~\ref{assump:weak-identifiability}, namely $\|(\cm B_k^*)_{[S_X]}\|_{\op}\leq\zeta$, this implies $\left|\mathbb E\bigl[\left\langle\langle \cm B_k^*,\cm X_{k,i}\rangle,\langle \cm U,\cm X_{k,i}\rangle\right\rangle\bigr]\right| \leq\sqrt{r_{\bbm q}+r_{\bbm p}} \lambda_{\mathscr X,k}^{\max}\zeta$ for every $\cm U\in\mathcal T_{\bbm r}(\boldsymbol{\mathscr A}_{0,k}^{(t)})$ with $\|\cm U\|_\F=1$. Therefore, on the event $\mathcal E_{\mathscr A,k}^{(1)}$, we have
		\begin{align}\label{eq:term-II-bound}
			\textbf{Term II} \leq \eta_{\mathscr A,k}\sqrt{r_{\bbm q}+r_{\bbm p}}\,\lambda_{\mathscr X,k}^{\max}\zeta + C\eta_{\mathscr A,k}\sigma_{\mathscr X,k}^2\lambda_{\mathscr X,k}^{\max}h_{\mathscr B,k}\sqrt{\frac{df_{2\bbm r}}{n_k}}.
		\end{align}

		For \textbf{Term III}, by the variational characterization of the Frobenius norm on the tangent space, we have
		\begin{align*}
			\textbf{Term III} = \eta_{\mathscr A,k}\left\|\mathcal P_{\mathcal T_{\bbm r}(\boldsymbol{\mathscr A}_{0,k}^{(t)})}\left(\frac{1}{n_k}\sum_{i=1}^{n_k}\cm E_{k,i}\circ \cm X_{k,i}\right)\right\|_\F = \eta_{\mathscr A,k}\sup_{\substack{\boldsymbol{\mathscr T}\in\mathcal T_{\bbm r}(\boldsymbol{\mathscr A}_{0,k}^{(t)})\\ \|\boldsymbol{\mathscr T}\|_\F\leq 1}}\left|\left\langle\frac{1}{n_k}\sum_{i=1}^{n_k}\cm E_{k,i}\circ \cm X_{k,i},\cm T\right\rangle\right|.
		\end{align*}
		By Lemma~\ref{lem:tangent-space-tucker-rank-2r}, every element in $\mathcal T_{\bbm r}(\boldsymbol{\mathscr A}_{0,k}^{(t)})$ has Tucker rank at most $2\bbm r$. Moreover, since the functional inside the inner product is linear in $\cm T$, the supremum over the Tucker-rank-$2\bbm r$ unit ball is the same as the supremum over $\mathbb S_{2\bbm r}$. Therefore, on the event $\mathcal E_{\mathscr A,k}^{(2)}$,
		\begin{align}\label{eq:term-III-bound}
			\textbf{Term III} \leq C\eta_{\mathscr A,k}\sigma_{\mathscr X,k}\sigma_{\mathscr E,k}(\lambda_{\mathscr X,k}^{\max}\lambda_{\mathscr E,k}^{\max})^{1/2}\sqrt{\frac{df_{2\bbm r}}{n_k}}.
		\end{align}

		For \textbf{Term IV}, note that it can be written as
		\[
			\Biggl\|\mathcal P_{\mathcal T_{\bbm r}(\boldsymbol{\mathscr A}_{0,k}^{(t)})}(\cm A_{0,k}^{(t)}-\cm A_0^*) - \eta_{\mathscr A,k}\mathcal P_{\mathcal T_{\bbm r}(\boldsymbol{\mathscr A}_{0,k}^{(t)})}\Biggl(\frac{1}{n_k}\sum_{i=1}^{n_k}\bigl\langle\mathcal P_{\mathcal T_{\bbm r}(\boldsymbol{\mathscr A}_{0,k}^{(t)})}(\cm A_{0,k}^{(t)}-\cm A_0^*),\cm X_{k,i}\bigr\rangle\circ \cm X_{k,i}\Biggr)\Biggr\|_\F.
		\]

		By Lemma~\ref{lem:tangent-space-tucker-rank-2r}, every element of $\mathcal T_{\bbm r}(\boldsymbol{\mathscr A}_{0,k}^{(t)})$ has Tucker rank at most $2\bbm r$. Thus, on the event $\mathcal E_{\mathscr A,k}^{(3)}$, the following restricted empirical quadratic bound holds uniformly over all $\cm T\in\mathcal T_{\bbm r}(\boldsymbol{\mathscr A}_{0,k}^{(t)})$:
		\begin{align}\label{eq:restricted-empirical-quadratic-bound}
			\left|\frac{1}{n_k}\sum_{i=1}^{n_k}\left\|\left\langle\cm T,\cm X_{k,i}\right\rangle\right\|_\F^2 - \mathbb E\left\|\left\langle\cm T,\cm X_{k,i}\right\rangle\right\|_\F^2\right| \leq \delta_{2\bbm r}\|\cm T\|_\F^2.
		\end{align}
		Moreover, $\mathbb E\left\|\left\langle\cm T,\cm X_{k,i}\right\rangle\right\|_\F^2 = \mathbb E[\vect^\top(\cm X_{k,i})\cm T_{[S_X]}^\top\cm T_{[S_X]}\vect(\cm X_{k,i})] = \tr(\cm T_{[S_X]}^\top\cm T_{[S_X]}\bbm\Sigma_{\mathscr X,k})$.
		Since the eigenvalues of $\bbm\Sigma_{\mathscr X,k}$ are between $\lambda_{\mathscr X,k}^{\min}$ and $\lambda_{\mathscr X,k}^{\max}$, we have
		\begin{align}\label{eq:single-expectation-quadratic-bound}
			\lambda_{\mathscr X,k}^{\min}\|\cm T\|_\F^2 \leq \mathbb E\left\|\left\langle\cm T,\cm X_{k,i}\right\rangle\right\|_\F^2 \leq \lambda_{\mathscr X,k}^{\max}\|\cm T\|_\F^2.
		\end{align}
		Combining \eqref{eq:restricted-empirical-quadratic-bound} and \eqref{eq:single-expectation-quadratic-bound} gives $(\lambda_{\mathscr X,k}^{\min}-\delta_{2\bbm r})\|\cm T\|_\F^2 \leq n_k^{-1}\sum_{i=1}^{n_k}\left\|\left\langle\cm T,\cm X_{k,i}\right\rangle\right\|_\F^2 \leq (\lambda_{\mathscr X,k}^{\max}+\delta_{2\bbm r})\|\cm T\|_\F^2$,
		where $\delta_{2\bbm r}$ is as defined above. Therefore, applying Lemma~\ref{lem:tangent-contraction-tensor-general} with $\cm T = \mathcal P_{\mathcal T_{\bbm r}(\boldsymbol{\mathscr A}_{0,k}^{(t)})}(\cm A_{0,k}^{(t)}-\cm A_0^*)$, we have
		\begin{align}\label{eq:term-IV-bound}
			\textbf{Term IV} \leq \rho_{\mathscr A,k}\left\|\mathcal P_{\mathcal T_{\bbm r}(\boldsymbol{\mathscr A}_{0,k}^{(t)})}(\cm A_{0,k}^{(t)}-\cm A_0^*)\right\|_\F \leq \rho_{\mathscr A,k}\|\cm A_{0,k}^{(t)}-\cm A_0^*\|_\F,
		\end{align}
		where $\rho_{\mathscr A,k} = \max\{|1-\eta_{\mathscr A,k}(\lambda_{\mathscr X,k}^{\min}-\delta_{2\bbm r})|,|1-\eta_{\mathscr A,k}(\lambda_{\mathscr X,k}^{\max}+\delta_{2\bbm r})|\}$.

		For \textbf{Term V}, by the variational characterization of the Frobenius norm, we have
		\begin{align*}
			\textbf{Term V} & = \eta_{\mathscr A,k}\Biggl\|\mathcal P_{\mathcal T_{\bbm r}(\boldsymbol{\mathscr A}_{0,k}^{(t)})}\Biggl(\frac{1}{n_k}\sum_{i=1}^{n_k}\bigl\langle\mathcal P_{\mathcal T_{\bbm r}^{\perp}(\boldsymbol{\mathscr A}_{0,k}^{(t)})}(\cm A_{0,k}^{(t)}-\cm A_0^*),\cm X_{k,i}\bigr\rangle \circ \cm X_{k,i}\Biggr)\Biggr\|_\F  \\
			& = \eta_{\mathscr A,k}\sup_{\substack{\boldsymbol{\mathscr T}\in\mathcal T_{\bbm r}(\boldsymbol{\mathscr A}_{0,k}^{(t)})\\ \|\boldsymbol{\mathscr T}\|_\F\leq 1}}\left|\frac{1}{n_k}\sum_{i=1}^{n_k}\left\langle\bigl\langle\mathcal P_{\mathcal T_{\bbm r}^{\perp}(\boldsymbol{\mathscr A}_{0,k}^{(t)})}(\cm A_{0,k}^{(t)}-\cm A_0^*),\cm X_{k,i}\bigr\rangle,\langle \cm T,\cm X_{k,i}\rangle\right\rangle\right|.
		\end{align*}

		Let $\cm N_k^{(t)}=\mathcal P_{\mathcal T_{\bbm r}^{\perp}(\boldsymbol{\mathscr A}_{0,k}^{(t)})}(\cm A_{0,k}^{(t)}-\cm A_0^*)$. For any $\cm T\in\mathcal T_{\bbm r}(\boldsymbol{\mathscr A}_{0,k}^{(t)})$ with $\|\cm T\|_\F\leq 1$, decompose the preceding bilinear form into its population and empirical fluctuation parts:
		\begin{align*}
			\left|\frac{1}{n_k}\sum_{i=1}^{n_k}\left\langle\langle\cm N_k^{(t)},\cm X_{k,i}\rangle,\langle \cm T,\cm X_{k,i}\rangle\right\rangle\right| &\leq \left|\frac{1}{n_k}\sum_{i=1}^{n_k}\left\langle\langle\cm N_k^{(t)},\cm X_{k,i}\rangle,\langle \cm T,\cm X_{k,i}\rangle\right\rangle - \mathbb E\left[\left\langle\langle\cm N_k^{(t)},\cm X_{k,i}\rangle,\langle \cm T,\cm X_{k,i}\rangle\right\rangle\right]\right|\\
			&\quad + \left|\mathbb E\left[\left\langle\langle\cm N_k^{(t)},\cm X_{k,i}\rangle,\langle \cm T,\cm X_{k,i}\rangle\right\rangle\right]\right|.
		\end{align*}

		For the population term, by the matricization convention associated with $S_X$,
		\begin{align*}
			\mathbb E\left[\left\langle\langle\cm N_k^{(t)},\cm X_{k,i}\rangle,\langle \cm T,\cm X_{k,i}\rangle\right\rangle\right]
			=\tr\left((\cm N_k^{(t)})_{[S_X]}\bbm\Sigma_{\mathscr X,k}\cm T_{[S_X]}^\top\right).
		\end{align*}
		By the orthogonality of the projection operator, we have $\langle\cm N_k^{(t)},\cm T\rangle=\tr((\cm N_k^{(t)})_{[S_X]}\cm T_{[S_X]}^\top)=0$. Then, for any scalar $c$,
		\begin{align*}
			\mathbb E\left[\left\langle\langle\cm N_k^{(t)},\cm X_{k,i}\rangle,\langle \cm T,\cm X_{k,i}\rangle\right\rangle\right]
			=\tr\left((\cm N_k^{(t)})_{[S_X]}(\bbm\Sigma_{\mathscr X,k}-c\bm I)\cm T_{[S_X]}^\top\right).
		\end{align*}
		Taking $c=(\lambda_{\mathscr X,k}^{\max}+\lambda_{\mathscr X,k}^{\min})/2$, we have $\|\bbm\Sigma_{\mathscr X,k}-c\bm I\|_{\op} \leq (\lambda_{\mathscr X,k}^{\max}-\lambda_{\mathscr X,k}^{\min})/2$. Since $\|\cm T\|_\F \leq 1$, thus
		\begin{align}\label{eq:population-term-V-bound}
			\left|\mathbb E\left[\left\langle\langle\cm N_k^{(t)},\cm X_{k,i}\rangle,\langle \cm T,\cm X_{k,i}\rangle\right\rangle\right]\right| \leq \frac{\lambda_{\mathscr X,k}^{\max}-\lambda_{\mathscr X,k}^{\min}}{2}\|\cm N_k^{(t)}\|_\F.
		\end{align}

		By Lemma~\ref{lem:normal-projection-tucker-rank-2r}, we have $\tucrank(\mathcal P_{\mathcal T_{\bbm r}^{\perp}(\boldsymbol{\mathscr A}_{0,k}^{(t)})}(\cm A_{0,k}^{(t)}-\cm A_0^*))\leq 2\bbm r$. Moreover, every unit-Frobenius element of $\mathcal T_{\bbm r}(\boldsymbol{\mathscr A}_{0,k}^{(t)})$ belongs to $\mathbb S_{2\bbm r}$. Hence, the event $\mathcal E_{\mathscr A,k}^{(4)}$ gives
		\begin{align}\label{eq:empirical-term-V-bound}
			&\sup_{\substack{\boldsymbol{\mathscr T}\in\mathcal T_{\bbm r}(\boldsymbol{\mathscr A}_{0,k}^{(t)})\\ \|\boldsymbol{\mathscr T}\|_\F\leq 1}}\left|\frac{1}{n_k}\sum_{i=1}^{n_k}\left\langle\langle\cm N_k^{(t)},\cm X_{k,i}\rangle,\langle \cm T,\cm X_{k,i}\rangle\right\rangle - \mathbb E\left[\left\langle\langle\cm N_k^{(t)},\cm X_{k,i}\rangle,\langle \cm T,\cm X_{k,i}\rangle\right\rangle\right]\right| \leq C \sigma_{\mathscr X,k}^{2}\lambda_{\mathscr X,k}^{\max}\sqrt{\frac{df_{2\bbm r}}{n_k}}\|\cm N_k^{(t)}\|_\F.
		\end{align}

		Combining \eqref{eq:population-term-V-bound}, \eqref{eq:empirical-term-V-bound} and Lemma~\ref{lem:normal-component-bound}, we have
		\begin{align}\label{eq:term-V-bound}
			\textbf{Term V} & \leq \eta_{\mathscr A,k}\left(\frac{\lambda_{\mathscr X,k}^{\max}-\lambda_{\mathscr X,k}^{\min}}{2} + \delta_{2\bbm r}\right)\left\|\mathcal P_{\mathcal T_{\bbm r}^{\perp}(\boldsymbol{\mathscr A}_{0,k}^{(t)})}(\cm A_{0,k}^{(t)}-\cm A_0^*)\right\|_\F\notag\\
			& \leq \eta_{\mathscr A,k}\left(\frac{\lambda_{\mathscr X,k}^{\max}-\lambda_{\mathscr X,k}^{\min}}{2} + \delta_{2\bbm r}\right)\frac{2(d+m)}{\mu_{\mathscr A}^{\min}}\|\cm A_{0,k}^{(t)}-\cm A_0^*\|_\F^2\notag\\
			&= \eta_{\mathscr A,k}(\lambda_{\mathscr X,k}^{\max}-\lambda_{\mathscr X,k}^{\min} + 2\delta_{2\bbm r})\frac{d+m}{\mu_{\mathscr A}^{\min}}\|\cm A_{0,k}^{(t)}-\cm A_0^*\|_\F^2. 
		\end{align}
		where $\delta_{2\bbm r} = C\sigma_{\mathscr X,k}^2\lambda_{\mathscr X,k}^{\max}\sqrt{df_{2\bbm r}/n_k}$ and $\mu_{\mathscr A}^{\min} = \min_{s\in[d+m]}\sigma_{r_s}((\cm A_0^*)_{[s]})$. Combining \eqref{eq:term-I-bound}, \eqref{eq:term-II-bound}, \eqref{eq:term-III-bound}, \eqref{eq:term-IV-bound} and \eqref{eq:term-V-bound}, and using $\delta_{2\bbm r}\leq\bar\delta_{\mathscr A,k}$ given the sample size requirement in \eqref{eq:sample size-condition-tucker-rank-2r}, the following bound holds:
		\begin{align}\label{eq:A0-one-step-pre-eta}
			\|\cm A_{0,k}^{(t+1)}-\cm A_0^*\|_\F & \leq \left((\sqrt{d+m}+1)\rho_{\mathscr A,k} + \frac{\Gamma_{\mathscr A,k}}{\mu_{\mathscr A}^{\min}}\|\cm A_{0,k}^{(t)}-\cm A_0^*\|_\F\right)\|\cm A_{0,k}^{(t)}-\cm A_0^*\|_\F + (\sqrt{d+m}+1)\sum_{\ell=1}^{3}\mathsf{Rem}_{\mathscr A,k}^{(\ell)},
		\end{align}
		where $\Gamma_{\mathscr A,k} = (\sqrt{d+m}+1)(d+m)[2+\eta_{\mathscr A,k}(\lambda_{\mathscr X,k}^{\max} - \lambda_{\mathscr X,k}^{\min} + 2\bar\delta_{\mathscr A,k})]$,
		\[
			\begin{aligned}
				\mathsf{Rem}_{\mathscr A,k}^{(1)} &= \eta_{\mathscr A,k}\sqrt{r_{\bbm q}+r_{\bbm p}}\,\lambda_{\mathscr X,k}^{\max}\zeta,\\
				\mathsf{Rem}_{\mathscr A,k}^{(2)} &= C\eta_{\mathscr A,k}\sigma_{\mathscr X,k}^2\lambda_{\mathscr X,k}^{\max}h_{\mathscr B,k}\sqrt{\frac{df_{2\bbm r}}{n_k}},\\
				\mathsf{Rem}_{\mathscr A,k}^{(3)} &= C\eta_{\mathscr A,k}\sigma_{\mathscr X,k}\sigma_{\mathscr E,k}(\lambda_{\mathscr X,k}^{\max}\lambda_{\mathscr E,k}^{\max})^{1/2}\sqrt{\frac{df_{2\bbm r}}{n_k}}.
			\end{aligned}
		\]

		\noindent
		\textbf{\emph{Step 4: induction and final error bound.}} We now carry out the induction using the bound in \eqref{eq:A0-one-step-pre-eta}. Suppose that the induction hypothesis $\|\cm A_{0,k}^{(t)}-\cm A_0^*\|_\F \leq R_{\mathscr A,k}$ holds, with $R_{\mathscr A,k}=\mu_{\mathscr A}^{\min}/(4\Gamma_{\mathscr A,k})$. Recall that $c_{d,m} = 1/\{4(\sqrt{d+m}+1)\}$ and the step size $\eta_{\mathscr A,k} \in \mathcal I_{\mathscr A,k} = [\underline\eta_{\mathscr A,k},\overline\eta_{\mathscr A,k}]$, where
		\[
			\underline{\eta}_{\mathscr A,k} = \frac{1-c_{d,m}}{\lambda_{\mathscr X,k}^{\min}-\bar\delta_{\mathscr A,k}} = \frac{4(\sqrt{d+m}+1)-1}{4(\sqrt{d+m}+1)(\lambda_{\mathscr X,k}^{\min}-\bar\delta_{\mathscr A,k})}
			\ \text{and} \ 
			\overline{\eta}_{\mathscr A,k} = \frac{1+c_{d,m}}{\lambda_{\mathscr X,k}^{\max}+\bar\delta_{\mathscr A,k}} = \frac{4(\sqrt{d+m}+1)+1}{4(\sqrt{d+m}+1)(\lambda_{\mathscr X,k}^{\max}+\bar\delta_{\mathscr A,k})}.
		\]
		The nonemptiness of $\mathcal I_{\mathscr A,k}$ follows from $2\bar\delta_{\mathscr A,k} \leq c_{d,m}(\lambda_{\mathscr X,k}^{\min}+\lambda_{\mathscr X,k}^{\max})-(\lambda_{\mathscr X,k}^{\max}-\lambda_{\mathscr X,k}^{\min})$, which is exactly guaranteed by the definition of $\bar\delta_{\mathscr A,k}$ and the condition $\kappa_{\mathscr X,k}<(1+c_{d,m})/(1-c_{d,m})$. Let $a_k = \lambda_{\mathscr X,k}^{\min}-\delta_{2\bbm r}$ and $b_k = \lambda_{\mathscr X,k}^{\max}+\delta_{2\bbm r}$. Since $\delta_{2\bbm r}\leq\bar\delta_{\mathscr A,k}$ given the sample size requirement in \eqref{eq:sample size-condition-tucker-rank-2r}, we have $\lambda_{\mathscr X,k}^{\min}-\delta_{2\bbm r} \geq \lambda_{\mathscr X,k}^{\min}-\bar\delta_{\mathscr A,k}$ and $\lambda_{\mathscr X,k}^{\max}+\delta_{2\bbm r} \leq \lambda_{\mathscr X,k}^{\max}+\bar\delta_{\mathscr A,k}$. Thus, for every $\eta_{\mathscr A,k}\in\mathcal I_{\mathscr A,k}$,
		\[
			\frac{1-c_{d,m}}{a_k} \leq \frac{1-c_{d,m}}{\lambda_{\mathscr X,k}^{\min}-\bar\delta_{\mathscr A,k}} \leq \eta_{\mathscr A,k} \leq \frac{1+c_{d,m}}{\lambda_{\mathscr X,k}^{\max}+\bar\delta_{\mathscr A,k}} \leq \frac{1+c_{d,m}}{b_k}.
		\]
		Therefore, $\eta_{\mathscr A,k}a_k \geq 1-c_{d,m}$. Moreover, using $a_k\leq b_k$, we also have $\eta_{\mathscr A,k}a_k \leq \eta_{\mathscr A,k}b_k \leq 1+c_{d,m}$. Therefore, $1-c_{d,m} \leq \eta_{\mathscr A,k}a_k \leq 1+c_{d,m}$, which implies $|1-\eta_{\mathscr A,k}a_k| \leq c_{d,m}$. Similarly, the upper bound on $\eta_{\mathscr A,k}$ gives $\eta_{\mathscr A,k}b_k \leq 1+c_{d,m}$, while $b_k\geq a_k$ and the preceding lower bound imply $\eta_{\mathscr A,k}b_k \geq \eta_{\mathscr A,k}a_k \geq 1-c_{d,m}$. Thus, $1-c_{d,m} \leq \eta_{\mathscr A,k}b_k \leq 1+c_{d,m}$, and hence $|1-\eta_{\mathscr A,k}b_k| \leq c_{d,m}$. Note that $\rho_{\mathscr A,k} = \max\{|1-\eta_{\mathscr A,k}a_k|,|1-\eta_{\mathscr A,k}b_k|\}$. Then, combining the two bounds gives
		\[
			\rho_{\mathscr A,k} \leq c_{d,m} = \frac{1}{4(\sqrt{d+m}+1)}.
		\]
		Consequently, $(\sqrt{d+m}+1)\rho_{\mathscr A,k} \leq 1/4$. Moreover, by the induction hypothesis $\|\cm A_{0,k}^{(t)}-\cm A_0^*\|_\F \leq R_{\mathscr A,k}$ and $R_{\mathscr A,k} = \mu_{\mathscr A}^{\min}/(4\Gamma_{\mathscr A,k})$, we have
		\[
			\frac{\Gamma_{\mathscr A,k}}{\mu_{\mathscr A}^{\min}}\|\cm A_{0,k}^{(t)}-\cm A_0^*\|_\F \leq \frac{\Gamma_{\mathscr A,k}}{\mu_{\mathscr A}^{\min}}R_{\mathscr A,k} = \frac14.
		\]
		Substituting the preceding two bounds into \eqref{eq:A0-one-step-pre-eta} yields
		\begin{align}\label{eq:A0-clean-recursion}
			\|\cm A_0^{(t+1)}-\cm A_0^*\|_\F \leq \frac12 \|\cm A_{0,k}^{(t)}-\cm A_0^*\|_\F + (\sqrt{d+m}+1)\sum_{\ell=1}^{3}\mathsf{Rem}_{\mathscr A,k}^{(\ell)}.
		\end{align}
		The deviation condition $\zeta \leq c_{d,m}R_{\mathscr A,k}/(\eta_{\mathscr A,k}\lambda_{\mathscr X,k}^{\max}\sqrt{r_{\bbm q}+r_{\bbm p}})$ implies that $(\sqrt{d+m}+1)\mathsf{Rem}_{\mathscr A,k}^{(1)} \leq R_{\mathscr A,k}/4$.
		We next verify that $\mathsf{Rem}_{\mathscr A,k}^{(2)}$ and $\mathsf{Rem}_{\mathscr A,k}^{(3)}$ are also small given the sample size requirement in \eqref{eq:sample size-condition-tucker-rank-2r}. Using the upper bound $\eta_{\mathscr A,k}\leq\bar\eta_{\mathscr A,k}$ and $R_{\mathscr A,k} = \mu_{\mathscr A}^{\min}/(4\Gamma_{\mathscr A,k})$, we have, after absorbing constants depending only on $d$ and $m$,
		\[
			\eta_{\mathscr A,k} \leq \frac{C_{d,m}}{\lambda_{\mathscr X,k}^{\max}+\bar\delta_{\mathscr A,k}}, \qquad \frac{\lambda_{\mathscr X,k}^{\max}-\lambda_{\mathscr X,k}^{\min}+2\bar\delta_{\mathscr A,k}}{\lambda_{\mathscr X,k}^{\max}+\bar\delta_{\mathscr A,k}}\leq 2,
		\]
		and hence $\eta_{\mathscr A,k}/R_{\mathscr A,k} \lesssim 1/(\mu_{\mathscr A}^{\min}\lambda_{\mathscr X,k}^{\max})$. Since the theorem's sample size condition \eqref{eq:sample size-condition-tucker-rank-2r} implies
		\[
			n_k \gtrsim \frac{\sigma_{\mathscr X,k}^{4}h_{\mathscr B,k}^{2}}{(\mu_{\mathscr A}^{\min})^2} df_{2\bbm r}
			\ \text{and} \
			n_k \gtrsim \frac{\sigma_{\mathscr X,k}^{2}\sigma_{\mathscr E,k}^{2}\lambda_{\mathscr E,k}^{\max}/\lambda_{\mathscr X,k}^{\max}}{(\mu_{\mathscr A}^{\min})^2}df_{2\bbm r}.
		\]
		Consequently, we have $(\sqrt{d+m}+1)\mathsf{Rem}_{\mathscr A,k}^{(2)} \leq R_{\mathscr A,k}/8$, and $(\sqrt{d+m}+1)\mathsf{Rem}_{\mathscr A,k}^{(3)} \leq R_{\mathscr A,k}/8$. Together with $(\sqrt{d+m}+1)\mathsf{Rem}_{\mathscr A,k}^{(1)}\leq R_{\mathscr A,k}/4$, we have
		\begin{align}\label{eq:reminder-bound}
			(\sqrt{d+m}+1)\sum_{\ell=1}^{3}\mathsf{Rem}_{\mathscr A,k}^{(\ell)} \leq \frac14R_{\mathscr A,k}+\frac18R_{\mathscr A,k}+\frac18R_{\mathscr A,k} = \frac12R_{\mathscr A,k}.
		\end{align}
		Then combining \eqref{eq:A0-clean-recursion}, \eqref{eq:reminder-bound} and the induction hypothesis $\|\cm A_0^{(t)}-\cm A_0^*\|_\F \leq R_{\mathscr A,k}$, we have $\|\cm A_0^{(t+1)}-\cm A_0^*\|_\F \leq R_{\mathscr A,k}$. This concludes the induction step. By the initialization condition in the theorem, $\|\cm A_0^{(0)}-\cm A_0^*\|_\F \leq R_{\mathscr A,k}$. Therefore, by induction, all iterates remain in the local neighborhood $\|\cm A_{0,k}^{(t)}-\cm A_0^*\|_\F \leq R_{\mathscr A,k}$ for  $0\leq t\leq T_g^{(k)}$.

		Iterating the recursion \eqref{eq:A0-clean-recursion}, we have
		\[
			\|\widehat{\cm A}_{0,k}-\cm A_0^*\|_\F = \|\cm A_0^{(T_g^{(k)})}-\cm A_0^*\|_\F \leq 2^{-T_g^{(k)}}\|\cm A_0^{(0)}-\cm A_0^*\|_\F + 2(\sqrt{d+m}+1)\sum_{\ell=1}^{3}\mathsf{Rem}_{\mathscr A,k}^{(\ell)}.
		\]
		Equivalently, substituting the three remainder terms gives
		\begin{align*}
			\|\widehat{\cm A}_{0,k}-\cm A_0^*\|_\F & \leq 2^{-T_g^{(k)}}\|\cm A_0^{(0)}-\cm A_0^*\|_\F + 2(\sqrt{d+m}+1)\eta_{\mathscr A,k}\sqrt{r_{\bbm q}+r_{\bbm p}}\,\lambda_{\mathscr X,k}^{\max}\zeta \\
			&\quad + C(\sqrt{d+m}+1)\eta_{\mathscr A,k}\sigma_{\mathscr X,k}^{2}\lambda_{\mathscr X,k}^{\max}h_{\mathscr B,k}\sqrt{\frac{df_{2\bbm r}}{n_k}} \\
			&\quad + C(\sqrt{d+m}+1)\eta_{\mathscr A,k}\sigma_{\mathscr X,k}\sigma_{\mathscr E,k}(\lambda_{\mathscr X,k}^{\max}\lambda_{\mathscr E,k}^{\max})^{1/2}\sqrt{\frac{df_{2\bbm r}}{n_k}}.
		\end{align*}
		Taking $T_g^{(k)}\asymp\log n_k$ makes $2^{-T_g^{(k)}}\|\cm A_0^{(0)}-\cm A_0^*\|_\F$ negligible.
		Moreover, since
		\[
			\eta_{\mathscr A,k} \leq \frac{1+c_{d,m}}{\lambda_{\mathscr X,k}^{\max}+\bar\delta_{\mathscr A,k}}
			\leq \frac{1+c_{d,m}}{\lambda_{\mathscr X,k}^{\max}},
		\]
		and $df_{2\bbm r}\asymp df_{\bbm r}$, conditional on $\mathcal E_{\mathscr A,k}$, we have
		\begin{align*}
			\|\widehat{\cm A}_{0,k}-\cm A_0^*\|_\F \lesssim \left\{\sigma_{\mathscr X,k}^2h_{\mathscr B,k} \vee \sigma_{\mathscr X,k}\sigma_{\mathscr E,k}\left(\frac{\lambda_{\mathscr E,k}^{\max}}{\lambda_{\mathscr X,k}^{\max}}\right)^{1/2}\right\}\sqrt{\frac{df_{\bbm r}}{n_k}} + \sqrt{r_{\bbm q}+r_{\bbm p}}\zeta.
		\end{align*}
		Since $\mathbb P(\mathcal E_{\mathscr A,k})\geq 1-C\exp(-Cdf_{\bbm r})$ after adjusting constants, the claimed high probability bound follows. This completes the proof.
	\end{proof}

	\begin{proof}[\textbf{Proof of Theorem~\ref{thm:federated_representation_error}}]
		The proof is similar to that of Theorem~\ref{thm:representation_single_client}, but with the local empirical gradient replaced by the weighted federated gradient and with an additional Gaussian privacy-noise term. We organize the proof into five steps. Step~1 introduces the deterministic notation and auxiliary constants used throughout the argument. Step~2 derives the sensitivity of the truncated local gradient. Step~3 defines the high probability event on which the pooled empirical processes, truncation event, and aggregated privacy noise are controlled uniformly over deterministic rank classes. Conditional on this event, Step~4 derives the one-step recursion for the federated projected-gradient iterate. Step~5 verifies the privacy guarantee and uses the recursion to establish the final error bound.

		\noindent
		\textbf{\emph{Step 1: notation and auxiliary constants.}}
		Recall $\mu_{\mathscr A}^{\min} = \min_{s\in[d+m]}\sigma_{r_s}\bigl((\cm A_0^*)_{[s]}\bigr)$. Let $\bar\sigma_{\mathscr X}=\max_{k\in[K]}\sigma_{\mathscr X,k}$, $\bar\sigma_{\mathscr E}=\max_{k\in[K]}\sigma_{\mathscr E,k}$, $\bar\lambda_{\mathscr X}^{\max}=\max_{k\in[K]}\lambda_{\mathscr X,k}^{\max}$, $\bar\lambda_{\mathscr E}^{\max}=\max_{k\in[K]}\lambda_{\mathscr E,k}^{\max}$, and $\bar h_{\mathscr B}=\max_{k\in[K]}h_{\mathscr B,k}$.
		The truncation levels are chosen at $\tau_{\mathscr X,k} \asymp \sigma_{\mathscr X,k}\sqrt{\lambda_{\mathscr X,k}^{\max}}\sqrt{p+\log n}$, and $\tau_{\mathscr E,k}^{(t)} \asymp (R_{\mathscr A}+h_{\mathscr B,k})\sigma_{\mathscr X,k}\sqrt{\lambda_{\mathscr X,k}^{\max}}\sqrt{p+\log n} + \sigma_{\mathscr E,k}\sqrt{\lambda_{\mathscr E,k}^{\max}}\sqrt{q+\log n}$.

		For the $t$-th federated Stage-I iterate, write
		\[
			\cm G_{\mathscr A,k}^{loc,(t)} = \frac{1}{n_k}\sum_{i=1}^{n_k}\left(\langle \cm A_0^{(t)},\cm X_{k,i}\rangle-\cm Y_{k,i}\right)\circ \cm X_{k,i} = \frac{1}{n_k}\sum_{i=1}^{n_k}\bigl\langle\cm A_0^{(t)}-\cm A_0^*-\cm B_k^*,\cm X_{k,i}\bigr\rangle\circ\cm X_{k,i} - \frac{1}{n_k}\sum_{i=1}^{n_k}\cm E_{k,i}\circ\cm X_{k,i}.
		\]
		Define $\delta_{2\bbm r}^{\mathrm{fed}} = C\bar\sigma_{\mathscr X}^2\bar\lambda_{\mathscr X}^{\max}\sqrt{df_{\bbm r}/n}$.
		Under the stated sample size condition in Theorem~\ref{thm:federated_representation_error}, we have $\delta_{2\bbm r}^{\mathrm{fed}} \leq \bar\delta_{\mathscr A}$. Set $R_{\mathscr A}=\mu_{\mathscr A}^{\min}/(4\Gamma_{\mathscr A}^{\mathrm{fed}})$ and let $\Gamma_{\mathscr A}^{\mathrm{fed}} = (\sqrt{d+m}+1)(d+m)[2+\eta_{\mathscr A}(\bar\lambda_{\mathscr X}^{\max} - \underline\lambda_{\mathscr X}^{\min} + 2\bar\delta_{\mathscr A})]$.

		\noindent
		\textbf{\emph{Step 2: sensitivity of the truncated local gradient.}}
		We first characterize the sensitivity of the transmitted local gradient induced by the truncation operation. For fixed truncation levels $\tau_{\mathscr X,k}>0$ and $\tau_{\mathscr E,k}^{(t)}>0$, define $\cm G_{\mathscr A,k}^{\vee}(\cm A;\mathsf D_k) = n_k^{-1}\sum_{i=1}^{n_k}\cm R_{k,i}^{\vee}(\cm A;\tau_{\mathscr E,k}^{(t)}) \circ \cm X_{k,i}^{\vee}(\tau_{\mathscr X,k}) = n_k^{-1}\sum_{i=1}^{n_k}\cm H_{k,i}^{\vee}(\cm A;\mathsf D_k)$. Here, the argument $\mathsf D_k$ makes explicit that both 
		$\cm G_{\mathscr A,k}^{\vee}$ and its summands $\cm H_{k,i}^{\vee}$ are computed from the local dataset $\mathsf D_k$.
		By construction, $\|\cm R_{k,i}^{\vee}(\cm A;\tau_{\mathscr E,k}^{(t)})\|_{\F}\leq \tau_{\mathscr E,k}^{(t)}$ and $\|\cm X_{k,i}^{\vee}(\tau_{\mathscr X,k})\|_{\F}\leq \tau_{\mathscr X,k}$.
		Since $\|\cm U\circ\cm V\|_{\F}=\|\cm U\|_{\F}\|\cm V\|_{\F}$, each summand in the truncated gradient satisfies $\|\cm H_{k,i}^{\vee}(\cm A;\mathsf D_k)\|_{\F} \leq \tau_{\mathscr E,k}^{(t)}\tau_{\mathscr X,k}$.
		Fix the $t$-th iteration. Suppose that two neighboring datasets $\mathsf D_k$ and $\mathsf D_k'$ differ only in the $j$-th observation. Then
		\begin{align}\label{eq:truncated-gradient-sensitivity-bound}
			\bigl\|\cm G_{\mathscr A,k}^{\vee}(\cm A;\mathsf D_k) - \cm G_{\mathscr A,k}^{\vee}(\cm A;\mathsf D_k')\bigr\|_{\F}
			&= \left\|\frac{1}{n_k}\sum_{i=1}^{n_k}\cm H_{k,i}^{\vee}(\cm A;\mathsf D_k) - \frac{1}{n_k}\sum_{i=1}^{n_k}\cm H_{k,i}^{\vee}(\cm A;\mathsf D_k')\right\|_{\F}\notag\\
			&= \frac{1}{n_k}\left\|\cm H_{k,j}^{\vee}(\cm A;\mathsf D_k) - \cm H_{k,j}^{\vee}(\cm A;\mathsf D_k')\right\|_{\F}\notag\\
			&\leq \frac{1}{n_k}\left(\bigl\|\cm H_{k,j}^{\vee}(\cm A;\mathsf D_k)\bigr\|_{\F} + \bigl\|\cm H_{k,j}^{\vee}(\cm A;\mathsf D_k')\bigr\|_{\F}\right)\notag\\
			&\leq\frac{1}{n_k}\left(\tau_{\mathscr E,k}^{(t)}\tau_{\mathscr X,k} + \tau_{\mathscr E,k}^{(t)}\tau_{\mathscr X,k}\right) = \frac{2\tau_{\mathscr E,k}^{(t)}\tau_{\mathscr X,k}}{n_k}.
		\end{align}
		This is the sensitivity of the truncated local gradient transmitted by client $k$ at the $t$-th iteration, and it is used to calibrate the Gaussian perturbation in the transmitted truncated gradient.

		\noindent
		\textbf{\emph{Step 3: the high probability event.}}
		We collect all high probability events $\mathcal E_{\mathscr A,\ell}^{\mathrm{fed}}$ for $\ell\in[6]$ used in the subsequent deterministic recursion. Let $\mathcal E_{\mathscr A}^{\mathrm{fed}} = \bigcap_{\ell=1}^{6}\mathcal E_{\mathscr A,\ell}^{\mathrm{fed}}$.
		\begin{align*}
			\mathcal E_{\mathscr A,1}^{\mathrm{fed}} &= \Biggl\{\sup_{\boldsymbol{\mathscr U}\in\mathbb S_{2\bbm r}}\left|\sum_{k=1}^K\frac{n_k}{n}\left[\frac{1}{n_k}\sum_{i=1}^{n_k}\left\langle\langle \cm B_k^*,\cm X_{k,i}\rangle,\langle \cm U,\cm X_{k,i}\rangle\right\rangle-\mathbb E\left[\left\langle\langle \cm B_k^*,\cm X_{k,i}\rangle,\langle \cm U,\cm X_{k,i}\rangle\right\rangle\right]\right]\right| \leq C\bar\sigma_{\mathscr X}^2\bar\lambda_{\mathscr X}^{\max}\bar h_{\mathscr B}\sqrt{\frac{df_{\bbm r}}{n}}\Biggr\},\\[1mm]
			\mathcal E_{\mathscr A,2}^{\mathrm{fed}} &= \Biggl\{\sup_{\boldsymbol{\mathscr U}\in\mathbb S_{2\bbm r}}\left|\left\langle\sum_{k=1}^K\frac{n_k}{n}\frac{1}{n_k}\sum_{i=1}^{n_k}\cm E_{k,i}\circ\cm X_{k,i},\cm U\right\rangle\right| \leq C\bar\sigma_{\mathscr X}\bar\sigma_{\mathscr E}\sqrt{\bar\lambda_{\mathscr X}^{\max}\bar\lambda_{\mathscr E}^{\max}}\sqrt{\frac{df_{\bbm r}}{n}}\Biggr\},\\[1mm]
			\mathcal E_{\mathscr A,3}^{\mathrm{fed}} &= \Biggl\{\sup_{\boldsymbol{\mathscr U}\in\mathbb S_{2\bbm r}}\left|\sum_{k=1}^K\frac{n_k}{n}\left[\frac{1}{n_k}\sum_{i=1}^{n_k}\|\langle \cm U,\cm X_{k,i}\rangle\|_\F^2 - \mathbb E\|\langle \cm U,\cm X_{k,i}\rangle\|_\F^2\right]\right| \leq C\bar\sigma_{\mathscr X}^2\bar\lambda_{\mathscr X}^{\max}\sqrt{\frac{df_{\bbm r}}{n}}\Biggr\},\\[1mm]
			\mathcal E_{\mathscr A,4}^{\mathrm{fed}} &= \Biggl\{\sup_{\boldsymbol{\mathscr U},\boldsymbol{\mathscr V}\in\mathbb S_{2\bbm r}}\left|\sum_{k=1}^K\frac{n_k}{n}\left[\frac{1}{n_k}\sum_{i=1}^{n_k}\left\langle\langle \cm U,\cm X_{k,i}\rangle,\langle \cm V,\cm X_{k,i}\rangle\right\rangle - \mathbb E\left\{\left\langle\langle \cm U,\cm X_{k,i}\rangle,\langle \cm V,\cm X_{k,i}\rangle\right\rangle\right\}\right]\right| \leq C\bar\sigma_{\mathscr X}^2\bar\lambda_{\mathscr X}^{\max}\sqrt{\frac{df_{\bbm r}}{n}}\Biggr\},\\[1mm]
			\mathcal E_{\mathscr A,5}^{\mathrm{fed}} &= \bigcap_{k=1}^K\left\{\max_{i\in[n_k]}\|\cm X_{k,i}\|_{\F}\leq\tau_{\mathscr X,k},\quad\sup_{\|\boldsymbol{\mathscr A}-\boldsymbol{\mathscr A}_0^*\|_{\F}\leq R_{\mathscr A}}\max_{i\in[n_k]}\|\langle\cm A,\cm X_{k,i}\rangle-\cm Y_{k,i}\|_{\F}\leq\min_{0\leq t<T_g}\tau_{\mathscr E,k}^{(t)}\right\},\\[1mm]
			\mathcal E_{\mathscr A,6}^{\mathrm{fed}} &= \Biggl\{\max_{0\leq t<T_g}\left\|\mathcal P_{\mathcal T_{\bbm r}(\boldsymbol{\mathscr A}_0^{(t)})}\left(\sum_{k=1}^K \frac{n_k}{n}\cm W_{\mathscr A,k}^{(t)}\right)\right\|_\F \lesssim \frac{T_g\sqrt{\log(T_g/\delta)}}{\varepsilon n}\sqrt{K(df_{\bbm r}+\log T_g)}\max_{0\leq t<T_g}\max_{k\in[K]}\tau_{\mathscr E,k}^{(t)}\tau_{\mathscr X,k}\Biggr\}.
		\end{align*}

		The high probability bounds in $\mathcal E_{\mathscr A,1}^{\mathrm{fed}}$--$\mathcal E_{\mathscr A,4}^{\mathrm{fed}}$ follow from their weighted pooled form. We spell out the arguments for $\mathcal E_{\mathscr A,1}^{\mathrm{fed}}$. For any fixed $\cm U\in\mathbb S_{2\bbm r}$, the process in $\mathcal E_{\mathscr A,1}^{\mathrm{fed}}$ can be rewritten as
		\begin{align*}
			Z(\cm U) = \frac{1}{n}\sum_{k=1}^K\sum_{i=1}^{n_k}\left[\left\langle\langle \cm B_k^*,\cm X_{k,i}\rangle,\langle \cm U,\cm X_{k,i}\rangle\right\rangle - \mathbb E\left\{\left\langle\langle \cm B_k^*,\cm X_{k,i}\rangle,\langle \cm U,\cm X_{k,i}\rangle\right\rangle\right\}\right].
		\end{align*}
		Here $\cm B_k^*$ varies across clients, and hence $\left[\left\langle\langle \cm B_k^*,\cm X_{k,i}\rangle,\langle \cm U,\cm X_{k,i}\rangle\right\rangle - \mathbb E\left\{\left\langle\langle \cm B_k^*,\cm X_{k,i}\rangle,\langle \cm U,\cm X_{k,i}\rangle\right\rangle\right\}\right]$ is an independent but non-identically distributed centered empirical process with client-specific fixed directions $\cm B_k^*$. Using the $S_X$-matricization convention, we have
		\begin{align*}
			\left\langle\langle \cm B_k^*,\cm X_{k,i}\rangle,\langle \cm U,\cm X_{k,i}\rangle\right\rangle
			&= \vect^\top(\cm X_{k,i})\bm A_{k,\boldsymbol{\mathscr U}}\vect(\cm X_{k,i}),
		\end{align*}
		where
		\[
			\bm A_{k,\boldsymbol{\mathscr U}} = \frac{(\cm B_k^*)_{[S_X]}^\top\cm U_{[S_X]}+\cm U_{[S_X]}^\top(\cm B_k^*)_{[S_X]}}{2}.
		\]
		By Assumption~\ref{assump:subg-design}, $\vect(\cm X_{k,i}) = \bbm\Sigma_{\mathscr X,k}^{1/2}\bbm\xi_{k,i}$, where $\bbm\xi_{k,i}$ has uniformly sub-Gaussian coordinates. Thus, for fixed $\cm U$, each summand is a centered quadratic form $Q_{k,i}(\cm U) = \bbm\xi_{k,i}^\top\bbm\Sigma_{\mathscr X,k}^{1/2}\bm A_{k,\boldsymbol{\mathscr U}}\bbm\Sigma_{\mathscr X,k}^{1/2}\bbm\xi_{k,i} - \mathbb E[\bbm\xi_{k,i}^\top\bbm\Sigma_{\mathscr X,k}^{1/2}\bm A_{k,\boldsymbol{\mathscr U}}\bbm\Sigma_{\mathscr X,k}^{1/2}\bbm\xi_{k,i}]$. Moreover, since $\|\cm U\|_\F=1$ and $\|\cm B_k^*\|_\F\leq h_{\mathscr B,k} \leq \bar h_{\mathscr B}$,
		\begin{align*}
			\left\|\bbm\Sigma_{\mathscr X,k}^{1/2}\bm A_{k,\boldsymbol{\mathscr U}}\bbm\Sigma_{\mathscr X,k}^{1/2}\right\|_{\op} \leq \lambda_{\mathscr X,k}^{\max}\|\bm A_{k,\boldsymbol{\mathscr U}}\|_{\op} \leq \bar\lambda_{\mathscr X}^{\max}\bar h_{\mathscr B}\ \text{and} \ \left\|\bbm\Sigma_{\mathscr X,k}^{1/2}\bm A_{k,\boldsymbol{\mathscr U}}\bbm\Sigma_{\mathscr X,k}^{1/2}\right\|_{\F} \leq \lambda_{\mathscr X,k}^{\max}\|\bm A_{k,\boldsymbol{\mathscr U}}\|_{\F} \leq \bar\lambda_{\mathscr X}^{\max}\bar h_{\mathscr B}.
		\end{align*}
		By the Hanson-Wright inequality in Lemma~\ref{lem:hanson-wright} and the bounds above on the operator and Frobenius norms of $\bbm\Sigma_{\mathscr X,k}^{1/2}\bm A_{k,\boldsymbol{\mathscr U}}\bbm\Sigma_{\mathscr X,k}^{1/2}$, for any $u>0$,
		\[
			\mathbb P\left(|Q_{k,i}(\cm U)|\geq u\right)\leq 2\exp\left(-C\min\left\{\frac{u^2}{\bar\sigma_{\mathscr X}^4(\bar\lambda_{\mathscr X}^{\max})^2\bar h_{\mathscr B}^2},\frac{u}{\bar\sigma_{\mathscr X}^2\bar\lambda_{\mathscr X}^{\max}\bar h_{\mathscr B}}\right\}\right) =:2\exp\left(-C\min\left\{\frac{u^2}{M^2},\frac{u}{M}\right\}\right).
		\]
		where $M=\bar\sigma_{\mathscr X}^2\bar\lambda_{\mathscr X}^{\max}\bar h_{\mathscr B}$.
		For any integer $m\geq2$, using the tail-integration formula,
		\[
			\mathbb E|Q_{k,i}(\cm U)|^m = m\int_0^\infty s^{m-1}\mathbb P\{|Q_{k,i}(\cm U)|\geq s\}\,ds \leq 2m\int_0^M s^{m-1}\,ds + 2m\int_M^\infty s^{m-1}\exp\left(-C\frac{s}{M}\right)\,ds \leq C^m m! M^m.
		\]
		Since $Q_{k,i}(\cm U)$ is centered, for $|\lambda|<1/(CM)$,
		\[
			\mathbb E\exp[\lambda Q_{k,i}(\cm U)] = 1+\sum_{m=2}^\infty\frac{\lambda^m\mathbb E[Q_{k,i}(\cm U)^m]}{m!} \leq 1+\sum_{m=2}^\infty(C|\lambda|M)^m \leq \exp\left(CM^2\lambda^2\right).
		\]
		Therefore, by the moment-generating-function characterization of sub-exponential random variables, $Q_{k,i}(\cm U)$ is sub-exponential with parameters $\nu_{k,i}^2 \leq \bar\sigma_{\mathscr X}^4(\bar\lambda_{\mathscr X}^{\max})^2\bar h_{\mathscr B}^2$ and $\alpha_{k,i} \leq \bar\sigma_{\mathscr X}^2\bar\lambda_{\mathscr X}^{\max}\bar h_{\mathscr B}$. Since the variables $\{Q_{k,i}(\cm U)\}_{k,i}$ are independent and centered, the extension of \citet[Equation~(2.18)]{wainwright2019high} to independent sub-exponential variables yields, for every $u>0$,
		\begin{align*}
			\mathbb P\left(|Z(\cm U)|\geq u\right) \leq 2\exp\left(-Cn\min\left\{\frac{u^2}{\bar\sigma_{\mathscr X}^4(\bar\lambda_{\mathscr X}^{\max})^2\bar h_{\mathscr B}^2},\frac{u}{\bar\sigma_{\mathscr X}^2\bar\lambda_{\mathscr X}^{\max}\bar h_{\mathscr B}}\right\}\right).
		\end{align*}
		Therefore, taking $u=C\bar\sigma_{\mathscr X}^2\bar\lambda_{\mathscr X}^{\max}\bar h_{\mathscr B}\sqrt{df_{\bbm r}/n}$ and using the pooled sample size condition, we have
		\[
			\mathbb P\left(|Z(\cm U)| \geq C\bar\sigma_{\mathscr X}^2\bar\lambda_{\mathscr X}^{\max}\bar h_{\mathscr B}\sqrt{\frac{df_{\bbm r}+u}{n}}\right)\leq 2\exp(-Cdf_{\bbm r}).
		\]
		Finally, let $\mathcal N_{2\bbm r}$ be the product net of $\mathbb S_{2\bbm r}$ constructed in the proof of Lemma~\ref{lem:tangent-deviation-bilinear-bound}, satisfying $|\mathcal N_{2\bbm r}|\leq\exp(Cdf_{\bbm r})$. Applying a union bound over $\mathcal N_{2\bbm r}$, together with the same linear approximation arguments used in the proof of Lemma~\ref{lem:tangent-deviation-bilinear-bound}, gives
		\[
			\sup_{\boldsymbol{\mathscr U}\in\mathbb S_{2\bbm r}}|Z(\cm U)| \leq C\bar\sigma_{\mathscr X}^2\bar\lambda_{\mathscr X}^{\max}\bar h_{\mathscr B}\sqrt{\frac{df_{\bbm r}}{n}}
		\]
		with probability at least $1-C\exp(-Cdf_{\bbm r})$. This proves the high probability bound in $\mathcal E_{\mathscr A,1}^{\mathrm{fed}}$.

		The events $\mathcal E_{\mathscr A,2}^{\mathrm{fed}}$--$\mathcal E_{\mathscr A,4}^{\mathrm{fed}}$ are handled analogously: the weighted averages are written as
		\[
			\sum_{k=1}^K\frac{n_k}{n}\frac{1}{n_k}\sum_{i=1}^{n_k}(\cdot) = \frac{1}{n}\sum_{k=1}^K\sum_{i=1}^{n_k}(\cdot),
		\]
		and the summands are independent across $(k,i)$ and uniformly controlled by the envelope parameters $\bar\sigma_{\mathscr X}$, $\bar\sigma_{\mathscr E}$, $\bar\lambda_{\mathscr X}^{\max}$ and $\bar\lambda_{\mathscr E}^{\max}$. Therefore, the same covering and concentration arguments used in the single-client proof apply to the pooled independent non-identically distributed processes, with client-specific parameters replaced by their uniform envelopes.

		More precisely, the pooled analogues of the single-client empirical-process lemmas imply that, under the pooled empirical-process sample size requirement
		\[
			n \gtrsim \max\Biggl\{1, \frac{\bar\sigma_{\mathscr X}^{4}(\bar\lambda_{\mathscr X}^{\max})^{2}}{1\wedge \bar\delta_{\mathscr A}^{2}}\Biggr\}df_{\bbm r},
		\]
		the restricted quadratic and bilinear empirical processes are controlled at the pooled rate $\sqrt{df_{\bbm r}/n}$ and, in particular, $\delta_{2\bbm r}^{\mathrm{fed}}\leq\bar\delta_{\mathscr A}$. Therefore, the intersection of $\bigcap_{\ell=1}^{4}\mathcal E_{\mathscr A,\ell}^{\mathrm{fed}}$ holds with probability at least $1-C\exp(-Cdf_{\bbm r})$, after using $df_{2\bbm r}\asymp df_{\bbm r}$.

		The truncation event $\mathcal E_{\mathscr A,5}^{\mathrm{fed}}$ occurs with probability at least $1-C\exp(-C\log n)$. This follows from Lemma~\ref{lem:truncation-level-choice} and the choices $\tau_{\mathscr X,k} \asymp \sigma_{\mathscr X,k}\sqrt{\lambda_{\mathscr X,k}^{\max}}\sqrt{p+\log n}$ and $\tau_{\mathscr E,k}^{(t)} \asymp (R_{\mathscr A}+h_{\mathscr B,k})\sigma_{\mathscr X,k}\sqrt{\lambda_{\mathscr X,k}^{\max}}\sqrt{p+\log n} + \sigma_{\mathscr E,k}\sqrt{\lambda_{\mathscr E,k}^{\max}}\sqrt{q+\log n}$ uniformly over $0\leq t<T_g$.

		The privacy-noise event $\mathcal E_{\mathscr A,6}^{\mathrm{fed}}$ follows directly from Gaussian concentration. Fix an iteration $t$ and given the current iterate $\cm A_0^{(t)}$ and the noise scales $\{\sigma_k^{(t)}\}_{k=1}^K$. For each index $\bbm i$, the added noises $[\cm W_{\mathscr A,k}^{(t)}]_{\bbm i}$ are independent Gaussian random variables with variances $(\sigma_k^{(t)})^2$. Therefore, the weighted sum $\sum_{k=1}^K(n_k/n)\cm W_{\mathscr A,k}^{(t)}$ is again an entrywise Gaussian tensor. Moreover, let $\bbm w_t = \vect(\sum_{k=1}^K(n_k/n)\cm W_{\mathscr A,k}^{(t)})$. Then, $\bbm w_t\sim N(\bm 0,\nu_t^2\bm I)$ where $\nu_t^2=\sum_{k=1}^K(n_k/n)^2(\sigma_k^{(t)})^2$. 

		Let $\cm S^{(t)}$ and $\{\bm U_a^{(t)}\}_{a=1}^{d+m}$ denote the core tensor and orthonormal factor matrices in the Tucker decomposition of the current iterate $\boldsymbol{\mathscr A}_0^{(t)}$. By the tangent-space characterization of the Tucker-rank manifold \citep{luo2024tensor}, any tensor of the form  
		\[
		\cm T = \cm F \times_{a=1}^{d+m} \bm U_a^{(t)} + \sum_{j=1}^{d+m} \cm S^{(t)} \times_j \bigl( \bm U_{j,\perp}^{(t)} \bm D_j \bigr) \times_{a \neq j} \bm U_a^{(t)}
		\]
		belongs to the tangent space $\mathcal T_{\bbm r}(\boldsymbol{\mathscr A}_0^{(t)})$. Here, $\bm U_{j,\perp}^{(t)} \in \mathbb R^{d_j \times (d_j - r_j)}$ is an orthonormal complement of $\bm U_j^{(t)}$, $\cm F \in \mathbb R^{r_1 \times \cdots \times r_{d+m}}$, and $\bm D_j \in \mathbb R^{(d_j - r_j) \times r_j}$. The dimensions satisfy $d_j = q_j$ for $j \in [m]$ and $d_{m+s} = p_s$ for $s \in [d]$. The matrices $\bm U_a^{(t)}$ determine the orientation of the tangent space, while its dimension is governed by the total number of free parameters in $\cm F$ and $\{\bm D_j\}_{j=1}^{d+m}$.
		Thus the above representation is parametrized by $\prod_{s=1}^{d+m}r_s$ entries in $\cm F$ and by $r_j(d_j-r_j)$ entries in each $\bm D_j$. Therefore,
		\[
			\dim\bigl(\mathcal T_{\bbm r}(\boldsymbol{\mathscr A}_0^{(t)})\bigr) \leq \prod_{s=1}^{d+m}r_s + \sum_{s=1}^{m}r_s(q_s-r_s) + \sum_{s=1}^{d}r_{m+s}(p_s-r_{m+s}) = df_{\bbm r}.
		\]

		Let $\mathcal V_t = \{\vect(\cm Z):\cm Z\in\mathcal T_{\bbm r}(\boldsymbol{\mathscr A}_0^{(t)})\}$ be the vectorized tangent space. Since $\vect(\cdot)$ is a linear bijection from the ambient tensor space to its vectorized Euclidean space, it preserves dimension. Thus $r_t = \dim(\mathcal V_t) = \dim\bigl(\mathcal T_{\bbm r}(\boldsymbol{\mathscr A}_0^{(t)})\bigr)
		\leq df_{\bbm r}$. Let $\bbm q_{t,1},\cdots,\bbm q_{t,r_t}$ be an orthonormal basis of $\mathcal V_t$, and define $\bm Q_t=(\bbm q_{t,1},\cdots,\bbm q_{t,r_t})\in\mathbb R^{(\prod_{j=1}^{m}q_j\times \prod_{j=1}^{d}p_j)\times r_t}$. Then $\bm Q_t^\top\bm Q_t=\bm I_{r_t}$ and $\operatorname{span}(\bm Q_t)=\mathcal V_t$.
		Since the columns of $\bm Q_t$ form an orthonormal basis of $\mathcal V_t$, the Euclidean orthogonal projection of any vector $\vect(\cm Z)$ onto $\mathcal V_t$ is $\bm P_t\vect(\cm Z)$, where $\bm P_t=\bm Q_t\bm Q_t^{\top}$ is the orthogonal projection matrix onto an $r_t$-dimensional subspace. Consequently, $\bm P_t$ is symmetric and idempotent, i.e., $\bm P_t^\top=\bm P_t$ and $\bm P_t^2 = \bm P_t$. Its eigenvalues are either $0$ or $1$, and therefore $\rank(\bm P_t)=r_t\leq df_{\bbm r}$, $\|\bm P_t\|_{\op}\leq 1$, and $\|\bm P_t\|_{\F}^2=\rank(\bm P_t)=r_t\leq df_{\bbm r}$.

		On the other hand, $\mathcal P_{\mathcal T_{\bbm r}(\boldsymbol{\mathscr A}_0^{(t)})}(\cm Z)$ is the orthogonal projection of $\cm Z$ onto the tangent space of $\cm A_0^{(t)}$. Since vectorization only stacks the entries of a tensor into a vector without changing inner products, Frobenius orthogonality in the tensor space is equivalent to Euclidean orthogonality after vectorization. Therefore, vectorizing the Frobenius-orthogonal projection of $\cm Z$ onto $\mathcal T_{\bbm r}(\boldsymbol{\mathscr A}_0^{(t)})$ is the same as first vectorizing $\cm Z$ and then taking the Euclidean orthogonal projection onto $\mathcal V_t$, i.e., for every tensor $\cm Z$, $\vect(\mathcal P_{\mathcal T_{\bbm r}(\boldsymbol{\mathscr A}_0^{(t)})}(\cm Z)) = \bm P_t\vect(\cm Z)$. Hence, writing $\bbm w_t=\vect(\sum_{k=1}^K(n_k/n)\cm W_{\mathscr A,k}^{(t)})=\nu_t\bbm g_t$ with $\bbm g_t\sim N(\bm 0,\bm I)$, we have
		\[
			\left\|\mathcal P_{\mathcal T_{\bbm r}(\boldsymbol{\mathscr A}_0^{(t)})}\left(\sum_{k=1}^K\frac{n_k}{n}\cm W_{\mathscr A,k}^{(t)}\right)\right\|_{\F}^2 = \left\|\bm P_t\bbm w_t\right\|_2^2 = \bbm w_t^\top \bm P_t\bbm w_t = \nu_t^2\bbm g_t^\top\bm P_t\bbm g_t.
		\]
		Using the projection properties derived above, we have $\mathbb E(\bbm g_t^\top\bm P_t\bbm g_t) = \tr(\bm P_t) = \rank(\bm P_t) \leq df_{\bbm r}$, $\|\bm P_t\|_{\op}\leq 1$, and $\|\bm P_t\|_{\F}^2\leq df_{\bbm r}$. Applying the Hanson-Wright inequality in Lemma~\ref{lem:hanson-wright} to the quadratic form $\bbm g_t^\top\bm P_t\bbm g_t$ gives, for every $s>0$,
		\[
			\mathbb P\left(\left|\bbm g_t^\top\bm P_t\bbm g_t-\mathbb E(\bbm g_t^\top\bm P_t\bbm g_t)\right|>s\right)\leq 2\exp\left(-C\min\left\{\frac{s^2}{df_{\bbm r}},s\right\}\right).
		\]
		Taking $s=C(df_{\bbm r}+u)$, we have $\bbm g_t^\top\bm P_t\bbm g_t \leq C(df_{\bbm r}+u)$ with probability at least $1-\exp(-Cu)$. Therefore, for every $u>0$,
		\[
			\mathbb P\left(\left\|\mathcal P_{\mathcal T_{\bbm r}(\boldsymbol{\mathscr A}_0^{(t)})}\left(\sum_{k=1}^K\frac{n_k}{n}\cm W_{\mathscr A,k}^{(t)}\right)\right\|_{\F} > C\nu_t\sqrt{df_{\bbm r}+u}\right) \leq \exp(-Cu).
		\]

		Taking $u=df_{\bbm r}+\log T_g$, applying a union bound over $t=0,\cdots,T_g-1$, and using the Gaussian-mechanism calibration
		\[
			\sigma_k^{(t)} = \frac{2\tau_{\mathscr E,k}^{(t)}\tau_{\mathscr X,k}}{n_k}\cdot\frac{T_g\sqrt{2\log(1.25T_g/\delta)}}{\varepsilon},
		\]
		we have, with probability at least $1-C\exp(-C(df_{\bbm r}+\log T_g))$,
		\begin{align}\label{eq:fed-privacy-noise-summary-bound}
			\max_{0\leq t<T_g}\left\|\mathcal P_{\mathcal T_{\bbm r}(\boldsymbol{\mathscr A}_0^{(t)})}\left(\sum_{k=1}^K\frac{n_k}{n}\cm W_{\mathscr A,k}^{(t)}\right)\right\|_{\F} &\lesssim \sqrt{df_{\bbm r}+\log T_g}\max_{0\leq t<T_g}\left(\sum_{k=1}^K\left(\frac{n_k}{n}\right)^2(\sigma_k^{(t)})^2\right)^{1/2}\notag \\
			&\lesssim \frac{T_g\sqrt{\log(T_g/\delta)}}{\varepsilon n}\sqrt{K(df_{\bbm r}+\log T_g)}\max_{0\leq t<T_g}\max_{k\in[K]}\tau_{\mathscr E,k}^{(t)}\tau_{\mathscr X,k}.
		\end{align}
		This is precisely the bound required in the definition of $\mathcal E_{\mathscr A,6}^{\mathrm{fed}}$, and hence the Gaussian-noise event is included in the high probability event with probability at least $1-C\exp(-C(df_{\bbm r}+\log T_g))$.

		Combining the pooled empirical-process events $\mathcal E_{\mathscr A,1}^{\mathrm{fed}}$--$\mathcal E_{\mathscr A,4}^{\mathrm{fed}}$, the truncation event $\mathcal E_{\mathscr A,5}^{\mathrm{fed}}$, and the Gaussian-noise event $\mathcal E_{\mathscr A,6}^{\mathrm{fed}}$, a union bound yields 
		\[
			\mathbb P(\mathcal E_{\mathscr A}^{\mathrm{fed}}) \geq 1 - C\exp(-Cdf_{2\bbm r}) - C\exp(-C(df_{\bbm r}+\log T_g)) - C\exp(-C\log n). 
		\]
		The rest of the proof is conditional on $\mathcal E_{\mathscr A}^{\mathrm{fed}}$.

		\noindent
		\textbf{\emph{Step 4: one-step recursion on the high probability event.}}
		Conditional on $\mathcal E_{\mathscr A}^{\mathrm{fed}}$, we derive a one-step contraction inequality for the federated Stage-I update. By the definition of the privatized transmitted gradient in \eqref{eq:private-gradient}, client $k$ transmits
		\[
			\widetilde{\cm G}_{\mathscr A,k}^{(t)} = \mathcal P_{\mathcal T_{\bbm r}(\boldsymbol{\mathscr A}_0^{(t)})}\left(\cm G_{\mathscr A,k}^{\vee}(\cm A_0^{(t)}) + \cm W_{\mathscr A,k}^{(t)}\right).
		\]
		Moreover, on the truncation event $\mathcal E_{\mathscr A,5}^{\mathrm{fed}}$, the truncation is inactive for all iterates satisfying $\|\cm A_0^{(t)}-\cm A_0^*\|_{\F}\leq R_{\mathscr A}$. Hence, for every $k\in[K]$,
		\[
			\cm G_{\mathscr A,k}^{\vee}(\cm A_0^{(t)}) = \cm G_{\mathscr A,k}^{loc,(t)} = \frac{1}{n_k}\sum_{i=1}^{n_k}\bigl\langle\cm A_0^{(t)}-\cm A_0^*-\cm B_k^*,\cm X_{k,i}\bigr\rangle\circ\cm X_{k,i} - \frac{1}{n_k}\sum_{i=1}^{n_k}\cm E_{k,i}\circ\cm X_{k,i}.
		\]
		Therefore, using the linearity of the tangent-space projection, the aggregated noisy transmitted gradient can be written as
		\[
			\sum_{k=1}^K\frac{n_k}{n}\widetilde{\cm G}_{\mathscr A,k}^{(t)} = \mathcal P_{\mathcal T_{\bbm r}(\boldsymbol{\mathscr A}_0^{(t)})}\left(\sum_{k=1}^K\frac{n_k}{n}\cm G_{\mathscr A,k}^{loc,(t)} + \sum_{k=1}^K\frac{n_k}{n}\cm W_{\mathscr A,k}^{(t)}\right).
		\]
		Then,
		\begin{align*}
			&\|\cm A_0^{(t+1)}-\cm A_0^*\|_\F = \Bigl\|\mathcal R_{\bbm r}\Bigl(\cm A_0^{(t)} - \eta_{\mathscr A}\sum_{k=1}^K\frac{n_k}{n}\widetilde{\cm G}_{\mathscr A,k}^{(t)}\Bigr) - \cm A_0^*\Bigr\|_\F\\
			&\leq \Bigl\|\mathcal R_{\bbm r}\Bigl(\cm A_0^{(t)} - \eta_{\mathscr A}\sum_{k=1}^K\frac{n_k}{n}\widetilde{\cm G}_{\mathscr A,k}^{(t)}\Bigr) - \Bigl(\cm A_0^{(t)} - \eta_{\mathscr A}\sum_{k=1}^K\frac{n_k}{n}\widetilde{\cm G}_{\mathscr A,k}^{(t)}\Bigr)\Bigr\|_\F + \Bigl\|\cm A_0^{(t)} - \eta_{\mathscr A}\sum_{k=1}^K\frac{n_k}{n}\widetilde{\cm G}_{\mathscr A,k}^{(t)} - \cm A_0^*\Bigr\|_\F\\
			&\overset{(i)}{\leq}\sqrt{d+m}\,\Bigl\|\mathcal P_{\mathcal M_{\bbm r}}\Bigl(\cm A_0^{(t)} - \eta_{\mathscr A}\sum_{k=1}^K\frac{n_k}{n}\widetilde{\cm G}_{\mathscr A,k}^{(t)}\Bigr) - \Bigl(\cm A_0^{(t)} - \eta_{\mathscr A}\sum_{k=1}^K\frac{n_k}{n}\widetilde{\cm G}_{\mathscr A,k}^{(t)}\Bigr)\Bigr\|_\F + \Bigl\|\cm A_0^{(t)} - \eta_{\mathscr A}\sum_{k=1}^K\frac{n_k}{n}\widetilde{\cm G}_{\mathscr A,k}^{(t)} - \cm A_0^*\Bigr\|_\F\\
			&\overset{(ii)}{\leq}(\sqrt{d+m}+1)\Bigl\|\cm A_0^{(t)} - \eta_{\mathscr A}\sum_{k=1}^K\frac{n_k}{n}\widetilde{\cm G}_{\mathscr A,k}^{(t)} - \cm A_0^*\Bigr\|_\F\\
			&=(\sqrt{d+m}+1)\Biggl\|\mathcal P_{\mathcal T_{\bbm r}(\boldsymbol{\mathscr A}_0^{(t)})}(\cm A_0^{(t)}-\cm A_0^*) + \mathcal P_{\mathcal T_{\bbm r}^{\perp}(\boldsymbol{\mathscr A}_0^{(t)})}(\cm A_0^{(t)}-\cm A_0^*)\\
			&\quad -\eta_{\mathscr A}\mathcal P_{\mathcal T_{\bbm r}(\boldsymbol{\mathscr A}_0^{(t)})}\Biggl[\sum_{k=1}^K \frac{n_k}{n}\left\{\frac{1}{n_k}\sum_{i=1}^{n_k}\bigl\langle\cm A_0^{(t)}-\cm A_0^*-\cm B_k^*,\cm X_{k,i}\bigr\rangle\circ\cm X_{k,i} - \frac{1}{n_k}\sum_{i=1}^{n_k}\cm E_{k,i}\circ\cm X_{k,i}\right\} + \sum_{k=1}^K \frac{n_k}{n}\cm W_{\mathscr A,k}^{(t)}\Biggr]\Biggr\|_\F\\
			&\leq (\sqrt{d+m}+1)\Biggl(\underbrace{\Bigl\|\mathcal P_{\mathcal T_{\bbm r}^{\perp}(\boldsymbol{\mathscr A}_0^{(t)})}(\cm A_0^{(t)}-\cm A_0^*)\Bigr\|_\F}_{\mathrm{Term\ I}} 
			+ \underbrace{\eta_{\mathscr A}\Biggl\|\mathcal P_{\mathcal T_{\bbm r}(\boldsymbol{\mathscr A}_0^{(t)})}\left(\sum_{k=1}^K \frac{n_k}{n}\frac{1}{n_k}\sum_{i=1}^{n_k}\langle \cm B_k^*,\cm X_{k,i}\rangle\circ\cm X_{k,i}\right)\Biggr\|_\F}_{\mathrm{Term\ II}}\\
			&\quad+ \underbrace{\eta_{\mathscr A}\Biggl\|\mathcal P_{\mathcal T_{\bbm r}(\boldsymbol{\mathscr A}_0^{(t)})}\left(\sum_{k=1}^K \frac{n_k}{n}\frac{1}{n_k}\sum_{i=1}^{n_k}\cm E_{k,i}\circ\cm X_{k,i}\right)\Biggr\|_\F}_{\mathrm{Term\ III}}\\
			&\quad+ \underbrace{\Biggl\|\mathcal P_{\mathcal T_{\bbm r}(\boldsymbol{\mathscr A}_0^{(t)})}\Biggl(\cm A_0^{(t)}-\cm A_0^* - \eta_{\mathscr A}\sum_{k=1}^K \frac{n_k}{n}\frac{1}{n_k}\sum_{i=1}^{n_k}\Bigl\langle\mathcal P_{\mathcal T_{\bbm r}(\boldsymbol{\mathscr A}_0^{(t)})}(\cm A_0^{(t)}-\cm A_0^*),\cm X_{k,i}\Bigr\rangle\circ\cm X_{k,i}\Biggr)\Biggr\|_\F}_{\mathrm{Term\ IV}}\\
			&\quad+ \underbrace{\eta_{\mathscr A}\Biggl\|\mathcal P_{\mathcal T_{\bbm r}(\boldsymbol{\mathscr A}_0^{(t)})}\left(\sum_{k=1}^K \frac{n_k}{n}\frac{1}{n_k}\sum_{i=1}^{n_k}\Bigl\langle\mathcal P_{\mathcal T_{\bbm r}^{\perp}(\boldsymbol{\mathscr A}_0^{(t)})}(\cm A_0^{(t)}-\cm A_0^*),\cm X_{k,i}\Bigr\rangle\circ\cm X_{k,i}\right)\Biggr\|_\F}_{\mathrm{Term\ V}}
			+ \underbrace{\eta_{\mathscr A}\Biggl\|\mathcal P_{\mathcal T_{\bbm r}(\boldsymbol{\mathscr A}_0^{(t)})}\left(\sum_{k=1}^K \frac{n_k}{n}\cm W_{\mathscr A,k}^{(t)}\right)\Biggr\|_\F}_{\mathrm{Term\ VI}}\Biggr).
		\end{align*}
		In inequality $(i)$, $\mathcal M_{\bbm r}=\{\cm A:\tucrank(\cm A)\leq\bbm r\}$ denotes the Tucker-rank-$\bbm r$ tensor set, and $\mathcal P_{\mathcal M_{\bbm r}}(\cm Z)$ denotes the best Frobenius-norm projection of $\cm Z$ onto $\mathcal M_{\bbm r}$. Thus, inequality $(i)$ follows from the quasi-projection property in Lemma~\ref{lem:quasi-projection-thosvd-sthosvd}. Inequality $(ii)$ follows from Lemma~\ref{lem:quasi-projection-thosvd-sthosvd} by using $\cm A_0^*\in\mathcal M_{\bbm r}$ as a feasible Tucker-rank-$\bbm r$ approximation of $\cm A_0^{(t)} - \eta_{\mathscr A}\sum_{k=1}^K\frac{n_k}{n}\widetilde{\cm G}_{\mathscr A,k}^{(t)}$.

		For \textbf{Term I}, by Lemma~\ref{lem:normal-component-bound}, we have
		\begin{align}\label{eq:term-I-bound-fed}
			\textbf{Term\ I} = \Bigl\|\mathcal P_{\mathcal T_{\bbm r}^{\perp}(\boldsymbol{\mathscr A}_{0}^{(t)})}\bigl(\cm A_{0}^{(t)}-\cm A_0^*\bigr)\Bigr\|_\F \leq \frac{2(d+m)}{\mu_{\mathscr A}^{\min}}\|\cm A_{0}^{(t)}-\cm A_0^*\|_\F^2.
		\end{align}

		For \textbf{Term II}, by the variational characterization of the Frobenius norm on the tangent space,
		\begin{align*}
			&\Biggl\|\mathcal P_{\mathcal T_{\bbm r}(\boldsymbol{\mathscr A}_{0}^{(t)})}\left(\sum_{k=1}^K\frac{n_k}{n}\frac{1}{n_k}\sum_{i=1}^{n_k}\langle \cm B_k^*,\cm X_{k,i}\rangle\circ\cm X_{k,i}\right)\Biggr\|_\F \\
			=& \sup_{\substack{\boldsymbol{\mathscr U}\in\mathcal T_{\bbm r}(\boldsymbol{\mathscr A}_{0}^{(t)})\\\|\boldsymbol{\mathscr U}\|_\F=1}}\left|\sum_{k=1}^K\frac{n_k}{n}\frac{1}{n_k}\sum_{i=1}^{n_k}\left\langle\langle \cm B_k^*,\cm X_{k,i}\rangle,\langle \cm U,\cm X_{k,i}\rangle\right\rangle\right| \\
			\leq& \sup_{\substack{\boldsymbol{\mathscr U}\in\mathcal T_{\bbm r}(\boldsymbol{\mathscr A}_{0}^{(t)})\\\|\boldsymbol{\mathscr U}\|_\F=1}}\left(\left|\sum_{k=1}^K\frac{n_k}{n}\mathbb E\bigl[\left\langle\langle \cm B_k^*,\cm X_{k,i}\rangle,\langle \cm U,\cm X_{k,i}\rangle\right\rangle\bigr]\right| + \left|\sum_{k=1}^K\frac{n_k}{n}\left[\frac{1}{n_k}\sum_{i=1}^{n_k}\left\langle\langle \cm B_k^*,\cm X_{k,i}\rangle,\langle \cm U,\cm X_{k,i}\rangle\right\rangle -\mathbb E\bigl\{\left\langle\langle \cm B_k^*,\cm X_{k,i}\rangle,\langle \cm U,\cm X_{k,i}\rangle\right\rangle\bigr\}\right]\right|\right).
		\end{align*}
		By \eqref{eq:expectation-B-k-U-X}, we have $\left|\mathbb E\left[\left\langle\langle\cm B_k^*,\cm X_{k,i}\rangle,\langle\cm U,\cm X_{k,i}\rangle\right\rangle\right]\right| \leq \lambda_{\mathscr X,k}^{\max}\left\|(\cm B_k^*)_{[S_X]}\right\|_{\op}\left\|\cm U_{[S_X]}\right\|_*$.
		Since $\cm U\in\mathcal T_{\bbm r}(\boldsymbol{\mathscr A}_{0}^{(t)})$, Lemma~\ref{lem:tangent-space-nuclear-bound} gives $\|\cm U_{[S_X]}\|_* \leq \sqrt{r_{\bbm q}+r_{\bbm p}}\|\cm U\|_\F$. Together with the weak-identifiability condition in Assumption~\ref{assump:weak-identifiability}, namely $\max_{k\in[K]}\|(\cm B_k^*)_{[S_X]}\|_{\op}\leq \zeta$, for every $\cm U\in\mathcal T_{\bbm r}(\boldsymbol{\mathscr A}_{0}^{(t)})$ with $\|\cm U\|_\F=1$, we have
		\[
			\left|\sum_{k=1}^K\frac{n_k}{n}\mathbb E\bigl[\left\langle\langle \cm B_k^*,\cm X_{k,i}\rangle,\langle \cm U,\cm X_{k,i}\rangle\right\rangle\bigr]\right| \leq \sqrt{r_{\bbm q}+r_{\bbm p}}\sum_{k=1}^K\frac{n_k}{n}\lambda_{\mathscr X,k}^{\max}\zeta \leq \sqrt{r_{\bbm q}+r_{\bbm p}}\bar\lambda_{\mathscr X}^{\max}\zeta.
		\]
		Given $\mathcal E_{\mathscr A,1}^{\mathrm{fed}}$, we have
		\begin{align}\label{eq:term-II-bound-fed}
			\textbf{Term\ II} \leq \eta_{\mathscr A}\sqrt{r_{\bbm q}+r_{\bbm p}}\bar\lambda_{\mathscr X}^{\max}\zeta + C\eta_{\mathscr A}\bar\sigma_{\mathscr X}^2\bar\lambda_{\mathscr X}^{\max}\bar h_{\mathscr B}\sqrt{\frac{df_{\bbm r}}{n}}.
		\end{align}

		For \textbf{Term III}, by the variational characterization of the Frobenius norm on the tangent space,
		\[
			\mathrm{Term\ III} = \eta_{\mathscr A}\left\|\mathcal P_{\mathcal T_{\bbm r}(\boldsymbol{\mathscr A}_{0}^{(t)})}\left(\sum_{k=1}^K\frac{n_k}{n}\frac{1}{n_k}\sum_{i=1}^{n_k}\cm E_{k,i}\circ\cm X_{k,i}\right)\right\|_\F = \eta_{\mathscr A}\sup_{\substack{\boldsymbol{\mathscr U}\in\mathcal T_{\bbm r}(\boldsymbol{\mathscr A}_{0}^{(t)})\\ \|\boldsymbol{\mathscr U}\|_\F\leq1}}\left|\left\langle\sum_{k=1}^K\frac{n_k}{n}\frac{1}{n_k}\sum_{i=1}^{n_k}\cm E_{k,i}\circ\cm X_{k,i},\cm U\right\rangle\right|.
		\]
		By Lemma~\ref{lem:tangent-space-tucker-rank-2r}, every element in $\mathcal T_{\bbm r}(\boldsymbol{\mathscr A}_{0}^{(t)})$ has Tucker rank at most $2\bbm r$. Moreover, since the functional inside the inner product is linear in $\cm U$, the supremum over the Tucker-rank-$2\bbm r$ unit ball is the same as the supremum over $\mathbb S_{2\bbm r}$. Therefore, on the event $\mathcal E_{\mathscr A,2}^{\mathrm{fed}}$,
		\begin{align}\label{eq:term-III-bound-fed}
			\textbf{Term\ III} \leq C\eta_{\mathscr A}\bar\sigma_{\mathscr X}\bar\sigma_{\mathscr E}\sqrt{\bar\lambda_{\mathscr X}^{\max}\bar\lambda_{\mathscr E}^{\max}}\sqrt{\frac{df_{\bbm r}}{n}}.
		\end{align}

		For \textbf{Term IV}, note that it can be written as
		\[
			\Biggl\|\mathcal P_{\mathcal T_{\bbm r}(\boldsymbol{\mathscr A}_{0}^{(t)})}(\cm A_0^{(t)}-\cm A_0^*) - \eta_{\mathscr A}\mathcal P_{\mathcal T_{\bbm r}(\boldsymbol{\mathscr A}_{0}^{(t)})}\Biggl(\sum_{k=1}^K\frac{n_k}{n}\frac{1}{n_k}\sum_{i=1}^{n_k}\Bigl\langle\mathcal P_{\mathcal T_{\bbm r}(\boldsymbol{\mathscr A}_{0}^{(t)})}(\cm A_0^{(t)}-\cm A_0^*),\cm X_{k,i}\Bigr\rangle\circ\cm X_{k,i}\Biggr)\Biggr\|_\F.
		\]
		By Lemma~\ref{lem:tangent-space-tucker-rank-2r}, every element of $\mathcal T_{\bbm r}(\boldsymbol{\mathscr A}_{0}^{(t)})$ has Tucker rank at most $2\bbm r$. Thus, on the event $\mathcal E_{\mathscr A,3}^{\mathrm{fed}}$, the weighted restricted empirical quadratic bound holds uniformly over all $\cm T\in\mathcal T_{\bbm r}(\boldsymbol{\mathscr A}_{0}^{(t)})$:
		\begin{align}\label{eq:concentration-T-X-weighted}
			\left|\sum_{k=1}^K\frac{n_k}{n}\left[\frac{1}{n_k}\sum_{i=1}^{n_k}\|\langle \cm T,\cm X_{k,i}\rangle\|_\F^2 - \mathbb E\|\langle \cm T,\cm X_{k,i}\rangle\|_\F^2\right]\right| \leq \delta_{2\bbm r}^{\mathrm{fed}}\|\cm T\|_\F^2,
		\end{align}
		where $\delta_{2\bbm r}^{\mathrm{fed}} = C\bar\sigma_{\mathscr X}^2\bar\lambda_{\mathscr X}^{\max}\sqrt{df_{\bbm r}/n}$. 
		Moreover,
		\begin{align}\label{eq:T-X-exp-equivalent-form}
			\sum_{k=1}^K\frac{n_k}{n}\mathbb E\|\langle \cm T,\cm X_{k,i}\rangle\|_\F^2 = \sum_{k=1}^K\frac{n_k}{n}\tr\left(\cm T_{[S_X]}\bbm\Sigma_{\mathscr X,k}\cm T_{[S_X]}^\top\right).
		\end{align}
		Since the weights $n_k/n$ are nonnegative and sum to one, and the eigenvalues of $\bbm\Sigma_{\mathscr X,k}$ are bounded between $\lambda_{\mathscr X,k}^{\min}$ and $\lambda_{\mathscr X,k}^{\max}$, we have $\underline\lambda_{\mathscr X}^{\min}\|\cm T\|_\F^2 \leq \sum_{k=1}^K(n_k/n)\mathbb E\|\langle \cm T,\cm X_{k,i}\rangle\|_\F^2 \leq \bar\lambda_{\mathscr X}^{\max}\|\cm T\|_\F^2$, where $\underline\lambda_{\mathscr X}^{\min}=\min_{k\in[K]}\lambda_{\mathscr X,k}^{\min}$ and $\bar\lambda_{\mathscr X}^{\max}=\max_{k\in[K]}\lambda_{\mathscr X,k}^{\max}$. Combining \eqref{eq:concentration-T-X-weighted} and \eqref{eq:T-X-exp-equivalent-form} gives, for all $\cm T\in\mathcal T_{\bbm r}(\boldsymbol{\mathscr A}_{0}^{(t)})$,
		\[
			(\underline\lambda_{\mathscr X}^{\min}-\delta_{2\bbm r}^{\mathrm{fed}})\|\cm T\|_\F^2 \leq \sum_{k=1}^K\frac{n_k}{n}\frac{1}{n_k}\sum_{i=1}^{n_k}\|\langle \cm T,\cm X_{k,i}\rangle\|_\F^2 \leq (\bar\lambda_{\mathscr X}^{\max}+\delta_{2\bbm r}^{\mathrm{fed}})\|\cm T\|_\F^2.
		\]
		Therefore, applying Lemma~\ref{lem:tangent-contraction-tensor-general}, we have
		\begin{align}\label{eq:term-IV-bound-fed}
			\textbf{Term\ IV} \leq \rho_{\mathscr A}^{\mathrm{fed}}\left\|\mathcal P_{\mathcal T_{\bbm r}(\boldsymbol{\mathscr A}_{0}^{(t)})}(\cm A_0^{(t)}-\cm A_0^*)\right\|_\F \leq \rho_{\mathscr A}^{\mathrm{fed}}\|\cm A_0^{(t)}-\cm A_0^*\|_\F,
		\end{align}
		where $\rho_{\mathscr A}^{\mathrm{fed}} = \max\left\{|1-\eta_{\mathscr A}(\underline\lambda_{\mathscr X}^{\min}-\delta_{2\bbm r}^{\mathrm{fed}})|,|1-\eta_{\mathscr A}(\bar\lambda_{\mathscr X}^{\max}+\delta_{2\bbm r}^{\mathrm{fed}})|\right\}$.

		For \textbf{Term V}, by the variational characterization of the Frobenius norm, we have
		\begin{align*}
			\mathrm{Term\ V} &= \eta_{\mathscr A}\Biggl\|\mathcal P_{\mathcal T_{\bbm r}(\boldsymbol{\mathscr A}_{0}^{(t)})}\left(\sum_{k=1}^K\frac{n_k}{n}\frac{1}{n_k}\sum_{i=1}^{n_k}\Bigl\langle\mathcal P_{\mathcal T_{\bbm r}^{\perp}(\boldsymbol{\mathscr A}_{0}^{(t)})}(\cm A_0^{(t)}-\cm A_0^*),\cm X_{k,i}\Bigr\rangle\circ\cm X_{k,i}\right)\Biggr\|_\F \\
			&= \eta_{\mathscr A}\sup_{\substack{\boldsymbol{\mathscr T}\in\mathcal T_{\bbm r}(\boldsymbol{\mathscr A}_{0}^{(t)})\\ \|\boldsymbol{\mathscr T}\|_\F\leq1}}\left|\sum_{k=1}^K\frac{n_k}{n}\frac{1}{n_k}\sum_{i=1}^{n_k}\left\langle\Bigl\langle\mathcal P_{\mathcal T_{\bbm r}^{\perp}(\boldsymbol{\mathscr A}_{0}^{(t)})}(\cm A_0^{(t)}-\cm A_0^*),\cm X_{k,i}\Bigr\rangle,\langle\cm T,\cm X_{k,i}\rangle\right\rangle\right|.
		\end{align*}

		Let $\cm N^{(t)}=\mathcal P_{\mathcal T_{\bbm r}^{\perp}(\boldsymbol{\mathscr A}_{0}^{(t)})}(\cm A_0^{(t)}-\cm A_0^*)$. For any $\cm T\in\mathcal T_{\bbm r}(\boldsymbol{\mathscr A}_{0}^{(t)})$ with $\|\cm T\|_\F\leq1$, decompose the preceding bilinear form into its population and empirical fluctuation parts. For the population part, by the matricization convention associated with $S_X$,
		\[
			\sum_{k=1}^K\frac{n_k}{n}\mathbb E\left[\left\langle\langle \cm N^{(t)},\cm X_{k,i}\rangle,\langle \cm T,\cm X_{k,i}\rangle\right\rangle\right]
			= \sum_{k=1}^K\frac{n_k}{n}\tr\left((\cm N^{(t)})_{[S_X]}\bbm\Sigma_{\mathscr X,k}\cm T_{[S_X]}^\top\right).
		\]
		By the orthogonality of the projection operator, we have $\langle\cm N^{(t)},\cm T\rangle=\tr((\cm N^{(t)})_{[S_X]}\cm T_{[S_X]}^\top)=0$. Then, for any scalar $c$,
		\[
			\sum_{k=1}^K\frac{n_k}{n}\mathbb E\left[\left\langle\langle \cm N^{(t)},\cm X_{k,i}\rangle,\langle \cm T,\cm X_{k,i}\rangle\right\rangle\right]
			= \sum_{k=1}^K\frac{n_k}{n}\tr\left((\cm N^{(t)})_{[S_X]}(\bbm\Sigma_{\mathscr X,k}-c\bm I)\cm T_{[S_X]}^\top\right).
		\]
		Taking $c=(\bar\lambda_{\mathscr X}^{\max}+\underline\lambda_{\mathscr X}^{\min})/2$, we have $\|\bbm\Sigma_{\mathscr X,k}-c\bm I\|_{\op} \leq (\bar\lambda_{\mathscr X}^{\max}-\underline\lambda_{\mathscr X}^{\min})/2$ for $k\in[K]$. Thus, since $\|\cm T\|_\F\leq1$,
		\begin{align}\label{eq:population-part-term-v}
			\left|\sum_{k=1}^K\frac{n_k}{n}\mathbb E\left[\left\langle\langle \cm N^{(t)},\cm X_{k,i}\rangle,\langle \cm T,\cm X_{k,i}\rangle\right\rangle\right]\right| \leq \frac{\bar\lambda_{\mathscr X}^{\max}-\underline\lambda_{\mathscr X}^{\min}}{2}\|\cm N^{(t)}\|_\F.
		\end{align}

		For the empirical fluctuation part, Lemma~\ref{lem:normal-projection-tucker-rank-2r} gives $\tucrank(\mathcal P_{\mathcal T_{\bbm r}^{\perp}(\boldsymbol{\mathscr A}_{0}^{(t)})}(\cm A_0^{(t)}-\cm A_0^*))\leq 2\bbm r$, and every unit-Frobenius element of $\mathcal T_{\bbm r}(\boldsymbol{\mathscr A}_{0}^{(t)})$ belongs to $\mathbb S_{2\bbm r}$. Hence, on the event $\mathcal E_{\mathscr A,4}^{\mathrm{fed}}$,
		\begin{align}\label{eq:empirical-part-term-v}
			&\sup_{\substack{\boldsymbol{\mathscr T}\in\mathcal T_{\bbm r}(\boldsymbol{\mathscr A}_{0}^{(t)})\\ \|\boldsymbol{\mathscr T}\|_\F\leq1}}\left|\sum_{k=1}^K\frac{n_k}{n}\left[\frac{1}{n_k}\sum_{i=1}^{n_k}\left\langle\langle \cm N^{(t)},\cm X_{k,i}\rangle,\langle\cm T,\cm X_{k,i}\rangle\right\rangle - \mathbb E\left[\left\langle\langle \cm N^{(t)},\cm X_{k,i}\rangle,\langle\cm T,\cm X_{k,i}\rangle\right\rangle\right]\right]\right| \leq \delta_{2\bbm r}^{\mathrm{fed}}\|\cm N^{(t)}\|_\F.
		\end{align}
		Combining \eqref{eq:population-part-term-v}, \eqref{eq:empirical-part-term-v} and Lemma~\ref{lem:normal-component-bound}, we have
		\begin{align}\label{eq:term-V-bound-fed}
			\textbf{Term\ V} &\leq \eta_{\mathscr A}\left(\frac{\bar\lambda_{\mathscr X}^{\max}-\underline\lambda_{\mathscr X}^{\min}}{2} + \delta_{2\bbm r}^{\mathrm{fed}}\right)\|\mathcal P_{\mathcal T_{\bbm r}^{\perp}(\boldsymbol{\mathscr A}_{0}^{(t)})}(\cm A_0^{(t)}-\cm A_0^*)\|_\F \notag\\
			&\leq \eta_{\mathscr A}\left(\frac{\bar\lambda_{\mathscr X}^{\max}-\underline\lambda_{\mathscr X}^{\min}}{2} + \delta_{2\bbm r}^{\mathrm{fed}}\right)\frac{2(d+m)}{\mu_{\mathscr A}^{\min}}\|\cm A_0^{(t)}-\cm A_0^*\|_\F^2\notag \\
			&= \eta_{\mathscr A}(\bar\lambda_{\mathscr X}^{\max}-\underline\lambda_{\mathscr X}^{\min}+2\delta_{2\bbm r}^{\mathrm{fed}})\frac{d+m}{\mu_{\mathscr A}^{\min}}\|\cm A_0^{(t)}-\cm A_0^*\|_\F^2.
		\end{align}

		For \textbf{Term VI}, on the event $\mathcal E_{\mathscr A,6}^{\mathrm{fed}}$, we have
		\begin{align}\label{eq:term-VI-bound-fed}
			\textbf{Term\ VI} = \eta_{\mathscr A}\left\|\mathcal P_{\mathcal T_{\bbm r}(\boldsymbol{\mathscr A}_0^{(t)})}\left(\sum_{k=1}^K \frac{n_k}{n}\cm W_{\mathscr A,k}^{(t)}\right)\right\|_\F \lesssim \eta_{\mathscr A}\frac{T_g\sqrt{\log(T_g/\delta)}}{\varepsilon n}\sqrt{K(df_{\bbm r}+\log T_g)}\max_{0\leq t<T_g}\max_{k\in[K]}\tau_{\mathscr E,k}^{(t)}\tau_{\mathscr X,k}.
		\end{align}

		Combining \eqref{eq:term-I-bound-fed}, \eqref{eq:term-II-bound-fed}, \eqref{eq:term-III-bound-fed}, \eqref{eq:term-IV-bound-fed}, \eqref{eq:term-IV-bound-fed} and \eqref{eq:term-VI-bound-fed}, and using $\delta_{2\bbm r}^{\mathrm{fed}}\leq\bar\delta_{\mathscr A}$ given the sample size requirement in Theorem \ref{thm:federated_representation_error}, we have
		\begin{align}\label{eq:federated-one-step-recursion}
			\|\cm A_0^{(t+1)}-\cm A_0^*\|_\F \leq &\left\{(\sqrt{d+m}+1)\rho_{\mathscr A}^{\mathrm{fed}} + \frac{\Gamma_{\mathscr A}^{\mathrm{fed}}}{\mu_{\mathscr A}^{\min}}\|\cm A_0^{(t)}-\cm A_0^*\|_\F\right\}\|\cm A_0^{(t)}-\cm A_0^*\|_\F+(\sqrt{d+m}+1)\sum_{\ell=1}^4\mathsf{Rem}_{\mathscr A}^{\mathrm{fed},(\ell)},
		\end{align}
		where
		\[
		\begin{aligned}
			\mathsf{Rem}_{\mathscr A}^{\mathrm{fed},(1)} &= \eta_{\mathscr A}\sqrt{r_{\bbm q}+r_{\bbm p}}\bar\lambda_{\mathscr X}^{\max}\zeta,\\
			\mathsf{Rem}_{\mathscr A}^{\mathrm{fed},(2)} &= C\eta_{\mathscr A}\bar\sigma_{\mathscr X}^2\bar\lambda_{\mathscr X}^{\max}\bar h_{\mathscr B}\sqrt{\frac{df_{\bbm r}}{n}},\\
			\mathsf{Rem}_{\mathscr A}^{\mathrm{fed},(3)} &= C\eta_{\mathscr A}\bar\sigma_{\mathscr X}\bar\sigma_{\mathscr E}\sqrt{\bar\lambda_{\mathscr X}^{\max}\bar\lambda_{\mathscr E}^{\max}}\sqrt{\frac{df_{\bbm r}}{n}},\\
			\mathsf{Rem}_{\mathscr A}^{\mathrm{fed},(4)} &= C\eta_{\mathscr A} \frac{T_g\sqrt{\log(T_g/\delta)}}{\varepsilon n}\sqrt{K(df_{\bbm r}+\log T_g)}\max_{0\leq t<T_g}\max_{k\in[K]}\tau_{\mathscr E,k}^{(t)}\tau_{\mathscr X,k}.
		\end{aligned}
		\]

		\noindent
		\textbf{\emph{Step 5: privacy guarantee, induction, and final error bound.}} 
		We first verify the privacy guarantee. By \eqref{eq:truncated-gradient-sensitivity-bound}, the sensitivity of the truncated local gradient $\cm G_{\mathscr A,k}^{\vee}(\cm A_0^{(t)};\mathsf D_k)$ at iteration $t$ is deterministically bounded by $\Delta_{\mathscr A,k}^{\vee,(t)} \leq 2\tau_{\mathscr E,k}^{(t)}\tau_{\mathscr X,k}/n_k$ for every pair of neighboring local datasets. Given the Gaussian perturbation scale
		\begin{align}\label{eq:per-round-gaussian-noise-scale}
			\sigma_k^{(t)} = \frac{2\tau_{\mathscr E,k}^{(t)}\tau_{\mathscr X,k}}{n_k}\cdot \frac{T_g\sqrt{2\log(1.25T_g/\delta)}}{\varepsilon},
		\end{align}
		the Gaussian mechanism, e.g., \citet[Theorem~3.22]{dwork2014algorithmic}, ensures that the released message $\cm G_{\mathscr A,k}^{\vee}(\cm A_0^{(t)};\mathsf D_k)+\cm W_{\mathscr A,k}^{(t)}$ is $(\varepsilon/T_g,\delta/T_g)$-DP for client $k$ at iteration $t$. The subsequent server-side aggregation, tangent-space projection, and update are post-processing operations that do not access the raw local data and therefore incur no additional privacy loss. By the basic composition theorem, e.g., \citet[Theorem~3.1.6]{dwork2014algorithmic}, the sequence of messages released by client $k$ over $T_g$ iterations is $(\varepsilon,\delta)$-DP. Thus, the federated Stage-I procedure satisfies $(\varepsilon,\delta)$-DP for each client.

		We now carry out the induction using the recursion in \eqref{eq:federated-one-step-recursion}. Suppose that the induction hypothesis $\|\cm A_0^{(t)}-\cm A_0^*\|_\F\leq R_{\mathscr A}$ holds, with $R_{\mathscr A}=\mu_{\mathscr A}^{\min}/(4\Gamma_{\mathscr A}^{\mathrm{fed}})$. Recall that $c_{d,m}=1/\{4(\sqrt{d+m}+1)\}$ and the step size $\eta_{\mathscr A}\in\mathcal I_{\mathscr A}^{\mathrm{fed}}=[\underline\eta_{\mathscr A}^{\mathrm{fed}},\bar\eta_{\mathscr A}^{\mathrm{fed}}]$, where
		\[
			\underline\eta_{\mathscr A}^{\mathrm{fed}} = \frac{1-c_{d,m}}{\underline\lambda_{\mathscr X}^{\min}-\bar\delta_{\mathscr A}} = \frac{4(\sqrt{d+m}+1)-1}{4(\sqrt{d+m}+1)(\underline\lambda_{\mathscr X}^{\min}-\bar\delta_{\mathscr A})} 
			\ \text{and} \ 
			\bar\eta_{\mathscr A}^{\mathrm{fed}}= \frac{1+c_{d,m}}{\bar\lambda_{\mathscr X}^{\max}+\bar\delta_{\mathscr A}} = \frac{4(\sqrt{d+m}+1)+1}{4(\sqrt{d+m}+1)(\bar\lambda_{\mathscr X}^{\max}+\bar\delta_{\mathscr A})}.
		\]
		The nonemptiness of $\mathcal I_{\mathscr A}^{\mathrm{fed}}$ follows from $2\bar\delta_{\mathscr A} \leq c_{d,m}(\underline\lambda_{\mathscr X}^{\min}+\bar\lambda_{\mathscr X}^{\max}) - (\bar\lambda_{\mathscr X}^{\max}-\underline\lambda_{\mathscr X}^{\min})$, which is guaranteed by the definition of $\bar\delta_{\mathscr A}$ and the condition $\bar\kappa_{\mathscr X}<(1+c_{d,m})/(1-c_{d,m})$. Let $\bar a=\underline\lambda_{\mathscr X}^{\min}-\delta_{2\bbm r}^{\mathrm{fed}}$ and $\bar b=\bar\lambda_{\mathscr X}^{\max}+\delta_{2\bbm r}^{\mathrm{fed}}$. Since $\delta_{2\bbm r}^{\mathrm{fed}}\leq\bar\delta_{\mathscr A}$, we have $\underline\lambda_{\mathscr X}^{\min}-\delta_{2\bbm r}^{\mathrm{fed}}\geq\underline\lambda_{\mathscr X}^{\min}-\bar\delta_{\mathscr A}$ and $\bar\lambda_{\mathscr X}^{\max}+\delta_{2\bbm r}^{\mathrm{fed}}\leq\bar\lambda_{\mathscr X}^{\max}+\bar\delta_{\mathscr A}$. Thus, for every $\eta_{\mathscr A}\in\mathcal I_{\mathscr A}^{\mathrm{fed}}$,
		\[
			\frac{1-c_{d,m}}{\bar a} \leq \frac{1-c_{d,m}}{\underline\lambda_{\mathscr X}^{\min}-\bar\delta_{\mathscr A}} \leq \eta_{\mathscr A} \leq \frac{1+c_{d,m}}{\bar\lambda_{\mathscr X}^{\max}+\bar\delta_{\mathscr A}} \leq \frac{1+c_{d,m}}{\bar b}.
		\]
		Therefore, $\eta_{\mathscr A}\bar a\geq1-c_{d,m}$. Moreover, using $\bar a\leq\bar b$, we also have $\eta_{\mathscr A}\bar a\leq\eta_{\mathscr A}\bar b\leq1+c_{d,m}$. Therefore, $1-c_{d,m}\leq\eta_{\mathscr A}\bar a\leq1+c_{d,m}$, which implies $|1-\eta_{\mathscr A}\bar a|\leq c_{d,m}$. Similarly, the upper bound on $\eta_{\mathscr A}$ gives $\eta_{\mathscr A}\bar b\leq1+c_{d,m}$, while $\bar b\geq\bar a$ and the preceding lower bound imply $\eta_{\mathscr A}\bar b\geq\eta_{\mathscr A}\bar a\geq1-c_{d,m}$. Thus, $1-c_{d,m}\leq\eta_{\mathscr A}\bar b\leq1+c_{d,m}$, and hence $|1-\eta_{\mathscr A}\bar b|\leq c_{d,m}$. Note that $\rho_{\mathscr A}^{\mathrm{fed}}=\max\{|1-\eta_{\mathscr A}\bar a|,|1-\eta_{\mathscr A}\bar b|\}$. Then, combining the two bounds gives
		\begin{align}\label{eq:rho-fed-upper-bound}
			\rho_{\mathscr A}^{\mathrm{fed}} \leq c_{d,m} = \frac{1}{4(\sqrt{d+m}+1)}.
		\end{align}
		Consequently, $(\sqrt{d+m}+1)\rho_{\mathscr A}^{\mathrm{fed}}\leq1/4$. Moreover, by the induction hypothesis $\|\cm A_0^{(t)}-\cm A_0^*\|_\F \leq R_{\mathscr A}$ and $R_{\mathscr A}=\mu_{\mathscr A}^{\min}/(4\Gamma_{\mathscr A}^{\mathrm{fed}})$, we have
		\begin{align}\label{eq:gamma-mu-A-upper-bound}
			\frac{\Gamma_{\mathscr A}^{\mathrm{fed}}}{\mu_{\mathscr A}^{\min}}\|\cm A_0^{(t)}-\cm A_0^*\|_\F \leq \frac{\Gamma_{\mathscr A}^{\mathrm{fed}}}{\mu_{\mathscr A}^{\min}}R_{\mathscr A} = \frac14.
		\end{align}
		Substituting \eqref{eq:rho-fed-upper-bound} and \eqref{eq:gamma-mu-A-upper-bound} into \eqref{eq:federated-one-step-recursion} yields
		\begin{align}\label{eq:federated-clean-recursion}
		\|\cm A_0^{(t+1)}-\cm A_0^*\|_\F \leq \frac12\|\cm A_0^{(t)}-\cm A_0^*\|_\F + (\sqrt{d+m}+1)\sum_{\ell=1}^{4}\mathsf{Rem}_{\mathscr A}^{\mathrm{fed},(\ell)}.
		\end{align}
		The weak-identifiability condition $\eta_{\mathscr A}\sqrt{r_{\bbm q}+r_{\bbm p}}\,\bar\lambda_{\mathscr X}^{\max}\zeta\leq c_{d,m}R_{\mathscr A}$ implies that
		\begin{align}\label{eq:fed-rem1-small}
			(\sqrt{d+m}+1)\mathsf{Rem}_{\mathscr A}^{\mathrm{fed},(1)} \leq \frac14 R_{\mathscr A}.
		\end{align}
		We next verify that $\mathsf{Rem}_{\mathscr A}^{\mathrm{fed},(2)}$, $\mathsf{Rem}_{\mathscr A}^{\mathrm{fed},(3)}$, and $\mathsf{Rem}_{\mathscr A}^{\mathrm{fed},(4)}$ are also small under the sample size requirement in Theorem~\ref{thm:federated_representation_error}. Using the upper bound $\eta_{\mathscr A}\leq\bar\eta_{\mathscr A}^{\mathrm{fed}}$ and $R_{\mathscr A}=\mu_{\mathscr A}^{\min}/(4\Gamma_{\mathscr A}^{\mathrm{fed}})$, we have, after absorbing constants depending only on $d$ and $m$,
		\[
			\eta_{\mathscr A} \leq \frac{C_{d,m}}{\bar\lambda_{\mathscr X}^{\max}+\bar\delta_{\mathscr A}}, 
			\quad
			\frac{\bar\lambda_{\mathscr X}^{\max}-\underline\lambda_{\mathscr X}^{\min}+2\bar\delta_{\mathscr A}}{\bar\lambda_{\mathscr X}^{\max}+\bar\delta_{\mathscr A}}\leq 2,
		\]
		and hence $\eta_{\mathscr A}/R_{\mathscr A}\lesssim1/(\mu_{\mathscr A}^{\min}\bar\lambda_{\mathscr X}^{\max})$. Since the theorem's sample size condition implies
		\[
			n \gtrsim \frac{\bar\sigma_{\mathscr X}^{4}\bar h_{\mathscr B}^{2} \vee \bar\sigma_{\mathscr X}^{2}\bar\sigma_{\mathscr E}^{2}\bar\lambda_{\mathscr E}^{\max}/\bar\lambda_{\mathscr X}^{\max}}{(\mu_{\mathscr A}^{\min})^2}df_{\bbm r}.
		\]
		Consequently, we have
		\begin{align}\label{eq:fed-rem23-small}
			(\sqrt{d+m}+1)\mathsf{Rem}_{\mathscr A}^{\mathrm{fed},(2)} \leq \frac{1}{12}R_{\mathscr A} 
			\ \text{and} \ 
			(\sqrt{d+m}+1)\mathsf{Rem}_{\mathscr A}^{\mathrm{fed},(3)} \leq \frac{1}{12}R_{\mathscr A}.
		\end{align}
		Moreover, the same sample size condition in Theorem~\ref{thm:federated_representation_error}:
		\[
			n \gtrsim \frac{\Gamma_{\mathscr A}^{\mathrm{fed}}}{\mu_{\mathscr A}^{\min}\bar\lambda_{\mathscr X}^{\max}}\frac{T_g\sqrt{\log(T_g/\delta)}}{\varepsilon}\sqrt{K(df_{\bbm r}+\log T_g)}\max_{0\leq t<T_g}\max_{k\in[K]}\tau_{\mathscr E,k}^{(t)}\tau_{\mathscr X,k}
		\]
		implies $(\sqrt{d+m}+1)\mathsf{Rem}_{\mathscr A}^{\mathrm{fed},(4)} \leq R_{\mathscr A}/12$. Together with \eqref{eq:fed-rem1-small} and \eqref{eq:fed-rem23-small}, we have
		\[
			(\sqrt{d+m}+1)\sum_{\ell=1}^{4}\mathsf{Rem}_{\mathscr A}^{\mathrm{fed},(\ell)} \leq \frac14R_{\mathscr A}+\frac{1}{12}R_{\mathscr A}+\frac{1}{12}R_{\mathscr A}+\frac{1}{12}R_{\mathscr A} = \frac12 R_{\mathscr A}.
		\]
		Combining \eqref{eq:federated-clean-recursion}, the preceding remainder bound, and the induction hypothesis gives $\|\cm A_0^{(t+1)}-\cm A_0^*\|_\F\leq R_{\mathscr A}$. This concludes the induction step. By the initialization condition in the theorem, $\|\cm A_0^{(0)}-\cm A_0^*\|_\F\leq R_{\mathscr A}$. Therefore, by induction, all iterates remain in the local neighborhood $\|\cm A_0^{(t)}-\cm A_0^*\|_\F\leq R_{\mathscr A}$ for $0\leq t\leq T_g$.

		Iterating the recursion \eqref{eq:federated-clean-recursion}, we have
		\[
			\|\widehat{\cm A}_0-\cm A_0^*\|_\F = \|\cm A_0^{(T_g)}-\cm A_0^*\|_\F \leq 2^{-T_g}\|\cm A_0^{(0)}-\cm A_0^*\|_\F + 2(\sqrt{d+m}+1)\sum_{\ell=1}^{4}\mathsf{Rem}_{\mathscr A}^{\mathrm{fed},(\ell)}.
		\]
		Equivalently, substituting the four remainder terms gives
		\begin{align*}
			\|\widehat{\cm A}_0-\cm A_0^*\|_\F &\leq 2^{-T_g}\|\cm A_0^{(0)}-\cm A_0^*\|_\F + 2(\sqrt{d+m}+1)\eta_{\mathscr A}\sqrt{r_{\bbm q}+r_{\bbm p}}\,\bar\lambda_{\mathscr X}^{\max}\zeta \\
			&+ C(\sqrt{d+m}+1)\eta_{\mathscr A}\bar\sigma_{\mathscr X}^2\bar\lambda_{\mathscr X}^{\max}\bar h_{\mathscr B}\sqrt{\frac{df_{\bbm r}}{n}} + C(\sqrt{d+m}+1)\eta_{\mathscr A}\bar\sigma_{\mathscr X}\bar\sigma_{\mathscr E}\sqrt{\bar\lambda_{\mathscr X}^{\max}\bar\lambda_{\mathscr E}^{\max}}\sqrt{\frac{df_{\bbm r}}{n}} \\
			&+ C(\sqrt{d+m}+1)\eta_{\mathscr A}\frac{T_g\sqrt{\log(T_g/\delta)}}{\varepsilon n}\sqrt{K(df_{\bbm r}+\log T_g)}\max_{0\leq t<T_g}\max_{k\in[K]}\tau_{\mathscr E,k}^{(t)}\tau_{\mathscr X,k}.
		\end{align*}
		Taking $T_g\asymp\log n$ makes $2^{-T_g}\|\cm A_0^{(0)}-\cm A_0^*\|_\F$ negligible. Moreover, since
		\[
			\eta_{\mathscr A} \leq \frac{1+c_{d,m}}{\bar\lambda_{\mathscr X}^{\max}+\bar\delta_{\mathscr A}} \leq \frac{1+c_{d,m}}{\bar\lambda_{\mathscr X}^{\max}},
		\]
		it follows that, on the event $\mathcal E_{\mathscr A}^{\mathrm{fed}}$,
		\begin{align*}
			\|\widehat{\cm A}_0-\cm A_0^*\|_\F \lesssim & \left\{\bar\sigma_{\mathscr X}^2\bar h_{\mathscr B} \vee \bar\sigma_{\mathscr X}\bar\sigma_{\mathscr E}\left(\frac{\bar\lambda_{\mathscr E}^{\max}}{\bar\lambda_{\mathscr X}^{\max}}\right)^{1/2}\right\}\sqrt{\frac{df_{\bbm r}}{n}} + \sqrt{r_{\bbm q}+r_{\bbm p}}\zeta \\
			&\quad + \frac{\log n\sqrt{\log(\log n/\delta)}\sqrt{K(df_{\bbm r}+\log\log n)}\max_{0\leq t<T_g}\max_{k\in[K]}\tau_{\mathscr E,k}^{(t)}\tau_{\mathscr X,k}}{\bar\lambda_{\mathscr X}^{\max}\varepsilon n}.
		\end{align*}
		Since $T_g\asymp\log n$ and $df_{2\bbm r}\asymp df_{\bbm r}$, the event $\mathcal E_{\mathscr A}^{\mathrm{fed}}$ satisfies $\mathbb P(\mathcal E_{\mathscr A}^{\mathrm{fed}}) \geq 1 - C\exp(-Cdf_{2\bbm r}) - C\exp(-C(df_{\bbm r}+\log \log n)) - C\exp(-C\log n)$. The claimed high probability bound follows, completing the proof.
	\end{proof}

	\begin{proof}[\textbf{Proof of Theorem~\ref{thm:personalized_error}}]
		We prove the result for a fixed client $k$. Let $\overline{\cm A}_0$ denote a generic Stage-I estimator, representing either the federated estimator $\widehat{\cm A}_0$ or the single-client estimator $\widehat{\cm A}_{0,k}^{\mathrm{loc}}$. Given $\overline{\cm A}_0$, let $\overline{\cm B}_k^{\mathrm{opt}}(\overline{\cm A}_0)$ denote an exact minimizer of the Stage-II penalized problem in \eqref{eq:stage2-opt}, with $\widehat{\cm A}_0$ therein replaced by $\overline{\cm A}_0$, and let $\overline{\cm B}_k(\overline{\cm A}_0)$ denote the corresponding estimator returned by the FISTA algorithm after $T_l^{(k)}$ iterations. For notational simplicity, we suppress their dependence on $\overline{\cm A}_0$ throughout the proof and write $\overline{\cm B}_k^{\mathrm{opt}}$ and $\overline{\cm B}_k$, respectively. Define $\cm H_k=\overline{\cm B}_k^{\mathrm{opt}}-\cm B_k^*$. In Part~I, we establish the statistical error bound under weak sparsity $\nu\in(0,1)$ and then derive the sharper exact-sparse rate for $\nu=0$. In Part~II, we show that the FISTA optimization error $\|\overline{\cm B}_k-\overline{\cm B}_k^{\mathrm{opt}}\|_{\F}$ is negligible under the stated choice of $T_l^{(k)}$.

		\noindent
		\textbf{\emph{Part I: statistical error.}}
		We first treat the weakly sparse regime $\nu\in(0,1)$. The strictly sparse regime $\nu=0$ is handled afterward by replacing the entrywise deviation bound with the refined sparse deviation bound in Lemma~\ref{lem:deviation-condition-sparse}. By the optimality of $\overline{\cm B}_k^{\mathrm{opt}}$, and using the model $\cm Y_{k,i}=\langle\cm A_0^*+\cm B_k^*,\cm X_{k,i}\rangle+\cm E_{k,i}$, we have
		\[
			\frac{1}{2n_k}\sum_{i=1}^{n_k}\|\langle\cm H_k,\cm X_{k,i}\rangle\|_\F^2 \leq \omega_k\bigl(\|\cm B_k^*\|_1-\|\overline{\cm B}_k^{\mathrm{opt}}\|_1\bigr) + \frac{1}{n_k}\sum_{i=1}^{n_k}\left\langle\langle\cm H_k,\cm X_{k,i}\rangle,\cm E_{k,i}\right\rangle + \frac{1}{n_k}\sum_{i=1}^{n_k}\left\langle\langle\cm H_k,\cm X_{k,i}\rangle,\langle\overline{\cm A}_0-\cm A_0^*,\cm X_{k,i}\rangle\right\rangle.
		\]
		For the first empirical process term, H\"older's inequality gives
		\[
			\frac{1}{n_k}\sum_{i=1}^{n_k}\left\langle\langle\cm H_k,\cm X_{k,i}\rangle,\cm E_{k,i}\right\rangle \leq \|\cm H_k\|_1\left\|\frac{1}{n_k}\sum_{i=1}^{n_k}\cm E_{k,i}\circ\cm X_{k,i}\right\|_\infty.
		\]

		For the weakly sparse regime $\nu\in(0,1)$, Assumptions~\ref{assump:subg-design}--\ref{assump:subg-noise} and Lemma~\ref{lem:deviation-condition-infty} imply that, for the choice $\omega_k \asymp \vartheta_k\sqrt{\log(pq)/n_k}$ with $\vartheta_k = \sigma_{\mathscr X,k}\sigma_{\mathscr E,k}(\lambda_{\mathscr X,k}^{\max}\lambda_{\mathscr E,k}^{\max})^{1/2}$, with probability at least $1-C\exp(-C\log(pq))$,
		$\|n_k^{-1}\sum_{i=1}^{n_k}\cm E_{k,i}\circ\cm X_{k,i}\|_\infty \leq \omega_k/2$. Consequently, on this event,
		\[
			\frac{1}{n_k}\sum_{i=1}^{n_k}\left\langle\langle\cm H_k,\cm X_{k,i}\rangle,\cm E_{k,i}\right\rangle \leq \frac{1}{2}\omega_k\|\cm H_k\|_1.
		\]
		For the second inner-product term, let $\bm S_{X,k}=n_k^{-1}\sum_{i=1}^{n_k}\vect(\cm X_{k,i})\vect^\top(\cm X_{k,i})$. On the covariance event used in Lemma~\ref{lem:rsc-condition}, $\|\bm S_{X,k}-\bbm\Sigma_{\mathscr X,k}\|_{\op}\leq\lambda_{\mathscr X,k}^{\min}/2$, and hence $\|\bm S_{X,k}\|_{\op}\leq\lambda_{\mathscr X,k}^{\max}+\lambda_{\mathscr X,k}^{\min}/2\leq2\lambda_{\mathscr X,k}^{\max}$. Therefore,
		\[
			\left|\frac{1}{n_k}\sum_{i=1}^{n_k}\left\langle\langle\cm H_k,\cm X_{k,i}\rangle,\langle\overline{\cm A}_0-\cm A_0^*,\cm X_{k,i}\rangle\right\rangle\right|=\left|\tr\left((\cm H_k)_{[S_X]}\bm S_{X,k}(\overline{\cm A}_0-\cm A_0^*)_{[S_X]}^\top\right)\right|\leq 2\lambda_{\mathscr X,k}^{\max}\|\overline{\cm A}_0-\cm A_0^*\|_\F\|\cm H_k\|_\F.
		\]
		Combining the preceding displays gives
		\begin{align}\label{eq:upper-induction-weakly-sparse}
			\frac{1}{2n_k}\sum_{i=1}^{n_k}\|\langle\cm H_k,\cm X_{k,i}\rangle\|_\F^2 \leq \omega_k\bigl(\|\cm B_k^*\|_1-\|\overline{\cm B}_k^{\mathrm{opt}}\|_1\bigr) + \frac{1}{2}\omega_k\|\cm H_k\|_1 + 2\lambda_{\mathscr X,k}^{\max}\|\overline{\cm A}_0-\cm A_0^*\|_\F\|\cm H_k\|_\F.
		\end{align}

		We first define the coordinate subspace induced by the large entries of $\cm B_k^*$. Let $\mathcal I = [q_1]\times\cdots\times[q_m]\times[p_1]\times\cdots\times[p_d]$ be the index set of the ambient tensor space. For a threshold $\kappa>0$, define $\mathcal J_\kappa=\left\{\boldsymbol i\in\mathcal I:\left|(\cm B_k^*)_{\boldsymbol i}\right|\geq\kappa\right\}$. The coordinate subspace supported on $\mathcal J_\kappa$ is
		\begin{align}\label{eq:one-stage-S-space-cm-B}
			\mathcal S_\kappa = \left\{\cm S\in\mathbb R^{q_1\times\cdots\times q_m\times p_1\times\cdots\times p_d}:\cm S_{\boldsymbol i}=0\ \text{for all }\boldsymbol i\notin\mathcal J_\kappa\right\}.
		\end{align}
		Its closed orthogonal complement is 
		\begin{align}\label{eq:one-stage-S-complement-space-cm-B}
			\overline{\mathcal S}_\kappa^\perp = \left\{\cm S\in \mathbb R^{q_1\times\cdots\times q_m\times p_1\times\cdots\times p_d}:\cm S_{\boldsymbol i}=0 \ \text{for all }\boldsymbol i\in\mathcal J_\kappa\right\}.
		\end{align}
		For any tensor $\cm T$, let $\cm T_{\mathcal S_\kappa}$ and $\cm T_{\mathcal S_\kappa^\perp}$ denote its coordinate projections onto $\mathcal S_\kappa$ and $\mathcal S_\kappa^\perp$, respectively. Using decomposability of the $\ell_1$-norm with respect to $(\mathcal S_\kappa,\overline{\mathcal S}_\kappa^\perp)$, we have
		\begin{align}\label{eq:l1-decomp-stage2-tensor}
			\|\overline{\cm B}_k^{\mathrm{opt}}\|_1-\|\cm B_k^*\|_1 &= \|\cm H_k+\cm B_k^*\|_1-\|\cm B_k^*\|_1 \notag\\
			&= \|(\cm H_k)_{\mathcal S_\kappa}+(\cm H_k)_{\overline{\mathcal S}_\kappa^\perp} + (\cm B_k^*)_{\mathcal S_\kappa}+(\cm B_k^*)_{\overline{\mathcal S}_\kappa^\perp}\|_1 - \|(\cm B_k^*)_{\mathcal S_\kappa}+(\cm B_k^*)_{\overline{\mathcal S}_\kappa^\perp}\|_1 \notag\\
			&\geq \|(\cm H_k)_{\overline{\mathcal S}_\kappa^\perp}+(\cm B_k^*)_{\mathcal S_\kappa}\|_1 - \|(\cm H_k)_{\mathcal S_\kappa}+(\cm B_k^*)_{\overline{\mathcal S}_\kappa^\perp}\|_1 - \|(\cm B_k^*)_{\mathcal S_\kappa}\|_1 - \|(\cm B_k^*)_{\overline{\mathcal S}_\kappa^\perp}\|_1\notag \\
			&= \|(\cm H_k)_{\overline{\mathcal S}_\kappa^\perp}\|_1 - \|(\cm H_k)_{\mathcal S_\kappa}\|_1 - 2\|(\cm B_k^*)_{\overline{\mathcal S}_\kappa^\perp}\|_1.
		\end{align}
		Moreover, $\|\cm H_k\|_1=\|(\cm H_k)_{\mathcal S_\kappa}\|_1+\|(\cm H_k)_{\overline{\mathcal S}_\kappa^\perp}\|_1$. Substituting this identity and \eqref{eq:l1-decomp-stage2-tensor} into the preceding basic inequality gives
		\[
			0 \leq \frac{1}{2n_k}\sum_{i=1}^{n_k}\|\langle\cm H_k,\cm X_{k,i}\rangle\|_\F^2 \leq \frac{3}{2}\omega_k\|(\cm H_k)_{\mathcal S_\kappa}\|_1 + 2\omega_k\|(\cm B_k^*)_{\overline{\mathcal S}_\kappa^\perp}\|_1 - \frac{1}{2}\omega_k\|(\cm H_k)_{\overline{\mathcal S}_\kappa^\perp}\|_1 + 2\lambda_{\mathscr X,k}^{\max}\|\overline{\cm A}_0-\cm A_0^*\|_\F\|\cm H_k\|_\F.
		\]
		Since the quadratic loss term is nonnegative, moving $-\omega_k\|(\cm H_k)_{\overline{\mathcal S}_\kappa^\perp}\|_1/2$ to the left-hand side and multiplying both sides by $2/\omega_k$ yields the cone-type inequality
		\begin{align}\label{eq:cone-stage2-tensor}
			\|(\cm H_k)_{\overline{\mathcal S}_\kappa^\perp}\|_1 \leq 3\|(\cm H_k)_{\mathcal S_\kappa}\|_1 + 4\|(\cm B_k^*)_{\overline{\mathcal S}_\kappa^\perp}\|_1 + \frac{4\lambda_{\mathscr X,k}^{\max}}{\omega_k}\|\overline{\cm A}_0-\cm A_0^*\|_\F\|\cm H_k\|_\F.
		\end{align}
		Consequently, plugging \eqref{eq:cone-stage2-tensor} back into the preceding basic inequality gives
		\begin{align*}
			\frac{1}{2n_k}\sum_{i=1}^{n_k}\|\langle\cm H_k,\cm X_{k,i}\rangle\|_\F^2 &\leq \omega_k\bigl(\|\cm B_k^*\|_1-\|\overline{\cm B}_k^{\mathrm{opt}}\|_1\bigr) + \frac{1}{2}\omega_k\|\cm H_k\|_1 + 2\lambda_{\mathscr X,k}^{\max}\|\overline{\cm A}_0-\cm A_0^*\|_\F\|\cm H_k\|_\F \\
			&\leq \frac{3}{2}\omega_k\|\cm H_k\|_1 + 2\lambda_{\mathscr X,k}^{\max}\|\overline{\cm A}_0-\cm A_0^*\|_\F\|\cm H_k\|_\F \\
			&= \frac{3}{2}\omega_k\left(\|(\cm H_k)_{\mathcal S_\kappa}\|_1 + \|(\cm H_k)_{\overline{\mathcal S}_\kappa^\perp}\|_1\right) + 2\lambda_{\mathscr X,k}^{\max}\|\overline{\cm A}_0-\cm A_0^*\|_\F\|\cm H_k\|_\F \\
			&\leq \frac{3}{2}\omega_k\left(\|(\cm H_k)_{\mathcal S_\kappa}\|_1 + 3\|(\cm H_k)_{\mathcal S_\kappa}\|_1 + 4\|(\cm B_k^*)_{\overline{\mathcal S}_\kappa^\perp}\|_1 + \frac{4\lambda_{\mathscr X,k}^{\max}}{\omega_k}\|\overline{\cm A}_0-\cm A_0^*\|_\F\|\cm H_k\|_\F\right) \\
			&\quad+2\lambda_{\mathscr X,k}^{\max}\|\overline{\cm A}_0-\cm A_0^*\|_\F\|\cm H_k\|_\F \\
			&= 6\omega_k\left(\|(\cm H_k)_{\mathcal S_\kappa}\|_1 + \|(\cm B_k^*)_{\overline{\mathcal S}_\kappa^\perp}\|_1\right) + 8\lambda_{\mathscr X,k}^{\max}\|\overline{\cm A}_0-\cm A_0^*\|_\F\|\cm H_k\|_\F.
		\end{align*}

		Under the sample size condition, Lemma~\ref{lem:rsc-condition} gives, with probability at least $1-\exp(-Cp)$,
		\[
			\frac{1}{n_k}\sum_{i=1}^{n_k}\|\langle\cm H_k,\cm X_{k,i}\rangle\|_\F^2 \geq \frac12\lambda_{\mathscr X,k}^{\min}\|\cm H_k\|_\F^2.
		\]
		Combining the preceding two displays yields
		\[
			\lambda_{\mathscr X,k}^{\min}\|\cm H_k\|_\F^2 \lesssim \omega_k\|(\cm H_k)_{\mathcal S_{\kappa}}\|_1 + \omega_k\|(\cm B_k^*)_{\overline{\mathcal S}_\kappa^\perp}\|_1 + \lambda_{\mathscr X,k}^{\max}\|\overline{\cm A}_0-\cm A_0^*\|_\F\|\cm H_k\|_\F.
		\]
		Moreover, since $\cm B_k^*\in\mathbb B_\nu$ with $\mathbb B_\nu = \left\{\cm B \in \mathbb{R}^{q_1\times\cdots\times q_m\times p_1\times\cdots\times p_d}: \mathds 1_{\{\nu=0\}}\|\cm B\|_0/s_0 + \mathds 1_{\{0<\nu<1\}}\|\cm B\|_\nu^\nu/s_\nu\leq 1\right\}$, the set $\mathcal S_\kappa$ satisfies $s_\nu\geq |\mathcal S_\kappa|\kappa^\nu$, and hence $|\mathcal S_\kappa|\leq s_\nu\kappa^{-\nu}$. Therefore, $\|(\cm H_k)_{\mathcal S_\kappa}\|_1 \leq \sqrt{|S_\kappa|}\,\|(\cm H_k)_{\mathcal S_\kappa}\|_\F \leq \sqrt{s_\nu}\kappa^{-\nu/2}\|\cm H_k\|_\F$. For the weak-sparsity tail, by the definition of $\mathbb B_\nu$,
		\[
			\|(\cm B_k^*)_{\overline{\mathcal S}_\kappa^\perp}\|_1 =\sum_{\bbm i\in \overline{\mathcal S}_\kappa^\perp}|(\cm B_k^*)_{\bbm i}| \leq \sum_{\bbm i\in \overline{\mathcal S}_\kappa^\perp}|(\cm B_k^*)_{\bbm i}|^\nu\kappa^{1-\nu} \leq s_\nu\kappa^{1-\nu}.
		\]
		Therefore,
		\begin{align}\label{eq:lam-H-upper}
			\lambda_{\mathscr X,k}^{\min}\|\cm H_k\|_\F^2 \lesssim \omega_k\sqrt{s_\nu}\kappa^{-\nu/2}\|\cm H_k\|_\F + \omega_k s_\nu\kappa^{1-\nu} + \lambda_{\mathscr X,k}^{\max}\|\overline{\cm A}_0-\cm A_0^*\|_\F\|\cm H_k\|_\F.
		\end{align}
		By Young's inequality, we have
		\[
			\omega_k\sqrt{s_\nu}\kappa^{-\nu/2}\|\cm H_k\|_\F \leq \frac{1}{4}\lambda_{\mathscr X,k}^{\min}\|\cm H_k\|_\F^2 + \frac{\omega_k^2s_\nu\kappa^{-\nu}}{\lambda_{\mathscr X,k}^{\min}}, 
			\ \text{and} \ 
			\lambda_{\mathscr X,k}^{\max}\|\overline{\cm A}_0-\cm A_0^*\|_\F\|\cm H_k\|_\F \leq \frac{1}{4}\lambda_{\mathscr X,k}^{\min}\|\cm H_k\|_\F^2 + \frac{(\lambda_{\mathscr X,k}^{\max})^2}{\lambda_{\mathscr X,k}^{\min}}\|\overline{\cm A}_0-\cm A_0^*\|_\F^2.
		\]
		Substituting these two bounds into \eqref{eq:lam-H-upper} gives
		\[
			\|\cm H_k\|_\F^2 \lesssim s_\nu\left(\frac{\omega_k}{\lambda_{\mathscr X,k}^{\min}}\right)^2\kappa^{-\nu} + \frac{\omega_k}{\lambda_{\mathscr X,k}^{\min}}s_\nu\kappa^{1-\nu} + \left(\frac{\lambda_{\mathscr X,k}^{\max}}{\lambda_{\mathscr X,k}^{\min}}\right)^2\|\overline{\cm A}_0-\cm A_0^*\|_\F^2.
		\]
		Choosing $\kappa\asymp\omega_k/\lambda_{\mathscr X,k}^{\min}$ balances the two weak-sparsity terms. Taking square roots of the resulting squared-error bound and using $\sqrt{a+b}\leq\sqrt a+\sqrt b$, and abbreviating $\kappa_{\mathscr X,k}=\lambda_{\mathscr X,k}^{\max}/\lambda_{\mathscr X,k}^{\min}$, we obtain the weakly sparse statistical bound, for $\nu\in(0,1)$,
		\[
			\|\overline{\cm B}_k^{\mathrm{opt}}-\cm B_k^*\|_\F\lesssim \kappa_{\mathscr X,k}\|\overline{\cm A}_0-\cm A_0^*\|_\F + \sqrt{s_\nu}\left(\frac{\omega_k}{\lambda_{\mathscr X,k}^{\min}}\right)^{1-\nu/2}.
		\]

		We next treat the strictly sparse regime $\nu=0$. In this case, we replace the entrywise deviation bound in Lemma~\ref{lem:deviation-condition-infty} by the refined sparse deviation bound in Lemma~\ref{lem:deviation-condition-sparse}. By Lemma~\ref{lem:deviation-condition-sparse}, $|n_k^{-1}\sum_{i=1}^{n_k}\left\langle\langle\cm H_k,\cm X_{k,i}\rangle,\cm E_{k,i}\right\rangle|$is treated in two cases, determined by the relationship between $\|\cm H_k\|_1$ and $\sqrt{s_0}\|\cm H_k\|_\F$. We consider these two cases in turn. First, if $\|\cm H_k\|_1 > \sqrt{s_0}\|\cm H_k\|_\F$, then the same lemma gives $|n_k^{-1}\sum_{i=1}^{n_k}\left\langle\langle\cm H_k,\cm X_{k,i}\rangle,\cm E_{k,i}\right\rangle| \leq \omega_k\|\cm H_k\|_1/2$. In this case, the stochastic term is absorbed by the $\ell_1$-penalty, and the proof follows the same decomposability arguments as in the weakly sparse regime.

		Second, if $\|\cm H_k\|_1 \leq \sqrt{s_0}\|\cm H_k\|_\F$, then Lemma~\ref{lem:deviation-condition-sparse} gives $|n_k^{-1}\sum_{i=1}^{n_k}\left\langle\langle\cm H_k,\cm X_{k,i}\rangle,\cm E_{k,i}\right\rangle| \leq C_0\sqrt{s_0}\omega_k\|\cm H_k\|_\F$. Using the same basic inequality as in \eqref{eq:upper-induction-weakly-sparse}, but replacing the stochastic term by the preceding refined bound, we have
		\begin{align}\label{eq:lower-upper-H-X-quadratic}
			\frac{1}{2n_k}\sum_{i=1}^{n_k}\|\langle\cm H_k,\cm X_{k,i}\rangle\|_\F^2 \leq \omega_k\bigl(\|\cm B_k^*\|_1-\|\overline{\cm B}_k^{\mathrm{opt}}\|_1\bigr) + C_0\sqrt{s_0}\omega_k\|\cm H_k\|_\F + 2\lambda_{\mathscr X,k}^{\max}\|\overline{\cm A}_0-\cm A_0^*\|_\F\|\cm H_k\|_\F.
		\end{align}
		For the $\ell_1$-norm term in the strictly sparse regime, redefine $\mathcal S=\operatorname{supp}(\cm B_k^*)$, so that $\|(\cm B_k^*)_{\overline{\mathcal S}^\perp}\|_1=0$ and $|\mathcal S|\leq s_0$. By the same decomposability arguments as in \eqref{eq:l1-decomp-stage2-tensor}, we have $\|\overline{\cm B}_k^{\mathrm{opt}}\|_1-\|\cm B_k^*\|_1 \geq \|(\cm H_k)_{\overline{\mathcal S}^\perp}\|_1 - \|(\cm H_k)_{\mathcal S}\|_1$. Consequently, substituting this identity into \eqref{eq:lower-upper-H-X-quadratic} and rearranging the terms yields the cone-type inequality
		\begin{align}\label{eq:cone-stage2-tensor-strictly-sparse}
			\|(\cm H_k)_{\overline{\mathcal S}^\perp}\|_1 \leq \|(\cm H_k)_{\mathcal S}\|_1 + C_0\sqrt{s_0}\|\cm H_k\|_\F + \frac{2\lambda_{\mathscr X,k}^{\max}}{\omega_k}\|\overline{\cm A}_0-\cm A_0^*\|_\F\|\cm H_k\|_\F.
		\end{align}
		Consequently, plugging \eqref{eq:cone-stage2-tensor-strictly-sparse} back into \eqref{eq:lower-upper-H-X-quadratic} gives
		\begin{align*}
		\frac{1}{2n_k}\sum_{i=1}^{n_k}\|\langle\cm H_k,\cm X_{k,i}\rangle\|_\F^2 &\leq \omega_k\bigl(\|\cm B_k^*\|_1-\|\overline{\cm B}_k^{\mathrm{opt}}\|_1\bigr) + C_0\sqrt{s_0}\omega_k\|\cm H_k\|_\F + 2\lambda_{\mathscr X,k}^{\max}\|\overline{\cm A}_0-\cm A_0^*\|_\F\|\cm H_k\|_\F \\
		&\leq \omega_k\left(\|(\cm H_k)_{\mathcal S}\|_1 + \|(\cm H_k)_{\overline{\mathcal S}^\perp}\|_1\right) + C_0\sqrt{s_0}\omega_k\|\cm H_k\|_\F + 2\lambda_{\mathscr X,k}^{\max}\|\overline{\cm A}_0-\cm A_0^*\|_\F\|\cm H_k\|_\F \\
		&\leq \omega_k\left(2\|(\cm H_k)_{\mathcal S}\|_1+C_0\sqrt{s_0}\|\cm H_k\|_\F + \frac{2\lambda_{\mathscr X,k}^{\max}}{\omega_k}\|\overline{\cm A}_0-\cm A_0^*\|_\F\|\cm H_k\|_\F\right) \\
		&\quad+ C_0\sqrt{s_0}\omega_k\|\cm H_k\|_\F + 2\lambda_{\mathscr X,k}^{\max}\|\overline{\cm A}_0-\cm A_0^*\|_\F\|\cm H_k\|_\F \\
		&= 2\omega_k\|(\cm H_k)_{\mathcal S}\|_1 + 2C_0\sqrt{s_0}\omega_k\|\cm H_k\|_\F + 4\lambda_{\mathscr X,k}^{\max}\|\overline{\cm A}_0-\cm A_0^*\|_\F\|\cm H_k\|_\F \\
		&\lesssim \sqrt{s_0}\omega_k\|\cm H_k\|_\F + \lambda_{\mathscr X,k}^{\max}\|\overline{\cm A}_0-\cm A_0^*\|_\F\|\cm H_k\|_\F,
		\end{align*}
		where the last inequality uses $\|(\cm H_k)_{\mathcal S}\|_1\leq\sqrt{s_0}\|\cm H_k\|_\F$ in the exact sparse case. Under the exact sparsity tuning $\omega_k\asymp\vartheta_k\sqrt{\log(pq/s_0)/n_k}$, the preceding upper bound becomes
		\begin{align}\label{eq:exact-sparse-upper-bound-before-rsc}
			\frac{1}{2n_k}\sum_{i=1}^{n_k}\|\langle\cm H_k,\cm X_{k,i}\rangle\|_\F^2 \lesssim \vartheta_k\sqrt{\frac{s_0\log(pq/s_0)}{n_k}}\|\cm H_k\|_\F + \lambda_{\mathscr X,k}^{\max}\|\overline{\cm A}_0-\cm A_0^*\|_\F\|\cm H_k\|_\F.
		\end{align}
		Combining \eqref{eq:exact-sparse-upper-bound-before-rsc} with the RSC lower bound in Lemma~\ref{lem:rsc-condition}, we have
		\begin{align}\label{eq:lam-H-upper-refined}
			\lambda_{\mathscr X,k}^{\min}\|\cm H_k\|_\F^2 \lesssim \vartheta_k\sqrt{\frac{s_0\log(pq/s_0)}{n_k}}\|\cm H_k\|_\F + \lambda_{\mathscr X,k}^{\max}\|\overline{\cm A}_0-\cm A_0^*\|_\F\|\cm H_k\|_\F.
		\end{align}
		By Young's inequality, we have
		\begin{align*}
			&\vartheta_k\sqrt{\frac{s_0\log(pq/s_0)}{n_k}}\|\cm H_k\|_\F \leq \frac{1}{4}\lambda_{\mathscr X,k}^{\min}\|\cm H_k\|_\F^2 + C\frac{\vartheta_k^2}{\lambda_{\mathscr X,k}^{\min}}\frac{s_0\log(pq/s_0)}{n_k}.\\
			&\lambda_{\mathscr X,k}^{\max}\|\overline{\cm A}_0-\cm A_0^*\|_\F\|\cm H_k\|_\F \leq \frac{1}{4}\lambda_{\mathscr X,k}^{\min}\|\cm H_k\|_\F^2 + C\frac{(\lambda_{\mathscr X,k}^{\max})^2}{\lambda_{\mathscr X,k}^{\min}}\|\overline{\cm A}_0-\cm A_0^*\|_\F^2.
		\end{align*}

		Substituting these two inequalities into \eqref{eq:lam-H-upper-refined}, taking square roots and using $\sqrt{a+b}\leq\sqrt a+\sqrt b$, we have
		\[
			\|\overline{\cm B}_k^{\mathrm{opt}}-\cm B_k^*\|_\F \lesssim \kappa_{\mathscr X,k}\|\overline{\cm A}_0-\cm A_0^*\|_\F + \frac{\vartheta_k}{\lambda_{\mathscr X,k}^{\min}}\sqrt{\frac{s_0\log(pq/s_0)}{n_k}}.
		\]
		This gives the strictly sparse statistical bound. Combining this with the weakly sparse bound derived above yields the compact Part~I conclusion
		\[
			\|\overline{\cm B}_k^{\mathrm{opt}}-\cm B_k^*\|_\F \lesssim \mathsf{Error}_{\mathscr B}^{(k)}(\overline{\cm A}_0).
		\]

		\noindent
		\textbf{\emph{Part II: optimization error.}}
		We now control the optimization error $\|\overline{\cm B}_k-\overline{\cm B}_k^{\mathrm{opt}}\|_\F$ using the convergence guarantee of FISTA. Let $\bbm y_{k,i}=\vect(\cm Y_{k,i})$, $\bbm x_{k,i}=\vect(\cm X_{k,i})$, $\bm Y_k=(\bbm y_{k,1},\cdots,\bbm y_{k,n_k})^\top\in\mathbb R^{n_k\times q}$, and $\bm X_k=(\bbm x_{k,1},\cdots,\bbm x_{k,n_k})^\top\in\mathbb R^{n_k\times p}$. Write $\bar{\bm A}_{0}=(\overline{\cm A}_0)_{[S_X]}$, $\bm B=(\cm B)_{[S_X]}$, $\overline{\bm B}_k=(\overline{\cm B}_k)_{[S_X]}$ and $\overline{\bm B}_k^{\mathrm{opt}}=(\overline{\cm B}_k^{\mathrm{opt}})_{[S_X]}$. Under the $S_X$-matricization, the Stage-II objective is equivalent to $F_k(\bm B)=(2n_k)^{-1}\|\bm Y_k-\bm X_k(\bar{\bm A}_{0}+\bm B)^\top\|_\F^2+\omega_k\|\bm B\|_1$. Denote the smooth and nonsmooth parts by $f_k(\bm B)=(2n_k)^{-1}\|\bm Y_k-\bm X_k(\bar{\bm A}_{0}+\bm B)^\top\|_\F^2$ and $g_k(\bm B)=\omega_k\|\bm B\|_1$, respectively. We first verify the Lipschitz continuity of $\nabla f_k$. A direct calculation gives $\nabla f_k(\bm B)=n_k^{-1}(\bm B+\bar{\bm A}_{0})\bm X_k^\top\bm X_k-n_k^{-1}\bm Y_k^\top\bm X_k$. Thus, for any $\bm B_1,\bm B_2\in\mathbb R^{q\times p}$, $\|\nabla f_k(\bm B_1)-\nabla f_k(\bm B_2)\|_\F \leq \|n_k^{-1}\bm X_k^\top\bm X_k\|_{\op}\|\bm B_1-\bm B_2\|_\F$. Therefore, the gradient of $f_k$ is Lipschitz continuous with constant no larger than $L_k=2\|n_k^{-1}\bm X_k^\top\bm X_k\|_{\op}$, which is the constant used in the stepsize condition of the theorem.

		Although the original FISTA analysis in \citet{beck2009fast} is stated for vector-valued variables, Remark~2.1 therein shows that the arguments extends verbatim to any finite-dimensional Hilbert space. Since $\mathbb R^{q\times p}$ equipped with the Frobenius inner product is such a space, and since $f_k$ is convex and differentiable with an $L_k$-Lipschitz gradient while $g_k$ is convex but nonsmooth, the standard FISTA guarantee applies. Hence, for the iterates generated with stepsize $\eta_{\mathscr B,k}\leq1/L_k$, \citet[Theorem~4.4]{beck2009fast} implies that, for $\overline{\bm B}_k=\bm B_k^{(T_l^{(k)})}$, $F_k(\overline{\bm B}_k)-F_k(\overline{\bm B}_k^{\mathrm{opt}}) \leq 2L_k\|\bm B_k^{(0)}-\overline{\bm B}_k^{\mathrm{opt}}\|_\F^2/(T_l^{(k)}+1)^2$. Under the initialization $\bm B_k^{(0)}=\bm0$, this becomes
		\begin{align}\label{eq:fista-rate-stage2-B}
			F_k(\overline{\bm B}_k)-F_k(\overline{\bm B}_k^{\mathrm{opt}}) \leq \frac{2L_k\|\overline{\bm B}_k^{\mathrm{opt}}\|_\F^2}{(T_l^{(k)}+1)^2}.
		\end{align}

		Next, we relate this objective gap to the Frobenius optimization error. For any $\bm B\in\mathbb R^{q\times p}$, the quadratic structure of $f_k$ gives the exact expansion
		\[
			f_k(\bm B) = f_k(\overline{\bm B}_k^{\mathrm{opt}}) + \left\langle\nabla f_k(\overline{\bm B}_k^{\mathrm{opt}}),\bm B-\overline{\bm B}_k^{\mathrm{opt}}\right\rangle + \frac{1}{2n_k}\left\|\bm X_k(\bm B-\overline{\bm B}_k^{\mathrm{opt}})^\top\right\|_\F^2.
		\]
		Moreover, by the subgradient inequality for the entrywise $\ell_1$ norm, for any $\bm Z_k^{\mathrm{opt}}\in\partial\|\overline{\bm B}_k^{\mathrm{opt}}\|_1$, $g_k(\bm B) \geq g_k(\overline{\bm B}_k^{\mathrm{opt}}) + \langle\omega_k\bm Z_k^{\mathrm{opt}},\bm B-\overline{\bm B}_k^{\mathrm{opt}}\rangle$.
		Since $\overline{\bm B}_k^{\mathrm{opt}}$ minimizes the convex function $F_k=f_k+g_k$, there exists $\bm Z_k^{\mathrm{opt}}\in\partial\|\overline{\bm B}_k^{\mathrm{opt}}\|_1$ such that $\nabla f_k(\overline{\bm B}_k^{\mathrm{opt}}) + \omega_k\bm Z_k^{\mathrm{opt}} = \bm0$.
		Combining the preceding displays therefore yields, for any $\bm B\in\mathbb R^{q\times p}$,
		\begin{align}\label{eq:stage2-objective-gap-qg}
			\frac{1}{2n_k}\left\|\bm X_k(\bm B-\overline{\bm B}_k^{\mathrm{opt}})^\top\right\|_\F^2 \leq F_k(\bm B)-F_k(\overline{\bm B}_k^{\mathrm{opt}}).
		\end{align}
		Taking $\bm B=\overline{\bm B}_k$ in \eqref{eq:stage2-objective-gap-qg}, and applying Lemma~\ref{lem:rsc-condition} to the tensor counterpart of $\overline{\bm B}_k-\overline{\bm B}_k^{\mathrm{opt}}$, we obtain, on the RSC event, $\lambda_{\mathscr X,k}^{\min}\|\overline{\bm B}_k-\overline{\bm B}_k^{\mathrm{opt}}\|_\F^2 \lesssim F_k(\overline{\bm B}_k)-F_k(\overline{\bm B}_k^{\mathrm{opt}})$. Combining this inequality with \eqref{eq:fista-rate-stage2-B}, and using $L_k\lesssim\lambda_{\mathscr X,k}^{\max}$ on the same covariance event, gives
		\begin{align}\label{eq:stage2-opt-error-gap}
			\|\overline{\bm B}_k-\overline{\bm B}_k^{\mathrm{opt}}\|_\F^2 \lesssim \kappa_{\mathscr X,k}\frac{\|\overline{\bm B}_k^{\mathrm{opt}}\|_\F^2}{(T_l^{(k)}+1)^2}.
		\end{align}
		It remains to bound $\|\overline{\bm B}_k^{\mathrm{opt}}\|_\F$. By the triangle inequality and the compact statistical bound in Part~I, $\|\overline{\bm B}_k^{\mathrm{opt}}\|_\F^2 \leq 2\|\overline{\bm B}_k^{\mathrm{opt}}-\bm B_k^*\|_\F^2 + 2\|\bm B_k^*\|_\F^2 \lesssim (\mathsf{Error}_{\mathscr B}^{(k)}(\overline{\cm A}_0))^2+h_{\mathscr B,k}^2$, where we used $\|\bm B_k^*\|_\F\leq h_{\mathscr B,k}$. Substituting this bound into \eqref{eq:stage2-opt-error-gap} gives
		\[
			\|\overline{\bm B}_k-\overline{\bm B}_k^{\mathrm{opt}}\|_\F^2 \lesssim \kappa_{\mathscr X,k}\frac{(\mathsf{Error}_{\mathscr B}^{(k)}(\overline{\cm A}_0))^2+h_{\mathscr B,k}^2}{(T_l^{(k)}+1)^2}.
		\]
		Therefore, taking $T_l^{(k)}\asymp \sqrt{\kappa_{\mathscr X,k}}(1+h_{\mathscr B,k}/\mathsf{Error}_{\mathscr B}^{(k)}(\overline{\cm A}_0))$, or equivalently $T_l^{(k)}\asymp \sqrt{\kappa_{\mathscr X,k}}(1+h_{\mathscr B,k}/\mathsf{Error}_{\mathscr B}^{(k)}(\overline{\cm A}_0))$ up to constants, the optimization error is no larger than the statistical error, namely $\|\overline{\cm B}_k-\overline{\cm B}_k^{\mathrm{opt}}\|_\F \lesssim \mathsf{Error}_{\mathscr B}^{(k)}(\overline{\cm A}_0)$. Combining this optimization-error bound with the compact statistical bound in Part~I and using the triangle inequality yields
		\[
			\|\overline{\cm B}_k-\cm B_k^*\|_\F \leq \|\overline{\cm B}_k-\overline{\cm B}_k^{\mathrm{opt}}\|_\F + \|\overline{\cm B}_k^{\mathrm{opt}}-\cm B_k^*\|_\F \lesssim \mathsf{Error}_{\mathscr B}^{(k)}(\overline{\cm A}_0).
		\]
		Together with the definition of $\mathsf{Error}_{\mathscr B}^{(k)}(\overline{\cm A}_0)$, this also gives the stated joint bound involving $\|\overline{\cm A}_0-\cm A_0^*\|_\F+\|\overline{\cm B}_k-\cm B_k^*\|_\F$. This completes the proof.
	\end{proof}

	\begin{proof}[\textbf{Proof of Theorem \ref{thm:one-stage-initialization-error}}]
		Fix a client $k\in[K]$. We first pass to the equivalent matrix representation induced by the $S_X$-matricization. Let $\bbm y_{k,i}=\vect(\cm Y_{k,i})$, $\bbm x_{k,i}=\vect(\cm X_{k,i})$, and $\bbm e_{k,i}=\vect(\cm E_{k,i})$. Define $\bm A_0^*=(\cm A_0^*)_{[S_X]}$ and $\bm B_k^*=(\cm B_k^*)_{[S_X]}$. Then, the model can be written as $\bbm y_{k,i}=(\bm A_0^*+\bm B_k^*)\bbm x_{k,i}+\bbm e_{k,i}$. Under the same matricization, the tensor optimization problem in \eqref{eq:alg-ini} is equivalent to
		\begin{align}\label{eq:ini-opt-matricization}
			(\widetilde{\bm A}_{0,k},\widetilde{\bm B}_{k}) = \argmin_{\bm A,\bm B}\Biggl\{\frac{1}{2n_k}\sum_{i=1}^{n_k}\bigl\|\bbm y_{k,i}-(\bm A+\bm B)\bbm x_{k,i}\bigr\|_2^2 + \lambda_k\|\bm A\|_* + \varpi_k\|\bm B\|_1\Biggr\},
		\end{align}
		subject to $\|\bm B\|_{\op}\leq\zeta$, where $\widetilde{\bm A}_{0,k}=(\widetilde{\cm A}_{0,k})_{[S_X]}$ and $\widetilde{\bm B}_{k}=(\widetilde{\cm B}_{k})_{[S_X]}$. 

		Therefore, it suffices to prove the desired bound for these matrix estimators. Define $\bm H_A=\widetilde{\bm A}_{0,k}-\bm A_0^*$ and $\bm H_B=\widetilde{\bm B}_{k}-\bm B_k^*$.
		Since both $(\widetilde{\bm A}_{0,k},\widetilde{\bm B}_{k})$ and $(\bm A_0^*,\bm B_k^*)$ are feasible for the optimization~\eqref{eq:ini-opt-matricization}, optimality gives
		\[
			\frac{1}{2n_k}\sum_{i=1}^{n_k}\left\|\bbm y_{k,i}-(\widetilde{\bm A}_{0,k}+\widetilde{\bm B}_{k})\bbm x_{k,i}\right\|_2^2 + \lambda_k\|\widetilde{\bm A}_{0,k}\|_* + \varpi_k\|\widetilde{\bm B}_{k}\|_1 \leq \frac{1}{2n_k}\sum_{i=1}^{n_k}\left\|\bbm y_{k,i}-(\bm A_0^*+\bm B_k^*)\bbm x_{k,i}\right\|_2^2 + \lambda_k\|\bm A_0^*\|_* + \varpi_k\|\bm B_k^*\|_1.
		\]
		Rearranging the terms yields
		\begin{align}\label{eq:one-stage-loss-difference}
			\frac{1}{2n_k}\sum_{i=1}^{n_k}\left(\left\|\bbm y_{k,i}-(\widetilde{\bm A}_{0,k}+\widetilde{\bm B}_{k})\bbm x_{k,i}\right\|_2^2 - \left\|\bbm y_{k,i}-(\bm A_0^*+\bm B_k^*)\bbm x_{k,i}\right\|_2^2\right)\leq \lambda_k\bigl(\|\bm A_0^*\|_* - \|\widetilde{\bm A}_{0,k}\|_*\bigr) + \varpi_k\left(\|\bm B_k^*\|_1-\|\widetilde{\bm B}_{k}\|_1\right).
		\end{align}
		Using $\|\bbm a\|_2^2-\|\bbm b\|_2^2=\|\bbm a-\bbm b\|_2^2+2\langle\bbm a-\bbm b,\bbm b\rangle$ with $\bbm a=\bbm y_{k,i}-(\widetilde{\bm A}_{0,k}+\widetilde{\bm B}_{k})\bbm x_{k,i}$ and $\bbm b=\bbm y_{k,i}-(\bm A_0^*+\bm B_k^*)\bbm x_{k,i}=\bbm e_{k,i}$, we have
		\begin{align}\label{eq:one-stage-loss-expanded}
			\frac{1}{2n_k}\sum_{i=1}^{n_k}\left\|(\bm H_A+\bm H_B)\bbm x_{k,i}\right\|_2^2 - \frac{1}{n_k}\sum_{i=1}^{n_k}\left\langle(\bm H_A+\bm H_B)\bbm x_{k,i},\bbm e_{k,i}\right\rangle \leq \lambda_k\bigl(\|\bm A_0^*\|_* - \|\widetilde{\bm A}_{0,k}\|_*\bigr) + \varpi_k\bigl(\|\bm B_k^*\|_1-\|\widetilde{\bm B}_{k}\|_1\bigr).
		\end{align}
		For the inner product term, we have $\left\langle(\bm H_A+\bm H_B)\bbm x_{k,i},\bbm e_{k,i}\right\rangle = \operatorname{tr}\left(\bbm e_{k,i}^{\top}(\bm H_A+\bm H_B)\bbm x_{k,i}\right) = \left\langle\bm H_A+\bm H_B, \bbm e_{k,i}\bbm x_{k,i}^{\top}\right\rangle$.
		Then, rearranging \eqref{eq:one-stage-loss-expanded} gives
		\begin{align}\label{eq:one-stage-basic-ineq}
			\frac{1}{2n_k}\sum_{i=1}^{n_k}\left\|(\bm H_A+\bm H_B)\bbm x_{k,i}\right\|_2^2 \leq \lambda_k\bigl(\|\bm A_0^*\|_* - \|\widetilde{\bm A}_{0,k}\|_*\bigr) + \varpi_k\bigl(\|\bm B_k^*\|_1-\|\widetilde{\bm B}_{k}\|_1\bigr) + \left\langle\bm H_A+\bm H_B,\frac{1}{n_k}\sum_{i=1}^{n_k}\bbm e_{k,i}\bbm x_{k,i}^\top\right\rangle.
		\end{align}

		To bound the right-hand side of \eqref{eq:one-stage-basic-ineq}, we begin by deriving the constrained space for the estimators $\widetilde{\bm A}_{0,k}$ and $\widetilde{\bm B}_{k}$. We first define the coordinate subspace induced by the large entries of $\bm B_k^*$. Specifically, for a given threshold $\kappa>0$, let
		\[
			\mathcal S_\kappa = \Bigl\{\bm S\in\mathbb R^{q\times p}: \bm S_{ab}=0\ \text{for all }(a,b)\in[q]\times[p]\ \text{such that } |(\bm B_k^*)_{ab}|\geq\kappa\Bigr\},
		\]
		and let $\overline{\mathcal S}_\kappa^\perp$ denote its closed orthogonal complement, namely
		\[
			\overline{\mathcal S}_\kappa^\perp = \Bigl\{\bm S\in\mathbb R^{q\times p}: \bm S_{ab}=0\ \text{for all }(a,b)\in[q]\times[p]\ \text{such that } |(\bm B_k^*)_{ab}|<\kappa\Bigr\}.
		\]
		For any $\bm S\in\mathbb R^{q\times p}$, we write $\bm S_{\mathcal S_\kappa}$ for the restriction of $\bm S$ to the coordinates with $|(\bm B_k^*)_{ab}|\geq\kappa$, i.e., $(\bm S_{\mathcal S_\kappa})_{ab}=0$ if $|(\bm B_k^*)_{ab}|\geq\kappa$ and $(\bm S_{\mathcal S_\kappa})_{ab}=\bm S_{ab}$ otherwise. We define $\bm S_{\overline{\mathcal S}_\kappa^\perp}$ analogously as the restriction to the coordinates with $|(\bm B_k^*)_{ab}|<\kappa$.

		We next introduce the model subspace associated with the low-rank matrix $\bm A_0^*$. Let $r_A=\rank(\bm A_0^*)\leq\min(r_{\bbm q},r_{\bbm p})$, and write a SVD as $\bm A_0^*=\bm U_A^*\bm\Sigma_A^*\bm V_A^{*\top}$, where $\bm U_A^*\in\mathbb R^{q\times r_A}$ and $\bm V_A^*\in\mathbb R^{p\times r_A}$. Define
		\[
			\mathcal L_A = \Bigl\{\bm M\in\mathbb R^{q\times p}:\operatorname{col}(\bm M)\subseteq\operatorname{col}(\bm U_A^*),\ \operatorname{col}(\bm M^\top)\subseteq\operatorname{col}(\bm V_A^*)\Bigr\},
		\]
		and let $\overline{\mathcal L}_A^\perp$ be the orthogonal complement of the completion space associated with $\mathcal L_A$, namely
		\[
			\overline{\mathcal L}_A^\perp = \Bigl\{\bm M\in\mathbb R^{q\times p}:\operatorname{col}(\bm M)\perp\operatorname{col}(\bm U_A^*),\ \operatorname{col}(\bm M^\top)\perp\operatorname{col}(\bm V_A^*)\Bigr\}.
		\]
		For any $\bm M\in\mathbb R^{q\times p}$, write $\bm M_{\overline{\mathcal L}_A}$ and $\bm M_{\overline{\mathcal L}_A^\perp}$ for its orthogonal projections onto these two spaces. Using the decomposability of the nuclear norm with respect to $(\mathcal L_A,\overline{\mathcal L}_A^\perp)$, together with the decomposability of the entrywise $\ell_1$ norm with respect to $(\mathcal S_\kappa, \overline{\mathcal S}_\kappa^\perp)$, gives
		\begin{align}\label{eq:one-stage-decomposable-norms}
			&\lambda_k\bigl(\|\bm A_0^*\|_*-\|\widetilde{\bm A}_{0,k}\|_*\bigr)+\varpi_k\bigl(\|\bm B_k^*\|_1-\|\widetilde{\bm B}_{k}\|_1\bigr)=\lambda_k\bigl(\|\bm A_0^*\|_*-\|\bm A_0^*+\bm H_A\|_*\bigr)+\varpi_k\bigl(\|\bm B_k^*\|_1-\|\bm B_k^*+\bm H_B\|_1\bigr)\notag\\
			&\leq \lambda_k\Bigl[\|\bm A_0^*\|_*-\Bigl\{\|\bm A_0^*\|_*+\bigl\|(\bm H_A)_{\overline{\mathcal L}_A^\perp}\bigr\|_*-\bigl\|(\bm H_A)_{\mathcal L_A}\bigr\|_*\Bigr\}\Bigr]
			\notag\\
			&\quad+ \varpi_k\Bigl[\|\bm B_k^*\|_1-\Bigl\{\bigl\|(\bm B_k^*)_{\mathcal S_\kappa}\bigr\|_1-\bigl\|(\bm H_B)_{\mathcal S_\kappa}\bigr\|_1+\bigl\|(\bm H_B)_{\overline{\mathcal S}_\kappa^\perp}\bigr\|_1-\bigl\|(\bm B_k^*)_{\overline{\mathcal S}_\kappa^\perp}\bigr\|_1\Bigr\}\Bigr]\notag\\
			&= \lambda_k\Bigl(\bigl\|(\bm H_A)_{\mathcal L_A}\bigr\|_*-\bigl\|(\bm H_A)_{\overline{\mathcal L}_A^\perp}\bigr\|_*\Bigr)+\varpi_k\Bigl(\bigl\|(\bm H_B)_{\mathcal S_\kappa}\bigr\|_1-\bigl\|(\bm H_B)_{\overline{\mathcal S}_\kappa^\perp}\bigr\|_1+2\bigl\|(\bm B_k^*)_{\overline{\mathcal S}_\kappa^\perp}\bigr\|_1\Bigr).
		\end{align}

		By Lemmas~\ref{lem:deviation-condition-op} and~\ref{lem:deviation-condition-infty}, together with the choices of $\lambda_k$ and $\varpi_k$, with probability at least $1-C\exp(-C(p+q))-C\exp(-C\log(pq))$,
		\begin{align}\label{eq:deviation-bounds-e-x}
			\left\|\frac{1}{n_k}\sum_{i=1}^{n_k}\bbm e_{k,i}\bbm x_{k,i}^\top\right\|_{\op} \leq \frac{\lambda_k}{2}
			\quad \text{and} \quad 
			\left\|\frac{1}{n_k}\sum_{i=1}^{n_k}\bbm e_{k,i}\bbm x_{k,i}^\top\right\|_{\infty}\leq \frac{\varpi_k}{2}.
		\end{align}

		On these events, for the last term in \eqref{eq:one-stage-basic-ineq}, H\"older's inequality and the norm decomposability gives
		\begin{align}\label{eq:one-stage-deviation-decomposed}
			\left\langle\bm H_A+\bm H_B,\frac{1}{n_k}\sum_{i=1}^{n_k}\bbm e_{k,i}\bbm x_{k,i}^\top\right\rangle &\leq \|\bm H_A\|_*\left\|\frac{1}{n_k}\sum_{i=1}^{n_k}\bbm e_{k,i}\bbm x_{k,i}^\top\right\|_{\op}+\|\bm H_B\|_1\left\|\frac{1}{n_k}\sum_{i=1}^{n_k}\bbm e_{k,i}\bbm x_{k,i}^\top\right\|_{\infty}\notag\\
			&\leq \frac{\lambda_k}{2}\Bigl(\|(\bm H_A)_{\mathcal L_A}\|_*+\|(\bm H_A)_{\overline{\mathcal L}_A^\perp}\|_*\Bigr) + \frac{\varpi_k}{2}\Bigl(\|(\bm H_B)_{\mathcal S_\kappa}\|_1+\|(\bm H_B)_{\overline{\mathcal S}_\kappa^\perp}\|_1\Bigr).
		\end{align}

		Combining \eqref{eq:one-stage-basic-ineq}, \eqref{eq:one-stage-decomposable-norms}, and \eqref{eq:one-stage-deviation-decomposed}, and using the nonnegativity of the empirical quadratic term, yields
		\begin{align*}
			0 \leq \frac{1}{2n_k}\sum_{i=1}^{n_k}\left\|(\bm H_A+\bm H_B)\bbm x_{k,i}\right\|_2^2 \leq \lambda_k\Bigl(\frac{3}{2}\|(\bm H_A)_{\mathcal L_A}\|_*-\frac{1}{2}\|(\bm H_A)_{\overline{\mathcal L}_A^\perp}\|_*\Bigr) + \varpi_k\Bigl(\frac{3}{2}\|(\bm H_B)_{\mathcal S_\kappa}\|_1-\frac{1}{2}\|(\bm H_B)_{\overline{\mathcal S}_\kappa^\perp}\|_1+2\|(\bm B_k^*)_{\overline{\mathcal S}_\kappa^\perp}\|_1\Bigr).
		\end{align*}
		Rearranging the preceding display gives
		\begin{align}\label{eq:one-stage-cone}
			\lambda_k\|(\bm H_A)_{\overline{\mathcal L}_A^\perp}\|_*+\varpi_k\|(\bm H_B)_{\overline{\mathcal S}_\kappa^\perp}\|_1 \leq 3\lambda_k\|(\bm H_A)_{\mathcal L_A}\|_*+3\varpi_k\|(\bm H_B)_{\mathcal S_\kappa}\|_1+4\varpi_k\|(\bm B_k^*)_{\overline{\mathcal S}_\kappa^\perp}\|_1.
		\end{align}

		Next, we derive an upper bound for the empirical prediction error over the constrained space in \eqref{eq:one-stage-cone}. Plugging the deviation bounds \eqref{eq:deviation-bounds-e-x} into the right-hand side of \eqref{eq:one-stage-basic-ineq} and then applying the cone condition \eqref{eq:one-stage-cone}, we have
		\begin{align}\label{eq:one-stage-upper-bound-before-rsc}
			\frac{1}{2n_k}\sum_{i=1}^{n_k}\left\|(\bm H_A+\bm H_B)\bbm x_{k,i}\right\|_2^2 &\leq \lambda_k\bigl(\|\bm A_0^*\|_* - \|\widetilde{\bm A}_{0,k}\|_*\bigr) + \varpi_k\bigl(\|\bm B_k^*\|_1-\|\widetilde{\bm B}_{k}\|_1\bigr) + \frac{\lambda_k}{2}\|\bm H_A\|_* + \frac{\varpi_k}{2}\|\bm H_B\|_1 \notag\\
			&\leq \frac{3\lambda_k}{2}\|\bm H_A\|_* + \frac{3\varpi_k}{2}\|\bm H_B\|_1 \notag\\
			&= \frac{3\lambda_k}{2}\bigl(\|(\bm H_A)_{\mathcal L_A}\|_* + \|(\bm H_A)_{\overline{\mathcal L}_A^\perp}\|_*\bigr) + \frac{3\varpi_k}{2}\bigl(\|(\bm H_B)_{\mathcal S_\kappa}\|_1 + \|(\bm H_B)_{\overline{\mathcal S}_\kappa^\perp}\|_1\bigr)\\
			&\leq 6\left(\lambda_k\|(\bm H_A)_{\mathcal L_A}\|_* + \varpi_k\|(\bm H_B)_{\mathcal S_\kappa}\|_1 + \varpi_k\|(\bm B_k^*)_{\overline{\mathcal S}_\kappa^\perp}\|_1\right).\notag
		\end{align}

		We next derive a lower bound for the left-hand side of \eqref{eq:one-stage-upper-bound-before-rsc}. Let $\cm H_A$ and $\cm H_B$ denote the tensorization form of $\bm H_A$ and $\bm H_B$ under the inverse $S_X$-matricization, namely $(\cm H_A)_{[S_X]}=\bm H_A$ and $(\cm H_B)_{[S_X]}=\bm H_B$. Then $(\cm H_A+\cm H_B)_{[S_X]}=\bm H_A+\bm H_B$. Given $n_k \gtrsim\{\sigma_{\mathscr X,k}^{4}\kappa_{\mathscr X,k}^{2}\vee \sigma_{\mathscr X,k}^{2}\kappa_{\mathscr X,k}\vee 1\}p$, then applying Lemma~\ref{lem:rsc-condition} with $\cm T=\cm H_A+\cm H_B$, we have, with probability at least $1-\exp(-Cp)$,
		\begin{align}\label{eq:one-stage-rsc-lower-bound}
			\frac{1}{n_k}\sum_{i=1}^{n_k}\left\|(\bm H_A+\bm H_B)\bbm x_{k,i}\right\|_2^2 = \frac{1}{n_k}\sum_{i=1}^{n_k}\left\|\langle \cm H_A+\cm H_B,\cm X_{k,i}\rangle\right\|_\F^2 \geq \frac{\lambda_{\mathscr X,k}^{\min}}{2}\|\cm H_A+\cm H_B\|_\F^2 = \frac{\lambda_{\mathscr X,k}^{\min}}{2}\|\bm H_A+\bm H_B\|_\F^2.
		\end{align}
		Second, since $\|\bm M+\bm N\|_\F^2\geq \|\bm M\|_\F^2+\|\bm N\|_\F^2-2|\langle \bm M,\bm N\rangle|$, we have
		\begin{align}\label{eq:one-stage-F-decomposition}
			\lambda_{\mathscr X,k}^{\min}\|\bm H_A+\bm H_B\|_\F^2 \geq \lambda_{\mathscr X,k}^{\min}\left(\|\bm H_A\|_\F^2+\|\bm H_B\|_\F^2\right) - 2\lambda_{\mathscr X,k}^{\min}|\langle \bm H_A,\bm H_B\rangle|.
		\end{align}
		H\"older's inequality gives $|\langle \bm H_A,\bm H_B\rangle| \leq \|\bm H_B\|_{\op}\|\bm H_A\|_*$. Since both $\widetilde{\bm B}_{k}$ and $\bm B_k^*$ are feasible for \eqref{eq:ini-opt-matricization}, we have $\|\widetilde{\bm B}_{k}\|_{\op}\leq\zeta$ and $\|\bm B_k^*\|_{\op}\leq\zeta$, and hence $\|\bm H_B\|_{\op}=\|\widetilde{\bm B}_{k}-\bm B_k^*\|_{\op}\leq2\zeta$. Therefore, by the prescribed choice $\lambda_k\gtrsim\lambda_{\mathscr X,k}^{\min}\zeta$,
		\begin{align}\label{eq:one-stage-cross-term-bound}
			2\lambda_{\mathscr X,k}^{\min}|\langle \bm H_A,\bm H_B\rangle| \leq C\lambda_k\|\bm H_A\|_*.
		\end{align}
		Combining \eqref{eq:one-stage-rsc-lower-bound}--\eqref{eq:one-stage-cross-term-bound} yields
		\begin{align}\label{eq:one-stage-lower-bound-after-rsc}
			\frac{1}{2n_k}\sum_{i=1}^{n_k}\left\|(\bm H_A+\bm H_B)\bbm x_{k,i}\right\|_2^2 + C\lambda_k\|\bm H_A\|_* \gtrsim \lambda_{\mathscr X,k}^{\min}\left(\|\bm H_A\|_\F^2+\|\bm H_B\|_\F^2\right).
		\end{align}
		Finally, combining \eqref{eq:one-stage-upper-bound-before-rsc} and \eqref{eq:one-stage-lower-bound-after-rsc}, and using the cone condition \eqref{eq:one-stage-cone} to control $\lambda_k\|\bm H_A\|_*$, we have
		\begin{align}\label{eq:one-stage-pre-final}
			\lambda_{\mathscr X,k}^{\min}\left(\|\bm H_A\|_\F^2+\|\bm H_B\|_\F^2\right) \lesssim \lambda_k\|(\bm H_A)_{\mathcal L_A}\|_* + \varpi_k\|(\bm H_B)_{\mathcal S_\kappa}\|_1 + \varpi_k\|(\bm B_k^*)_{\overline{\mathcal S}_\kappa^\perp}\|_1.
		\end{align}

		We now translate \eqref{eq:one-stage-pre-final} into an explicit error bound for the weakly sparse regime $\nu\in(0,1)$. By the definition of $\mathcal L_A$, $(\bm H_A)_{\mathcal L_A}$ has rank at most $2r_A$. Since $2r_A\leq r_{\bbm q}+r_{\bbm p}$, we have
			\begin{align}\label{eq:one-stage-low-rank-projection-bound}
				\|(\bm H_A)_{\mathcal L_A}\|_* \leq \sqrt{2r_A}\,\|(\bm H_A)_{\mathcal L_A}\|_\F \leq \sqrt{r_{\bbm q}+r_{\bbm p}}\,\|\bm H_A\|_\F .
			\end{align}
		Moreover, since $\cm B_k^*\in\mathbb B_\nu$ with $\nu\in(0,1)$, the set $\mathcal S_\kappa$ satisfies $s_\nu\geq |\mathcal S_\kappa|\kappa^\nu$, and hence $|\mathcal S_\kappa|\leq s_\nu\kappa^{-\nu}$. Therefore,
		\begin{align}\label{eq:one-stage-sparse-projection-bound}
			\|(\bm H_B)_{\mathcal S_\kappa}\|_1 \leq \sqrt{|\mathcal S_\kappa|}\,\|(\bm H_B)_{\mathcal S_\kappa}\|_\F \leq \sqrt{s_\nu}\kappa^{-\nu/2}\|\bm H_B\|_\F.
		\end{align}
		By the definition of $\mathbb B_\nu=\left\{\cm B \in \mathbb{R}^{q_1\times\cdots\times q_m\times p_1\times\cdots\times p_d}: \mathds 1_{\{\nu=0\}}\|\cm B\|_0/s_0 + \mathds 1_{\{0<\nu<1\}}\|\cm B\|_\nu^\nu/s_\nu\leq 1\right\}$, we have
		\begin{align}\label{eq:one-stage-sparse-tail-bound}
			\|(\bm B_k^*)_{\overline{\mathcal S}_\kappa^\perp}\|_1 = \sum_{(a,b)\in \overline{\mathcal S}_\kappa^\perp}|(\bm B_k^*)_{ab}| \leq \sum_{(a,b)\in \overline{\mathcal S}_\kappa^\perp}|(\bm B_k^*)_{ab}|^\nu\kappa^{1-\nu} \leq s_\nu\kappa^{1-\nu}.
		\end{align}
		Substituting \eqref{eq:one-stage-low-rank-projection-bound}--\eqref{eq:one-stage-sparse-tail-bound} into \eqref{eq:one-stage-pre-final} gives
		\begin{align}\label{eq:one-stage-two-var-start}
			\|\bm H_A\|_\F^2+\|\bm H_B\|_\F^2 \lesssim \frac{\lambda_k\sqrt{r_{\bbm q}+r_{\bbm p}}}{\lambda_{\mathscr X,k}^{\min}}\|\bm H_A\|_\F + \frac{\varpi_k\sqrt{s_\nu}\kappa^{-\nu/2}}{\lambda_{\mathscr X,k}^{\min}}\|\bm H_B\|_\F + \frac{\varpi_k s_\nu\kappa^{1-\nu}}{\lambda_{\mathscr X,k}^{\min}}.
		\end{align}
		Using $ab\leq (a^2+b^2)/2$ for the first two terms on the right-hand side and absorbing constants, we have
		\begin{align}\label{eq:one-stage-two-var-absorbed}
			\|\bm H_A\|_\F^2+\|\bm H_B\|_\F^2 \lesssim (r_{\bbm q}+r_{\bbm p})\left(\frac{\lambda_k}{\lambda_{\mathscr X,k}^{\min}}\right)^2 + s_\nu\left(\frac{\varpi_k}{\lambda_{\mathscr X,k}^{\min}}\right)^2\kappa^{-\nu} + s_\nu\frac{\varpi_k}{\lambda_{\mathscr X,k}^{\min}}\kappa^{1-\nu}.
		\end{align}
		Choosing $\kappa\asymp\varpi_k/\lambda_{\mathscr X,k}^{\min}$ balances the last two terms and yields
		\begin{align}\label{eq:one-stage-squared-rate}
			\|\bm H_A\|_\F^2+\|\bm H_B\|_\F^2 \lesssim (r_{\bbm q}+r_{\bbm p})\left(\frac{\lambda_k}{\lambda_{\mathscr X,k}^{\min}}\right)^2 + s_\nu\left(\frac{\varpi_k}{\lambda_{\mathscr X,k}^{\min}}\right)^{2-\nu}.
		\end{align}
		Taking square roots and using the invariance of the Frobenius norm under matricization, we obtain the weakly sparse initialization bound
		\begin{align}\label{eq:one-stage-weak-sparse-rate}
			\|\widetilde{\cm A}_{0,k}-\cm A_0^*\|_{\F} + \|\widetilde{\cm B}_k-\cm B_k^*\|_{\F} \lesssim \sqrt{r_{\bbm q}+r_{\bbm p}}\frac{\lambda_k}{\lambda_{\mathscr X,k}^{\min}} + \sqrt{s_\nu}\left(\frac{\varpi_k}{\lambda_{\mathscr X,k}^{\min}}\right)^{1-\nu/2}, \quad \nu\in(0,1).
		\end{align}

		We next consider the strictly sparse regime $\nu=0$. Let $\mathcal S=\operatorname{supp}(\bm B_k^*)$, with $|\mathcal S|\leq s_0$. In this case, the weak-tail term related to $(\bm B_k^*)_{\mathcal S}=\bm0$ disappears, but the sparse stochastic term $\langle \bm H_B,n_k^{-1}\sum_{i=1}^{n_k}\bbm e_{k,i}\bbm x_{k,i}^\top\rangle$ needs to be treated by the refined sparse-deviation bound in Lemma~\ref{lem:deviation-condition-sparse}. The arguments below follows the strictly sparse part in the proof of Theorem~\ref{thm:personalized_error}, with the only difference that the present proof also contains the low-rank term $\bm H_A$. Let $\bm W_k=n_k^{-1}\sum_{i=1}^{n_k}\bbm e_{k,i}\bbm x_{k,i}^\top$. The nuclear-norm part for $\bm H_A$ and the operator-norm stochastic bound $\langle\bm H_A, \bm W_k\rangle$ are treated exactly as above, so it remains to handle $\langle \bm H_B,\bm W_k\rangle$. By Lemma~\ref{lem:deviation-condition-sparse}, there are two cases.

		\noindent
		\textbf{\emph{Case I.}}
		If $\|\bm H_B\|_1>\sqrt{s_0}\|\bm H_B\|_\F$, then $|\langle \bm H_B,\bm W_k\rangle|\leq \varpi_k\|\bm H_B\|_1/2$ under the stated choice of $\varpi_k$. This term is absorbed by the $\ell_1$-penalty. Following the same basic-inequality and cone arguments as in the weakly sparse regime, but with $(\bm B_k^*)_{\overline{\mathcal S}^\perp}=\bm0$, yields
		\[
			\lambda_k\| (\bm H_A)_{\overline{\mathcal L}_A^\perp}\|_* + \varpi_k\|(\bm H_B)_{\overline{\mathcal S}^\perp}\|_1 \lesssim \lambda_k\|(\bm H_A)_{\mathcal L_A}\|_* + \varpi_k\|(\bm H_B)_{\mathcal S}\|_1.
		\]
		Consequently, combining the cone relation with the empirical upper bound for $(2n_k)^{-1}\sum_{i=1}^{n_k}\left\|(\bm H_A+\bm H_B)\bbm x_{k,i}\right\|_2^2$, in the same manner as in \eqref{eq:one-stage-upper-bound-before-rsc}, and the RSC lower bound in \eqref{eq:one-stage-lower-bound-after-rsc}, we have
		\begin{align}\label{eq:one-stage-strict-sparse-pre-final-large-l1}
			\lambda_{\mathscr X,k}^{\min}\left(\|\bm H_A\|_\F^2+\|\bm H_B\|_\F^2\right) \lesssim \lambda_k\|(\bm H_A)_{\mathcal L_A}\|_* + \varpi_k\|(\bm H_B)_{\mathcal S}\|_1.
		\end{align}
		Applying \eqref{eq:one-stage-low-rank-projection-bound} and $\|(\bm H_B)_{\mathcal S}\|_1\leq\sqrt{s_0}\|\bm H_B\|_\F$ to \eqref{eq:one-stage-strict-sparse-pre-final-large-l1} further gives the Frobenius-norm upper bound
		\begin{align}\label{eq:one-stage-strict-sparse-frob-upper}
			\lambda_{\mathscr X,k}^{\min}\left(\|\bm H_A\|_\F^2+\|\bm H_B\|_\F^2\right) \lesssim \lambda_k\sqrt{r_{\bbm q}+r_{\bbm p}}\|\bm H_A\|_\F + \sqrt{s_0}\varpi_k\|\bm H_B\|_\F.
		\end{align}

		\noindent
		\textbf{\emph{Case II.}}
		If $\|\bm H_B\|_1\leq\sqrt{s_0}\|\bm H_B\|_\F$, then Lemma~\ref{lem:deviation-condition-sparse} gives $|\langle \bm H_B,\bm W_k\rangle|\lesssim \sqrt{s_0}\varpi_k\|\bm H_B\|_\F$. This together with \eqref{eq:one-stage-upper-bound-before-rsc}--\eqref{eq:one-stage-F-decomposition}, replacing $|\langle \bm H_B,\bm W_k\rangle|$ by the preceding Frobenius-order bound $\sqrt{s_0}\varpi_k\|\bm H_B\|_\F$, and the same nuclear-norm decomposition for $\bm H_A$, together with $\|\bm H_B\|_1\leq\sqrt{s_0}\|\bm H_B\|_\F$, directly yields \eqref{eq:one-stage-strict-sparse-frob-upper}. 
		
		Applying Young's inequality, we have
		\[
			\lambda_k\sqrt{r_{\bbm q}+r_{\bbm p}}\|\bm H_A\|_\F \leq \frac{1}{4}\lambda_{\mathscr X,k}^{\min}\|\bm H_A\|_\F^2 + C(r_{\bbm q}+r_{\bbm p})\frac{\lambda_k^2}{\lambda_{\mathscr X,k}^{\min}}
			\ \text{and} \
			\sqrt{s_0}\varpi_k\|\bm H_B\|_\F \leq \frac{1}{4}\lambda_{\mathscr X,k}^{\min}\|\bm H_B\|_\F^2 + C s_0\frac{\varpi_k^2}{\lambda_{\mathscr X,k}^{\min}}.
		\]
		Substituting these bounds into \eqref{eq:one-stage-strict-sparse-frob-upper}, moving the two quadratic terms to the left-hand side, and dividing both sides by $\lambda_{\mathscr X,k}^{\min}$, we obtain a squared-error bound. Taking square roots and using the invariance of the Frobenius norm under matricization yields
		\begin{align}\label{eq:one-stage-strict-sparse-rate}
			\|\widetilde{\cm A}_{0,k}-\cm A_0^*\|_{\F} + \|\widetilde{\cm B}_k-\cm B_k^*\|_{\F} \lesssim \sqrt{r_{\bbm q}+r_{\bbm p}}\frac{\lambda_k}{\lambda_{\mathscr X,k}^{\min}} + \sqrt{s_0}\frac{\varpi_k}{\lambda_{\mathscr X,k}^{\min}}, \quad \nu=0.
		\end{align}
		Combining the weakly sparse bound in \eqref{eq:one-stage-weak-sparse-rate} and the strictly sparse bound in \eqref{eq:one-stage-strict-sparse-rate}, together with the corresponding stochastic deviation events \eqref{eq:deviation-bounds-e-x} and the restricted empirical quadratic lower-bound event, gives the stated probability and completes the proof.
	\end{proof}

	\begin{proof}[\textbf{Proof of Theorem~\ref{thm:rank_selection}}]
		We prove the rank-selection consistency mode by mode. By Theorem~\ref{thm:one-stage-initialization-error}, under the stated conditions,
		\[
			\|\widetilde{\cm A}_{0,k}-\cm A_0^*\|_\F\lesssim \sqrt{r_{\bbm q}+r_{\bbm p}}\frac{\lambda_k}{\lambda_{\mathscr X,k}^{\min}} + \sqrt{s_\nu}\left(\frac{\varpi_k}{\lambda_{\mathscr X,k}^{\min}}\right)^{1-\nu/2} = o\bigl(c_s(\bbm p,\bbm q,n_k)\bigr)
		\]
		with probability tending to one. Since matricization preserves the Frobenius norm, $\|(\widetilde{\cm A}_{0,k}-\cm A_0^*)_{[s]}\|_\F = \|\widetilde{\cm A}_{0,k}-\cm A_0^*\|_\F$. Therefore, by Weyl's inequality,
		\[
			\max_j\left|\sigma_j\bigl((\widetilde{\cm A}_{0,k})_{[s]}\bigr)-\sigma_j\bigl((\cm A_0^*)_{[s]}\bigr)\right| \leq \|(\widetilde{\cm A}_{0,k}-\cm A_0^*)_{[s]}\|_{\op} \leq \|\widetilde{\cm A}_{0,k}-\cm A_0^*\|_\F = o_p\bigl(c_s(\bbm p,\bbm q,n_k)\bigr).
		\]
		We now analyze the ridge-type ratio criterion for a fixed mode $s\in[d+m]$. To simplify notation, write $c_{k,s}=c_s(\bbm p,\bbm q,n_k)$. Recall that
		\[
			\widehat r_{k,s} = \argmin_{1\leq r\leq \bar r_s-1}\widehat R_{k,s}(r),
			\ \text{with} \ 
			\widehat R_{k,s}(r) = \frac{\sigma_{r+1}\bigl((\widetilde{\cm A}_{0,k})_{[s]}\bigr)+c_{k,s}}{\sigma_{r}\bigl((\widetilde{\cm A}_{0,k})_{[s]}\bigr)+c_{k,s}}.
		\]

		Note that, for every $r$, $\sigma_r\bigl((\widetilde{\cm A}_{0,k})_{[s]}\bigr)+c_{k,s} = \sigma_r\bigl((\cm A_0^*)_{[s]}\bigr) + \{\sigma_r\bigl((\widetilde{\cm A}_{0,k})_{[s]}\bigr) - \sigma_r\bigl((\cm A_0^*)_{[s]}\bigr)\} + c_{k,s}$.
		Moreover, since $\sigma_{r_s^*}((\cm A_0^*)_{[s]})/\sigma_j((\cm A_0^*)_{[s]})\leq1$ for $1\leq j\leq r_s^*$, the assumed upper bound on $c_{k,s}$ implies
		\[
			c_{k,s} \ll \sigma_{r_s^*}\bigl((\cm A_0^*)_{[s]}\bigr)\min_{1\leq j\leq r_s^*-1}\frac{\sigma_{j+1}\bigl((\cm A_0^*)_{[s]}\bigr)}{\sigma_j\bigl((\cm A_0^*)_{[s]}\bigr)} \leq \min_{1\leq j\leq r_s^*-1}\sigma_{j+1}\bigl((\cm A_0^*)_{[s]}\bigr)\leq \sigma_r\bigl((\cm A_0^*)_{[s]}\bigr), \quad \forall r\leq r_s^*.
		\]
		Hence, $c_{k,s}=o(\sigma_r((\cm A_0^*)_{[s]}))$ for all $r\leq r_s^*$. Now write, for each candidate rank $r$,
		\[
			\sigma_r\bigl((\widetilde{\cm A}_{0,k})_{[s]}\bigr)+c_{k,s} = \sigma_r\bigl((\cm A_0^*)_{[s]}\bigr) + \left\{\sigma_r\bigl((\widetilde{\cm A}_{0,k})_{[s]}\bigr) - \sigma_r\bigl((\cm A_0^*)_{[s]}\bigr)\right\} + c_{k,s}.
		\]
		Therefore, we have the following two cases.
		\begin{itemize}
		\item[(a)] If $r>r_s^*$, then $\sigma_r((\cm A_0^*)_{[s]})=0$. Hence, $\sigma_r((\widetilde{\cm A}_{0,k})_{[s]})+c_{k,s} = o_p(c_{k,s}) + c_{k,s} = c_{k,s}(1+o_p(1))$.

		\item[(b)] If $r\leq r_s^*$, recall that $c_{k,s} = o(\sigma_r((\cm A_0^*)_{[s]}))$ and $\sigma_r((\widetilde{\cm A}_{0,k})_{[s]}) - \sigma_r\bigl((\cm A_0^*)_{[s]}\bigr) = o_p(c_{k,s}) = o_p(\sigma_r((\cm A_0^*)_{[s]}))$. Therefore, $\sigma_r\bigl((\widetilde{\cm A}_{0,k})_{[s]}\bigr)+c_{k,s} = \sigma_r\bigl((\cm A_0^*)_{[s]}\bigr) + o_p\left(\sigma_r\bigl((\cm A_0^*)_{[s]}\bigr)\right) = \sigma_r\bigl((\cm A_0^*)_{[s]}\bigr)(1+o_p(1))$.
		\end{itemize}
		We then analyze the ridge-ratio criterion in three cases.

		\noindent
		\textbf{\emph{Case I $(r>r_s^*)$.}}
		Both the numerator and denominator are dominated by $c_{k,s}$. Hence, $\widehat R_{k,s}(r) \xrightarrow{p}1$.

		\noindent
		\textbf{\emph{Case II $(r<r_s^*)$.}}
		Both terms are dominated by the corresponding nonzero singular values of $(\cm A_0^*)_{[s]}$. Thus,
		\[
			\widehat R_{k,s}(r)\xrightarrow{p}\frac{\sigma_{r+1}\bigl((\cm A_0^*)_{[s]}\bigr)}{\sigma_r\bigl((\cm A_0^*)_{[s]}\bigr)}.
		\]

		\noindent
		\textbf{\emph{Case III $(r=r_s^*)$.}}
		The numerator is dominated by $c_{k,s}$. In contrast, the denominator is dominated by $\sigma_{r_s^*}((\cm A_0^*)_{[s]})$. Since $c_{k,s}=o\{\sigma_{r_s^*}((\cm A_0^*)_{[s]})\}$, we have
		\[
			\widehat R_{k,s}(r_s^*) = \frac{c_{k,s}}{\sigma_{r_s^*}\bigl((\cm A_0^*)_{[s]}\bigr)}(1+o_p(1))\xrightarrow{p}0.
		\]

		Combining Cases I--III, with probability tending to one, $\widehat R_{k,s}(r_s^*)$ is the unique minimizer of $\{\widehat R_{k,s}(r):1\leq r\leq \bar r_s-1\}$. Therefore, $\mathbb P(\widehat r_{k,s}=r_s^*)\to1$ for all $s\in[d+m]$. By Bonferroni's inequality, and since the number of modes $d+m$ is fixed, we have
		\[
			\mathbb P\left(\widehat{\bbm r}_k^{loc}=\bbm r^*\right) = \mathbb P\left(\bigcap_{s=1}^{d+m}\{\widehat r_{k,s}=r_s^*\}\right)\geq 1- \sum_{s=1}^{d+m}\mathbb P\left(\widehat r_{k,s}\neq r_s^*\right)\to1.\qedhere
		\]
	\end{proof}

\section{Proof of Minimax Lower Bounds}\label{sec:minimax-lower bound}

	\begin{proof}[\textbf{Proof of Theorem~\ref{thm:joint-minimax-lower-bound}}]
		The proof for Theorem~\ref{thm:joint-minimax-lower-bound} follows by deriving Frobenius-norm lower bounds for estimating $\cm A_0^*$ and $\{\cm B_k^*\}_{k=1}^K$ in three complementary cases. The first concerns structural non-identifiability, as established in Theorem~\ref{thm:nonidentifiability-lower-bound}; the second concerns estimation of the low Tucker-rank shared component $\cm A_0^*$ in the single-client or federated settings, as established in Theorems~\ref{thm:single-client-minimax-lower} or~\ref{thm:federated-minimax-lower}; and the third concerns estimation of the sparse client-specific deviations $\{\cm B_k^*\}_{k=1}^K$, as established in Theorem~\ref{thm:single-client-sparse-deviation-minimax-lower}. 
		For the statistical lower bounds, we restrict attention to the normalized Gaussian submodel in which $\vect(\cm X_{k,i})\sim N(\bm0,\bm I_p)$ and $\vect(\cm E_{k,i})\sim N(\bm0,\bm I_q)$ independently across clients and observations and independently of one another. This submodel satisfies Assumptions~\ref{assump:subg-design} and~\ref{assump:subg-noise}; hence, a minimax lower bound established on it also applies to the full model class. 
		Combining the lower bounds obtained in these three cases yields the desired joint minimax lower bound.
	\end{proof}

	\begin{theorem}\label{thm:nonidentifiability-lower-bound}
		Suppose that $\tucrank(\cm A_0^*)\leq \bbm r$. Assume that $q_\ell\geq 2r_\ell$ for $\ell\in [m]$ and $p_\ell\geq 2r_{m+\ell}$ for $\ell\in[d]$.
		Furthermore, assume that $s_0\geq 2(r_{\bbm q}\wedge r_{\bbm p})$ for the exact sparse case, and $s_\nu\geq 2(r_{\bbm q}\wedge r_{\bbm p})\zeta^\nu$ for the weak sparse case. Then, for the federated setting,
		\[
			\inf_{\widehat{\boldsymbol{\mathscr A}}_0,\widehat{\boldsymbol{\mathscr B}}_k}\sup_{(\boldsymbol{\mathscr A}_0^*,\{\boldsymbol{\mathscr B}_k^*\}_{k=1}^K)\in\Theta^{\mathrm{fed}}}\mathbb E_{\boldsymbol{\mathscr A}_0^*,\{\boldsymbol{\mathscr B}_k^*\}_{k=1}^K}\left[\|\widehat{\cm A}_0-\cm A_0^*\|_{\F} + \|\widehat{\cm B}_k-\cm B_k^*\|_{\F}\right] \gtrsim \sqrt{r_{\bbm q}\wedge r_{\bbm p}}\,\zeta,
		\]
		and for the single-client setting,
		\[
			\inf_{\widehat{\boldsymbol{\mathscr A}}_0,\widehat{\boldsymbol{\mathscr B}}_k}\sup_{(\boldsymbol{\mathscr A}_0^*,\boldsymbol{\mathscr B}_k^*)\in\Theta_k^{\mathrm{single}}}\mathbb E_{\boldsymbol{\mathscr A}_0^*,\boldsymbol{\mathscr B}_k^*}\left[\|\widehat{\cm A}_0-\cm A_0^*\|_{\F} + \|\widehat{\cm B}_k-\cm B_k^*\|_{\F}\right]\gtrsim \sqrt{r_{\bbm q}\wedge r_{\bbm p}}\,\zeta, \quad \forall k\in[K].
		\]
	\end{theorem}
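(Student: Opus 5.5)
The plan is a two-point (Le Cam) argument built on exact non-identifiability. I will construct two parameter configurations that induce \emph{identical} data distributions but whose components $(\cm A_0^*,\cm B_k^*)$ differ by order $\sqrt{r_{\bbm q}\wedge r_{\bbm p}}\,\zeta$ in Frobenius norm. Since the laws of the observations coincide, no estimator can tell the two apart. Consequently, for any $(\widehat{\cm A}_0,\widehat{\cm B}_k)$ the maximum of the two risks is at least half the separation. Concretely, the triangle inequality gives
\[
\|\widehat{\cm A}_0-\cm A_0^{(1)}\|_\F+\|\widehat{\cm A}_0-\cm A_0^{(2)}\|_\F\ge\|\cm A_0^{(1)}-\cm A_0^{(2)}\|_\F,
\]
and averaging the two expectations under the common law yields the bound. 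No KL or Fano computation is needed, so the argument works for any design or noise law, and in particular for the isotropic Gaussian submodel.

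The construction goes as follows. Let $r_0=r_{\bbm q}\wedge r_{\bbm p}$. Using $q_\ell\ge 2r_\ell$ and $p_\ell\ge 2r_{m+\ell}$, I pick in each mode $s$ a set of $r_s$ coordinate indices disjoint from the support of a fixed base component. I then let $\cm D=\zeta\,\mathrm{Fold}_{[S_X]}(\bm D)$, where $\bm D\in\mathbb R^{q\times p}$ places ones at $r_0$ distinct (row, column) positions lying in these chosen blocks, with distinct rows and distinct columns. Its properties are as follows. $\|\bm D\|_{\op}=1$, so $\|\cm D_{[S_X]}\|_{\op}=\zeta$. Also $\|\cm D\|_0=r_0$ and $\|\cm D\|_\nu^\nu=r_0\zeta^\nu$. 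Finally, $\cm D$ is supported on a sub-block of multi-index size $r_1\times\cdots\times r_{d+m}$, so $\tucrank(\cm D)\le\bbm r$. In fact I will take $\cm A_0^{(1)}=\bm 0$, so the chosen coordinate blocks can be any blocks, and the condition $q_\ell\ge 2r_\ell$ is used only to leave room for combining this case with the other lower bounds. The mode-wise diagonal placement needs care: I arrange the positions along the product block so that each mode unfolding has rank at most $r_s$. This holds automatically because $\cm D$ lives inside the block.

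The two hypotheses are then
\[
(\cm A_0^{(1)},\cm B_k^{(1)})=(\bm 0,\cm D)\quad\text{and}\quad(\cm A_0^{(2)},\cm B_k^{(2)})=(\cm D,\bm 0)\quad\text{for all }k.
\]
Both have the same sum $\cm A_k=\cm D$, hence the same distribution of all data in both settings. Membership in $\Theta$ holds on both sides. First, $\tucrank(\cm D)\le\bbm r$. Second, $\cm D\in\mathbb B_\nu$: for $\nu=0$ this follows from $r_0\le s_0/2$, and for $\nu\in(0,1)$ from $r_0\zeta^\nu\le s_\nu/2$. Third, the operator-norm constraint holds with equality at $\zeta$, and the zero tensor trivially satisfies it. The separation is
\[
\|\cm A_0^{(1)}-\cm A_0^{(2)}\|_\F+\|\cm B_k^{(1)}-\cm B_k^{(2)}\|_\F=2\zeta\sqrt{r_0},
\]
so the risk is at least $\zeta\sqrt{r_0}$. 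The same pair serves the federated class, where all clients share $\cm B_k=\cm D$, and the single-client class.

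The main obstacle I expect is verifying the Tucker-rank constraint and the operator-norm constraint simultaneously while keeping Frobenius mass $\sqrt{r_0}\,\zeta$. Placing $r_0$ unit entries on a ``diagonal'' of the $S_X$-matricization must respect every mode-$s$ unfolding rank bound $r_s$. My first attempt is to take a rank-$r_{\bbm q}$ by rank-$r_{\bbm p}$ identity-like pattern between the Kronecker-product index blocks $\prod_{\ell\le m}[r_\ell]$ and $\prod_{\ell}[r_{m+\ell}]$. This is supported in the product block, so all Tucker ranks are at most $\bbm r$, and it has $S_X$-operator norm $1$.
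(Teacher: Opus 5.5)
Your proof is correct. It uses the same mechanism as the paper: two configurations with the same $\cm A_0+\cm B_k$, hence identical data laws and no KL or Fano step, followed by the triangle inequality. The difference lies in the pair of hypotheses.

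The paper takes two Tucker tensors $\cm T^{(0)},\cm T^{(1)}$ supported on \emph{disjoint} coordinate blocks; this is where $q_\ell\ge 2r_\ell$ and $p_\ell\ge 2r_{m+\ell}$ are used. It sets $\cm F=\cm T^{(0)}+\cm T^{(1)}$ and compares $(-\cm T^{(0)},\bm 0)$ with $(\cm T^{(1)},-\cm F)$. It then needs mixed-product orthogonality of the Kronecker factors to show $\|\cm F_{[S_X]}\|_{\op}=\zeta$, even though $\cm F$ has $2(r_{\bbm q}\wedge r_{\bbm p})$ nonzero entries. That is why its sparsity conditions carry the factor $2$.

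Your swap of $(\bm 0,\cm D)$ against $(\cm D,\bm 0)$ uses a single block tensor $\cm D$, which is exactly the paper's $\cm T^{(0)}$, and avoids all of this. It needs only $r_{\bbm q}\wedge r_{\bbm p}\le s_0$ (resp.\ $(r_{\bbm q}\wedge r_{\bbm p})\zeta^\nu\le s_\nu$) and no dimensional room beyond $q_\ell\ge r_\ell$, $p_\ell\ge r_{m+\ell}$. The only cost is a constant: you get $\sqrt{r_{\bbm q}\wedge r_{\bbm p}}\,\zeta$ instead of $\sqrt{2(r_{\bbm q}\wedge r_{\bbm p})}\,\zeta$.

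So in your route the assumptions $q_\ell\ge 2r_\ell$, $p_\ell\ge 2r_{m+\ell}$ and the factor $2$ in the sparsity conditions are simply superfluous; they are not needed to ``leave room'' for anything in this step. The paper's construction does keep the shared component nonzero under both hypotheses. That buys nothing for this theorem, since $\Theta$ only imposes $\tucrank(\cm A_0)\le\bbm r$.
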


	\begin{proof}[\textbf{Proof of Theorem \ref{thm:nonidentifiability-lower-bound}}]
		Throughout this construction, $\bm e_j$ denotes the $j$th canonical basis vector in the corresponding Euclidean space. For $\ell\in[m]$, define two disjoint coordinate matrices $\bm U_\ell^{(0)}=(\bm e_1,\cdots,\bm e_{r_\ell})\in\mathbb R^{q_\ell\times r_\ell}$ and $\bm U_\ell^{(1)}=(\bm e_{r_\ell+1},\cdots,\bm e_{2r_\ell}) \in \mathbb R^{q_\ell\times r_\ell}$. For $\ell\in[d]$, similarly define two disjoint coordinate matrices $\bm V_\ell^{(0)}=(\bm e_{1},\cdots,\bm e_{r_{m+\ell}})\in\mathbb R^{p_\ell\times r_{m+\ell}}$ and $\bm V_\ell^{(1)}=(\bm e_{1+r_{m+\ell}},\cdots,\bm e_{2r_{m+\ell}})\in\mathbb R^{p_\ell\times r_{m+\ell}}$.
		Let $\cm C\in \mathbb R^{r_1\times\cdots\times r_m\times r_{m+1}\times\cdots\times r_{d+m}}$ be a core tensor whose $S_X$-matricization is
		\[
			\cm C_{[S_X]} = \zeta
			\begin{pmatrix}
				\bm I_{r_{\bbm q}\wedge r_{\bbm p}} & \bm 0 \\
				\bm 0 & \bm 0
			\end{pmatrix}
			\in\mathbb R^{r_{\bbm q}\times r_{\bbm p}}.
		\]
		For $s\in\{0,1\}$, define
		\begin{align*}
				\cm T^{(s)} = \cm C\times_{1}\bm U_1^{(s)}\cdots\times_{m}\bm U_m^{(s)}\times_{m+1}\bm V_1^{(s)}\cdots\times_{d+m}\bm V_d^{(s)}.
		\end{align*}
		We first verify that each $\cm T^{(s)}$ satisfies the required Tucker-rank constraint $\tucrank(\cm T^{(s)})\leq \bbm r$. For each $s\in\{0,1\}$, $\cm T^{(s)}$ is written in Tucker form with core $\cm C$, output-mode factor matrices $\bm U_1^{(s)},\cdots,\bm U_m^{(s)}$, and input-mode factor matrices $\bm V_1^{(s)},\cdots,\bm V_d^{(s)}$. Note that for $\cm M=\cm S\times_1\bm U_1\times_2\cdots\times_\ell\bm U_\ell=\cm S\times_{i=1}^{\ell}\bm U_i$ where $\cm S\in\mathbb R^{r_1\times\cdots\times r_\ell}$ is the core tensor and $\bm U_i\in\mathbb R^{p_i\times r_i}$ has orthonormal columns, $\cm M_{[s]}=\bm U_s\cm S_{[s]}(\otimes_{j\neq s}\bm U_j)^\top$.
		Hence, by the standard Tucker matricization formula, for $s \in \{0,1\}$,
		\[
			\rank\bigl((\cm T^{(s)})_{[\ell]}\bigr) \leq \rank(\bm U_\ell^{(s)})\leq r_\ell, \ \forall \ell\in[m]
			\ \text{and} \ 
			\rank\bigl((\cm T^{(s)})_{[m+\ell]}\bigr)\leq \rank(\bm V_\ell^{(s)})\leq r_{m+\ell},\ \forall \ell\in[d].
		\]
		Thus $\tucrank(\cm T^{(s)})\leq\bbm r$ for $s\in\{0,1\}$.
		Define the perturbation $\cm F=\cm T^{(1)}+\cm T^{(0)}$. We then set $\cm A_0^{(0)}=-\cm T^{(0)}$ and $\cm A_0^{(1)}=\cm T^{(1)}$. Thus $\cm A_0^{(1)}-\cm A_0^{(0)}=\cm F$. Hence $\tucrank(\cm A_0^{(0)})\leq\bbm r$ and $\tucrank(\cm A_0^{(1)})\leq\bbm r$. 
		
		Moreover, since the coordinate blocks used in $\cm T^{(0)}$ and $\cm T^{(1)}$ are disjoint in every mode, their $S_X$-matricizations have mutually orthogonal row spaces and mutually orthogonal column spaces. Note that, under the present construction, $(\cm T^{(s)})_{[S_X]} = \bm Q_s \cm C_{[S_X]} \bm P_s^\top$, where $\bm Q_s = \bm U_m^{(s)}\otimes\cdots\otimes\bm U_1^{(s)}$ and $\bm P_s = \bm V_d^{(s)}\otimes\cdots\otimes\bm V_1^{(s)}$. Since each $\bm U_\ell^{(s)}$ and $\bm V_\ell^{(s)}$ consists of canonical basis vectors, we have $(\bm U_\ell^{(s)})^\top \bm U_\ell^{(s)}=\bm I_{r_\ell}$ and $(\bm V_\ell^{(s)})^\top \bm V_\ell^{(s)}=\bm I_{r_{m+\ell}}$. Therefore, by the mixed-product property of Kronecker products,
		\[
			\begin{aligned}
				\bm Q_s^\top \bm Q_s
				&= \left(\bm U_m^{(s)}\otimes\cdots\otimes\bm U_1^{(s)}\right)^\top\left(\bm U_m^{(s)}\otimes\cdots\otimes\bm U_1^{(s)}\right) \\
				&= (\bm U_m^{(s)})^\top\bm U_m^{(s)}\otimes\cdots\otimes(\bm U_1^{(s)})^\top\bm U_1^{(s)} \\
				&= \bm I_{r_m}\otimes\cdots\otimes\bm I_{r_1} = \bm I_{r_{\bbm q}}.
			\end{aligned}
		\]
		Similarly,
		\[
			\begin{aligned}
				\bm P_s^\top \bm P_s
				&= \left(\bm V_d^{(s)}\otimes\cdots\otimes\bm V_1^{(s)}\right)^\top\left(\bm V_d^{(s)}\otimes\cdots\otimes\bm V_1^{(s)}\right) \\
				&= (\bm V_d^{(s)})^\top\bm V_d^{(s)}\otimes\cdots\otimes(\bm V_1^{(s)})^\top\bm V_1^{(s)} \\
				&= \bm I_{r_{d+m}}\otimes\cdots\otimes\bm I_{r_{m+1}} = \bm I_{r_{\bbm p}}.
			\end{aligned}
		\]
		Moreover, since the coordinate blocks used in the construction are disjoint, $(\bm U_\ell^{(0)})^\top\bm U_\ell^{(1)}=\bm 0$ for $\ell\in[m]$ and $(\bm V_\ell^{(0)})^\top\bm V_\ell^{(1)}=\bm 0$ for $\ell\in[d]$. Hence, again by the mixed-product property,
		\[
			\begin{aligned}
				\bm Q_0^\top\bm Q_1 &= \left(\bm U_m^{(0)}\otimes\cdots\otimes\bm U_1^{(0)}\right)^\top\left(\bm U_m^{(1)}\otimes\cdots\otimes\bm U_1^{(1)}\right) = (\bm U_m^{(0)})^\top\bm U_m^{(1)}\otimes\cdots\otimes(\bm U_1^{(0)})^\top\bm U_1^{(1)} = \bm 0,\\
				\bm P_0^\top\bm P_1 &= \left(\bm V_d^{(0)}\otimes\cdots\otimes\bm V_1^{(0)}\right)^\top\left(\bm V_d^{(1)}\otimes\cdots\otimes\bm V_1^{(1)}\right) = (\bm V_d^{(0)})^\top\bm V_d^{(1)}\otimes\cdots\otimes(\bm V_1^{(0)})^\top\bm V_1^{(1)} = \bm 0.
			\end{aligned}
		\]
		Hence $\cm T^{(0)}_{[S_X]}$ and $\cm T^{(1)}_{[S_X]}$ have mutually orthogonal column spaces and mutually orthogonal row spaces.

		Since $\cm F_{[S_X]} = \cm T^{(1)}_{[S_X]}+\cm T^{(0)}_{[S_X]}$, we can write
		\[
			\cm F_{[S_X]} =
				\begin{pmatrix}
					\bm Q_0 , \bm Q_1
				\end{pmatrix}
				\begin{pmatrix}
					\cm C_{[S_X]} & \bm 0 \\
					\bm 0 & \cm C_{[S_X]}
				\end{pmatrix}
				\begin{pmatrix}
					\bm P_0 , \bm P_1
				\end{pmatrix}^{\top}.
		\]
		Since $(\bm Q_0,\bm Q_1)$ and $(\bm P_0,\bm P_1)$ have orthonormal columns, multiplying by them preserves the nonzero singular values of the middle block-diagonal matrix. Hence the nonzero singular values of $\cm F_{[S_X]}$ are those of $\cm C_{[S_X]}$ repeated twice. Since $\cm C_{[S_X]}$ has $r_{\bbm q}\wedge r_{\bbm p}$ nonzero singular values, all equal to $\zeta$, we have $\|\cm F_{[S_X]}\|_{\op}=\zeta$, $\|\cm F\|_{\F}^2 = \|\cm F_{[S_X]}\|_{\F}^2 = 2(r_{\bbm q}\wedge r_{\bbm p})\zeta^2 \asymp(r_{\bbm q}\wedge r_{\bbm p})\zeta^2$.

		By construction, $\cm F$ has at most $2(r_{\bbm q}\wedge r_{\bbm p})$ nonzero entries, each with magnitude $\zeta$. Hence, in the exact sparse case, $\|\cm F\|_0\leq2(r_{\bbm q}\wedge r_{\bbm p})\leq s_0$, while in the weak sparse case, $\|\cm F\|_\nu^\nu \leq 2(r_{\bbm q}\wedge r_{\bbm p})\zeta^\nu \leq s_\nu$. Together with $\|\cm F_{[S_X]}\|_{\op}=\zeta$, this implies that both $\cm F$ and $-\cm F$ are admissible as client-specific deviations.

		We now construct two feasible decompositions:
		\[
			\begin{array}{lll}
				\text{Decomposition 0:}& \cm A_0^{(0)}=-\cm T^{(0)}, & \cm B_{k}^{(0)}=\bm 0,\quad k\in[K],\\[1mm]
				\text{Decomposition 1:}& \cm A_0^{(1)}=\cm T^{(1)}, & \cm B_{k}^{(1)}=-\cm F=-(\cm T^{(1)}+\cm T^{(0)}),\quad k\in[K].
			\end{array}
		\]
		Both decompositions are feasible by the preceding construction. Moreover, for every $k\in[K]$,
		\[
			\cm A_0^{(1)}+\cm B_{k}^{(1)} = \cm T^{(1)}-\cm F = -\cm T^{(0)} = \cm A_0^{(0)}+\cm B_{k}^{(0)}.
		\]
		Therefore, the two decompositions induce the same distribution of the observations. Then, for any estimators $(\widehat{\cm A}_0,\widehat{\cm B}_k)$, the federated risk satisfies
		\begin{align*}
			&\sup_{(\boldsymbol{\mathscr A}_0,\{\boldsymbol{\mathscr B}_k\}_{k=1}^K)\in\Theta^{\mathrm{fed}}}\mathbb E_{\boldsymbol{\mathscr A}_0,\{\boldsymbol{\mathscr B}_k\}_{k=1}^K}\left[\|\widehat{\cm A}_0-\cm A_0\|_{\F} + \|\widehat{\cm B}_k-\cm B_k\|_{\F}\right] \\
			&\quad\geq \max_{s\in\{0,1\}}\mathbb E\left[\|\widehat{\cm A}_0-\cm A_0^{(s)}\|_{\F} + \|\widehat{\cm B}_k-\cm B_k^{(s)}\|_{\F}\right] \\
			&\quad\geq\frac12\mathbb E\Bigl[\|\widehat{\cm A}_0-\cm A_0^{(0)}\|_{\F} + \|\widehat{\cm A}_0-\cm A_0^{(1)}\|_{\F} + \|\widehat{\cm B}_k-\cm B_k^{(0)}\|_{\F} + \|\widehat{\cm B}_k-\cm B_k^{(1)}\|_{\F}\Bigr] \\
			&\quad\geq \frac12\left(\|\cm A_0^{(1)}-\cm A_0^{(0)}\|_{\F} + \|\cm B_k^{(1)}-\cm B_k^{(0)}\|_{\F}\right) \\
			&\quad= \|\cm F\|_{\F} \asymp \sqrt{r_{\bbm q}\wedge r_{\bbm p}}\,\zeta,
		\end{align*}
		where the third inequality follows from the triangle inequality.
		Taking the infimum over all estimators proves the federated statement. The single-client statement follows from the same construction restricted to client $k$, namely by using only the two feasible pairs $(\cm A_0^{(0)},\cm B_k^{(0)})$ and $(\cm A_0^{(1)},\cm B_k^{(1)})$ in $\Theta_k^{\mathrm{single}}$.
	\end{proof}

	\begin{theorem}\label{thm:single-client-minimax-lower}
		Fix a client $k\in[K]$. Let $\mathcal F_{k}^{\mathrm{single}}(\bbm r) = \left\{\cm A_0:\tucrank(\cm A_0)\leq \bbm r\right\}$. Then,
		\[
			\inf_{\widehat{\boldsymbol{\mathscr A}}_{0,k}}\sup_{\boldsymbol{\mathscr A}_0^*\in\mathcal F_{k}^{\mathrm{single}}(\bbm r)}\mathbb E_{\boldsymbol{\mathscr A}_0^*}\left[\|\widehat{\cm A}_{0,k}-\cm A_0^*\|_\F\right] \gtrsim \sqrt{\frac{df_{\bbm r}}{n_k}}.
		\]
	\end{theorem}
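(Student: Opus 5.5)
We will apply Fano's method under the isotropic Gaussian submodel $\vect(\cm X_{k,i})\sim N(\bm0,\bm I_p)$ and $\vect(\cm E_{k,i})\sim N(\bm0,\bm I_q)$. Client $k$'s data depend only on $\cm A_0+\cm B_k$, so we take $\cm B_k^*=\bm 0$. This is feasible in $\Theta_k^{\mathrm{single}}$, and the problem reduces to estimating a Tucker-rank-$\bbm r$ tensor $\cm A_0^*$ from $n_k$ Gaussian linear observations. For two parameters $\cm A,\cm A'$, the KL divergence between the induced joint laws of $\{(\cm X_{k,i},\cm Y_{k,i})\}_{i=1}^{n_k}$ equals
\[
\frac{n_k}{2}\,\mathbb E\|\langle\cm A-\cm A',\cm X_{k,i}\rangle\|_\F^2=\frac{n_k}{2}\|\cm A-\cm A'\|_\F^2 .
\]
It therefore suffices to build a packing $\{\cm A^{(j)}\}_{j=1}^N\subset\mathcal F_k^{\mathrm{single}}(\bbm r)$ with $\log N\gtrsim df_{\bbm r}$, pairwise separation $\|\cm A^{(j)}-\cm A^{(l)}\|_\F\gtrsim\epsilon$, and diameter $\lesssim\epsilon$. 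Taking $\epsilon\asymp c\sqrt{df_{\bbm r}/n_k}$ with $c$ small makes the KL divergences at most $\alpha\log N$ for some $\alpha<1/2$, and Fano's inequality then gives the rate.

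Since $df_{\bbm r}\asymp\prod_s r_s+\sum_{s=1}^{d+m}r_s(d_s-r_s)$, with $d_s=q_s$ for $s\le m$ and $d_{m+s}=p_s$, we handle each summand with its own packing and take the maximum of the resulting lower bounds. There are $d+m+1$ summands, a fixed number, so the maximum is of the order of the sum.

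The \emph{core} part uses fixed factors $\bm U_s=(\bm e_1,\dots,\bm e_{r_s})$ and lets the core $\cm S$ range over a Varshamov--Gilbert packing of $\{\pm\epsilon/\sqrt{\prod r_s}\}^{r_1\times\cdots\times r_{d+m}}$. This family has log-cardinality $\gtrsim\prod_s r_s$, and all its members have Tucker rank at most $\bbm r$.

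The \emph{mode-$s$ factor} part fixes a well-conditioned core, for instance one with $\cm S_{[s]}$ having orthogonal rows of norm $\asymp1$ and all other matricizations of full rank, together with fixed coordinate factors in the modes $a\neq s$. It then perturbs only the mode-$s$ factor as $\bm U_s^{(j)}=\binom{\bm I_{r_s}}{\bm D^{(j)}}$ with $\bm D^{(j)}\in\{\pm\epsilon'/\sqrt{r_s(d_s-r_s)}\}^{(d_s-r_s)\times r_s}$, and again passes to a Varshamov--Gilbert subset. Because $q_\ell\ge2r_\ell$ and $p_\ell\ge 2r_{m+\ell}$, we have $d_s-r_s\ge r_s$, which secures $\log N\gtrsim r_s(d_s-r_s)$. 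The resulting tensors $\cm S\times_s\bm U_s^{(j)}\times_{a\ne s}\bm U_a$ have Tucker rank at most $\bbm r$, and their pairwise distances equal $\|(\bm D^{(j)}-\bm D^{(l)})\cm S_{[s]}\|_\F\asymp\|\bm D^{(j)}-\bm D^{(l)}\|_\F$. We avoid needing orthonormality of the factors because the parameter class only constrains the ranks of the unfoldings.

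The main obstacle is this factor-packing step, specifically verifying that distances of the form $\|(\bm D^{(j)}-\bm D^{(l)})\cm S_{[s]}(\otimes_{a\neq s}\bm U_a)^\top\|_\F$ are two-sided comparable to $\|\bm D^{(j)}-\bm D^{(l)}\|_\F$. Our first attempt is to choose $\cm S_{[s]}$ with orthonormal rows scaled by a constant (for example, the identity embedded as a Kronecker-structured core), so that the map $\bm D\mapsto\bm D\cm S_{[s]}$ is an isometry up to that constant. Once this holds, Fano's inequality (as in Tsybakov's Theorem 2.5) gives a probability of error bounded away from zero, and Markov's inequality turns it into the expectation bound $\gtrsim\sqrt{df_{\bbm r}/n_k}$.
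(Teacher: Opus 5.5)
Your proposal is correct. Its skeleton coincides with the paper's: reduce to $\cm B_k^*=\bm 0$, use $\mathrm{KL}=\tfrac{n_k}{2}\|\cm A-\cm A'\|_\F^2$ under the isotropic Gaussian design, and take the maximum of a core-variation packing and one factor-space packing per mode, using that $d+m$ is fixed.

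Where you genuinely differ is in how the factor packing is built.
\begin{itemize}
\item \textbf{The paper's route.} It keeps orthonormal factors. It takes a local packing of the Grassmannian $\mathcal G_{d_s,r_s}$ around $\bm U_s$ (Cai--Ma--Wu) and aligns each element to $\bm U_s$ by a Procrustes rotation. The diameter is bounded through $\sigma_1(\cm S_{[s]})\max_i\|\widetilde{\bm U}_s^{(i)}-\bm U_s\|_\F$. Separation comes from $\sigma_{r_s}(\cm S_{[s]})$ combined with the Cai--Zhang inequality $\min_{\bm O}\|\bm U-\bm V\bm O\|_\F\geq d(\bm U,\bm V)/\sqrt2$. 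The core packing is imported from Luo--Zhang.
\item \textbf{Your route.} You use the non-orthonormal parametrization $\bm U_s^{(j)}=\binom{\bm I_{r_s}}{\bm D^{(j)}}$. Then $\bm U_s^{(j)}-\bm U_s^{(l)}=\binom{\bm 0}{\bm D^{(j)}-\bm D^{(l)}}$, so pairwise distances are exactly $\|(\bm D^{(j)}-\bm D^{(l)})\cm S_{[s]}\|_\F$. Separation and diameter then follow at once from $\sigma_{r_s}(\cm S_{[s]})\asymp\sigma_1(\cm S_{[s]})$ and a Varshamov--Gilbert cube. No Grassmannian metric entropy or rotational alignment is needed. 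This is legitimate precisely because, as you note, $\mathcal F_k^{\mathrm{single}}(\bbm r)$ constrains only the ranks of the unfoldings.
\end{itemize}
The paper's construction stays within the Stiefel/Tucker parametrization used in its upper-bound analysis and in Luo--Zhang. Yours is more elementary and self-contained.

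Three small remarks:
\begin{itemize}
\item The conditions $q_\ell\geq2r_\ell$ and $p_\ell\geq2r_{m+\ell}$ are not hypotheses of this statement, and your argument does not need them. Varshamov--Gilbert on $\{\pm1\}^{(d_s-r_s)r_s}$ already gives $\log N\gtrsim r_s(d_s-r_s)$.
\item Because the ranks are fixed, any component whose dimension falls below the Varshamov--Gilbert threshold should be handled by a two-point (Le Cam) bound rather than Fano. The paper's proof glosses over the same point.
\item Both constructions implicitly require $r_s\leq\prod_{a\neq s}r_a$, so that a core with well-conditioned full-row-rank $\cm S_{[s]}$ exists.
\end{itemize}
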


	\begin{proof}[\textbf{Proof of Theorem \ref{thm:single-client-minimax-lower}}]
		It suffices to prove the lower bound over the least favorable submodel with $\cm B_k^*=\bm 0$. Since this submodel is contained in the full single-client parameter space, any minimax lower bound over it is also a valid lower bound for the original problem. Under this submodel, the response is generated from $\cm Y_{k,i}=\langle \cm A_0^*,\cm X_{k,i}\rangle+\cm E_{k,i}$ for $i\in[n_k]$ and the task reduces to estimating the common low Tucker-rank  tensor $\cm A_0^*$.

		Suppose that, for some radius $\delta>0$, there exists a finite packing set $\mathcal A=\{\cm A_0^{(1)},\cdots,\cm A_0^{(M)}\} \subseteq \mathcal F_k^{\mathrm{single}}(\bbm r)$ such that $\min_{j\neq \ell} \|\cm A_0^{(j)}-\cm A_0^{(\ell)}\|_\F \geq 2\delta$. Let $J$ be a random variable uniformly distributed over $[M]=\{1,\cdots,M\}$. Conditional on $J=j$, let the sample $\{(\cm X_{k,i},\cm Y_{k,i})\}_{i=1}^{n_k}$ be generated according to the parameter $\cm A_0^{(j)}$, and let $\mathbb P$ denote the joint law of $\left(\{(\cm X_{k,i},\cm Y_{k,i})\}_{i=1}^{n_k},J\right)$. By \citet[Proposition~15.1]{wainwright2019high}, we have
		\begin{align}\label{eq:single-estimation-testing-reduction}
			\inf_{\widehat{\boldsymbol{\mathscr A}}_{0,k}}\sup_{\boldsymbol{\mathscr A}_0^*\in\mathcal F_k^{\mathrm{single}}(\bbm r)}\mathbb E_{\boldsymbol{\mathscr A}_0^*}\left[\|\widehat{\cm A}_{0,k}-\cm A_0^*\|_\F\right]\geq\delta\inf_{\psi}\mathbb P\left(\psi\bigl(\{(\cm X_{k,i},\cm Y_{k,i})\}_{i=1}^{n_k}\bigr)\neq J\right),
		\end{align}
		where the infimum is over all testing rules $\psi$ based on the observed sample. Thus, it remains to lower bound the probability of error in this multiple testing problem.

		Let $\mathcal X_k=\{\cm X_{k,i}\}_{i=1}^{n_k}$ and $\mathcal Y_k=\{\cm Y_{k,i}\}_{i=1}^{n_k}$. For each $j\in[M]$, let $\mathbb P_k^{(j)}$ denote the joint law of $(\mathcal X_k,\mathcal Y_k)$ under $\cm A_0^*=\cm A_0^{(j)}$, and let $\mathbb Q_{Y,k}^{(j)}(\mathcal Y_k\mid\mathcal X_k)$ denote the corresponding conditional law of $\mathcal Y_k$ given $\mathcal X_k$. Write $\bbm x_{k,i}=\vect(\cm X_{k,i})$, $\bbm y_{k,i}=\vect(\cm Y_{k,i})$, and $\bm A^{(j)}=(\cm A_0^{(j)})_{[S_X]}$. Under the $j$-th hypothesis, $\bbm y_{k,i}\mid\bbm x_{k,i}\sim \mathscr N(\bm A^{(j)}\bbm x_{k,i},\bm I_q)$ for $i\in[n_k]$. Recall that $P\ll Q$ means that $P$ is absolutely continuous with respect to $Q$, namely, $Q(A)=0$ implies $P(A)=0$ for every measurable set $A$. For such probability measures $P$ and $Q$,
		\[
		\mathrm{KL}(P\|Q):=\mathbb E_P\left[\log\frac{dP}{dQ}\right].
		\]
		In particular, for two Gaussian laws with common covariance \(\bm I_q\),
		\[
		\mathrm{KL}\left(\mathscr N(\bbm\mu_1,\bm I_q)\|\mathscr N(\bbm\mu_2,\bm I_q)\right)
		=
		\mathbb E_{\bm Z\sim\mathscr N(\bbm\mu_1,\bm I_q)}
		\left[
		\log\frac{\phi_q(\bm Z;\bbm\mu_1,\bm I_q)}
		{\phi_q(\bm Z;\bbm\mu_2,\bm I_q)}
		\right]
		=
		\frac12\|\bbm\mu_1-\bbm\mu_2\|_2^2,
		\]
		where $\phi_q(\cdot;\bbm\mu,\bm I_q)$ denotes the density of $\mathscr N(\bbm\mu,\bm I_q)$. Because the observations are conditionally independent across $i$ given $\mathcal X_k$, the KL divergence between the two conditional product distributions is the sum of the sample-wise KL divergences. The preceding identity therefore gives, for any $j\neq\ell$,
		\begin{align}\label{eq:single-conditional-kl}
			\mathrm{KL}\left(\mathbb Q_{Y,k}^{(j)}(\mathcal Y_k\mid\mathcal X_k)\,\|\,\mathbb Q_{Y,k}^{(\ell)}(\mathcal Y_k\mid\mathcal X_k)\right) &=\sum_{i=1}^{n_k}\mathrm{KL}\left(\mathscr N\left(\bm A^{(j)}\bbm x_{k,i},\bm I_q\right)\,\middle\|\,\mathscr N\left(\bm A^{(\ell)}\bbm x_{k,i},\bm I_q\right)\right)\notag\\
			&= \frac12\sum_{i=1}^{n_k}\left\|\left(\bm A^{(j)}-\bm A^{(\ell)}\right)\bbm x_{k,i}\right\|_2^2.
		\end{align}
		Since the marginal distribution of $\mathcal X_k$ is the same under all hypotheses, the chain rule for KL divergence gives the expectation of the conditional KL divergence. Moreover, $\mathbb E\bbm x_{k,i}\bbm x_{k,i}^\top=\bm I_p$ under the isotropic Gaussian design. Consequently,
		\begin{align}\label{eq:single-joint-kl}
			\mathrm{KL}\left(\mathbb P_k^{(j)}\,\|\,\mathbb P_k^{(\ell)}\right) &= \mathbb E_{\mathcal X_k}\mathrm{KL}\left(\mathbb Q_{Y,k}^{(j)}(\mathcal Y_k\mid\mathcal X_k)\,\|\,\mathbb Q_{Y,k}^{(\ell)}(\mathcal Y_k\mid\mathcal X_k)\right)\notag\\
			&= \frac12\sum_{i=1}^{n_k}\mathbb E\left[\left\|\left(\bm A^{(j)}-\bm A^{(\ell)}\right)\bbm x_{k,i}\right\|_2^2\right]\notag\\
			&= \frac12\sum_{i=1}^{n_k}\tr\left[\left(\bm A^{(j)}-\bm A^{(\ell)}\right)\mathbb E\left(\bbm x_{k,i}\bbm x_{k,i}^\top\right)\left(\bm A^{(j)}-\bm A^{(\ell)}\right)^\top\right]\notag\\
			&= \frac{n_k}{2}\left\|\bm A^{(j)}-\bm A^{(\ell)}\right\|_\F^2\notag\\
			&= \frac{n_k}{2}\left\|\cm A_0^{(j)}-\cm A_0^{(\ell)}\right\|_\F^2.
		\end{align}
		Applying Fano's inequality \citep[Theorem~15.2]{wainwright2019high} directly to the joint laws $\{\mathbb P_k^{(j)}\}_{j=1}^M$ yields
		\begin{align}\label{eq:single-joint-fano}
			\inf_{\psi}\mathbb P\left(\psi\bigl(\{(\cm X_{k,i},\cm Y_{k,i})\}_{i=1}^{n_k}\bigr)\neq J\right)
			\geq 1-\frac{\max_{j\neq\ell}\mathrm{KL}\left(\mathbb P_k^{(j)}\,\|\,\mathbb P_k^{(\ell)}\right)+\log2}{\log M}.
		\end{align}
		If the packing set additionally satisfies $\max_{j\neq\ell}\|\cm A_0^{(j)}-\cm A_0^{(\ell)}\|_\F\leq c\delta$, then \eqref{eq:single-joint-kl} implies
		\begin{align}\label{eq:single-joint-kl-upper-delta}
			\max_{j\neq\ell}\mathrm{KL}\left(\mathbb P_k^{(j)}\,\|\,\mathbb P_k^{(\ell)}\right)\leq Cn_k\delta^2.
		\end{align}

		Consequently, combining \eqref{eq:single-estimation-testing-reduction}, \eqref{eq:single-joint-fano}, and \eqref{eq:single-joint-kl-upper-delta}, we have, if the packing set additionally satisfies $\max_{j\neq\ell}\|\cm A_0^{(j)}-\cm A_0^{(\ell)}\|_\F\leq c\delta$,
		\begin{align}\label{eq:single-fano-lower-delta}
			\inf_{\widehat{\boldsymbol{\mathscr A}}_{0,k}}\sup_{\boldsymbol{\mathscr A}_0^*\in\mathcal F_k^{\mathrm{single}}(\bbm r)}\mathbb E_{\boldsymbol{\mathscr A}_0^*}\left[\|\widehat{\cm A}_{0,k}-\cm A_0^*\|_\F\right] \geq\delta\left(1-\frac{C n_k\delta^2+\log 2}{\log M}\right).
		\end{align}
		We now construct the packing set by treating the core variation and the factor-space variations separately. Each construction gives a valid packing subset of $\mathcal F_k^{\mathrm{single}}(\bbm r)$, and hence gives a separate lower bound through \eqref{eq:single-fano-lower-delta}. Throughout the construction, for $d\geq r$, denote the Stiefel manifold of $d\times r$ matrices with orthonormal columns by
		\[
		\mathbb O_{d,r}:=\{\bm U\in\mathbb R^{d\times r}:\bm U^\top\bm U=\bm I_r\}.
		\]

		\textbf{\emph{Construction 1: core variation.}}
		For each $s\in[d+m]$, pick $\bm U_s\in\mathbb O_{d_s,r_s}$ with $d_s = q_s$ for $s\in[m]$ and $d_s = p_s$ for $s\in[d+m] \setminus [m]$. Given any $\delta>0$, similar to the standard packing arguments for the core tensor space as in the proof of \citet[Theorem~4]{luo2024tensor}, we can construct a set of Tucker-rank cores $\{\cm S^{(1)},\cdots,\cm S^{(M_0)}\} \subset \mathbb R^{r_1\times\cdots\times r_{d+m}}$ with cardinality $M_0\geq\exp(c_0\prod_{s=1}^{d+m}r_s)$ such that: (1) $\|\cm S^{(i)}\|_\F=\varepsilon_{\mathrm{core}}$ for $i\in[M_0]$, (2) $\|\cm S^{(i_1)}-\cm S^{(i_2)}\|_\F \geq c\varepsilon_{\mathrm{core}}$ for all $i_1, i_2 \in [M_0]$ and $i_1\neq i_2$. Let $\cm A_{\mathrm{core}}^{(i)} = \cm S^{(i)}\times_{s=1}^{d+m}\bm U_s$ for $i \in[M_0]$. Since all $\bm U_s$ have orthonormal columns, multilinear multiplication preserves the Frobenius norm. Therefore, for this set, we have
		\[
		\begin{aligned}
			\max_{i_1\neq i_2}\|\cm A_{\mathrm{core}}^{(i_1)}-\cm A_{\mathrm{core}}^{(i_2)}\|_\F &\leq \max_{i_1\neq i_2} (\|\cm A_{\mathrm{core}}^{(i_1)}\|_\F + \|\cm A_{\mathrm{core}}^{(i_2)}\|_\F) = 2\varepsilon_{\mathrm{core}},\\
			\min_{i_1\neq i_2}\|\cm A_{\mathrm{core}}^{(i_1)}-\cm A_{\mathrm{core}}^{(i_2)}\|_\F &= \min_{i_1\neq i_2}\|(\cm S^{(i_1)} - \cm S^{(i_2)}) \times_{s=1}^{d+m}\bm U_s\|_\F \geq c\varepsilon_{\mathrm{core}}.
		\end{aligned}
		\]
		Applying \eqref{eq:single-fano-lower-delta} with $M=M_0$ and choosing $\delta = \varepsilon_{\mathrm{core}} = c_1\sqrt{\prod_{s=1}^{d+m}r_s/n_k}$ gives
		\begin{align}\label{eq:single-core-lower-bound}
			\inf_{\widehat{\boldsymbol{\mathscr A}}_{0,k}}\sup_{\boldsymbol{\mathscr A}_0^*\in\mathcal F_k^{\mathrm{single}}(\bbm r)}\mathbb E_{\boldsymbol{\mathscr A}_0^*}\left[\|\widehat{\cm A}_{0,k}-\cm A_0^*\|_\F\right]\gtrsim\sqrt{\frac{\prod_{s=1}^{d+m}r_s}{n_k}}.
		\end{align}

		\textbf{\emph{Construction 2: factor-space variation.}} This construction follows the same idea as the factor-space packing arguments in the proof of \citet[Theorem~4]{luo2024tensor}. Let $\cm S\in\mathbb R^{r_1\times\cdots\times r_{d+m}}$ be a fixed full Tucker-rank core tensor such that, for each mode $s\in[d+m]$,
		\[
			c\alpha \leq \sigma_{r_s}(\cm S_{[s]}) \leq \sigma_1(\cm S_{[s]}) \leq C\alpha,
		\]
		for some constants $c,C>0$ and $\alpha>0$. For each $s\in[d+m]$, fix $\bm U_s\in\mathbb O_{d_s,r_s}$ with $d_s = q_s$ for $s\in[m]$ and $d_s = p_s$ for $s\in[d+m] \setminus [m]$. We focus on one mode $s\in[d+m]$ and vary only the corresponding factor space, while keeping the core tensor and all other factor matrices fixed.

		Consider the Grassmann manifold $\mathcal G_{d_s,r_s}$ equipped with the projection distance $d(\bm U,\bm V) = \|\bm U\bm U^\top-\bm V\bm V^\top\|_\F$. For a small radius $\varepsilon_s\in(0,1]$, define the local Grassmann ball $B(\bm U_s,\varepsilon_s) = \{\bm U\in\mathbb O_{d_s,r_s}: d(\bm U,\bm U_s)\leq \varepsilon_s\}$.
		By \citet[Lemma~1]{cai2013sparse}, there exist $\bm U_s^{\prime(1)},\cdots,\bm U_s^{\prime(M_s)} \in B(\bm U_s,\varepsilon_s)$ such that $M_s\geq \exp(c_0 r_s(d_s-r_s))$ and $\min_{i_1\neq i_2}d(\bm U_s^{\prime(i_1)},\bm U_s^{\prime(i_2)})\geq c\varepsilon_s$.

		For each $i\in[M_s]$, align $\bm U_s^{\prime(i)}$ with the reference basis $\bm U_s$ by setting $\bm O_i \in \argmin_{\bm O\in\mathbb O_{r_s}}\|\bm U_s^{\prime(i)}\bm O-\bm U_s\|_\F$ and taking $\widetilde{\bm U}_s^{(i)}=\bm U_s^{\prime(i)}\bm O_i$.
		Then $\widetilde{\bm U}_s^{(i)}\in\mathbb O_{d_s,r_s}$ and $\|\widetilde{\bm U}_s^{(i)}-\bm U_s\|_\F \leq C d(\bm U_s^{\prime(i)},\bm U_s) \leq C\varepsilon_s$ by the standard relation between projection distance and the Frobenius distance. We then construct $\cm A_s^{(i)} = \cm S\times_s \widetilde{\bm U}_s^{(i)}\times_{k\neq s}\bm U_k$ for $i\in[M_s]$. Each tensor $\cm A_s^{(i)}$ has Tucker rank at most $\bbm r$. For this set, we first control its diameter. Since all factor matrices other than the $s$-th one are fixed and have orthonormal columns,
		\begin{align*}
			\max_{i_1\neq i_2}\|\cm A_s^{(i_1)}-\cm A_s^{(i_2)}\|_\F &= \max_{i_1\neq i_2}\left\|\cm S\times_s\left(\widetilde{\bm U}_s^{(i_1)}-\widetilde{\bm U}_s^{(i_2)}\right)\times_{k\neq s}\bm U_k\right\|_\F\notag\\
			&\leq\max_{i_1\neq i_2}\Biggl(\left\|\cm S\times_s\left(\widetilde{\bm U}_s^{(i_1)}-\bm U_s\right)\times_{k\neq s}\bm U_k\right\|_\F + \left\|\cm S\times_s\left(\widetilde{\bm U}_s^{(i_2)}-\bm U_s\right)\times_{k\neq s}\bm U_k\right\|_\F\Biggr)\\
			&\leq 2\sigma_1(\cm S_{[s]})\max_{i\in[M_s]}\|\widetilde{\bm U}_s^{(i)}-\bm U_s\|_\F \leq C\alpha\varepsilon_s.\notag
		\end{align*}
		Next, we control the separation. For any $i_1\neq i_2$, since $\bm I_{r_s}\in\mathbb O_{r_s}$ is a feasible rotation, we have
		\begin{align*}
			\min_{i_1\neq i_2}\|\cm A_s^{(i_1)}-\cm A_s^{(i_2)}\|_\F&= \min_{i_1\neq i_2}\left\|\cm S\times_s\left(\widetilde{\bm U}_s^{(i_1)} - \widetilde{\bm U}_s^{(i_2)}\right)\times_{k\neq s}\bm U_k\right\|_\F\\
			&\geq \min_{i_1\neq i_2}\min_{\bm O\in\mathbb O_{r_s}}\left\|\cm S\times_s\left(\widetilde{\bm U}_s^{(i_1)} - \widetilde{\bm U}_s^{(i_2)}\bm O\right)\times_{k\neq s}\bm U_k\right\|_\F\\
			&\geq \sigma_{r_s}(\cm S_{[s]})\min_{i_1\neq i_2}\min_{\bm O\in\mathbb O_{r_s}}\left\|\widetilde{\bm U}_s^{(i_1)} - \widetilde{\bm U}_s^{(i_2)}\bm O\right\|_\F\\
			&\overset{(a)}{\geq}\frac{\sigma_{r_s}(\cm S_{[s]})}{\sqrt 2}\min_{i_1\neq i_2}d\left(\widetilde{\bm U}_s^{(i_1)},\widetilde{\bm U}_s^{(i_2)}\right)\\
			&= \frac{\sigma_{r_s}(\cm S_{[s]})}{\sqrt 2}\min_{i_1\neq i_2}d\left(\bm U_s^{\prime(i_1)},\bm U_s^{\prime(i_2)}\right)\\
			&\geq c\alpha\varepsilon_s.
		\end{align*}
		Here $(a)$ is because $\min_{\bm O\in\mathbb O_{r_s}}\|\widetilde{\bm U}_s^{(i_1)} - \widetilde{\bm U}_s^{(i_2)}\bm O\|_\F \geq d(\widetilde{\bm U}_s^{(i_1)},\widetilde{\bm U}_s^{(i_2)})/\sqrt{2}$ by \citet[Lemma~1]{cai2018rate}. Applying \eqref{eq:single-fano-lower-delta} with $M=M_s$ and choosing $\delta = \alpha\varepsilon_s = c_1\sqrt{r_s(d_s-r_s)/n_k}$ such that $\varepsilon_s=\delta_s/\alpha\leq 1$ gives
		\begin{align}\label{eq:single-factor-lower-bound-mode-s}
			\inf_{\widehat{\boldsymbol{\mathscr A}}_{0,k}}\sup_{\boldsymbol{\mathscr A}_0^*\in\mathcal F_k^{\mathrm{single}}(\bbm r)}\mathbb E_{\boldsymbol{\mathscr A}_0^*}\left[\|\widehat{\cm A}_{0,k}-\cm A_0^*\|_\F\right] \gtrsim \sqrt{\frac{r_s(d_s-r_s)}{n_k}}.
		\end{align}
		Since $s\in[d+m]$ is arbitrary, \eqref{eq:single-factor-lower-bound-mode-s} holds for every mode. Combining the core lower bound \eqref{eq:single-core-lower-bound} with the factor-space lower bounds over all modes yields
		\[
			\inf_{\widehat{\boldsymbol{\mathscr A}}_{0,k}}\sup_{\boldsymbol{\mathscr A}_0^*\in\mathcal F_k^{\mathrm{single}}(\bbm r)}\mathbb E_{\boldsymbol{\mathscr A}_0^*}\left[\|\widehat{\cm A}_{0,k}-\cm A_0^*\|_\F\right]\gtrsim \sqrt{\frac{\prod_{s=1}^{d+m}r_s+\sum_{s=1}^{d+m}r_s(d_s-r_s)}{n_k}} = \sqrt{\frac{df_{\bbm r}}{n_k}}. \qedhere
		\]
	\end{proof}

	\begin{theorem}\label{thm:federated-minimax-lower}
		Define the federated common-component parameter class $\mathcal F^{\mathrm{fed}}(\bbm r) = \left\{\cm A_0:\tucrank(\cm A_0)\leq \bbm r\right\}$. Then,
		\[
			\inf_{\widehat{\boldsymbol{\mathscr A}}_{0}}\sup_{\boldsymbol{\mathscr A}_0^*\in\mathcal F^{\mathrm{fed}}(\bbm r)}\mathbb E_{\boldsymbol{\mathscr A}_0^*}\left[\|\widehat{\cm A}_{0}-\cm A_0^*\|_\F\right] \gtrsim \sqrt{\frac{df_{\bbm r}}{n}}.
		\]
	\end{theorem}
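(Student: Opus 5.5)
The plan is to reduce the federated lower bound to the single-client argument of Theorem~\ref{thm:single-client-minimax-lower}, with the pooled sample size $n=\sum_{k=1}^K n_k$ in place of $n_k$.

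First, restrict to the least favorable submodel with $\cm B_k^*=\bm 0$ for all $k\in[K]$. This submodel lies in $\Theta^{\mathrm{fed}}$, so a lower bound over it is a valid lower bound for the full class. Under it, every client generates data from $\cm Y_{k,i}=\langle \cm A_0^*,\cm X_{k,i}\rangle+\cm E_{k,i}$ with isotropic Gaussian design and noise, independently across $(k,i)$. The estimator of $\cm A_0^*$ is then a measurable function of the pooled sample $\{(\cm X_{k,i},\cm Y_{k,i})\}_{k\in[K],i\in[n_k]}$. Note that this allows arbitrary, even non-private, access to all data, so the resulting bound is conservative. The reduction of \citet[Proposition~15.1]{wainwright2019high} applies verbatim to a $2\delta$-packing $\{\cm A_0^{(1)},\dots,\cm A_0^{(M)}\}\subseteq\mathcal F^{\mathrm{fed}}(\bbm r)$.

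The key step is the KL computation for the joint laws $\mathbb P^{(j)}=\bigotimes_{k=1}^K\mathbb P_k^{(j)}$. By independence across clients, KL tensorizes:
\[
\mathrm{KL}(\mathbb P^{(j)}\|\mathbb P^{(\ell)})=\sum_{k=1}^K\mathrm{KL}(\mathbb P_k^{(j)}\|\mathbb P_k^{(\ell)}).
\]
Because the design marginal is identical under all hypotheses, the chain-rule computation \eqref{eq:single-joint-kl} gives $\mathrm{KL}(\mathbb P_k^{(j)}\|\mathbb P_k^{(\ell)})=\frac{n_k}{2}\|\cm A_0^{(j)}-\cm A_0^{(\ell)}\|_\F^2$ for each $k$. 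Summing over clients yields $\frac n2\|\cm A_0^{(j)}-\cm A_0^{(\ell)}\|_\F^2$. Fano's inequality then gives the analogue of \eqref{eq:single-fano-lower-delta}:
\[
\inf_{\widehat{\boldsymbol{\mathscr A}}_0}\sup_{\boldsymbol{\mathscr A}_0^*\in\mathcal F^{\mathrm{fed}}(\bbm r)}\mathbb E_{\boldsymbol{\mathscr A}_0^*}\|\widehat{\cm A}_0-\cm A_0^*\|_\F\geq \delta\left(1-\frac{Cn\delta^2+\log2}{\log M}\right),
\]
valid whenever the packing diameter is at most $c\delta$.

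Finally, I would reuse the two packing constructions from the proof of Theorem~\ref{thm:single-client-minimax-lower} unchanged, since they depend only on the parameter class $\{\tucrank(\cm A_0)\leq\bbm r\}$ and not on the sample size.

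1. The core-variation packing has $\log M_0\gtrsim\prod_s r_s$. Choosing $\delta=c_1\sqrt{\prod_s r_s/n}$ gives a lower bound of order $\sqrt{\prod_s r_s/n}$.
2. For each mode $s$, the Grassmannian factor packing has $\log M_s\gtrsim r_s(d_s-r_s)$. Choosing $\delta=\alpha\varepsilon_s=c_1\sqrt{r_s(d_s-r_s)/n}$, with $\alpha$ large enough that $\varepsilon_s\le1$, gives a lower bound of order $\sqrt{r_s(d_s-r_s)/n}$.

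Taking the maximum of these finitely many bounds, and using $\max_i a_i\gtrsim\sum_i a_i$ since $d+m$ is fixed, gives $\sqrt{df_{\bbm r}/n}$. The only point requiring care is the tensorization across heterogeneous clients with different $n_k$. Because the per-client KL is linear in $n_k$ with the same Frobenius factor, no heterogeneity issue arises; the argument is essentially routine.
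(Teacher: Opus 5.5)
Your proposal is correct and follows essentially the same route as the paper. Both restrict to the submodel $\cm B_k^*=\bm 0$ and tensorize the KL divergence across independent clients to obtain $\frac{n}{2}\|\cm A_0^{(j)}-\cm A_0^{(\ell)}\|_\F^2$, then reuse the core-variation and factor-space packings from the single-client proof with $n_k$ replaced by the pooled $n$.
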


	\begin{proof}[\textbf{Proof of Theorem \ref{thm:federated-minimax-lower}}]
		It suffices to prove the lower bound over the least favorable submodel $\cm B_k^*=\bm 0$ for all $k\in[K]$. Under this submodel, all clients share the common low Tucker-rank tensor $\cm A_0^*$, and the models reduce to $\cm Y_{k,i}=\langle \cm A_0^*,\cm X_{k,i}\rangle+\cm E_{k,i}$ for $k\in[K]$ and $i\in[n_k]$. Thus, the federated problem reduces to estimating the common tensor $\cm A_0^*$ from $n=\sum_{k=1}^K n_k$ independent samples under the isotropic Gaussian submodel with $\vect(\cm X_{k,i})\sim N(\bm 0,\bm I_p)$ and $\vect(\cm E_{k,i})\sim N(\bm 0,\bm I_q)$.

		We then follow the same estimation-to-testing reduction and Tucker-rank packing construction as in the proof of Theorem~\ref{thm:single-client-minimax-lower}. In particular, for two packing points $\cm A_0^{(j)}$ and $\cm A_0^{(\ell)}$, independence across clients and the calculation in \eqref{eq:single-joint-kl} gives
		\[
			\mathrm{KL}\left(\mathbb P_{\mathrm{fed}}^{(j)}\|\mathbb P_{\mathrm{fed}}^{(\ell)}\right)
			=\sum_{k=1}^K\frac{n_k}{2}\left\|\cm A_0^{(j)}-\cm A_0^{(\ell)}\right\|_\F^2
			=\frac{n}{2}\left\|\cm A_0^{(j)}-\cm A_0^{(\ell)}\right\|_\F^2,
		\]
		where $\mathbb P_{\mathrm{fed}}^{(j)}$ denotes the joint law of all client samples under $\cm A_0^{(j)}$. Thus, the core-variation packing is applied with radius $\varepsilon_{\mathrm{core}} \asymp \sqrt{\prod_{s=1}^{d+m}r_s/n}$, and, for each mode $s\in[d+m]$, the corresponding factor-space packing is applied with radius $\varepsilon_s \asymp \sqrt{r_s(d_s-r_s)/n}$. Combining the resulting core and factor-space lower bounds completes the proof.
	\end{proof}

	\begin{theorem}\label{thm:single-client-sparse-deviation-minimax-lower}
		Fix a client $k\in[K]$ and define
		\[
			\mathbb B_\nu(\zeta)=\left\{\cm B\in\mathbb B_\nu:\|(\cm B)_{[S_X]}\|_{\op}\leq\zeta\right\}.
		\]
		Let $\mathfrak s_{\nu,k}=\mathds 1_{\{\nu=0\}}\sqrt{s_0\log(pq/s_0)/n_k}+\mathds 1_{\{0<\nu<1\}}\{s_\nu(\log(pq)/n_k)^{1-\nu/2}\}^{1/2}$. Assume $\mathfrak s_{\nu,k}\leq\zeta$. Then,
		\[
			\inf_{\widehat{\boldsymbol{\mathscr B}}_{k}}\sup_{\boldsymbol{\mathscr B}_k^*\in\mathbb B_\nu(\zeta)}\mathbb E_{\boldsymbol{\mathscr B}_k^*}\left[\|\widehat{\cm B}_{k}-\cm B_k^*\|_\F\right]\gtrsim \mathfrak s_{\nu,k}
		\]
	\end{theorem}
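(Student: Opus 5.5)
We reduce to the submodel $\cm A_0^*=\bm 0$, so that the observations are $\bbm y_{k,i}=\bm B_k^*\bbm x_{k,i}+\bbm e_{k,i}$ with $\bm B_k^*=(\cm B_k^*)_{[S_X]}\in\mathbb R^{q\times p}$, $\bbm x_{k,i}\sim N(\bm 0,\bm I_p)$ and $\bbm e_{k,i}\sim N(\bm 0,\bm I_q)$. A lower bound over this submodel is a valid lower bound over $\mathbb B_\nu(\zeta)$.

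We then reuse the estimation-to-testing reduction and the Fano bound \eqref{eq:single-fano-lower-delta} from the proof of Theorem~\ref{thm:single-client-minimax-lower}. The KL computation \eqref{eq:single-joint-kl} carries over verbatim, giving $\mathrm{KL}(\mathbb P^{(j)}\|\mathbb P^{(\ell)})=(n_k/2)\|\cm B^{(j)}-\cm B^{(\ell)}\|_\F^2$. It therefore suffices to build a packing of $\mathbb B_\nu(\zeta)$ with separation $\gtrsim\delta$, diameter $\lesssim\delta$, and $\log M\gtrsim n_k\delta^2$.

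\textbf{Exact sparsity $\nu=0$.} We use the sparse Varshamov--Gilbert lemma (e.g.\ \citet[Lemma~4]{raskutti2011minimax}, or the standard hypercube construction). It gives a family $\{\bm\omega^{(j)}\}_{j=1}^M\subset\{0,1\}^{pq}$ with $\|\bm\omega^{(j)}\|_0=s_0$, pairwise Hamming distance at least $s_0/2$, and $\log M\gtrsim s_0\log(pq/s_0)$ (using $s_0\le pq/2$). We set $\cm B^{(j)}=\gamma\,\mathrm{Fold}_{[S_X]}(\bm\omega^{(j)})$. Then $\|\cm B^{(j)}-\cm B^{(\ell)}\|_\F\asymp\gamma\sqrt{s_0}$, and we take $\gamma\sqrt{s_0}=\delta\asymp\sqrt{s_0\log(pq/s_0)/n_k}$ with a small constant, which makes the Fano term at most $1/2$. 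Membership in $\mathbb B_0$ is immediate. The operator-norm constraint is the delicate part: we need $\|\bm B^{(j)}\|_{\op}\le\|\bm B^{(j)}\|_\F=\gamma\sqrt{s_0}=\delta\lesssim\mathfrak s_{0,k}\le\zeta$. This is exactly where the hypothesis $\mathfrak s_{\nu,k}\le\zeta$ enters, and using the crude Frobenius bound avoids any structural care about the supports.

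\textbf{Weak sparsity $\nu\in(0,1)$.} We apply the same construction with an effective sparsity $s$ and amplitude $\gamma$ chosen so that $s\gamma^\nu\le s_\nu$ and $\gamma\asymp\sqrt{\log(pq)/n_k}$. Concretely, take $s\asymp s_\nu(n_k/\log(pq))^{\nu/2}$. If $s\le pq/2$, we have $\log(pq/s)\asymp\log(pq)$ up to constants. This gives squared separation $s\gamma^2\asymp s_\nu(\log(pq)/n_k)^{1-\nu/2}=\mathfrak s_{\nu,k}^2$ and $\log M\gtrsim s\log(pq)\gtrsim n_k s\gamma^2$. Again $\|\bm B\|_{\op}\le\|\bm B\|_\F\lesssim\mathfrak s_{\nu,k}\le\zeta$.

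\textbf{Main obstacle.} The hardest step is the weak-sparsity calibration, in particular the requirement $\log(pq/s)\gtrsim\log(pq)$. This needs $s\le(pq)^{1-c}$, so we implicitly assume the regime where $s_\nu(n_k/\log pq)^{\nu/2}$ is polynomially smaller than $pq$. If $s<1$, we fall back to a single-spike packing over $pq$ coordinates, which yields the same rate after adjusting constants. Everything else is routine Fano bookkeeping with constants absorbed into $\gtrsim$.
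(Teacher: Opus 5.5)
\textbf{Exact sparsity.} Your argument for $\nu=0$ is the paper's: a sparse Varshamov--Gilbert packing scaled to Frobenius radius $\delta$, the crude bound $\|\cdot\|_{\op}\le\|\cdot\|_\F=\delta\le\zeta$, and Fano with pairwise KL.

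\textbf{Weak sparsity, a different route.} For $\nu\in(0,1)$ you take a genuinely different route. You embed the same hypercube packing in $\mathbb B_\nu$, with effective sparsity $s\asymp s_\nu(n_k/\log(pq))^{\nu/2}$ and amplitude $\gamma\asymp\sqrt{\log(pq)/n_k}$, using $s\gamma^\nu\le s_\nu$. The paper instead invokes two-sided metric-entropy estimates for $\ell_\nu$-balls (packing lower bound and covering upper bound, after \citet[Lemma~3]{raskutti2011minimax}). It then controls $I(J;\mathcal X_k,\mathcal Y_k)$ by the Yang--Barron covering argument.

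Your version is more elementary and self-contained. It reuses Part~I verbatim and needs only pairwise KL, because the hypercube packing has diameter $\lesssim\delta$. It also makes the regime conditions explicit. The paper's version buries them in the phrase ``admissible range'' of the entropy lemma, which needs essentially your polynomial gap $s\le(pq)^{1-\kappa}$. Note that $s\le pq/2$ alone only gives $\log(pq/s)\ge\log 2$, not $\log(pq/s)\asymp\log(pq)$; you correct this later, but the first claim as written is false.

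\textbf{The $s<1$ fallback is wrong.} Set $s:=s_\nu(n_k/\log(pq))^{\nu/2}$. A single-spike packing $\{\gamma\bm e_j\}$ must satisfy $\gamma^\nu\le s_\nu$, so it yields only $s_\nu^{1/\nu}=\mathfrak s_{\nu,k}\,s^{(2-\nu)/(2\nu)}$. This is $o(\mathfrak s_{\nu,k})$ as $s\to0$, and no adjustment of constants absorbs it.

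No other construction can rescue this regime either. On $\mathbb B_\nu$ we have $\|\cm B\|_\F\le\|\cm B\|_\nu\le s_\nu^{1/\nu}$. Hence the trivial estimator $\widehat{\cm B}_k=\bm 0$ has worst-case risk at most $s_\nu^{1/\nu}\ll\mathfrak s_{\nu,k}$, and the stated lower bound is false there. The paper's proof does not address this regime either: its radius $\delta\asymp\mathfrak s_{\nu,k}$ then exceeds the $\ell_2$-radius of $\mathbb B_\nu$.

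\textbf{How to fix it.} Replace the fallback by the explicit assumption $s\gtrsim 1$, i.e.\ $s_\nu\gtrsim(\log(pq)/n_k)^{\nu/2}$. This is equivalent to $s_\nu\gtrsim\mathfrak s_{\nu,k}^{\nu}$. It therefore holds automatically under the hypotheses $s_\nu\ge 2(r_{\bbm q}\wedge r_{\bbm p})\zeta^\nu$ and $\zeta\ge\mathfrak s_{\nu,k}$ of Theorem~\ref{thm:joint-minimax-lower-bound}, which together force $s\ge 2^{2/(2-\nu)}>1$. With $1\lesssim s\le(pq)^{1-\kappa}$ stated as hypotheses, the rest of your argument goes through.
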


	\begin{proof}[\textbf{Proof of Theorem~\ref{thm:single-client-sparse-deviation-minimax-lower}}]
		It suffices to prove the lower bound over the submodel in which $\cm A_0^*=\bm 0$. Since this submodel is contained in the full single-client parameter space, any minimax lower bound over it is also a valid lower bound for the original problem. Under this submodel, the response is generated from $\cm Y_{k,i}=\langle \cm B_k^*,\cm X_{k,i}\rangle+\cm E_{k,i}$ for $i\in[n_k]$, and the task reduces to estimating the sparse or weakly sparse client-specific deviation tensor $\cm B_k^*$.

		\noindent
		\textbf{\emph{Part I: exact sparse case with $\nu=0$.}}
		Fix an integer $s_0\in[pq]$ and define
		\[
			\mathcal H = \left\{\bbm z\in\{-1,0,1\}^{pq}:\|\bbm z\|_0=s_0\right\}.
		\]
		For any $\bbm z,\bbm z'\in\mathcal H$, let $\rho_H(\bbm z,\bbm z') = \left|\left\{j\in[pq]:z_j\neq z'_j\right\}\right|$ denote their Hamming distance. By \citet[Lemma~5]{raskutti2011minimax}, there exists a subset $\widetilde{\mathcal H}\subset\mathcal H$ with cardinality $|\widetilde{\mathcal H}| \geq \exp\{c_0s_0\log(pq/s_0)\}$ such that $\rho_H(\bbm z,\bbm z')\geq s_0/2$ for all $\bbm z\neq\bbm z'\in\widetilde{\mathcal H}$. Let $M=|\widetilde{\mathcal H}|$ and write $\widetilde{\mathcal H}=\{\bbm z^{(1)},\cdots,\bbm z^{(M)}\}$. For a radius $0<\delta\leq\zeta$ to be chosen later, define $\vect(\cm B^{(j)})=\bbm b^{(j)}\in\mathbb R^{pq}$ with $\bbm b^{(j)} = \delta\bbm z^{(j)}/\sqrt{s_0}$ for $j\in[M]$. Then $\cm B^{(j)}\in\mathbb B_0$ and
		\[
			\|(\cm B^{(j)})_{[S_X]}\|_{\op}
			\leq\|(\cm B^{(j)})_{[S_X]}\|_\F
			=\|\cm B^{(j)}\|_\F
			=\delta
			\leq\zeta.
		\]
		Hence, every packing point $\cm B^{(j)}$ belongs to $\mathbb B_0(\zeta)$. Moreover, for all $j\neq\ell$,
		\[
			\begin{aligned}
			\|\cm B^{(j)}-\cm B^{(\ell)}\|_\F^2 &= \|\bbm b^{(j)}-\bbm b^{(\ell)}\|_2^2 = \frac{\delta^2}{s_0}\|\bbm z^{(j)}-\bbm z^{(\ell)}\|_2^2 \leq \frac{\delta^2}{s_0}4s_0 = 4\delta^2,\\
			\|\cm B^{(j)}-\cm B^{(\ell)}\|_\F^2 &= \|\bbm b^{(j)}-\bbm b^{(\ell)}\|_2^2 = \frac{\delta^2}{s_0}\|\bbm z^{(j)}-\bbm z^{(\ell)}\|_2^2 \geq \frac{\delta^2}{s_0}\rho_H(\bbm z^{(j)},\bbm z^{(\ell)}) \geq \frac{\delta^2}{2}.
			\end{aligned}
		\]

		Let $\mathcal X_k=\{\cm X_{k,i}\}_{i=1}^{n_k}$ and $\mathcal Y_k=\{\cm Y_{k,i}\}_{i=1}^{n_k}$. For each $j\in[M]$, let $\mathbb P_{B,k}^{(j)}$ denote the joint law of $(\mathcal X_k,\mathcal Y_k)$ under $\cm B_k^*=\cm B^{(j)}$. Applying the same estimation-to-testing reduction and Fano's inequality as in the proof of Theorem~\ref{thm:single-client-minimax-lower} to the packing set $\{\cm B^{(1)},\cdots,\cm B^{(M)}\}$ gives
		\begin{align}\label{eq:sparse-l0-fano-lower-delta}
			\inf_{\widehat{\boldsymbol{\mathscr B}}_{k}}\sup_{\boldsymbol{\mathscr B}_k^*\in\mathbb B_0(\zeta)}\mathbb E_{\boldsymbol{\mathscr B}_k^*}\left[\|\widehat{\cm B}_{k}-\cm B_k^*\|_\F\right] \geq c\delta\left(1-\frac{\max_{j\neq\ell}\mathrm{KL}\left(\mathbb P_{B,k}^{(j)}\|\mathbb P_{B,k}^{(\ell)}\right)+\log2}{\log M}\right).
		\end{align}
		Let $\mathbb Q_{B,k}^{(j)}(\mathcal Y_k\mid\mathcal X_k)$ denote the corresponding conditional law of $\mathcal Y_k$ given $\mathcal X_k$. Write $\bbm x_{k,i}=\vect(\cm X_{k,i})$, $\bbm y_{k,i}=\vect(\cm Y_{k,i})$, and $\bm B^{(j)}=(\cm B^{(j)})_{[S_X]}$. Conditional on $\bbm x_{k,i}$, the response satisfies $\bbm y_{k,i}\sim \mathscr N(\bm B^{(j)}\bbm x_{k,i},\bm I_q)$ under the $j$-th hypothesis. Hence, following the same arguments in the proof of Theorem \ref{thm:single-client-minimax-lower}, by conditional independence across observations and the KL formula for Gaussian distributions with a common identity covariance, we have, for any $j\neq\ell$,
		\begin{align}\label{eq:sparse-l0-conditional-kl}
			\mathrm{KL}\left(\mathbb Q_{B,k}^{(j)}(\mathcal Y_k\mid\mathcal X_k)\|\mathbb Q_{B,k}^{(\ell)}(\mathcal Y_k\mid\mathcal X_k)\right) &=\sum_{i=1}^{n_k}\mathrm{KL}\left(\mathscr N\left(\bm B^{(j)}\bbm x_{k,i},\bm I_q\right)\,\middle\|\,\mathscr N\left(\bm B^{(\ell)}\bbm x_{k,i},\bm I_q\right)\right)\notag\\
			&= \frac12\sum_{i=1}^{n_k}\left\|\left(\bm B^{(j)}-\bm B^{(\ell)}\right)\bbm x_{k,i}\right\|_2^2.
		\end{align}
		Since the marginal law of $\mathcal X_k$ is identical under all hypotheses, then by the chain rule for KL divergence and $\mathbb E(\bbm x_{k,i}\bbm x_{k,i}^\top)=\bm I_p$, we have
		\begin{align}\label{eq:sparse-l0-joint-kl-upper-delta}
			\mathrm{KL}\left(\mathbb P_{B,k}^{(j)}\|\mathbb P_{B,k}^{(\ell)}\right)
			&=\mathbb E_{\mathcal X_k}\mathrm{KL}\left(\mathbb Q_{B,k}^{(j)}(\mathcal Y_k\mid\mathcal X_k)\|\mathbb Q_{B,k}^{(\ell)}(\mathcal Y_k\mid\mathcal X_k)\right)\notag\\
			&=\frac12\sum_{i=1}^{n_k}\mathbb E\left[\left\|\left(\bm B^{(j)}-\bm B^{(\ell)}\right)\bbm x_{k,i}\right\|_2^2\right]\notag\\
			&=\frac12\sum_{i=1}^{n_k}\tr\left[\left(\bm B^{(j)}-\bm B^{(\ell)}\right)\mathbb E\left(\bbm x_{k,i}\bbm x_{k,i}^\top\right)\left(\bm B^{(j)}-\bm B^{(\ell)}\right)^\top\right]\notag\\
			&=\frac{n_k}{2}\left\|\bm B^{(j)}-\bm B^{(\ell)}\right\|_\F^2\notag\\
			&=\frac{n_k}{2}\left\|\cm B^{(j)}-\cm B^{(\ell)}\right\|_\F^2\notag\\
			&\leq Cn_k\delta^2.
		\end{align}
		Combining \eqref{eq:sparse-l0-fano-lower-delta} and \eqref{eq:sparse-l0-joint-kl-upper-delta}, and using $\log M\geq c_0s_0\log(pq/s_0)$, we have
		\begin{align}\label{eq:sparse-l0-final-fano}
			\inf_{\widehat{\boldsymbol{\mathscr B}}_{k}}\sup_{\boldsymbol{\mathscr B}_k^*\in\mathbb B_0(\zeta)}\mathbb E_{\boldsymbol{\mathscr B}_k^*}\left[\|\widehat{\cm B}_{k}-\cm B_k^*\|_\F\right] \geq c\delta\left(1-\frac{C n_k\delta^2+\log2}{c_0s_0\log(pq/s_0)}\right).
		\end{align}
		Choosing $\delta^2 = c_1s_0\log(pq/s_0)/n_k$ with $0<c_1\leq1$ sufficiently small makes the bracket in \eqref{eq:sparse-l0-final-fano} bounded below by a positive universal constant. The assumption $\mathfrak s_{0,k}\leq\zeta$ also ensures $\delta=\sqrt{c_1}\mathfrak s_{0,k}\leq\zeta$, as required above. Therefore,
		\[
			\inf_{\widehat{\boldsymbol{\mathscr B}}_{k}}\sup_{\boldsymbol{\mathscr B}_k^*\in\mathbb B_0(\zeta)}\mathbb E_{\boldsymbol{\mathscr B}_k^*}\left[\|\widehat{\cm B}_{k}-\cm B_k^*\|_\F\right]\gtrsim \delta \asymp \sqrt{\frac{s_0\log(pq/s_0)}{n_k}}.
		\]

		\noindent
		\textbf{\emph{Part II: weak sparse case with $\nu\in(0,1)$.}}
		We use the localized metric-entropy construction for $\ell_\nu$ balls under the Euclidean metric underlying \citet[Lemma~3]{raskutti2011minimax}. Specifically, for any $\delta>0$ in the admissible range, the packing set $\mathcal B_\delta = \{\cm B^{(1)},\cdots,\cm B^{(M(\delta))}\}\subseteq \mathbb B_\nu$ can be chosen such that
		\begin{align}\label{eq:lq-packing-entropy-lower}
			\min_{j\neq\ell}\|\cm B^{(j)}-\cm B^{(\ell)}\|_\F\geq 2\delta,
			\
			\max_{j\in[M(\delta)]}\|\cm B^{(j)}\|_\F\leq2\delta,
			\ \text{and} \ 
			\log |M(\delta)| \geq cs_\nu^{\frac{2}{2-\nu}}\delta^{-\frac{2\nu}{2-\nu}}\log(pq).
		\end{align}
		Whenever $2\delta\leq\zeta$, every packing point satisfies $\|(\cm B^{(j)})_{[S_X]}\|_{\op}\leq\|\cm B^{(j)}\|_\F \leq2\delta\leq\zeta$, and hence $\mathcal B_\delta\subseteq\mathbb B_\nu(\zeta)$.

		We next reduce estimation to multiple hypothesis testing. Let $J$ be a random variable uniformly distributed over $[M(\delta)]$. Conditional on $J=j$, let the sample $\{(\cm X_{k,i},\cm Y_{k,i})\}_{i=1}^{n_k}$ be generated according to $\cm B_k^*=\cm B^{(j)}$, and let $\mathbb P$ denote the joint law of $\left(\{(\cm X_{k,i},\cm Y_{k,i})\}_{i=1}^{n_k},J\right)$. By the same estimation-to-testing reduction as in \eqref{eq:single-estimation-testing-reduction},
		\begin{align}\label{eq:sparse-estimation-testing-reduction}
			\inf_{\widehat{\boldsymbol{\mathscr B}}_{k}}\sup_{\boldsymbol{\mathscr B}_k^*\in\mathbb B_\nu(\zeta)}\mathbb E_{\boldsymbol{\mathscr B}_k^*}\left[\|\widehat{\cm B}_{k}-\cm B_k^*\|_\F\right] \geq \delta\inf_{\psi}\mathbb P\left(\psi\bigl(\{(\cm X_{k,i},\cm Y_{k,i})\}_{i=1}^{n_k}\bigr)\neq J\right),
		\end{align}
		where the infimum is over all testing rules $\psi$ based on the observed sample.

		Let $\mathcal X_k=\{\cm X_{k,i}\}_{i=1}^{n_k}$ and $\mathcal Y_k=\{\cm Y_{k,i}\}_{i=1}^{n_k}$. Since $J$ is independent of $\mathcal X_k$, Fano's inequality \citep[Theorem~15.2]{wainwright2019high} applied directly to the joint observation $(\mathcal X_k,\mathcal Y_k)$ gives
		\begin{align}\label{eq:sparse-joint-fano}
			\inf_{\psi}\mathbb P\left(\psi\bigl(\{(\cm X_{k,i},\cm Y_{k,i})\}_{i=1}^{n_k}\bigr)\neq J\right) \geq 1-\frac{I(J;\mathcal X_k,\mathcal Y_k)+\log2}{\log |M(\delta)|},
		\end{align}
		where $I(J;\mathcal X_k,\mathcal Y_k)$ is the mutual information between the random hypothesis index $J$ and the joint observation $(\mathcal X_k,\mathcal Y_k)$. It remains to control $I(J;\mathcal X_k,\mathcal Y_k)$. Let $\varepsilon>0$ be another radius to be chosen later, and let $\mathcal N_\varepsilon$ be an $\varepsilon$-net of $\mathbb B_\nu$ under the Frobenius norm. Following \citet[Lemma~3]{raskutti2011minimax}, the corresponding upper metric-entropy bound for $\ell_\nu$ balls under the Euclidean metric gives
		\begin{align}\label{eq:lq-net-entropy-upper}
			\log |\mathcal N_\varepsilon| \leq C s_\nu^{\frac{2}{2-\nu}}\varepsilon^{-\frac{2\nu}{2-\nu}}\log(pq).
		\end{align}
		For each $j\in[M(\delta)]$, choose $\widetilde{\cm B}^{(j)}\in\mathcal N_\varepsilon$ such that $\|\cm B^{(j)}-\widetilde{\cm B}^{(j)}\|_\F\leq\varepsilon$. Let $\mathbb P_{B,k}^{(j)}$ and $\widetilde{\mathbb P}_{B,k}^{(j)}$ denote the joint laws of $(\mathcal X_k,\mathcal Y_k)$ induced by $\cm B^{(j)}$ and $\widetilde{\cm B}^{(j)}$, respectively, and let $\mathbb Q_{B,k}^{(j)}(\mathcal Y_k\mid\mathcal X_k)$ and $\widetilde{\mathbb Q}_{B,k}^{(j)}(\mathcal Y_k\mid\mathcal X_k)$ denote the corresponding conditional response laws. By the information-radius arguments of \citet[Theorem~1]{yang1999information},
		\begin{align}\label{eq:sparse-joint-mutual-information-net-bound}
			I(J;\mathcal X_k,\mathcal Y_k) \leq \log |\mathcal N_\varepsilon| + \max_{j\in[M(\delta)]}\mathrm{KL}\left(\mathbb P_{B,k}^{(j)}\|\widetilde{\mathbb P}_{B,k}^{(j)}\right).
		\end{align}
		Write $\bbm x_{k,i}=\vect(\cm X_{k,i})$, $\bm B^{(j)}=(\cm B^{(j)})_{[S_X]}$, and $\widetilde{\bm B}^{(j)}=(\widetilde{\cm B}^{(j)})_{[S_X]}$. The marginal law of $\mathcal X_k$ is the same for these two joint distributions. Conditional on $\bbm x_{k,i}$, their response distributions are $\mathscr N(\bm B^{(j)}\bbm x_{k,i},\bm I_q)$ and $\mathscr N(\widetilde{\bm B}^{(j)}\bbm x_{k,i},\bm I_q)$, respectively. Thus, the KL chain rule, the Gaussian KL formula, and the isotropic design covariance give
		\begin{align}\label{eq:sparse-joint-net-kl-bound}
			\mathrm{KL}\left(\mathbb P_{B,k}^{(j)}\|\widetilde{\mathbb P}_{B,k}^{(j)}\right)
			&=\mathbb E_{\mathcal X_k}\mathrm{KL}\left(\mathbb Q_{B,k}^{(j)}(\mathcal Y_k\mid\mathcal X_k)\|\widetilde{\mathbb Q}_{B,k}^{(j)}(\mathcal Y_k\mid\mathcal X_k)\right)\notag\\
			&=\mathbb E_{\mathcal X_k}\sum_{i=1}^{n_k}\mathrm{KL}\left(\mathscr N\left(\bm B^{(j)}\bbm x_{k,i},\bm I_q\right)\,\middle\|\,\mathscr N\left(\widetilde{\bm B}^{(j)}\bbm x_{k,i},\bm I_q\right)\right)\notag\\
			&=\frac12\sum_{i=1}^{n_k}\mathbb E\left[\left\|\left(\bm B^{(j)}-\widetilde{\bm B}^{(j)}\right)\bbm x_{k,i}\right\|_2^2\right]\notag\\
			&=\frac12\sum_{i=1}^{n_k}\tr\left[\left(\bm B^{(j)}-\widetilde{\bm B}^{(j)}\right)\mathbb E\left(\bbm x_{k,i}\bbm x_{k,i}^\top\right)\left(\bm B^{(j)}-\widetilde{\bm B}^{(j)}\right)^\top\right]\notag\\
			&=\frac{n_k}{2}\left\|\bm B^{(j)}-\widetilde{\bm B}^{(j)}\right\|_\F^2=\frac{n_k}{2}\left\|\cm B^{(j)}-\widetilde{\cm B}^{(j)}\right\|_\F^2\leq\frac{n_k}{2}\varepsilon^2.
		\end{align}
		Combining \eqref{eq:lq-net-entropy-upper}, \eqref{eq:sparse-joint-mutual-information-net-bound}, and \eqref{eq:sparse-joint-net-kl-bound} yields
		\begin{align}\label{eq:sparse-joint-mutual-information-bound}
			I(J;\mathcal X_k,\mathcal Y_k) \leq C s_\nu^{\frac{2}{2-\nu}}\varepsilon^{-\frac{2\nu}{2-\nu}}\log(pq) + C n_k\varepsilon^2.
		\end{align}
		Consequently, combining \eqref{eq:sparse-estimation-testing-reduction}, \eqref{eq:sparse-joint-fano}, and \eqref{eq:sparse-joint-mutual-information-bound}, we have
		\begin{align}\label{eq:sparse-fano-lower-delta}
			\inf_{\widehat{\boldsymbol{\mathscr B}}_{k}}\sup_{\boldsymbol{\mathscr B}_k^*\in\mathbb B_\nu(\zeta)}\mathbb E_{\boldsymbol{\mathscr B}_k^*}\left[\|\widehat{\cm B}_{k}-\cm B_k^*\|_\F\right]
			&\geq \delta\left(1-\frac{C s_\nu^{\frac{2}{2-\nu}}\varepsilon^{-\frac{2\nu}{2-\nu}}\log(pq)+C n_k\varepsilon^2+\log2}{\log |M(\delta)|}\right).
		\end{align}
		We now choose the covering and packing radii so that the bracket in \eqref{eq:sparse-fano-lower-delta} is bounded below by a positive universal constant. First, choose the covering radius $\varepsilon$ such that
		\begin{align}\label{eq:lq-covering-radius-choice}
			C n_k\varepsilon^2 \leq C_0s_\nu^{\frac{2}{2-\nu}}\varepsilon^{-\frac{2\nu}{2-\nu}}\log(pq).
		\end{align}
		Equivalently, we may take $\varepsilon=c_\varepsilon\mathfrak s_{\nu,k}$ with $\mathfrak s_{\nu,k}=\mathds 1_{\{\nu=0\}}\sqrt{s_0\log(pq/s_0)/n_k}+\mathds 1_{\{0<\nu<1\}}\{s_\nu(\log(pq)/n_k)^{1-\nu/2}\}^{1/2}$ for a small constant $c_\varepsilon\in(0,1]$. With this choice, $C n_k\varepsilon^2$ is of the same order as, and can be made a sufficiently small constant multiple of, the covering entropy term in \eqref{eq:lq-net-entropy-upper}. Moreover, $\mathfrak s_{\nu,k}\leq\zeta$ implies $\varepsilon\leq\zeta$. We next verify that there exists a packing radius $\delta$ satisfying the required packing-covering comparison. Since
		\[
			\log |M(\delta)|\geq c s_\nu^{\frac{2}{2-\nu}}\delta^{-\frac{2\nu}{2-\nu}}\log(pq)
			\ \text{and} \
			\log |\mathcal N_\varepsilon| \leq C s_\nu^{\frac{2}{2-\nu}}\varepsilon^{-\frac{2\nu}{2-\nu}}\log(pq),
		\]
		it suffices to choose $\delta$ such that $C s_\nu^{\frac{2}{2-\nu}}\varepsilon^{-\frac{2\nu}{2-\nu}}\log(pq) \leq cs_\nu^{\frac{2}{2-\nu}}\delta^{-\frac{2\nu}{2-\nu}}\log(pq)/4$. Equivalently, it suffices to take $\delta \leq(c/4C)^{\frac{2-\nu}{2\nu}}\varepsilon$. Thus, choosing $\delta=a\varepsilon$ with
		\[
			0<a\leq\frac12\wedge\left(\frac{c}{4C}\right)^{\frac{2-\nu}{2\nu}}
		\]
		guarantees both $2\delta\leq\varepsilon\leq\zeta$, so that $\mathcal B_\delta\subseteq\mathbb B_\nu(\zeta)$, and
		\begin{align}\label{eq:lq-packing-covering-comparison}
			\log |M(\delta)| \geq 4\log |\mathcal N_\varepsilon|.
		\end{align}
		Such a choice exists because $a$ is a fixed positive constant independent of $n_k$, $pq$, and $s_\nu$, and the admissible packing range contains sufficiently small radii. With the above choices of $\varepsilon$ and $\delta$, the bracket in \eqref{eq:sparse-fano-lower-delta} is bounded below by a positive universal constant. Indeed, \eqref{eq:lq-covering-radius-choice} and \eqref{eq:lq-packing-covering-comparison} imply
		\[
			\frac{C s_\nu^{\frac{2}{2-\nu}}\varepsilon^{-\frac{2\nu}{2-\nu}}\log(pq) + C n_k\varepsilon^2 + \log2}{c s_\nu^{\frac{2}{2-\nu}}\delta^{-\frac{2\nu}{2-\nu}}\log(pq)} \leq \frac14+\frac14+\frac{\log2}{\log |M(\delta)|}.
		\]
		By taking the packing cardinality $|M(\delta)|$ sufficiently large, the last term is absorbed into the universal constant. Hence the bracket in \eqref{eq:sparse-fano-lower-delta} is bounded below by a positive universal constant, and therefore
		\[
			\inf_{\widehat{\boldsymbol{\mathscr B}}_{k}}\sup_{\boldsymbol{\mathscr B}_k^*\in\mathbb B_\nu(\zeta)}\mathbb E_{\boldsymbol{\mathscr B}_k^*}\left[\|\widehat{\cm B}_{k}-\cm B_k^*\|_\F\right] \gtrsim \delta.
		\]
		Since the above construction gives $\delta\asymp\varepsilon$ and $\varepsilon^2 \asymp s_\nu(\log(pq)/n_k)^{1-\nu/2}$, we have
		\[
			\inf_{\widehat{\boldsymbol{\mathscr B}}_{k}}\sup_{\boldsymbol{\mathscr B}_k^*\in\mathbb B_\nu(\zeta)}\mathbb E_{\boldsymbol{\mathscr B}_k^*}\left[\|\widehat{\cm B}_{k}-\cm B_k^*\|_\F\right] \gtrsim \left\{s_\nu\left(\frac{\log(pq)}{n_k}\right)^{1-\nu/2}\right\}^{1/2}.
		\]
		Combining Part~I and Part~II proves the stated minimax lower bounds for both exact sparse and weakly sparse deviations.
	\end{proof}

\section{Proofs of Primary Lemmas}\label{append:proofs of primary lemmas}
	\renewcommand{\theequation}{D.\arabic{equation}}
	\renewcommand{\theHequation}{D.\arabic{equation}}

	\renewcommand{\thelemma}{D.\arabic{lemma}}
	\renewcommand{\theHlemma}{D.\arabic{lemma}}

	\setcounter{lemma}{0}
	\setcounter{equation}{0}

	\begin{proof}[\textbf{Proof of Lemma~\ref{lem:rsc-condition}}]
		Let $\bbm x_{k,i}=\vect(\cm X_{k,i})\in\mathbb R^p$ and $\bm S_{X,k}=n_k^{-1}\sum_{i=1}^{n_k}\bbm x_{k,i}\bbm x_{k,i}^\top$. Under the matricization convention $\vect(\langle\cm T,\cm X_{k,i}\rangle)=\cm T_{[S_X]}\bbm x_{k,i}$, for any $\cm T$,
		\[
			\frac{1}{n_k}\sum_{i=1}^{n_k}\left\|\langle\cm T,\cm X_{k,i}\rangle\right\|_\F^2 = \tr\left(\cm T_{[S_X]}\bm S_{X,k}\cm T_{[S_X]}^\top\right) \geq \lambda_{\min}(\bm S_{X,k})\|\cm T\|_\F^2.
		\]
		Thus it suffices to show that $\lambda_{\min}(\bm S_{X,k})\geq\lambda_{\mathscr X,k}^{\min}/2$ with high probability.

		By Assumption~\ref{assump:subg-design}, write $\bbm x_{k,i}=\bbm\Sigma_{\mathscr X,k}^{1/2}\bbm\xi_{k,i}$, where $\mathbb E\bbm\xi_{k,i}=\bm0$, $\mathbb E(\bbm\xi_{k,i}\bbm\xi_{k,i}^\top)=\bm I_p$, and the coordinates of $\bbm\xi_{k,i}$ are independent sub-Gaussian random variables with variance proxy bounded by $\sigma_{\mathscr X,k}^2$. For any fixed $\bbm v\in\mathbb S^{p-1}$, define $R_n(\bbm v)=\bbm v^\top\bm S_{X,k}\bbm v=n_k^{-1}\sum_{i=1}^{n_k}(\bbm v^\top\bbm x_{k,i})^2$. Then $\mathbb E R_n(\bbm v)=\bbm v^\top\bbm\Sigma_{\mathscr X,k}\bbm v\geq\lambda_{\mathscr X,k}^{\min}$.
		Moreover, with $\bbm\xi_k=(\bbm\xi_{k,1}^\top,\cdots,\bbm\xi_{k,n_k}^\top)^\top$ and $\bm M_{\bbm v}=\bbm\Sigma_{\mathscr X,k}^{1/2}\bbm v\bbm v^\top\bbm\Sigma_{\mathscr X,k}^{1/2}$, we have
		\[
			R_n(\bbm v)-\mathbb E R_n(\bbm v) = \frac1{n_k}\left(\bbm\xi_k^\top(\bm I_{n_k}\otimes\bm M_{\bbm v})\bbm\xi_k - \mathbb E\bigl[\bbm\xi_k^\top(\bm I_{n_k}\otimes\bm M_{\bbm v})\bbm\xi_k\bigr]\right).
		\]
		Since $\|\bm M_{\bbm v}\|_{\op}\leq\lambda_{\mathscr X,k}^{\max}$ and $\|\bm M_{\bbm v}\|_\F\leq\lambda_{\mathscr X,k}^{\max}$, Lemma~\ref{lem:hanson-wright} gives, for any $t>0$,
		\[
			\mathbb P\left(\left|R_n(\bbm v)-\mathbb E R_n(\bbm v)\right|\geq t\right) \leq 2\exp\left(-Cn_k\min\left\{\frac{t^2}{\sigma_{\mathscr X,k}^4(\lambda_{\mathscr X,k}^{\max})^2},\frac{t}{\sigma_{\mathscr X,k}^2\lambda_{\mathscr X,k}^{\max}}\right\}\right).
		\]

		Let $\mathcal N$ be a $1/4$-net of $\mathbb S^{p-1}$ with $|\mathcal N|\leq9^p$. Taking $t=\lambda_{\mathscr X,k}^{\min}/4$ and applying the union bound over $\mathcal N$, we obtain
		\[
			\mathbb P\left(\sup_{\bbm v\in\mathcal N}\left|\bbm v^\top(\bm S_{X,k}-\bbm\Sigma_{\mathscr X,k})\bbm v\right| \geq \frac14\lambda_{\mathscr X,k}^{\min}\right) \leq 2\exp\left(Cp-Cn_k\min\left\{\sigma_{\mathscr X,k}^{-4}\kappa_{\mathscr X,k}^{-2},\sigma_{\mathscr X,k}^{-2}\kappa_{\mathscr X,k}^{-1}\right\}\right).
		\]
		Under the stated sample size condition, the last probability is bounded by $\exp(-Cp)$. Hence, with probability at least $1-\exp(-Cp)$,
		\[
			\sup_{\bbm v\in\mathcal N}\left|\bbm v^\top(\bm S_{X,k}-\bbm\Sigma_{\mathscr X,k})\bbm v\right| \leq \frac14\lambda_{\mathscr X,k}^{\min}.
		\]

		Let $\bm A_k=\bm S_{X,k}-\bbm\Sigma_{\mathscr X,k}$. By the standard $1/4$-net reduction for symmetric matrices,
		\[
			\|\bm A_k\|_{\op} = \sup_{\bbm v\in\mathbb S^{p-1}}|\bbm v^\top\bm A_k\bbm v| \leq 2\sup_{\bbm v\in\mathcal N}|\bbm v^\top\bm A_k\bbm v|.
		\]
		Indeed, for any $\bbm v\in\mathbb S^{p-1}$, choose $\bar{\bbm v}\in\mathcal N$ with $\|\bbm v-\bar{\bbm v}\|_2\leq1/4$. Then $|\bbm v^\top\bm A_k\bbm v|\leq |\bar{\bbm v}^\top\bm A_k\bar{\bbm v}|+2\|\bbm v-\bar{\bbm v}\|_2\|\bm A_k\|_{\op}\leq \sup_{\bbm u\in\mathcal N}|\bbm u^\top\bm A_k\bbm u|+\|\bm A_k\|_{\op}/2$. Taking the supremum over $\bbm v$ gives the displayed reduction. Consequently,
		\[
			\|\bm S_{X,k}-\bbm\Sigma_{\mathscr X,k}\|_{\op} \leq \frac12\lambda_{\mathscr X,k}^{\min}.
		\]
		By Weyl's inequality, $\lambda_{\min}(\bm S_{X,k})\geq\lambda_{\min}(\bbm\Sigma_{\mathscr X,k})-\|\bm S_{X,k}-\bbm\Sigma_{\mathscr X,k}\|_{\op}\geq\lambda_{\mathscr X,k}^{\min}/2$. Substituting this into the quadratic-form identity at the beginning of the proof yields the desired RSC bound simultaneously for all $\cm T$. This completes the proof.
	\end{proof}

	\begin{proof}[\textbf{Proof of Lemma \ref{lem:deviation-condition-op}}]
		Denote $\mathcal{M}(1;q,p)=\{\bm M\in \mathbb{R}^{q\times p}:\rank(\bm M)=1,\ \|\bm M\|_\F=1\}$. Since $(\cm E_{k,i}\circ\cm X_{k,i})_{[S_X]}=\vect(\cm E_{k,i})\vect^\top(\cm X_{k,i})$, the variational characterization of the operator norm gives
		\begin{align*}
			\left\|\frac{1}{n_k}\sum_{i=1}^{n_k}(\cm E_{k,i}\circ \cm X_{k,i})_{[S_X]}\right\|_\op
			&= \sup_{\bm M \in \mathcal{M}(1;q,p)}\frac{1}{n_k}\sum_{i=1}^{n_k}\left\langle\vect(\cm E_{k,i}),\bm M\vect(\cm X_{k,i})\right\rangle.
		\end{align*}
		
	For an arbitrary matrix $\bm M \in \mathcal{M}(1;q,p)$, set $\bm W=\bm M$, $A_u=\{n_k^{-1}\sum_{i=1}^{n_k}\langle \vect(\cm E_{k,i}), \allowbreak \bm W\vect(\cm X_{k,i}) \rangle \geq u\}$, and $B_v=\{n_k^{-1}\sum_{i=1}^{n_k}\|\bm W\vect(\cm X_{k,i})\|_2^2 \leq v\}$.
		Then, by a standard Chernoff arguments, for any $u>0$, $v>0$, and $\xi>0$, we have
		\begin{align}\label{eq:P-A-con-B}
			\mathbb P\left(A_u \mid B_v\right)
			&= \frac{\mathbb P(A_u\cap B_v)}{\mathbb P(B_v)} \notag\\
			&= \frac{1}{\mathbb P(B_v)}\mathbb P\left(\left\{\exp\left(\xi\sum_{i=1}^{n_k}\langle \vect(\cm E_{k,i}), \bm W\vect(\cm X_{k,i}) \rangle\right)\geq \exp(\xi n_k u)\right\}\cap B_v\right)\notag \\
			&\leq \frac{\exp(-\xi n_k u)}{\mathbb P(B_v)}\mathbb E\left(\exp\left(\xi\sum_{i=1}^{n_k}\langle \vect(\cm E_{k,i}), \bm W\vect(\cm X_{k,i}) \rangle\right)\mathds 1_{B_v}\right).
		\end{align}

		Since $B_v$ is measurable with respect to $\sigma(\cm X_{k,1},\cdots,\cm X_{k,n_k})$, and by Assumption~\ref{assump:subg-noise}, conditional on $\cm X_{k,i}$, the random vector $\vect(\cm E_{k,i})$ is mean-zero sub-Gaussian with covariance proxy $\bbm\Sigma_{\mathscr E,k}$ and parameter $\sigma_{\mathscr E,k}$, we have
		\begin{align}\label{eq:union-expectation}
			&\mathbb E\left(\exp\left(\xi\sum_{i=1}^{n_k}\langle \vect(\cm E_{k,i}), \bm W\vect(\cm X_{k,i}) \rangle\right)\mathds 1_{B_v}\right) \notag\\
			&=\mathbb E\left(\mathds 1_{B_v}\,\mathbb E\left(\exp\left(\xi\sum_{i=1}^{n_k}\langle \vect(\cm E_{k,i}), \bm W\vect(\cm X_{k,i}) \rangle\right)\,\middle|\, \cm X_{k,1},\cdots,\cm X_{k,n_k}\right)\right) \notag\\
			&\leq \mathbb E\left(\mathds 1_{B_v}\,\exp\left(\frac{\xi^2\sigma_{\mathscr E,k}^2}{2}\sum_{i=1}^{n_k}(\bm W\vect(\cm X_{k,i}))^\top\bbm\Sigma_{\mathscr E,k}(\bm W\vect(\cm X_{k,i}))\right)\right) \notag\\
			&\leq \mathbb E\left(\mathds 1_{B_v}\,\exp\left(\frac{\xi^2\sigma_{\mathscr E,k}^2\lambda_{\mathscr E,k}^{\max}}{2}\sum_{i=1}^{n_k}\|\bm W\vect(\cm X_{k,i})\|_2^2\right)\right) \leq \exp\left(\frac{\xi^2\sigma_{\mathscr E,k}^2\lambda_{\mathscr E,k}^{\max}}{2}\,n_k v\right)\mathbb P(B_v).
		\end{align}
			Therefore, combining \eqref{eq:P-A-con-B} with \eqref{eq:union-expectation} and taking $\xi=u/(\sigma_{\mathscr E,k}^2\lambda_{\mathscr E,k}^{\max}v)$ yields
		\[
			\mathbb P\left(A_u \mid B_v\right) \leq \exp\left(-\xi n_k u + \frac{\xi^2\sigma_{\mathscr E,k}^2\lambda_{\mathscr E,k}^{\max}}{2}n_k v\right) \leq \exp\left(-\frac{n_k u^2}{2\sigma_{\mathscr E,k}^2\lambda_{\mathscr E,k}^{\max}\,v}\right).
		\]
		Next, we control the probability with respect to event $B_v$. Note that
		\[
			\frac{1}{n_k}\sum_{i=1}^{n_k}\|\bm W\vect(\cm X_{k,i})\|_2^2
			=
			\frac{1}{n_k}\sum_{i=1}^{n_k}\vect^\top(\cm X_{k,i}) \bm W^\top \bm W \vect(\cm X_{k,i}).
		\]
		By Assumption~\ref{assump:subg-design}, we have $\vect(\cm X_{k,i})=\bbm\Sigma_{\mathscr X,k}^{1/2}\bbm\xi_{k,i}$, where $\bbm\xi_{k,i}$ has $i.i.d.$ mean-zero sub-Gaussian entries with variance proxy $\sigma_{\mathscr X,k}^2$. Hence $n_k^{-1}\sum_{i=1}^{n_k}\|\bm W\vect(\cm X_{k,i})\|_2^2 = n_k^{-1}\sum_{i=1}^{n_k}\bbm\xi_{k,i}^\top \bbm\Sigma_{\mathscr X,k}^{1/2}\bm W^\top\bm W\bbm\Sigma_{\mathscr X,k}^{1/2}\allowbreak\bbm\xi_{k,i} = n_k^{-1}\bbm\xi_k^\top (\bm I_{n_k}\otimes \bm H_k)\bbm\xi_k$ with $\bm H_k = \bbm\Sigma_{\mathscr X,k}^{1/2}\bm W^\top\bm W\bbm\Sigma_{\mathscr X,k}^{1/2}$ and $\bbm\xi_k = (\bbm\xi_{k,1}^\top,\cdots,\bbm\xi_{k,n_k}^\top)^\top$. 
		Consequently, by the Hanson-Wright inequality in Lemma~\ref{lem:hanson-wright}, we have, for any $v>0$,
		\begin{align*}
			&\mathbb P\left(\left|\frac{1}{n_k}\sum_{i=1}^{n_k}\|\bm W\vect(\cm X_{k,i})\|_2^2 - \mathbb E\left[\frac{1}{n_k}\sum_{i=1}^{n_k}\|\bm W\vect(\cm X_{k,i})\|_2^2\right]\right|\geq v\right)\\
			&\quad \leq 2\exp\left(-C\min\left\{\frac{n_k^2 v^2}{\max_j\|\xi_{k,i,j}\|_{\psi_2}^4\|\bm I_{n_k}\otimes \bm H_k\|_\F^2}, \frac{n_k v}{\max_j\|\xi_{k,i,j}\|_{\psi_2}^2\|\bm I_{n_k}\otimes \bm H_k\|_{\op}}\right\}\right).
		\end{align*}
		Since $\max_j\|\xi_{k,i,j}\|_{\psi_2}\lesssim \sigma_{\mathscr X,k}$, $\|\bm I_{n_k}\otimes \bm H_k\|_{\op} = \|\bm H_k\|_{\op} \leq \|\bbm\Sigma_{\mathscr X,k}\|_{\op}\,\|\bm W\|_{\op}^2 \leq \lambda_{\mathscr X,k}^{\max}\|\bm W\|_\F^2$, and $\|\bm I_{n_k}\otimes \bm H_k\|_\F^2 = n_k\|\bm H_k\|_\F^2 \leq n_k\|\bbm\Sigma_{\mathscr X,k}\|_{\op}^2\|\bm W^\top\bm W\|_\F^2 \leq n_k(\lambda_{\mathscr X,k}^{\max})^2\|\bm W\|_\F^4$, we have
		\begin{align*}
			&\mathbb P\left(\left|\frac{1}{n_k}\sum_{i=1}^{n_k}\|\bm W\vect(\cm X_{k,i})\|_2^2 - \mathbb E\left[\frac{1}{n_k}\sum_{i=1}^{n_k}\|\bm W\vect(\cm X_{k,i})\|_2^2\right]\right| \geq v\right)\\
			&\quad \leq 2\exp\left(-Cn_k\min\left\{\frac{v^2}{\sigma_{\mathscr X,k}^4(\lambda_{\mathscr X,k}^{\max})^2\|\bm W\|_\F^4}, \frac{v}{\sigma_{\mathscr X,k}^2\lambda_{\mathscr X,k}^{\max}\|\bm W\|_\F^2}\right\}\right).
		\end{align*}
		In addition, since $\bm W=\bm M$ with $\bm M\in\mathcal M(1;q,p)$, we have $\|\bm W\|_\F=1$ and
		\[
			\mathbb E\left[\frac{1}{n_k}\sum_{i=1}^{n_k}\|\bm W\vect(\cm X_{k,i})\|_2^2\right] = \tr(\bm W^\top\bm W\,\bbm\Sigma_{\mathscr X,k})\leq \lambda_{\mathscr X,k}^{\max}\|\bm W\|_\F^2 = \lambda_{\mathscr X,k}^{\max}.
		\]
		Taking $v = C\sigma_{\mathscr X,k}^2\lambda_{\mathscr X,k}^{\max}$, we have
		\begin{align}\label{eq:P-b-comple}
			\mathbb P(B_v^c) = \mathbb P\left(\frac{1}{n_k}\sum_{i=1}^{n_k}\|\bm W\vect(\cm X_{k,i})\|_2^2 \geq C\sigma_{\mathscr X,k}^2\lambda_{\mathscr X,k}^{\max}\right) \leq 2\exp(-Cn_k).
		\end{align}
		On the other hand, given the choice of $v$, taking $u = C\sigma_{\mathscr X,k}\sigma_{\mathscr E,k}(\lambda_{\mathscr X,k}^{\max}\lambda_{\mathscr E,k}^{\max})^{1/2}\sqrt{(p+q)/n_k}$, we have
		\begin{align}\label{eq:P-a-condi-b}
			\mathbb P\left(A_u\mid B_v\right) \leq \exp\left(-\frac{n_k u^2}{C\,\sigma_{\mathscr X,k}^2\sigma_{\mathscr E,k}^2\lambda_{\mathscr X,k}^{\max}\lambda_{\mathscr E,k}^{\max}}\right) \leq \exp(-C(p+q)).
		\end{align}
			Combining \eqref{eq:P-b-comple} with \eqref{eq:P-a-condi-b} and noting that $n_k\gtrsim p+q$ yields that, for any $\bm M\in\mathcal M(1;q,p)$,
			\begin{align*}
				&\mathbb P\left(\frac{1}{n_k}\sum_{i=1}^{n_k}\left\langle (\cm E_{k,i}\circ \cm X_{k,i})_{[S_X]}, \bm M \right\rangle \geq C\sigma_{\mathscr X,k}\sigma_{\mathscr E,k}(\lambda_{\mathscr X,k}^{\max}\lambda_{\mathscr E,k}^{\max})^{1/2}\sqrt{\frac{p+q}{n_k}}\right)\\ 
				&= \mathbb P(A_u) \leq \mathbb P(A_u\mid B_v)+\mathbb P(B_v^c) \leq \exp(-C(p+q))+2\exp(-Cn_k) = \exp(-C(p+q)).
			\end{align*}
			Applying the same arguments to $-\bm M$ gives the corresponding lower-tail bound and hence the same probability bound for the absolute value of the displayed empirical inner product.

			By the variational characterization of the operator norm,
			\begin{align*}
				\left\|\frac{1}{n_k}\sum_{i=1}^{n_k}(\cm E_{k,i}\circ \cm X_{k,i})_{[S_X]}\right\|_{\op}
				&= \sup_{\substack{\bbm u\in\mathbb S^{q-1}\\\bbm v\in\mathbb S^{p-1}}}\left|\bbm u^\top\left(\frac{1}{n_k}\sum_{i=1}^{n_k}(\cm E_{k,i}\circ \cm X_{k,i})_{[S_X]}\right)\bbm v\right| \\
				&= \sup_{\substack{\bbm u\in\mathbb S^{q-1}\\\bbm v\in\mathbb S^{p-1}}}\left|\left\langle\frac{1}{n_k}\sum_{i=1}^{n_k}(\cm E_{k,i}\circ \cm X_{k,i})_{[S_X]},\bbm u\bbm v^\top\right\rangle\right|.
			\end{align*}
		Since $\rank(\bbm u\bbm v^\top)=1$ and $\|\bbm u\bbm v^\top\|_\F=\|\bbm u\|_2\|\bbm v\|_2=1$, we have $\bbm u\bbm v^\top\in\mathcal M(1;q,p)$.
		Next, let $\mathcal N_q$ and $\mathcal N_p$ be $1/4$-nets of $\mathbb S^{q-1}$ and $\mathbb S^{p-1}$, respectively, such that $|\mathcal N_q|\leq 9^q$ and $|\mathcal N_p|\leq 9^p$. For notational simplicity, set $\bm Z_k=n_k^{-1}\sum_{i=1}^{n_k}(\cm E_{k,i}\circ \cm X_{k,i})_{[S_X]}$. Fix any $\bbm u\in\mathbb S^{q-1}$ and $\bbm v\in\mathbb S^{p-1}$. Choose $\bar{\bbm u}\in\mathcal N_q$ and $\bar{\bbm v}\in\mathcal N_p$ such that $\|\bbm u-\bar{\bbm u}\|_2\leq1/4$ and $\|\bbm v-\bar{\bbm v}\|_2\leq1/4$. Then
	        \begin{align*}
	        	|\bbm u^\top\bm Z_k\bbm v| &\leq |\bar{\bbm u}^{\top}\bm Z_k\bar{\bbm v}| + \|\bbm u-\bar{\bbm u}\|_2\|\bm Z_k\|_{\op}\|\bbm v\|_2 + \|\bar{\bbm u}\|_2\|\bm Z_k\|_{\op}\|\bbm v-\bar{\bbm v}\|_2\\
	        	&\leq \sup_{\bar{\bbm u}\in\mathcal N_q,\ \bar{\bbm v}\in\mathcal N_p}|\bar{\bbm u}^{\top}\bm Z_k\bar{\bbm v}| + \frac12\|\bm Z_k\|_{\op}.
	        \end{align*}
		Taking the supremum gives $\|\bm Z_k\|_{\op}\leq2\sup_{\bar{\bbm u}\in\mathcal N_q,\ \bar{\bbm v}\in\mathcal N_p}|\langle\bm Z_k,\bar{\bbm u}\bar{\bbm v}^\top\rangle|$. Therefore,
			\begin{align*}
				&\mathbb P\left(\left\|\frac{1}{n_k}\sum_{i=1}^{n_k}(\cm E_{k,i}\circ \cm X_{k,i})_{[S_X]}\right\|_{\op} \geq 2C\sigma_{\mathscr X,k}\sigma_{\mathscr E,k}(\lambda_{\mathscr X,k}^{\max}\lambda_{\mathscr E,k}^{\max})^{1/2}\sqrt{\frac{p+q}{n_k}}\right) \\
				&\leq \sum_{\bar{\bbm u}\in \mathcal N_q}\sum_{\bar{\bbm v}\in \mathcal N_p}\mathbb P\left(\left|\left\langle\frac{1}{n_k}\sum_{i=1}^{n_k}(\cm E_{k,i}\circ \cm X_{k,i})_{[S_X]},\bar{\bbm u}\bar{\bbm v}^\top\right\rangle\right| \geq C\sigma_{\mathscr X,k}\sigma_{\mathscr E,k}(\lambda_{\mathscr X,k}^{\max}\lambda_{\mathscr E,k}^{\max})^{1/2}\sqrt{\frac{p+q}{n_k}}\right)\\
				&\leq |\mathcal N_q|\,|\mathcal N_p|\,\exp(-C(p+q)) \leq 9^{p+q}\exp(-C(p+q)) = \exp\bigl(-C(p+q)\bigr).\qedhere
			\end{align*}
	\end{proof}

	\begin{proof}[\textbf{Proof of Lemma~\ref{lem:deviation-condition-infty}}]
		The proof follows the same conditioning arguments as Lemma~\ref{lem:deviation-condition-op}, but replaces the covering arguments for the operator norm by a union bound over all tensor entries. For any fixed entry indexed by $(a_1,\cdots,a_m,b_1,\cdots,b_d)$, define $Z_{k,i} = X_{k,i,b_1,\cdots,b_d}E_{k,i,a_1,\cdots,a_m}$ and $\bar Z_k=n_k^{-1}\sum_{i=1}^{n_k}Z_{k,i}$. It suffices to control $|\bar Z_k|$ uniformly over the $pq$ possible entries.

		For this fixed entry, let $\bm e_X\in\mathbb R^p$ and $\bm e_E\in\mathbb R^q$ be the corresponding coordinate vectors, so that $X_{k,i,b_1,\cdots,b_d}=\bm e_X^\top\vect(\cm X_{k,i})$ and $E_{k,i,a_1,\cdots,a_m}=\bm e_E^\top\vect(\cm E_{k,i})$. For any $u>0$ and $v>0$, define
		\[
			A_u = \left\{\frac1{n_k}\sum_{i=1}^{n_k}(\bm e_E^\top\vect(\cm E_{k,i}))(\bm e_X^\top\vect(\cm X_{k,i}))\geq u\right\}
			\ \text{and} \
			B_v = \left\{\frac1{n_k}\sum_{i=1}^{n_k}(\bm e_X^\top\vect(\cm X_{k,i}))^2\leq v\right\}.
		\]
		Conditioning on $\cm X_{k,1},\cdots,\cm X_{k,n_k}$, the event $B_v$ is measurable with respect to $\{\cm X_{k,i}\}_{i=1}^{n_k}$. By Assumption~\ref{assump:subg-noise}, conditional on the design, $\bm e_E^\top\vect(\cm E_{k,i})$ is mean-zero sub-Gaussian with variance proxy bounded by $\sigma_{\mathscr E,k}^2\bm e_E^\top\bbm\Sigma_{\mathscr E,k}\bm e_E\leq \sigma_{\mathscr E,k}^2\lambda_{\mathscr E,k}^{\max}$. Therefore, the same Chernoff arguments used in \eqref{eq:P-A-con-B}--\eqref{eq:union-expectation} gives
		\begin{align}\label{eq:infty-A-cond-B}
			\mathbb P(A_u\mid B_v) &\leq\exp\left(-\frac{n_ku^2}{2\sigma_{\mathscr E,k}^2\lambda_{\mathscr E,k}^{\max}v}\right).
		\end{align}
		We next control $B_v$. Similar to the arguments used to control the event $B_v$ in the proof of Lemma~\ref{lem:deviation-condition-op}, we apply the Hanson-Wright inequality to the empirical second moment of this design coordinate. By Assumption~\ref{assump:subg-design}, write $\vect(\cm X_{k,i})=\bbm\Sigma_{\mathscr X,k}^{1/2}\bbm\xi_{k,i}$, where the coordinates of $\bbm\xi_{k,i}$ are independent mean-zero sub-Gaussian random variables with sub-Gaussian norm bounded by $\sigma_{\mathscr X,k}$. Then
		\[
			\frac1{n_k}\sum_{i=1}^{n_k}(\bm e_X^\top\vect(\cm X_{k,i}))^2 = \frac1{n_k}\sum_{i=1}^{n_k}\bbm\xi_{k,i}^\top\bbm\Sigma_{\mathscr X,k}^{1/2}\bm e_X\bm e_X^\top\bbm\Sigma_{\mathscr X,k}^{1/2}\bbm\xi_{k,i}.
		\]
		Let $\bm H_{X,k} = \bbm\Sigma_{\mathscr X,k}^{1/2}\bm e_X\bm e_X^\top\bbm\Sigma_{\mathscr X,k}^{1/2}$. Since $\bm H_{X,k}$ is rank one, $\|\bm H_{X,k}\|_{\op} = \|\bm H_{X,k}\|_{\F} = \bm e_X^\top\bbm\Sigma_{\mathscr X,k}\bm e_X \leq \lambda_{\mathscr X,k}^{\max}$. Therefore, applying the Hanson-Wright inequality in Lemma~\ref{lem:hanson-wright}, for any $v>0$,
		\begin{align}\label{eq:element-wise-infty-bound}
			\mathbb P\left(\left|\frac1{n_k}\sum_{i=1}^{n_k}(\bm e_X^\top\vect(\cm X_{k,i}))^2 - \mathbb E(\bm e_X^\top\vect(\cm X_{k,i}))^2\right|\geq v\right) \leq 2\exp\left(-Cn_k\min\left\{\frac{v^2}{\sigma_{\mathscr X,k}^4(\lambda_{\mathscr X,k}^{\max})^2},\frac{v}{\sigma_{\mathscr X,k}^2\lambda_{\mathscr X,k}^{\max}}\right\}\right).
		\end{align}
		Moreover, $\mathbb E(\bm e_X^\top\vect(\cm X_{k,i}))^2 = \bm e_X^\top\bbm\Sigma_{\mathscr X,k}\bm e_X \leq \lambda_{\mathscr X,k}^{\max}$. Therefore, taking $v=C\sigma_{\mathscr X,k}^2\lambda_{\mathscr X,k}^{\max}$, we have
		\begin{align}\label{eq:infty-B-control}
			\mathbb P(B_v^{c}) = \mathbb P\left(\frac1{n_k}\sum_{i=1}^{n_k}(\bm e_X^\top\vect(\cm X_{k,i}))^2\geq C\sigma_{\mathscr X,k}^2\lambda_{\mathscr X,k}^{\max}\right)\leq 2\exp(-Cn_k),
		\end{align}
		With this choice of $v$, take $u = C\sigma_{\mathscr X,k}\sigma_{\mathscr E,k}(\lambda_{\mathscr X,k}^{\max}\lambda_{\mathscr E,k}^{\max})^{1/2}\sqrt{\log(pq)/n_k}$. Then \eqref{eq:infty-A-cond-B} gives
		\begin{align}\label{eq:infty-entry-upper-tail}
			\mathbb P(A_u\mid B_v) \leq \exp(-C\log(pq)).
		\end{align}
		Combining \eqref{eq:infty-B-control} and \eqref{eq:infty-entry-upper-tail}, and using $n_k\gtrsim\log(pq)$, we have
		\[
			\mathbb P\left(\bar Z_k \geq C\sigma_{\mathscr X,k}\sigma_{\mathscr E,k}(\lambda_{\mathscr X,k}^{\max}\lambda_{\mathscr E,k}^{\max})^{1/2}\sqrt{\frac{\log(pq)}{n_k}}\right) \leq \mathbb P(A_u\mid B_v) + \mathbb P(B_v^{c}) \leq \exp(-C\log(pq)).
		\]
		Applying the same arguments to $-\bar Z_k$ yields the two-sided bound
		\[
			\mathbb P\left(|\bar Z_k|\geq C\sigma_{\mathscr X,k}\sigma_{\mathscr E,k}(\lambda_{\mathscr X,k}^{\max}\lambda_{\mathscr E,k}^{\max})^{1/2}\sqrt{\frac{\log(pq)}{n_k}}\right) \leq 2\exp(-C\log(pq)).
		\]
		Finally, applying a union bound over all $pq$ entries of $n_k^{-1}\sum_{i=1}^{n_k}\cm E_{k,i}\circ\cm X_{k,i}$, and increasing the universal constant in the threshold if necessary, gives
		\[
			\mathbb P\left(\left\|\frac1{n_k}\sum_{i=1}^{n_k}\cm E_{k,i}\circ\cm X_{k,i}\right\|_\infty \geq C\sigma_{\mathscr X,k}\sigma_{\mathscr E,k}(\lambda_{\mathscr X,k}^{\max}\lambda_{\mathscr E,k}^{\max})^{1/2}\sqrt{\frac{\log(pq)}{n_k}}\right) \leq \exp(-C\log(pq)).\qedhere
		\]
	\end{proof}

	\begin{proof}[\textbf{Proof of Lemma~\ref{lem:deviation-condition-sparse}}]
		Let $\bm W_k=(n_k^{-1}\sum_{i=1}^{n_k}\cm E_{k,i}\circ\cm X_{k,i})_{[S_X]}$, $\bbm h=\cm H_{[S_X]}$, and write the Frobenius inner product after the $S_X$-matricization as $\langle\bm W_k,\bbm h\rangle$. The proof proceeds by separating the two possible effective sparsity regimes of $\bbm h$. For $t>0$, define the localized empirical process $Z(t)=\sup_{\|\bm U\|_1\leq t,\ \|\bm U\|_\F\leq1}|\langle\bm W_k,\bm U\rangle|$.

		\noindent
		\textbf{\emph{Case 1.}} Suppose $\|\bbm h\|_1\leq\sqrt{s}\|\bbm h\|_\F$. Then $|\langle\bm W_k,\bbm h\rangle|\leq\|\bbm h\|_\F Z(\sqrt{s})$, so it suffices to control $Z(\sqrt{s})$. This step uses the same conditioning arguments as in Lemma~\ref{lem:deviation-condition-infty}; the only difference is that the union bound over the $pq$ fixed entries is replaced by a sparse covering arguments. Let $\mathcal S_s=\{\bm U:\|\bm U\|_0\leq s,\ \|\bm U\|_\F\leq1\}$. For any fixed $\bm U\in\mathcal S_s$, define
		\[
			A_u(\bm U)=\left\{\left|\frac1{n_k}\sum_{i=1}^{n_k}\left\langle\vect(\cm E_{k,i}),\bm U\vect(\cm X_{k,i})\right\rangle\right|\geq u\right\},
		\]
		and let $B_v(\bm U)=\{n_k^{-1}\sum_{i=1}^{n_k}\|\bm U\vect(\cm X_{k,i})\|_2^2\leq v\}$. Conditional on the design, the same Chernoff arguments as in \eqref{eq:infty-A-cond-B} gives $\mathbb P(A_u(\bm U)\mid B_v(\bm U))\leq 2\exp(-n_ku^2/(C\sigma_{\mathscr E,k}^2\lambda_{\mathscr E,k}^{\max}v))$. Moreover, since $\|\bm U\|_\F\leq1$, the Hanson-Wright arguments used to prove \eqref{eq:infty-B-control} gives, for any fixed $\bm U\in\mathcal S_s$, $\mathbb P(B_v(\bm U)^c)\leq2\exp(-Cn_k)$ with $v=C\sigma_{\mathscr X,k}^2\lambda_{\mathscr X,k}^{\max}$. Taking $u=C\vartheta_k\sqrt{s\log(pq/s)/n_k}$ therefore yields $\mathbb P(A_u(\bm U))\leq2\exp(-Cs\log(pq/s))+2\exp(-Cn_k)$ for each fixed $\bm U\in\mathcal S_s$.

		We first extend this fixed-$\bm U$ tail bound to the exactly sparse unit-Frobenius class $\mathcal S_s$. For each support $S\subset[pq]$ with $|S|\leq s$, the subset of $\mathcal S_s$ supported on $S$ is isometric, under the Frobenius norm, to the Euclidean unit ball in $\mathbb R^{|S|}$, and hence admits an $\epsilon$-net of size at most $(3/\epsilon)^{|S|}$. Taking the union of these support-wise nets over all supports of size at most $s$ yields an $\epsilon$-net $\mathcal N_s\subset\mathcal S_s$ satisfying
		\[
			|\mathcal N_s|\leq\sum_{r=0}^{s}\binom{pq}{r}\left(\frac{3}{\epsilon}\right)^r.
		\]
		Since $r\leq s$ and $\epsilon\in(0,1)$ is fixed, $(3/\epsilon)^r\leq(3/\epsilon)^s$. Using $\sum_{r=0}^{s}\binom{pq}{r}\leq(epq/s)^s$, we have $\log|\mathcal N_s|\leq s\log(3e\,pq/(\epsilon s))\leq Cs\log(pq/s)$ under the usual sparsity regime $s\leq c pq$ for a fixed $c<1$; otherwise one may keep the bound as $Cs\log(epq/s)$. Applying the fixed-$\bm U$ tail bound over $\mathcal N_s$ and taking a union bound gives
		\[
			\sup_{\bm U\in\mathcal N_s}|\langle\bm W_k,\bm U\rangle| \leq C\vartheta_k\sqrt{\frac{s\log(pq/s)}{n_k}}
		\]
		with probability at least $1-\exp(-Cs\log(pq/s))$, where $\exp(-Cn_k)$ is absorbed by $n_k\gtrsim s\log(pq/s)$.

		Next, we pass the finite net to $\mathcal S_s$. Fix any $\bm U\in\mathcal S_s$, and choose $\bm U^{\mathcal N}\in\mathcal N_s$ supported on the same support as $\bm U$ such that $\|\bm U-\bm U^{\mathcal N}\|_\F\leq\epsilon$. Then $\bm U-\bm U^{\mathcal N}$ is supported on at most $s$ entries and has Frobenius norm at most $\epsilon$, so $(\bm U-\bm U^{\mathcal N})/\epsilon\in\mathcal S_s$ whenever $\bm U\neq\bm U^{\mathcal N}$. Hence $|\langle\bm W_k,\bm U\rangle|\leq \sup_{\bm V\in\mathcal N_s}|\langle\bm W_k,\bm V\rangle|+\epsilon\sup_{\bm V\in\mathcal S_s}|\langle\bm W_k,\bm V\rangle|$. Taking the supremum over $\bm U\in\mathcal S_s$ on the left-hand side and rearranging, and then taking $\epsilon=1/2$, gives
		\[
			\sup_{\bm U\in\mathcal S_s}|\langle\bm W_k,\bm U\rangle| \leq \frac{1}{1-\epsilon}\sup_{\bm U\in\mathcal N_s}|\langle\bm W_k,\bm U\rangle| \leq C\vartheta_k\sqrt{\frac{s\log(pq/s)}{n_k}}.
		\]

		We now transfer this bound from $\mathcal S_s$ to the localized $\ell_1$ class $\mathcal K_s=\{\bm U:\|\bm U\|_1\leq\sqrt{s},\ \|\bm U\|_\F\leq1\}$. For any fixed $\bm U\in\mathcal K_s$, order the entries of $\vect(\bm U)$ by decreasing absolute value and partition the indices into disjoint blocks $S_0,S_1,S_2,\ldots$, each of cardinality at most $s$. Write $\bm U=\sum_{j\geq0}\bm U_{S_j}$. For each nonzero block, $\bm U_{S_j}/\|\bm U_{S_j}\|_\F\in\mathcal S_s$. Moreover, because of the decreasing order, $\max_{a\in S_j}|U_a|\leq s^{-1}\|\bm U_{S_{j-1}}\|_1$ for $j\geq1$, and hence $\|\bm U_{S_j}\|_\F\leq\sqrt{s}\max_{a\in S_j}|U_a|\leq s^{-1/2}\|\bm U_{S_{j-1}}\|_1$. Therefore,
		\[
			\sum_{j\geq0}\|\bm U_{S_j}\|_\F \leq \|\bm U_{S_0}\|_\F+s^{-1/2}\sum_{j\geq1}\|\bm U_{S_{j-1}}\|_1 \leq \|\bm U\|_\F+s^{-1/2}\sum_{j\geq0}\|\bm U_{S_j}\|_1 = \|\bm U\|_\F+s^{-1/2}\|\bm U\|_1 \leq 2.
		\]
		Consequently, for this fixed $\bm U\in\mathcal K_s$, $|\langle\bm W_k,\bm U\rangle|\leq\sum_{j\geq0}\|\bm U_{S_j}\|_\F\sup_{\bm V\in\mathcal S_s}|\langle\bm W_k,\bm V\rangle|\leq2\sup_{\bm V\in\mathcal S_s}|\langle\bm W_k,\bm V\rangle|$. Taking the supremum over $\bm U\in\mathcal K_s$ gives $Z(\sqrt{s})\leq C\vartheta_k\sqrt{s\log(pq/s)/n_k}$. Consequently,
		\[
			|\langle\bm W_k,\bbm h\rangle| \leq \|\bbm h\|_\F Z(\sqrt{s}) \leq C\vartheta_k\sqrt{\frac{s\log(pq/s)}{n_k}}\|\cm H\|_\F.
		\]

		\noindent
		\textbf{\emph{Case 2.}} Suppose $\|\bbm h\|_1>\sqrt{s}\|\bbm h\|_\F$. Let $t_H=\|\bbm h\|_1/\|\bbm h\|_\F$. Then $t_H>\sqrt{s}$ and $|\langle\bm W_k,\bbm h\rangle|\leq\|\bbm h\|_\F Z(t_H)$. Therefore, it remains to control $Z(t)$ uniformly over $t>\sqrt{s}$. As in Case~1, we first control an exactly sparse class and then transfer the bound to the localized $\ell_1$ class by a sorted-block decomposition.

		Fix $t>\sqrt{s}$ and set $m_t=\lceil t^2\rceil$. Consider the exactly sparse class $\mathcal S_{m_t}=\{\bm U:\|\bm U\|_0\leq m_t,\ \|\bm U\|_\F\leq1\}$. By the support-wise covering arguments and the finite-net reduction used in Case~1, for this fixed $t$, with probability at least $1-\exp(-Cm_t\log(pq/m_t))$,
		\[
			\sup_{\bm U\in\mathcal S_{m_t}}|\langle\bm W_k,\bm U\rangle| \leq C\vartheta_k\sqrt{\frac{m_t\log(pq/m_t)}{n_k}} \leq C\vartheta_k t\sqrt{\frac{\log(pq/t^2)}{n_k}},
		\]
		where constants absorb the rounding from $m_t=\lceil t^2\rceil$.

		Next, take any $\bm U$ with $\|\bm U\|_1\leq t$ and $\|\bm U\|_\F\leq1$. Order the entries of $\vect(\bm U)$ by decreasing absolute value and partition them into disjoint blocks $S_0,S_1,S_2,\ldots$, each of cardinality at most $m_t$. The same sorted-coordinate arguments as in Case~1 gives $\sum_{j\geq0}\|\bm U_{S_j}\|_\F\leq\|\bm U\|_\F+m_t^{-1/2}\|\bm U\|_1\leq2$. Since each nonzero normalized block $\bm U_{S_j}/\|\bm U_{S_j}\|_\F$ belongs to $\mathcal S_{m_t}$, it follows that
		\[
			Z(t) \leq 2\sup_{\bm V\in\mathcal S_{m_t}}|\langle\bm W_k,\bm V\rangle| \leq C\vartheta_k t\sqrt{\frac{\log(pq/t^2)}{n_k}}.
		\]

		We now make this fixed-radius bound uniform over the possible value of $t_H$. Since $t_H=\|\bbm h\|_1/\|\bbm h\|_\F\leq\sqrt{pq}$ whenever $\bbm h\neq0$, it suffices to consider $t\in(\sqrt{s},\sqrt{pq}]$. Let $r_j=2^{j+1}\sqrt{s}$ and consider all indices $j$ such that $2^j\sqrt{s}<\sqrt{pq}$. Applying the fixed-radius bound at the endpoints $r_j$, with the threshold enlarged by the common peeling factor $\sqrt{\log(pq/s)/n_k}$, gives
		\[
			\mathbb P\left(Z(r_j)>C\vartheta_k r_j\left(\sqrt{\frac{\log(pq/r_j^2)}{n_k}}+\sqrt{\frac{\log(pq/s)}{n_k}}\right)\right) \leq \exp(-Cr_j^2\log(pq/s)).
		\]
		The additional term is not needed for a single fixed radius, but it makes the failure probabilities summable over the dyadic endpoints, since $\sum_j\exp(-Cr_j^2\log(pq/s))=\sum_j\exp(-C4^{j+1}s\log(pq/s))\leq\exp(-Cs\log(pq/s))$. Hence, with probability at least $1-\exp(-Cs\log(pq/s))$, the preceding endpoint bound holds simultaneously for all such $j$.

		On this event, fix any $t\in(\sqrt{s},\sqrt{pq}]$ and choose $j$ such that $t\in(2^j\sqrt{s},2^{j+1}\sqrt{s}]$. Since $Z(t)$ is nondecreasing in $t$ and $r_j=2^{j+1}\sqrt{s}$ satisfies $t\leq r_j\leq2t$, the endpoint bound implies
		\begin{align*}
		Z(t) \leq Z(r_j) \leq C\vartheta_k r_j\left(\sqrt{\frac{\log(pq/r_j^2)}{n_k}}+\sqrt{\frac{\log(pq/s)}{n_k}}\right) \leq C\vartheta_k t\left(\sqrt{\frac{\log(pq/t^2)}{n_k}}+\sqrt{\frac{\log(pq/s)}{n_k}}\right).
		\end{align*}
		Evaluating this bound at $t=t_H$ and using $t_H^2>s$, we obtain $Z(t_H)\leq C\vartheta_k t_H\sqrt{\log(pq/s)/n_k}$. Substituting this into $|\langle\bm W_k,\bbm h\rangle|\leq\|\bbm h\|_\F Z(t_H)$ and using $t_H\|\bbm h\|_\F=\|\bbm h\|_1=\|\cm H\|_1$, we get
		\[
			|\langle\bm W_k,\bbm h\rangle| \leq C\vartheta_k\sqrt{\frac{\log(pq/s)}{n_k}}\|\cm H\|_1.
		\]
		Combining Case~1 and Case~2 gives the two refined sparse deviation bounds. This completes the proof.
	\end{proof}

	\begin{proof}[\textbf{Proof of Lemma~\ref{lem:rsc-condition-tucker-rank-r}}]
		Since both sides of the desired empirical quadratic bound are homogeneous of degree two in $\cm T$, it suffices to prove the result on the unit-Frobenius Tucker-rank class. That is, for $\cm T \in \mathbb S_{\bbm r} = \{\cm T\in\mathbb R^{q_1\times\cdots\times q_m\times p_1\times\cdots\times p_d}:\tucrank(\cm T)\leq\bbm r, \|\cm T\|_\F = 1\}$, it suffices to show that, with probability at least $1-\exp(-Cdf_{\bbm r})$, 	
		\[
			\sup_{\boldsymbol{\mathscr T}\in\mathbb S_{\bbm r}}|L(\cm T)| = \sup_{\boldsymbol{\mathscr T}\in\mathbb S_{\bbm r}}\left|\frac{1}{n_k}\sum_{i=1}^{n_k}\left\|\left\langle\cm T,\cm X_{k,i}\right\rangle\right\|_\F^2 - \mathbb E\left\|\left\langle\cm T,\cm X_{k,i}\right\rangle\right\|_\F^2\right| \leq \delta_{\bbm r}.
		\]

		Define the centered empirical bilinear process and the centered empirical quadratic process
		\[
			B(\cm U,\cm V) = \frac{1}{n_k}\sum_{i=1}^{n_k}\left\langle\langle \cm U,\cm X_{k,i}\rangle,\langle \cm V,\cm X_{k,i}\rangle\right\rangle - \mathbb E\bigl[\left\langle\langle \cm U,\cm X_{k,i}\rangle,\langle \cm V,\cm X_{k,i}\rangle\right\rangle\bigr],
			\ \text{and} \ 
			L(\cm T)=B(\cm T,\cm T).
		\]
		For any $\cm U,\cm V$, we have $\vect(\langle\cm U,\cm X_{k,i}\rangle)=\cm U_{[S_X]}\vect(\cm X_{k,i})$ and $\vect(\langle\cm V,\cm X_{k,i}\rangle)=\cm V_{[S_X]}\vect(\cm X_{k,i})$. By Assumption~\ref{assump:subg-design}, $\vect(\cm X_{k,i}) = \bbm\Sigma_{\mathscr X,k}^{1/2}\bbm\xi_{k,i}$, where $\bbm\xi_{k,i}$ has independent mean-zero sub-Gaussian coordinates with variance proxy $\sigma_{\mathscr X,k}^2$. Let $\bbm\xi_k=(\bbm\xi_{k,1}^{\top},\cdots,\bbm\xi_{k,n_k}^{\top})^{\top}$. Then
		\begin{align*}
			\frac{1}{n_k}\sum_{i=1}^{n_k}\left\langle\langle \cm U,\cm X_{k,i}\rangle,\langle \cm V,\cm X_{k,i}\rangle\right\rangle = \frac{1}{n_k}\sum_{i=1}^{n_k}\bbm\xi_{k,i}^{\top}\bbm\Sigma_{\mathscr X,k}^{1/2}\cm U_{[S_X]}^\top\cm V_{[S_X]}\bbm\Sigma_{\mathscr X,k}^{1/2}\bbm\xi_{k,i}=\frac{1}{n_k}\bbm\xi_k^{\top}\left(\bm I_{n_k}\otimes\bm A_{\boldsymbol{\mathscr U},\boldsymbol{\mathscr V}}\right)\bbm\xi_k,
		\end{align*}
		where
		\[
			\bm A_{\boldsymbol{\mathscr U},\boldsymbol{\mathscr V}} = \bbm\Sigma_{\mathscr X,k}^{1/2}\left\{\frac{\cm U_{[S_X]}^\top\cm V_{[S_X]}+\cm V_{[S_X]}^\top\cm U_{[S_X]}}{2}\right\}\bbm\Sigma_{\mathscr X,k}^{1/2}.
		\]
		Consequently, applying the Hanson-Wright inequality in Lemma~\ref{lem:hanson-wright} yields that, for any $t>0$,
		\begin{align}\label{eq:pointwise-concentration}
			\mathbb P\left(|B(\cm U,\cm V)| \geq t\right) &\leq 2\exp\left(-C\min\left\{\frac{n_k^2t^2}{\sigma_{\mathscr X,k}^4\|\bm I_{n_k}\otimes\bm A_{\boldsymbol{\mathscr U},\boldsymbol{\mathscr V}}\|_{\F}^2},\frac{n_kt}{\sigma_{\mathscr X,k}^2\|\bm I_{n_k}\otimes\bm A_{\boldsymbol{\mathscr U},\boldsymbol{\mathscr V}}\|_{\op}}\right\}\right)\notag\\
			&\leq 2\exp\left(-Cn_k\min\left\{\frac{t^2}{\sigma_{\mathscr X,k}^4(\lambda_{\mathscr X,k}^{\max})^2},\frac{t}{\sigma_{\mathscr X,k}^2\lambda_{\mathscr X,k}^{\max}}\right\}\right).
		\end{align}
		The last inequality follows from $\|\bm I_{n_k}\otimes \bm A_{\boldsymbol{\mathscr U},\boldsymbol{\mathscr V}}\|_{\F}^{2} = n_k\|\bm A_{\boldsymbol{\mathscr U},\boldsymbol{\mathscr V}}\|_{\F}^{2}\leq n_k\lambda_{\max}^{2}(\bbm\Sigma_{\mathscr X,k})$ and $\|\bm I_{n_k}\otimes\bm A_{\boldsymbol{\mathscr U},\boldsymbol{\mathscr V}}\|_{\op} = \|\bm A_{\boldsymbol{\mathscr U},\boldsymbol{\mathscr V}}\|_{\op}\leq \lambda_{\mathscr X,k}^{\max}$, where we used $\|\cm U_{[S_X]}^\top\cm V_{[S_X]} + \cm V_{[S_X]}^\top\cm U_{[S_X]}\|_{\F}/2 \leq \|\cm U\|_\F\|\cm V\|_\F\leq1$, and the same bound holds for the operator norm.
		Next, we construct an $\varepsilon$-net to deduce the uniform bound over the Tucker-rank class $\mathbb S_{\bbm r}$ from the pointwise concentration bound. 
		
		Let $\mathcal N_{\mathrm{core}}$ be an $\varepsilon$-net of the Frobenius unit ball in $\mathbb R^{r_1\times\cdots\times r_{d+m}}$, and, for each mode $s\in[d+m]$, let $\mathcal N_s$ be an $\varepsilon$-net of the corresponding Stiefel manifold with respect to the projection distance $d(\bm U,\bm V)=\|\bm U\bm U^\top-\bm V\bm V^\top\|_{\op}$. By Lemma~7 in \cite{zhang2018tensor}, for any $0<\varepsilon<1$, we can construct an $\varepsilon$-net $\mathcal N_{\mathrm{core}}=\{\cm S^{(1)},\cdots,\cm S^{(N_S)}\}$ for the Frobenius unit ball $\{\cm S\in\mathbb R^{r_1\times\cdots\times r_{d+m}}: \|\cm S\|_\F\leq 1\}$ such that $\sup_{\boldsymbol{\mathscr S}:\|\boldsymbol{\mathscr S}\|_\F\leq 1}\min_{1\leq i\leq N_S}\|\cm S-\cm S^{(i)}\|_\F \leq \varepsilon$, with $|\mathcal N_{\mathrm{core}}|\leq(3/\varepsilon)^{\prod_{s=1}^{d+m}r_s}$.
		
		At the same time, for each $s\in[d+m]$, define the ambient mode dimension $d_s=q_s$ for $s\in [m]$, and $d_s=p_{s-m}$ for $s\in [d+m]\setminus[m]$. By Proposition~8 in \cite{szarek1982nets}, for each $s\in[d+m]$, we can construct an $\varepsilon$-net $\mathcal N_s=\{\bm U_s^{(1)},\cdots,\bm U_s^{(N_s)}\}$ on the Grassmann manifold of $r_s$-dimensional subspaces in $\mathbb R^{d_s}$ with respect to the metric $d(\bm U,\bm V) = \|\bm U\bm U^\top-\bm V\bm V^\top\|_{\op}$, such that $\sup_{\bm U_s\in\mathbb O_{d_s,r_s}} \min_{1\leq i\leq N_s} d(\bm U_s,\bm U_s^{(i)}) \leq \varepsilon$, with $|\mathcal N_s| \leq (c_0/\varepsilon)^{r_s(d_s-r_s)}$, where $c_0>0$ is an absolute constant. Consequently, the cardinality of the product net $\mathcal N_{\bbm r} = \{\cm S\times_{s=1}^{d+m}\bm U_s:\cm S\in\mathcal N_{\mathrm{core}},\bm U_s\in\mathcal N_s,\ \text{for}\ s\in[d+m]\}$
		satisfies $|\mathcal N_{\bbm r}| \leq (c/\varepsilon)^{df_{\bbm r}}$ with $df_{\bbm r} = {\prod_{s=1}^{d+m}r_s + \sum_{s=1}^{m} r_s(q_s-r_s) + \sum_{s=1}^{d} r_{m+s}(p_s-r_{m+s})}$.

		We now control the auxiliary centered bilinear process on the finite net. Since the net points have Frobenius norm at most one,
		\eqref{eq:pointwise-concentration} applies to every $\cm U,\cm T\in\mathcal N_{\bbm r}$. Moreover, $|\mathcal N_{\bbm r}|\leq(c/\varepsilon)^{df_{\bbm r}}$, taking $t=\delta_{\bbm r}/2$ and applying the union bound over $\mathcal N_{\bbm r}\times\mathcal N_{\bbm r}$ yields, after increasing the universal constant in the definition of $\delta_{\bbm r}$ if necessary,
		\begin{align}\label{eq:uniform-concentration-net}
			\mathbb P\left(\sup_{\boldsymbol{\mathscr U},\boldsymbol{\mathscr T}\in\mathcal N_{\bbm r}}|B(\cm U,\cm T)| \geq \frac{\delta_{\bbm r}}{2}\right) \leq \sum_{\boldsymbol{\mathscr U},\boldsymbol{\mathscr T}\in\mathcal N_{\bbm r}}\mathbb P\left(|B(\cm U,\cm T)| \geq \frac{\delta_{\bbm r}}{2}\right) \leq (c/\varepsilon)^{2df_{\bbm r}}\exp(-Cdf_{\bbm r}) \leq \exp(-Cdf_{\bbm r}),
		\end{align}
		where the last inequality holds because $\varepsilon$ is fixed and the universal constant in the concentration bound can be chosen sufficiently large.

		It remains to transfer this finite-net bound to $\mathbb S_{\bbm r}$. Fix any $\cm T\in\mathbb S_{\bbm r}$ and write its Tucker representation as $\cm T = \cm S^*\times_{s=1}^{d+m}\bm U_s^*$, with $\|\cm S^*\|_\F=1$ and $\bm U_s^*\in\mathbb O_{d_s,r_s}$. By the construction of the Grassmann nets, for each $s\in[d+m]$ there exist $\bm U_s^{(i_s)}\in\mathcal N_s$ and an orthogonal matrix $\bm O_s\in\mathbb R^{r_s\times r_s}$ such that $\|\bm U_s^*\bm O_s-\bm U_s^{(i_s)}\|_{\op}\leq\sqrt2\varepsilon$, see, e.g., Lemma~1 in \cite{cai2018rate}. Set $\bar{\cm S}=\cm S^*\times_{s=1}^{d+m}\bm O_s^\top$ and choose $\cm S^{(i_0)}\in\mathcal N_{\mathrm{core}}$ such that $\|\cm S^{(i_0)}-\bar{\cm S}\|_\F\leq\varepsilon$. Define the corresponding net tensor $\cm T^{\mathcal N} = \cm S^{(i_0)}\times_{s=1}^{d+m}\bm U_s^{(i_s)}\in\mathcal N_{\bbm r}$. Owing to the invariance of the Tucker representation under the core rotation, the original tensor can also be written as $\cm T = \bar{\cm S}\times_{s=1}^{d+m}\bm U_s^*\bm O_s$. Therefore,
		\begin{align}\label{eq:tucker-net-telescoping}
			\cm T-\cm T^{\mathcal N} = (\bar{\cm S}-\cm S^{(i_0)}) \times_{s=1}^{d+m}\bm U_s^*\bm O_s + \sum_{s=1}^{d+m}\cm S^{(i_0)} \times_{j<s}\bm U_j^{(i_j)} \times_s(\bm U_s^*\bm O_s-\bm U_s^{(i_s)}) \times_{j>s}\bm U_j^*\bm O_j.
		\end{align}

		We next bound the Frobenius norm of each term on the right-hand side of \eqref{eq:tucker-net-telescoping}. For the core-approximation term, since $\bm U_s^*\bm O_s$ has orthonormal columns for every $s\in[d+m]$, multiplication by these factor matrices preserves the Frobenius norm. Hence
		\[
			\left\|(\bar{\cm S}-\cm S^{(i_0)})\times_{s=1}^{d+m}\bm U_s^*\bm O_s\right\|_\F = \|\bar{\cm S}-\cm S^{(i_0)}\|_\F \leq \varepsilon.
		\]
		For the $s$-th factor-approximation term, using the fact that multiplying a tensor along one mode by a matrix increases the Frobenius norm by at most the operator norm of that matrix, and that all other factor matrices have orthonormal columns, we have, for $s\in[d+m]$,
		\[
			\|\cm S^{(i_0)}\times_{j<s}\bm U_j^{(i_j)}\times_s(\bm U_s^*\bm O_s-\bm U_s^{(i_s)})\times_{j>s}\bm U_j^*\bm O_j\|_\F \leq \|\cm S^{(i_0)}\|_\F\|\bm U_s^*\bm O_s-\bm U_s^{(i_s)}\|_{\op} \leq \sqrt{2}\varepsilon.
		\]
		In particular,
		\begin{align}\label{eq:tucker-net-approximation-error}
			\cm T-\cm T^{\mathcal N} = \sum_{s=0}^{d+m}\cm G_s,
			\ \text{with} \ 
			\tucrank(\cm G_s) \leq \bbm r
			\ \text{and} \ 
			\sum_{s=0}^{d+m}\|\cm G_s\|_\F \leq (\sqrt2(d+m)+1)\varepsilon.
		\end{align}
		Then, by the bilinearity of $B(\cdot,\cdot)$, we have, for any $\cm V\in\mathbb S_{\bbm r}$,
		\[
			|B(\cm T-\cm T^{\mathcal N},\cm V)| \leq \sum_{s=0}^{d+m}|B(\cm G_s,\cm V)| \leq \sum_{s=0}^{d+m}\|\cm G_s\|_\F\left|B\left(\frac{\cm G_s}{\|\cm G_s\|_\F},\cm V\right)\right| \leq (\sqrt2(d+m)+1)\varepsilon \sup_{\boldsymbol{\mathscr U},\boldsymbol{\mathscr V}\in\mathbb S_{\bbm r}}|B(\cm U,\cm V)|,
		\]
		where we used $\tucrank(\cm G_s)\leq\bbm r$. Since $\|\cm T^{\mathcal N}\|_\F\leq1$, the same arguments also gives
		\[
			|B(\cm T^{\mathcal N},\cm T-\cm T^{\mathcal N})| \leq \|\cm T^{\mathcal N}\|_\F\sum_{s=0}^{d+m}\|\cm G_s\|_\F \sup_{\boldsymbol{\mathscr U},\boldsymbol{\mathscr V}\in\mathbb S_{\bbm r}}|B(\cm U,\cm V)| \leq (\sqrt2(d+m)+1)\varepsilon \sup_{\boldsymbol{\mathscr U},\boldsymbol{\mathscr V}\in\mathbb S_{\bbm r}}|B(\cm U,\cm V)| .
		\]
		To close the bound, apply the same arguments to two arbitrary tensors $\cm U,\cm V\in\mathbb S_{\bbm r}$ and their net approximations. Since $B(\cm U,\cm V) = B(\cm U^{\mathcal N},\cm V^{\mathcal N}) + B(\cm U-\cm U^{\mathcal N},\cm V) + B(\cm U^{\mathcal N},\cm V-\cm V^{\mathcal N})$, we have
		\[
			\sup_{\boldsymbol{\mathscr U},\boldsymbol{\mathscr V}\in\mathbb S_{\bbm r}}|B(\cm U,\cm V)| \leq \sup_{\boldsymbol{\mathscr U},\boldsymbol{\mathscr V}\in\mathcal N_{\bbm r}}|B(\cm U,\cm V)| + 2(\sqrt2(d+m)+1)\varepsilon \sup_{\boldsymbol{\mathscr U},\boldsymbol{\mathscr V}\in\mathbb S_{\bbm r}}|B(\cm U,\cm V)|.
		\]
		Take $\varepsilon \leq 1/4(\sqrt{2}(d+m)+1)$ such that $2(\sqrt2(d+m)+1)\varepsilon\leq1/2$. Hence, on the event in \eqref{eq:uniform-concentration-net}, we have $\sup_{\boldsymbol{\mathscr U},\boldsymbol{\mathscr V}\in\mathbb S_{\bbm r}}|B(\cm U,\cm V)|\leq\delta_{\bbm r}$. In particular,
		\[
			\sup_{\boldsymbol{\mathscr T}\in\mathbb S_{\bbm r}}|L(\cm T)| \leq \delta_{\bbm r}. \qedhere
		\]
	\end{proof}

	\begin{proof}[\textbf{Proof of Lemma~\ref{lem:deviation-condition-tucker-rank-r}}]
		As in the proof of Lemma~\ref{lem:rsc-condition-tucker-rank-r}, construct an $\varepsilon$-net $\mathcal N_{\bbm r}$ for $\mathbb S_{\bbm r}$ with cardinality $|\mathcal N_{\bbm r}| \leq (c/\varepsilon)^{df_{\bbm r}}$. We first record a uniform variance bound for the design term. Define
		\[
			\mathcal E_{\mathscr X} = \left\{\sup_{\boldsymbol{\mathscr T}\in\mathbb S_{\bbm r}}\frac{1}{n_k}\sum_{i=1}^{n_k}\left\|\cm T_{[S_X]}\vect(\cm X_{k,i})\right\|_2^2 \leq C\sigma_{\mathscr X,k}^2\lambda_{\mathscr X,k}^{\max}\right\}.
		\]
		By Lemma~\ref{lem:rsc-condition-tucker-rank-r}, on an event with probability at least $1-\exp(-Cdf_{\bbm r})$, $\sup_{\boldsymbol{\mathscr T}\in\mathbb S_{\bbm r}}n_k^{-1}\sum_{i=1}^{n_k}\left\|\cm T_{[S_X]}\vect(\cm X_{k,i})\right\|_2^2 \leq \lambda_{\mathscr X,k}^{\max} + \delta_{\bbm r}$. Since $\delta_{\bbm r} = C\sigma_{\mathscr X,k}^2\lambda_{\mathscr X,k}^{\max}\sqrt{df_{\bbm r}/n_k}$, the sample size condition $n_k\gtrsim df_{\bbm r}$ implies $\lambda_{\mathscr X,k}^{\max}+\delta_{\bbm r} \leq C\sigma_{\mathscr X,k}^2\lambda_{\mathscr X,k}^{\max}$, after increasing the universal constant if necessary. Therefore, 
		\begin{align}\label{eq:variance-term-uniform-tucker}
			\mathbb P(\mathcal E_{\mathscr X})\geq1-\exp(-Cdf_{\bbm r}).
		\end{align}

		Next, fix any $\cm T\in\mathcal N_{\bbm r}$. We will show that, conditional on $\mathcal E_{\mathscr X}$, the desired deviation bound holds with high probability. Under the matricization convention, we have $\vect(\langle \cm T,\cm X_{k,i}\rangle) = \cm T_{[S_X]}\vect(\cm X_{k,i})$ and $(\cm E_{k,i}\circ\cm X_{k,i})_{[S_X]} = \vect(\cm E_{k,i})\vect^{\top}(\cm X_{k,i})$.
		Using the invariance of the Frobenius inner product under matricization, we have
		\begin{align*}
			\left\langle\frac{1}{n_k}\sum_{i=1}^{n_k}\cm E_{k,i}\circ \cm X_{k,i},\cm T\right\rangle &= \frac{1}{n_k}\sum_{i=1}^{n_k}\left\langle(\cm E_{k,i}\circ\cm X_{k,i})_{[S_X]},\cm T_{[S_X]}\right\rangle = \frac{1}{n_k}\sum_{i=1}^{n_k}\left\langle\vect(\cm E_{k,i})\vect^{\top}(\cm X_{k,i}),\cm T_{[S_X]}\right\rangle \\
			&= \frac{1}{n_k}\sum_{i=1}^{n_k}\operatorname{tr}\left(\vect(\cm X_{k,i})\vect^{\top}(\cm E_{k,i})\cm T_{[S_X]}\right) = \frac{1}{n_k}\sum_{i=1}^{n_k}\vect^{\top}(\cm E_{k,i})\cm T_{[S_X]}\vect(\cm X_{k,i}).
		\end{align*}
		Therefore, by Assumption~\ref{assump:subg-noise}, conditional on $\mathcal E_{\mathscr X}$, for any $\mu\in\mathbb R$, we have
		\begin{align*}
			&\mathbb E\left[\exp\left(\mu\sum_{i=1}^{n_k}\vect^{\top}(\cm E_{k,i})\cm T_{[S_X]}\vect(\cm X_{k,i})\right)\middle|\mathcal E_{\mathscr X}\right] \leq \exp\left(\frac{\mu^2\sigma_{\mathscr E,k}^2}{2}\sum_{i=1}^{n_k}\bigl(\cm T_{[S_X]}\vect(\cm X_{k,i})\bigr)^\top\bbm\Sigma_{\mathscr E,k}\bigl(\cm T_{[S_X]}\vect(\cm X_{k,i})\bigr)\right) \\
			&\quad\leq \exp\left(\frac{\mu^2\sigma_{\mathscr E,k}^2\lambda_{\mathscr E,k}^{\max}}{2}\sum_{i=1}^{n_k}\left\|\cm T_{[S_X]}\vect(\cm X_{k,i})\right\|_2^2\right) \leq \exp\left(C\mu^2\sigma_{\mathscr X,k}^2\sigma_{\mathscr E,k}^2\lambda_{\mathscr E,k}^{\max}\lambda_{\mathscr X,k}^{\max}n_k\right).
		\end{align*}

		It follows that, conditional on $\mathcal E_{\mathscr X}$, the random variable $\langle n_k^{-1}\sum_{i=1}^{n_k}\cm E_{k,i}\circ\cm X_{k,i},\cm T\rangle$ is mean-zero sub-Gaussian with conditional variance proxy bounded by $\sigma_{\mathscr X,k}^2\sigma_{\mathscr E,k}^2\lambda_{\mathscr E,k}^{\max}\lambda_{\mathscr X,k}^{\max}/n_k$. Hence, for any fixed $\cm T\in\mathcal N_{\bbm r}$ and $u>0$, we have
		\begin{align}\label{eq:conditional-tail-fixed-T}
			\mathbb P\left(\left|\left\langle\frac{1}{n_k}\sum_{i=1}^{n_k}\cm E_{k,i}\circ \cm X_{k,i},\cm T\right\rangle\right|\geq u \middle|\mathcal E_{\mathscr X}\right) \leq 2\exp\left(-\frac{Cn_ku^2}{\sigma_{\mathscr X,k}^2\sigma_{\mathscr E,k}^2\lambda_{\mathscr E,k}^{\max}\lambda_{\mathscr X,k}^{\max}}\right).
		\end{align}

		Then, combining \eqref{eq:variance-term-uniform-tucker} and \eqref{eq:conditional-tail-fixed-T} and taking $u = C\sigma_{\mathscr X,k}\sigma_{\mathscr E,k}(\lambda_{\mathscr X,k}^{\max}\lambda_{\mathscr E,k}^{\max})^{1/2}\sqrt{df_{\bbm r}/n_k}$, we have, for any $\cm T\in\mathcal N_{\bbm r}$,
		\begin{align*}
			&\mathbb P\left(\left|\left\langle\frac{1}{n_k}\sum_{i=1}^{n_k}\cm E_{k,i}\circ \cm X_{k,i},\cm T\right\rangle\right|\geq C\sigma_{\mathscr X,k}\sigma_{\mathscr E,k}(\lambda_{\mathscr X,k}^{\max}\lambda_{\mathscr E,k}^{\max})^{1/2}\sqrt{\frac{df_{\bbm r}}{n_k}}\right)\\
			&\quad\leq \mathbb P\left(\left|\left\langle\frac{1}{n_k}\sum_{i=1}^{n_k}\cm E_{k,i}\circ \cm X_{k,i},\cm T\right\rangle\right|\geq C\sigma_{\mathscr X,k}\sigma_{\mathscr E,k}(\lambda_{\mathscr X,k}^{\max}\lambda_{\mathscr E,k}^{\max})^{1/2}\sqrt{\frac{df_{\bbm r}}{n_k}} \middle|\mathcal E_{\mathscr X}\right) + \mathbb P\left(\mathcal E_{\mathscr X}^{c}\right)\\
			&\quad\leq 2\exp\left(-\frac{Cn_ku^2}{\sigma_{\mathscr X,k}^2\sigma_{\mathscr E,k}^2\lambda_{\mathscr E,k}^{\max}\lambda_{\mathscr X,k}^{\max}}\right) + \exp(-Cdf_{\bbm r}) \leq \exp(-Cdf_{\bbm r}).
		\end{align*}
		We now pass from a fixed net point to the whole finite net $\mathcal N_{\bbm r}$. Taking the supremum over $\mathcal N_{\bbm r}$, we have
		\begin{align}\label{eq:net-deviation-bound}
			&\mathbb P\left(\sup_{\boldsymbol{\mathscr T}\in\mathcal N_{\bbm r}}\left|\left\langle\frac{1}{n_k}\sum_{i=1}^{n_k}\cm E_{k,i}\circ \cm X_{k,i},\cm T\right\rangle\right|\geq C\sigma_{\mathscr X,k}\sigma_{\mathscr E,k}(\lambda_{\mathscr X,k}^{\max}\lambda_{\mathscr E,k}^{\max})^{1/2}\sqrt{\frac{df_{\bbm r}}{n_k}}\right) \notag\\
			&\quad\leq|\mathcal N_{\bbm r}|\exp(-Cdf_{\bbm r}) \leq (c/\varepsilon)^{df_{\bbm r}}\exp(-Cdf_{\bbm r}) \leq \exp(-Cdf_{\bbm r}),
		\end{align}
		where we used $|\mathcal N_{\bbm r}|\leq (c/\varepsilon)^{df_{\bbm r}}$ and that $\varepsilon$ is chosen as a fixed sufficiently small constant.

		It remains to pass from the finite net $\mathcal N_{\bbm r}$ to $\mathbb S_{\bbm r}$. Fix any $\cm T\in\mathbb S_{\bbm r}$ and let $\cm T^{\mathcal N}\in\mathcal N_{\bbm r}$ be the corresponding net tensor constructed as in the proof of Lemma~\ref{lem:rsc-condition-tucker-rank-r}. Similar to \eqref{eq:tucker-net-approximation-error}, we have
		\[
			\left|\left\langle\frac{1}{n_k}\sum_{i=1}^{n_k}\cm E_{k,i}\circ \cm X_{k,i},\cm T-\cm T^{\mathcal N}\right\rangle\right| \leq (\sqrt{2}(d+m)+1)\varepsilon\sup_{\boldsymbol{\mathscr T}\in\mathbb S_{\bbm r}}\left|\left\langle\frac{1}{n_k}\sum_{i=1}^{n_k}\cm E_{k,i}\circ \cm X_{k,i},\cm T\right\rangle\right|.
		\]
		Then, we have
		\begin{align*}
			\left|\left\langle\frac{1}{n_k}\sum_{i=1}^{n_k}\cm E_{k,i}\circ \cm X_{k,i},\cm T\right\rangle\right| &\leq \left|\left\langle\frac{1}{n_k}\sum_{i=1}^{n_k}\cm E_{k,i}\circ \cm X_{k,i},\cm T^{\mathcal N}\right\rangle\right| + \left|\left\langle\frac{1}{n_k}\sum_{i=1}^{n_k}\cm E_{k,i}\circ \cm X_{k,i},\cm T - \cm T^{\mathcal N}\right\rangle\right|\\
			&\leq \sup_{\boldsymbol{\mathscr T}\in \mathcal N_{\bbm r}}\left|\left\langle\frac{1}{n_k}\sum_{i=1}^{n_k}\cm E_{k,i}\circ \cm X_{k,i},\cm T^{\mathcal N}\right\rangle\right| + (\sqrt{2}(d+m)+1)\varepsilon\sup_{\boldsymbol{\mathscr T}\in\mathbb S_{\bbm r}}\left|\left\langle\frac{1}{n_k}\sum_{i=1}^{n_k}\cm E_{k,i}\circ \cm X_{k,i},\cm T\right\rangle\right|.
		\end{align*}

		Taking the supremum over $\cm T\in\mathbb S_{\bbm r}$ on the left-hand side and rearranging the inequality yields, provided that $(\sqrt{2}(d+m)+1)\varepsilon\leq 1/2$,
		\begin{align}\label{eq:deviation-net-reduction}
			\sup_{\boldsymbol{\mathscr T}\in\mathbb S_{\bbm r}}\left|\left\langle\frac{1}{n_k}\sum_{i=1}^{n_k}\cm E_{k,i}\circ \cm X_{k,i},\cm T\right\rangle\right| \leq 2\sup_{\boldsymbol{\mathscr T}\in\mathcal N_{\bbm r}}\left|\left\langle\frac{1}{n_k}\sum_{i=1}^{n_k}\cm E_{k,i}\circ \cm X_{k,i},\cm T\right\rangle\right|.
		\end{align}

		Finally, combining \eqref{eq:net-deviation-bound} and \eqref{eq:deviation-net-reduction}, we have
		\begin{align*}
			&\mathbb P\left(\sup_{\boldsymbol{\mathscr T}\in\mathbb S_{\bbm r}}\left|\left\langle\frac{1}{n_k}\sum_{i=1}^{n_k}\cm E_{k,i}\circ \cm X_{k,i},\cm T\right\rangle\right|\geq C\sigma_{\mathscr X,k}\sigma_{\mathscr E,k}(\lambda_{\mathscr X,k}^{\max}\lambda_{\mathscr E,k}^{\max})^{1/2}\sqrt{\frac{df_{\bbm r}}{n_k}}\right) \notag\\
			&\leq \mathbb P\left(\sup_{\boldsymbol{\mathscr T}\in\mathcal N_{\bbm r}}\left|\left\langle\frac{1}{n_k}\sum_{i=1}^{n_k}\cm E_{k,i}\circ \cm X_{k,i},\cm T\right\rangle\right|\geq \frac{C}{2}\sigma_{\mathscr X,k}\sigma_{\mathscr E,k}(\lambda_{\mathscr X,k}^{\max}\lambda_{\mathscr E,k}^{\max})^{1/2}\sqrt{\frac{df_{\bbm r}}{n_k}}\right) \leq \exp(-Cdf_{\bbm r}).\qedhere
		\end{align*}
	\end{proof}

	\begin{proof}[\textbf{Proof of Lemma~\ref{lem:restricted-bilinear-event-2r}}]
		For $\cm U,\cm V\in\mathbb S_{\bbm r}$, define the centered bilinear empirical process
		\[
			B(\cm U,\cm V) = \frac{1}{n_k}\sum_{i=1}^{n_k}\left\langle\langle\cm U,\cm X_{k,i}\rangle,\langle\cm V,\cm X_{k,i}\rangle\right\rangle - \mathbb E\!\left[\left\langle\langle\cm U,\cm X_{k,i}\rangle,\langle\cm V,\cm X_{k,i}\rangle\right\rangle\right].
		\]
		For fixed $\cm U,\cm V\in\mathbb S_{\bbm r}$, using the $S_X$-matricization convention, $\vect(\langle\cm U,\cm X_{k,i}\rangle) = \cm U_{[S_X]}\vect(\cm X_{k,i})$ and $\vect(\langle\cm V,\cm X_{k,i}\rangle) = \cm V_{[S_X]}\vect(\cm X_{k,i})$. By Assumption~\ref{assump:subg-design}, $\vect(\cm X_{k,i})=\bbm\Sigma_{\mathscr X,k}^{1/2}\bbm\xi_{k,i}$, where $\bbm\xi_{k,i}$ has independent mean-zero sub-Gaussian coordinates with variance proxy $\sigma_{\mathscr X,k}^2$. Let $\bbm\xi_k=(\bbm\xi_{k,1}^{\top},\cdots,\bbm\xi_{k,n_k}^{\top})^{\top}$. Then $n_k^{-1}\sum_{i=1}^{n_k}\left\langle\langle\cm U,\cm X_{k,i}\rangle,\langle\cm V,\cm X_{k,i}\rangle\right\rangle = n_k^{-1}\bbm\xi_k^{\top}\left(\bm I_{n_k}\otimes\bm A_{\boldsymbol{\mathscr U},\boldsymbol{\mathscr V}}\right)\bbm\xi_k$ where
		\[
			\bm A_{\boldsymbol{\mathscr U},\boldsymbol{\mathscr V}} = \bbm\Sigma_{\mathscr X,k}^{1/2}\left\{\frac{\cm U_{[S_X]}^{\top}\cm V_{[S_X]} + \cm V_{[S_X]}^{\top}\cm U_{[S_X]}}{2}\right\}\bbm\Sigma_{\mathscr X,k}^{1/2}.
		\]
		Since $\|\cm U\|_\F=\|\cm V\|_\F=1$, we have $\|\bm A_{\boldsymbol{\mathscr U},\boldsymbol{\mathscr V}}\|_{\op} \leq \lambda_{\mathscr X,k}^{\max}$ and $\|\bm A_{\boldsymbol{\mathscr U},\boldsymbol{\mathscr V}}\|_{\F} \leq \lambda_{\mathscr X,k}^{\max}$. Applying the Hanson-Wright inequality in Lemma~\ref{lem:hanson-wright} gives, for any $t>0$,
		\begin{align}\label{eq:bilinear-pointwise-concentration-2r}
			\mathbb P\left(|B(\cm U,\cm V)|\geq t\right) \leq 2\exp\left(-Cn_k\min\left\{\frac{t^2}{\sigma_{\mathscr X,k}^4(\lambda_{\mathscr X,k}^{\max})^2},\frac{t}{\sigma_{\mathscr X,k}^2\lambda_{\mathscr X,k}^{\max}}\right\}\right).
		\end{align}

		Next, construct an $\varepsilon$-net $\mathcal N_{\bbm r}$ for $\mathbb S_{\bbm r}$ as in the proof of Lemma~\ref{lem:rsc-condition-tucker-rank-r}. Its cardinality satisfies $|\mathcal N_{\bbm r}| \leq (c/\varepsilon)^{df_{\bbm r}}$. Taking $t=\delta_{\bbm r}/2$ in \eqref{eq:bilinear-pointwise-concentration-2r} and applying the union bound over $\mathcal N_{\bbm r}\times\mathcal N_{\bbm r}$ yield, after increasing the universal constant in $\delta_{\bbm r}$ if necessary,
		\begin{align}\label{eq:bilinear-net-concentration-2r}
			\mathbb P\left(\sup_{\boldsymbol{\mathscr U},\boldsymbol{\mathscr V}\in\mathcal N_{\bbm r}}|B(\cm U,\cm V)| \geq \frac{\delta_{\bbm r}}{2}\right) \leq \exp(-Cdf_{\bbm r}).
		\end{align}

		It remains to extend the finite-net bound to the full Tucker-rank class. Using the same Tucker net approximation argument as in \eqref{eq:tucker-net-approximation-error} in the proof of Lemma~\ref{lem:rsc-condition-tucker-rank-r}, for any $\cm T\in\mathbb S_{\bbm r}$, there exists $\cm T^{\mathcal N}\in\mathcal N_{\bbm r}$ such that
		\[
			\cm T-\cm T^{\mathcal N} = \sum_{s=0}^{d+m}\cm G_s,
			\quad
			\tucrank(\cm G_s)\leq\bbm r,
			\quad
			\sum_{s=0}^{d+m}\|\cm G_s\|_\F \leq (\sqrt2(d+m)+1)\varepsilon.
		\]
		Using the bilinearity of $B(\cdot,\cdot)$, for arbitrary $\cm U,\cm V\in\mathbb S_{\bbm r}$ and their net approximations $\cm U^{\mathcal N},\cm V^{\mathcal N}\in\mathcal N_{\bbm r}$, we have
		\[
			\sup_{\boldsymbol{\mathscr U},\boldsymbol{\mathscr V}\in\mathbb S_{\bbm r}}|B(\cm U,\cm V)| \leq \sup_{\boldsymbol{\mathscr U},\boldsymbol{\mathscr V}\in\mathcal N_{\bbm r}}|B(\cm U,\cm V)| + 2(\sqrt2(d+m)+1)\varepsilon\sup_{\boldsymbol{\mathscr U},\boldsymbol{\mathscr V}\in\mathbb S_{\bbm r}}|B(\cm U,\cm V)|.
		\]
		Choose $\varepsilon\leq1/[4(\sqrt2(d+m)+1)]$. Then, on the event in \eqref{eq:bilinear-net-concentration-2r},
		\[
			\sup_{\boldsymbol{\mathscr U},\boldsymbol{\mathscr V}\in\mathbb S_{\bbm r}}|B(\cm U,\cm V)| \leq \delta_{\bbm r}. \qedhere
		\]
	\end{proof}

	\begin{proof}[\textbf{Proof of Lemma~\ref{lem:tangent-deviation-bilinear-bound}}]
		Define the centered linear process
		\[
			L(\cm U) = \frac{1}{n_k}\sum_{i=1}^{n_k}\left\langle\langle \cm B_k^*,\cm X_{k,i}\rangle,\langle \cm U,\cm X_{k,i}\rangle\right\rangle - \mathbb E\bigl[\left\langle\langle \cm B_k^*,\cm X_{k,i}\rangle,\langle \cm U,\cm X_{k,i}\rangle\right\rangle\bigr].
		\]
		By Lemma~\ref{lem:tangent-space-tucker-rank-2r}, every unit-Frobenius element of $\mathcal T_{\bbm r}(\boldsymbol{\mathscr A}_{0,k}^{(t)})$ belongs to $\mathbb S_{2\bbm r}$. Therefore, $\sup_{\substack{\boldsymbol{\mathscr U}\in\mathcal T_{\bbm r}(\boldsymbol{\mathscr A}_{0,k}^{(t)})\\ \|\boldsymbol{\mathscr U}\|_\F=1}}|L(\cm U)| \leq \sup_{\boldsymbol{\mathscr U}\in\mathbb S_{2\bbm r}}|L(\cm U)|$. Construct a product $\varepsilon$-net $\mathcal N_{2\bbm r}$ for $\mathbb S_{2\bbm r}$ as in Lemma~\ref{lem:rsc-condition-tucker-rank-r}, satisfying $|\mathcal N_{2\bbm r}| \leq (c/\varepsilon)^{df_{2\bbm r}}$. By the same Tucker net approximation arguments, for any $\varepsilon\leq 1/[2(\sqrt2(d+m)+1)]$,
		\begin{align}\label{eq:tangent-deviation-net-reduction}
			\sup_{\substack{\boldsymbol{\mathscr U}\in\mathcal T_{\bbm r}(\boldsymbol{\mathscr A}_{0,k}^{(t)})\\ \|\boldsymbol{\mathscr U}\|_\F=1}}|L(\cm U)| \leq \sup_{\boldsymbol{\mathscr U}\in\mathbb S_{2\bbm r}}|L(\cm U)| \leq 2\sup_{\boldsymbol{\mathscr U}\in\mathcal N_{2\bbm r}}|L(\cm U)|.
		\end{align}
		Indeed, using the same Tucker net approximation argument as in \eqref{eq:tucker-net-approximation-error} in the proof of Lemma~\ref{lem:rsc-condition-tucker-rank-r}, for any $\cm T\in\mathbb S_{\bbm r}$, there exists $\cm T^{\mathcal N}\in\mathcal N_{\bbm r}$ such that
		\[
			\cm U-\cm U^{\mathcal N} = \sum_{s=0}^{d+m}\cm G_s,
			\quad
			\tucrank(\cm G_s)\leq2\bbm r,
			\quad
			\sum_{s=0}^{d+m}\|\cm G_s\|_\F \leq (\sqrt2(d+m)+1)\varepsilon.
		\]
		Since $L(\cdot)$ is linear, $|L(\cm U-\cm U^{\mathcal N})| \leq (\sqrt2(d+m)+1)\varepsilon \sup_{\boldsymbol{\mathscr V}\in\mathbb S_{2\bbm r}}|L(\cm V)|$. Taking the supremum over $\cm U\in\mathbb S_{2\bbm r}$ and choosing $\varepsilon$ sufficiently small gives \eqref{eq:tangent-deviation-net-reduction}.

		It remains to control the finite net. For fixed $\cm U\in\mathcal N_{2\bbm r}$, the $S_X$-matricization gives $\vect(\langle \cm B_k^*,\cm X_{k,i}\rangle)=(\cm B_k^*)_{[S_X]}\vect(\cm X_{k,i})$ and $\vect(\langle \cm U,\cm X_{k,i}\rangle)=\cm U_{[S_X]}\vect(\cm X_{k,i})$. Using Assumption~\ref{assump:subg-design}, write $\vect(\cm X_{k,i}) = \bbm\Sigma_{\mathscr X,k}^{1/2}\bbm\xi_{k,i}$, where $\bbm\xi_{k,i}$ has independent mean-zero sub-Gaussian coordinates with variance proxy $\sigma_{\mathscr X,k}^2$. Then $\left\langle\langle \cm B_k^*,\cm X_{k,i}\rangle,\langle \cm U,\cm X_{k,i}\rangle\right\rangle = \bbm\xi_{k,i}^{\top}\bm A_{\boldsymbol{\mathscr U}}\bbm\xi_{k,i}$, where
		\[
			\bm A_{\boldsymbol{\mathscr U}} = \bbm\Sigma_{\mathscr X,k}^{1/2}\left\{\frac{(\cm B_k^*)_{[S_X]}^\top\cm U_{[S_X]}+\cm U_{[S_X]}^\top(\cm B_k^*)_{[S_X]}}{2}\right\}\bbm\Sigma_{\mathscr X,k}^{1/2}.
		\]
		Since $\|\cm U\|_\F=1$, $\|\bm A_{\boldsymbol{\mathscr U}}\|_{\op} \leq \lambda_{\mathscr X,k}^{\max}\|(\cm B_k^*)_{[S_X]}\|_\F$ and $\|\bm A_{\boldsymbol{\mathscr U}}\|_{\F} \leq \lambda_{\mathscr X,k}^{\max}\|(\cm B_k^*)_{[S_X]}\|_\F$. Applying the Hanson--Wright inequality in Lemma~\ref{lem:hanson-wright}, for every $t>0$,
		\begin{align}\label{eq:fixed-u-deviation-bilinear-bound}
			\mathbb P\left(|L(\cm U)|\geq t\right) \leq 2\exp\left(-Cn_k\min\left\{\frac{t^2}{\sigma_{\mathscr X,k}^4(\lambda_{\mathscr X,k}^{\max})^2\|(\cm B_k^*)_{[S_X]}\|_\F^2},\frac{t}{\sigma_{\mathscr X,k}^2\lambda_{\mathscr X,k}^{\max}\|(\cm B_k^*)_{[S_X]}\|_\F}\right\}\right).
		\end{align}
		Taking $t = C\sigma_{\mathscr X,k}^2\lambda_{\mathscr X,k}^{\max}\|(\cm B_k^*)_{[S_X]}\|_\F\sqrt{df_{2\bbm r}/n_k}$, and applying a union bound over $\mathcal N_{2\bbm r}$ gives $\sup_{\boldsymbol{\mathscr U}\in\mathcal N_{2\bbm r}}|L(\cm U)|\leq C\sigma_{\mathscr X,k}^2\lambda_{\mathscr X,k}^{\max}\|(\cm B_k^*)_{[S_X]}\|_\F\sqrt{df_{2\bbm r}/n_k}$
		with probability at least $1-\exp(-Cdf_{2\bbm r})$. Combining this with \eqref{eq:tangent-deviation-net-reduction}, and using $\|(\cm B_k^*)_{[S_X]}\|_\F\leq h_{\mathscr B,k}$, we first obtain the intermediate bound
		\[
			\sup_{\substack{\boldsymbol{\mathscr U}\in\mathcal T_{\bbm r}(\boldsymbol{\mathscr A}_{0,k}^{(t)})\\ \|\boldsymbol{\mathscr U}\|_\F=1}}|L(\cm U)| \leq C\sigma_{\mathscr X,k}^2\lambda_{\mathscr X,k}^{\max}h_{\mathscr B,k}\sqrt{\frac{df_{2\bbm r}}{n_k}}.
		\]
		The intermediate bound holds with probability at least $1-\exp(-Cdf_{2\bbm r})$. Since $df_{2\bbm r}\asymp df_{\bbm r}$, the last display implies \eqref{eq:tangent-deviation-bilinear-bound} with probability at least $1-\exp(-Cdf_{\bbm r})$, after absorbing the constant factors into the universal constant $C$.
	\end{proof}

	\begin{proof}[\textbf{Proof of Lemma \ref{lem:truncation-level-choice}}]
		Fix a client $k\in[K]$. By Assumption~\ref{assump:subg-design}, $\vect(\cm X_{k,i}) = \bbm\Sigma_{\mathscr X,k}^{1/2}\bbm\xi_{k,i}$, where $\bbm\xi_{k,i}$ has independent standardized sub-Gaussian coordinates with sub-Gaussian parameter $\sigma_{\mathscr X,k}$. Since $\|\cm X_{k,i}\|_{\F}^2 = \bbm\xi_{k,i}^{\top}\bbm\Sigma_{\mathscr X,k}\bbm\xi_{k,i}$, Lemma~\ref{lem:hanson-wright} implies that, for any $t>0$,
		\[
			\mathbb P\left(\left|\|\cm X_{k,i}\|_{\F}^2 - \mathbb E\|\cm X_{k,i}\|_{\F}^2\right| \geq t\right) \leq 2\exp\left(-C\min\left\{\frac{t^2}{\sigma_{\mathscr X,k}^4\|\bbm\Sigma_{\mathscr X,k}\|_{\F}^2},\frac{t}{\sigma_{\mathscr X,k}^2\|\bbm\Sigma_{\mathscr X,k}\|_{\op}}\right\}\right).
		\]
		Moreover, $\mathbb E\|\cm X_{k,i}\|_{\F}^2 = \mathbb E(\bbm\xi_{k,i}^{\top}\bbm\Sigma_{\mathscr X,k}\bbm\xi_{k,i}) \leq C\sigma_{\mathscr X,k}^2\operatorname{tr}(\bbm\Sigma_{\mathscr X,k}) \leq C\sigma_{\mathscr X,k}^2\lambda_{\mathscr X,k}^{\max}p$, $\|\bbm\Sigma_{\mathscr X,k}\|_{\op}=\lambda_{\mathscr X,k}^{\max}$, and $\|\bbm\Sigma_{\mathscr X,k}\|_{\F}\leq \lambda_{\mathscr X,k}^{\max}\sqrt p$. Taking $t=C\sigma_{\mathscr X,k}^2\lambda_{\mathscr X,k}^{\max}(\sqrt{pu}+u)$, we have
		\[
			\frac{t^2}{\sigma_{\mathscr X,k}^4\|\bbm\Sigma_{\mathscr X,k}\|_{\F}^2} \geq C\frac{(\sqrt{pu}+u)^2}{p} \geq Cu
			\quad \text{and} \quad
			\frac{t}{\sigma_{\mathscr X,k}^2\|\bbm\Sigma_{\mathscr X,k}\|_{\op}} \geq C(\sqrt{pu}+u)\geq Cu.
		\]
		Therefore, with probability at least $1-2e^{-u}$,
		\[
			\|\cm X_{k,i}\|_{\F}^2 \leq \mathbb E\|\cm X_{k,i}\|_{\F}^2+t \leq C\sigma_{\mathscr X,k}^2\lambda_{\mathscr X,k}^{\max}p + C\sigma_{\mathscr X,k}^2\lambda_{\mathscr X,k}^{\max}(\sqrt{pu}+u) \leq C\sigma_{\mathscr X,k}^2\lambda_{\mathscr X,k}^{\max}(p+u),
		\]
		where the last inequality uses $\sqrt{pu}\leq(p+u)/2$. Hence
		\begin{align}\label{eq:X-norm-subg-bound}
			\mathbb P\left(\|\cm X_{k,i}\|_{\F} \geq C\sigma_{\mathscr X,k}\sqrt{\lambda_{\mathscr X,k}^{\max}}\sqrt{p+u}\right) \leq 2e^{-u}.
		\end{align}
		Taking $u=C_0\log n$ and applying a union bound over $i\in[n_k]$ gives
		\[
			\mathbb P\left(\max_{i\in[n_k]}\|\cm X_{k,i}\|_{\F}\geq C\sigma_{\mathscr X,k}\sqrt{\lambda_{\mathscr X,k}^{\max}}\sqrt{p+\log n}\right) \leq 2n_k\exp(-C_0\log n)\leq C\exp(-C\log n),
		\]
		where we used $n_k\leq n$ and absorbed constants into $C$. Hence,
		\begin{align}\label{eq:X-uniform-bound-proof}
			\max_{i\in[n_k]}\|\cm X_{k,i}\|_{\F}\leq C\sigma_{\mathscr X,k}\sqrt{\lambda_{\mathscr X,k}^{\max}}\sqrt{p+\log n}
		\end{align}
		with probability at least $1-C\exp(-C\log n)$.

		Next, we control the residual tensor. Under the decomposition \eqref{eq:decomposition}, $\langle \cm A,\cm X_{k,i}\rangle-\cm Y_{k,i} = \langle\cm A-\cm A_0^*-\cm B_k^*,\cm X_{k,i}\rangle - \cm E_{k,i}$. Therefore, for any $\cm A$ satisfying $\|\cm A-\cm A_0^*\|_{\F}\leq R_{\mathscr A,k}$, by the Cauchy--Schwarz inequality for the generalized tensor inner product,
		\begin{align}\label{eq:residual-basic-bound}
			\|\langle \cm A,\cm X_{k,i}\rangle-\cm Y_{k,i}\|_{\F} \leq \|\cm A-\cm A_0^*-\cm B_k^*\|_{\F}\|\cm X_{k,i}\|_{\F} + \|\cm E_{k,i}\|_{\F} \leq (R_{\mathscr A,k}+h_{\mathscr B,k})\|\cm X_{k,i}\|_{\F} + \|\cm E_{k,i}\|_{\F},
		\end{align}
		where the last inequality uses $\|\cm B_k^*\|_{\F}\leq h_{\mathscr B,k}$.
			By Assumption~\ref{assump:subg-noise}, conditional on $\cm X_{k,i}$, the vector $\bbm e_{k,i}=\vect(\cm E_{k,i})$ is mean-zero and satisfies
			\[
				\mathbb P\left(|\bbm u^\top\bbm e_{k,i}|\geq t\ \middle|\ \cm X_{k,i}\right)
				\leq 2\exp\left(-\frac{t^2}{2\sigma_{\mathscr E,k}^2\lambda_{\mathscr E,k}^{\max}}\right)
			\]
			for every $\bbm u\in\mathbb S^{q-1}$. Let $\mathcal N_q$ be a $1/2$-net of $\mathbb S^{q-1}$ with $|\mathcal N_q|\leq5^q$. The standard net reduction gives $\|\bbm e_{k,i}\|_2\leq2\max_{\bbm u\in\mathcal N_q}|\bbm u^\top\bbm e_{k,i}|$. Hence, a union bound over $\mathcal N_q$ yields, for every $u>0$,
			\begin{align}\label{eq:E-norm-subg-bound}
				\mathbb P\left(\|\cm E_{k,i}\|_{\F}\geq C\sigma_{\mathscr E,k}\sqrt{\lambda_{\mathscr E,k}^{\max}}\sqrt{q+u}\ \middle|\ \cm X_{k,i}\right)\leq2e^{-u}.
			\end{align}
		Let $\mathcal A_{k,i}^{E}(u) = \left\{\|\cm E_{k,i}\|_{\F} \geq C\sigma_{\mathscr E,k}\sqrt{\lambda_{\mathscr E,k}^{\max}}\sqrt{q+u}\right\}$. Since the right-hand side of \eqref{eq:E-norm-subg-bound} does not depend on $\cm X_{k,i}$, the tower property yields
		\[
			\mathbb P(\mathcal A_{k,i}^{E}(u)) = \mathbb E[\mathbb P(\mathcal A_{k,i}^{E}(u)\mid \cm X_{k,i})] \leq 2e^{-u}.
		\]
		Taking $u=C_0\log n$ and applying the union bound over $i\in[n_k]$, we have
		\begin{align*}
			\mathbb P\left(\max_{i\in[n_k]}\|\cm E_{k,i}\|_{\F} \geq C\sigma_{\mathscr E,k}\sqrt{\lambda_{\mathscr E,k}^{\max}}\sqrt{q+C_0\log n}\right) &\leq \sum_{i=1}^{n_k}\mathbb P\bigl(\mathcal A_{k,i}^{E}(C_0\log n)\bigr) \leq 2n_k\exp(-C_0\log n)\\
			&\leq 2n\exp(-C_0\log n)\leq C\exp(-C\log n),
		\end{align*}
		where the last inequality uses $n_k\leq n$ and $C_0$ is chosen sufficiently large. Hence,
		\begin{align}\label{eq:E-uniform-bound-proof}
			\max_{i\in[n_k]}\|\cm E_{k,i}\|_{\F} \leq C\sigma_{\mathscr E,k}\sqrt{\lambda_{\mathscr E,k}^{\max}}\sqrt{q+\log n}
		\end{align}
		with probability at least $1-C\exp(-C\log n)$.
		Combining \eqref{eq:X-uniform-bound-proof}, \eqref{eq:residual-basic-bound}, and \eqref{eq:E-uniform-bound-proof}, we have, uniformly over all $\cm A$ satisfying $\|\cm A-\cm A_0^*\|_{\F}\leq R_{\mathscr A,k}$,
		\[
				\max_{i\in[n_k]}\|\langle \cm A,\cm X_{k,i}\rangle-\cm Y_{k,i}\|_{\F} \leq C\left((R_{\mathscr A,k}+h_{\mathscr B,k})\sigma_{\mathscr X,k}\sqrt{\lambda_{\mathscr X,k}^{\max}}\sqrt{p+\log n}+\sigma_{\mathscr E,k}\sqrt{\lambda_{\mathscr E,k}^{\max}}\sqrt{q+\log n}\right).
		\]
		The two uniform bounds hold simultaneously with probability at least $1-C\exp(-C\log n)$ after adjusting constants. This completes the proof.
	\end{proof}

	\begin{proof}[\textbf{Proof of Lemma~\ref{lem:tangent-contraction-tensor-general}}]
		Since $\mathcal P_{\mathcal T_{\bbm r}(\boldsymbol{\mathscr A})}(\cdot)$ is the orthogonal projection onto $\mathcal T_{\bbm r}(\cm A)$, for any ambient coefficient tensor $\cm N$, $\cm N = \mathcal P_{\mathcal T_{\bbm r}(\boldsymbol{\mathscr A})}(\cm N) + \mathcal P_{\mathcal T_{\bbm r}^\perp(\boldsymbol{\mathscr A})}(\cm N)$. Hence, for every $\cm M_1\in \mathcal T_{\bbm r}(\cm A)$ and every ambient coefficient tensor $\cm N$, $\langle \cm M_1,\cm N\rangle = \langle \cm M_1,\mathcal P_{\mathcal T_{\bbm r}(\boldsymbol{\mathscr A})}(\cm N)\rangle$.

		Next, we show that $\mathcal H_n(\cdot)$ is self-adjoint on the ambient coefficient-tensor space with respect to the Frobenius inner product. Indeed, for any $\cm U,\cm V\in\mathbb R^{q_1\times\cdots\times q_m\times p_1\times\cdots\times p_d}$,
		\[
			\langle \cm U,\mathcal H_n(\cm V)\rangle = \left\langle\cm U,\frac1n\sum_{i=1}^n \langle \cm V,\cm X_i\rangle\circ \cm X_i\right\rangle = \frac1n\sum_{i=1}^n \left\langle\langle \cm V,\cm X_i\rangle,\langle \cm U,\cm X_i\rangle\right\rangle = \left\langle\cm V,\frac1n\sum_{i=1}^n \langle \cm U,\cm X_i\rangle\circ \cm X_i\right\rangle = \langle \cm V,\mathcal H_n(\cm U)\rangle = \langle \mathcal H_n(\cm U),\cm V\rangle.
		\]
		Therefore, for any $\cm M_1,\cm M_2\in \mathcal T_{\bbm r}(\cm A)$, we have
		\[
			\big\langle \cm M_1,\mathcal P_{\mathcal T_{\bbm r}(\boldsymbol{\mathscr A})}(\mathcal H_n(\cm M_2))\big\rangle = \langle \cm M_1,\mathcal H_n(\cm M_2)\rangle = \langle \mathcal H_n(\cm M_1),\cm M_2\rangle = \big\langle \mathcal P_{\mathcal T_{\bbm r}(\boldsymbol{\mathscr A})}(\mathcal H_n(\cm M_1)),\cm M_2\big\rangle.
		\]
		Thus the map $\cm Z\mapsto \mathcal P_{\mathcal T_{\bbm r}(\boldsymbol{\mathscr A})}(\mathcal H_n(\cm Z))$ is self-adjoint on the Hilbert space $\mathcal T_{\bbm r}(\cm A)$ equipped with the Frobenius inner product. Consequently, by the spectral theorem for self-adjoint operators on finite-dimensional real inner product spaces (see Spectral Theorem 2 in \citealt{ucsb_spectral_theorem_notes}), there exist an orthonormal basis $\{\cm U_j\}\subset \mathcal T_{\bbm r}(\cm A)$ and real eigenvalues $\{\lambda_j\}$ such that $\mathcal P_{\mathcal T_{\bbm r}(\boldsymbol{\mathscr A})}(\mathcal H_n(\cm U_j)) = \lambda_j\cm U_j$.
		Now write $\cm M=\sum_j a_j\cm U_j$. Since $\{\cm U_j\}$ is orthonormal under the Frobenius inner product, we have $\|\cm M\|_\F^2=\sum_j a_j^2$. Besides, $\mathcal P_{\mathcal T_{\bbm r}(\boldsymbol{\mathscr A})}(\mathcal H_n(\cm M)) = \sum_j a_j\lambda_j\cm U_j$.
		Therefore, $\cm M-\eta\mathcal P_{\mathcal T_{\bbm r}(\boldsymbol{\mathscr A})}\bigl(\mathcal H_n(\cm M)\bigr) = \sum_j a_j(1-\eta\lambda_j)\cm U_j$, and hence
		\begin{align}\label{eq:direct-spectral-expand}
			\Big\|\cm M-\eta\mathcal P_{\mathcal T_{\bbm r}(\boldsymbol{\mathscr A})}\bigl(\mathcal H_n(\cm M)\bigr)\Big\|_\F^2 = \sum_j a_j^2(1-\eta\lambda_j)^2.
		\end{align}

		Now fix any eigenvector $\cm U_j\in\mathcal T_{\bbm r}(\cm A)$ with $\|\cm U_j\|_\F=1$. Since the assumed restricted spectral bound holds for every tensor in $\mathcal T_{\bbm r}(\cm A)$, applying it to $\cm U_j$ gives $C_{\mathrm{tan}}^{\min} \leq n^{-1}\sum_{i=1}^n \|\langle \cm U_j,\cm X_i\rangle\|_\F^2 \leq C_{\mathrm{tan}}^{\max}$. On the other hand,
		\[
			\frac1n\sum_{i=1}^n \|\langle \cm U_j,\cm X_i\rangle\|_\F^2 = \langle \mathcal H_n(\cm U_j),\cm U_j\rangle = \left\langle\mathcal P_{\mathcal T_{\bbm r}(\boldsymbol{\mathscr A})}\bigl(\mathcal H_n(\cm U_j)\bigr),\cm U_j\right\rangle = \langle \lambda_j\cm U_j,\cm U_j\rangle = \lambda_j.
		\]
		Therefore, $C_{\mathrm{tan}}^{\min} \leq \lambda_j \leq C_{\mathrm{tan}}^{\max}$ for any $j$. Then, it follows that $|1-\eta\lambda_j| \leq \max\{|1-\eta C_{\mathrm{tan}}^{\min}|,|1-\eta C_{\mathrm{tan}}^{\max}|\}$.
		Substituting this bound into \eqref{eq:direct-spectral-expand} yields
		\begin{align}
			\Big\|\cm M-\eta\mathcal P_{\mathcal T_{\bbm r}(\boldsymbol{\mathscr A})}\bigl(\mathcal H_n(\cm M)\bigr)\Big\|_\F^2 &\leq \max\left\{|1-\eta C_{\mathrm{tan}}^{\min}|,|1-\eta C_{\mathrm{tan}}^{\max}|\right\}^2\sum_j a_j^2 \notag\\
			&= \max\left\{|1-\eta C_{\mathrm{tan}}^{\min}|,|1-\eta C_{\mathrm{tan}}^{\max}|\right\}^2\|\cm M\|_\F^2.
		\end{align}
		Taking square roots completes the proof.
	\end{proof}

	\begin{proof}[\textbf{Proof of Lemma \ref{lem:tangent-space-nuclear-bound}}]
		Since $\tucrank(\cm A)\leq\bbm r$, we may write $\cm A = \cm S\times_{a=1}^{m}\bm U_a\times_{b=1}^{d}\bm V_b$, where $\cm S\in\mathbb R^{r_1\times\cdots\times r_{d+m}}$, $\bm U_a\in\mathbb O_{q_a,r_a}$, and $\bm V_b\in\mathbb O_{p_b,r_{m+b}}$. Here $\bm U_a$ and $\bm V_b$ denote the Tucker factor matrices associated with the output and input modes, respectively. By the tangent-space characterization of the Tucker-rank manifold \citep[Lemma~1]{luo2023low}, every $\cm U\in\mathcal T_{\bbm r}(\cm A)$ admits the decomposition
		\begin{align*}
			\cm U = \underbrace{\cm F\times_{a=1}^{m}\bm U_a\times_{b=1}^{d}\bm V_b  + \sum_{a=1}^{m}\cm S\times_a \bm U_{a,\perp}\bm D_a\times_{\substack{1\leq j\leq m\\ j\neq a}}\bm U_j\times_{b=1}^{d}\bm V_b}_{\cm U_{\mathrm{out}}} + \underbrace{\sum_{b=1}^{d}\cm S\times_{m+b}\bm V_{b,\perp}\bm H_b\times_{a=1}^{m}\bm U_a\times_{\substack{1\leq j\leq d\\ j\neq b}}\bm V_j}_{\cm U_{\mathrm{in}}},
		\end{align*}
		where $\cm F\in\mathbb R^{r_1\times\cdots\times r_{d+m}}$ is the core-variation tensor, the matrices $\bm U_{a,\perp}$ and $\bm V_{b,\perp}$ have columns orthogonal to $\bm U_a$ and $\bm V_b$, respectively, and $\bm D_a\in\mathbb R^{(q_a-r_a)\times r_a}$, $\bm H_b\in\mathbb R^{(p_b-r_{m+b})\times r_{m+b}}$.
		We now bound the ranks of the $[S_X]$-matricizations of $\cm U_{\mathrm{out}}$ and $\cm U_{\mathrm{in}}$. Let $\bm V^{\otimes}=\bm V_d\otimes\cdots\otimes\bm V_1$. Under the convention of matricization $\cm T_{[S_X]}$, for any tensor of the form $\cm T = \cm G \times_{a=1}^{m}\bm A_a \times_{b=1}^{d}\bm B_b$, we have the matricization identity $\cm T_{[S_X]} = (\bm A_m\otimes\cdots\otimes\bm A_1)\cm G_{[S_X]}(\bm B_d\otimes\cdots\otimes\bm B_1)^{\top}$.

		For $\cm U_{\mathrm{out}}$, the input-mode factors are fixed as $\bm V_1,\cdots,\bm V_d$. Therefore, under the $[S_X]$-matricization, every term in $\cm U_{\mathrm{out}}$ shares the same right factor $(\bm V_d\otimes\cdots\otimes\bm V_1)^\top$. In particular, the core-variation term satisfies
		\[
			\left(\cm F\times_{a=1}^{m}\bm U_a\times_{b=1}^{d}\bm V_b\right)_{[S_X]} = (\bm U_m\otimes\cdots\otimes\bm U_1)\cm F_{[S_X]}(\bm V_d\otimes\cdots\otimes\bm V_1)^{\top}.
		\]
		Similarly, for the $a$th output-factor-variation term, the same matricization identity yields
		\[
			\left(\cm S\times_a \bm U_{a,\perp}\bm D_a\times_{\substack{1\leq j\leq m\\ j\neq a}}\bm U_j\times_{b=1}^{d}\bm V_b\right)_{[S_X]} = (\bm U_m\otimes\cdots\otimes\bm U_{a+1}\otimes \bm U_{a,\perp}\bm D_a\otimes \bm U_{a-1}\otimes\cdots\otimes\bm U_1)\cm S_{[S_X]}(\bm V_d\otimes\cdots\otimes\bm V_1)^{\top}.
		\]
		Therefore, after summing the core-variation term and all output-factor-variation terms, there exists a matrix $\bm M_{\mathrm{out}}$ such that $(\cm U_{\mathrm{out}})_{[S_X]} = \bm M_{\mathrm{out}}(\bm V^{\otimes})^{\top}$ with $\bm V^{\otimes}=\bm V_d\otimes\cdots\otimes\bm V_1$. Recall that each $\bm V_b\in\mathbb O_{p_b,r_{m+b}}$ has orthonormal columns, so $\bm V_b^\top\bm V_b=\bm I_{r_{m+b}}$. Hence,
		\[
			\begin{aligned}
				(\bm V^{\otimes})^\top\bm V^{\otimes}
				& = (\bm V_d\otimes\cdots\otimes\bm V_1)^\top(\bm V_d\otimes\cdots\otimes\bm V_1) \\
				& = (\bm V_d^\top\bm V_d)\otimes\cdots\otimes(\bm V_1^\top\bm V_1) \\
				& = \bm I_{r_{d+m}}\otimes\cdots\otimes\bm I_{r_{m+1}} \\
				& = \bm I_{\prod_{b=1}^{d}r_{m+b}}.
			\end{aligned}
		\]
		Here we used the standard Kronecker product identities $(\bm A_1\otimes\cdots\otimes\bm A_d)^\top=\bm A_1^\top\otimes\cdots\otimes\bm A_d^\top$ and $(\bm A_1\otimes\cdots\otimes\bm A_d)(\bm B_1\otimes\cdots\otimes\bm B_d)=(\bm A_1\bm B_1)\otimes\cdots\otimes(\bm A_d\bm B_d)$.

		Thus $\bm V^{\otimes}$ has orthonormal columns. Since each $\bm V_b\in\mathbb R^{p_b\times r_{m+b}}$ has $r_{m+b}$ columns, the Kronecker
		product $\bm V^{\otimes}=\bm V_d\otimes\cdots\otimes\bm V_1$ has $\prod_{b=1}^{d}r_{m+b}=r_{\bbm p}$ columns. Together with $(\bm V^{\otimes})^\top\bm V^{\otimes}=\bm I_{r_{\bbm p}}$, this implies that these $r_{\bbm p}$ columns are orthonormal and hence linearly independent. Therefore, $\dim\{\operatorname{span}(\bm V^{\otimes})\} = \rank(\bm V^{\otimes}) = r_{\bbm p}$. Moreover, since $(\cm U_{\mathrm{out}})_{[S_X]} = \bm M_{\mathrm{out}}(\bm V^{\otimes})^\top$, every row of $(\cm U_{\mathrm{out}})_{[S_X]}$ lies in the row space of $(\bm V^{\otimes})^\top$, equivalently in $\operatorname{span}(\bm V^{\otimes})$. Hence $\operatorname{row}((\cm U_{\mathrm{out}})_{[S_X]}) \subseteq \operatorname{span}(\bm V^{\otimes})$, and consequently
		\[
			\rank((\cm U_{\mathrm{out}})_{[S_X]}) = \dim\operatorname{row}((\cm U_{\mathrm{out}})_{[S_X]}) \leq \dim\{\operatorname{span}(\bm V^{\otimes})\} = r_{\bbm p}.
		\]

		Similarly, for $\cm U_{\mathrm{in}}$, every term in $\cm U_{\mathrm{in}}$ shares the same left factor $\bm U_m\otimes\cdots\otimes\bm U_1$ under the $[S_X]$-matricization. In particular, for the $b$th input-factor-variation term, we have
		\[
			\left(\cm S\times_{m+b}\bm V_{b,\perp}\bm H_b\times_{a=1}^{m}\bm U_a\times_{\substack{1\leq j\leq d\\ j\neq b}}\bm V_j\right)_{[S_X]} = (\bm U_m\otimes\cdots\otimes\bm U_1)\bm N_b,
		\]
		where $\bm N_b = \cm S_{[S_X]}(\bm V_d\otimes\cdots\otimes\bm V_{b+1} \otimes \bm V_{b,\perp}\bm H_b \otimes \bm V_{b-1}\otimes\cdots\otimes\bm V_1)^{\top}$. Hence, after summing over $b\in[d]$, there exists a matrix $\bm M_{\mathrm{in}}$ such that $(\cm U_{\mathrm{in}})_{[S_X]} = \bm U^{\otimes}\bm M_{\mathrm{in}}$ with $\bm U^{\otimes}=\bm U_m\otimes\cdots\otimes\bm U_1$. Recall that each $\bm U_a\in\mathbb O_{q_a,r_a}$ has orthonormal columns, so $\bm U_a^\top\bm U_a=\bm I_{r_a}$. Hence, $(\bm U^{\otimes})^\top\bm U^{\otimes} = \bm I_{\prod_{a=1}^{m}r_a}$.
		Here we again used the transpose and mixed-product identities of the Kronecker product. Thus $\bm U^{\otimes}$ has orthonormal columns. Since each $\bm U_a\in\mathbb R^{q_a\times r_a}$ has $r_a$ columns, the Kronecker product $\bm U^{\otimes}=\bm U_m\otimes\cdots\otimes\bm U_1$ has $\prod_{a=1}^{m}r_a=r_{\bbm q}$ columns. Together with $(\bm U^{\otimes})^\top\bm U^{\otimes}=\bm I_{r_{\bbm q}}$, this implies that these $r_{\bbm q}$ columns are orthonormal and hence linearly independent. Therefore, $\dim\{\operatorname{span}(\bm U^{\otimes})\} = \rank(\bm U^{\otimes}) = r_{\bbm q}$. Moreover, since $(\cm U_{\mathrm{in}})_{[S_X]} = \bm U^{\otimes}\bm M_{\mathrm{in}}$, every column of $(\cm U_{\mathrm{in}})_{[S_X]}$ lies in the column space of $\bm U^{\otimes}$, equivalently in $\operatorname{span}(\bm U^{\otimes})$. Hence $\operatorname{col}((\cm U_{\mathrm{in}})_{[S_X]}) \subseteq \operatorname{span}(\bm U^{\otimes})$, and consequently
		\[
			\rank\bigl((\cm U_{\mathrm{in}})_{[S_X]}\bigr) = \dim\operatorname{col}\bigl((\cm U_{\mathrm{in}})_{[S_X]}\bigr) \leq \dim\{\operatorname{span}(\bm U^{\otimes})\} = r_{\bbm q}.
		\]
		By the orthogonality of the tangent-space decomposition, $\cm U_{\mathrm{out}}$ and $\cm U_{\mathrm{in}}$ are orthogonal under the Frobenius inner product. Hence $\|\cm U\|_\F^2 = \|\cm U_{\mathrm{out}}\|_\F^2 + \|\cm U_{\mathrm{in}}\|_\F^2$. Combining the rank bounds for the two components with the inequality $\|\bm M\|_*\leq \sqrt{\rank(\bm M)}\|\bm M\|_\F$ and applying Cauchy's inequality yields
		\begin{align*}
			\|\cm U_{[S_X]}\|_* & \leq \|(\cm U_{\mathrm{out}})_{[S_X]}\|_* + \|(\cm U_{\mathrm{in}})_{[S_X]}\|_* \\
			& \leq \sqrt{r_{\bbm p}}\,\|(\cm U_{\mathrm{out}})_{[S_X]}\|_\F + \sqrt{r_{\bbm q}}\,\|(\cm U_{\mathrm{in}})_{[S_X]}\|_\F \\
			& = \sqrt{r_{\bbm p}}\,\|\cm U_{\mathrm{out}}\|_\F + \sqrt{r_{\bbm q}}\,\|\cm U_{\mathrm{in}}\|_\F \\
			& \leq \left(r_{\bbm p}+r_{\bbm q}\right)^{1/2}\left(\|\cm U_{\mathrm{out}}\|_\F^2 + \|\cm U_{\mathrm{in}}\|_\F^2\right)^{1/2} \\
			& = \sqrt{r_{\bbm q}+r_{\bbm p}}\,\|\cm U\|_\F. \qedhere
		\end{align*}
	\end{proof}

	\begin{proof}[\textbf{Proof of Lemma~\ref{lem:normal-projection-tucker-rank-2r}}]
		Since $\cm A\in\mathcal T_{\bbm r}(\cm A)$, we have $\mathcal P_{\mathcal T_{\bbm r}^{\perp}(\boldsymbol{\mathscr A})}(\cm A-\cm A^*)=-\mathcal P_{\mathcal T_{\bbm r}^{\perp}(\boldsymbol{\mathscr A})} (\cm A^*)$. Write $\cm A=\cm S\times_{s=1}^{d+m}\bm U_s$ and $\cm A^*=\cm S^*\times_{s=1}^{d+m}\bm U_s^*$, where $\bm U_s,\bm U_s^*\in\mathbb O_{d_s,r_s}$ with $d_s=q_s$ for $s\in[m]$ and $d_s=p_{s-m}$ for $s\in[d+m]\setminus[m]$. Let $\bm P_s=\bm U_s\bm U_s^\top$ and $\bm P_s^\perp=\bm I-\bm P_s$. Since $\bm I=\bm P_s+\bm P_s^\perp$ for each mode $s$, we have $\bm U_s^*=\bm P_s\bm U_s^*+\bm P_s^\perp\bm U_s^*$. Therefore, by the multilinearity of the Tucker product,
		\[
			\cm A^* = \cm S^*\times_{s=1}^{d+m}\left(\bm P_s\bm U_s^* + \bm P_s^\perp\bm U_s^*\right) = \sum_{J\subseteq[d+m]}\cm S^* \times_{s\in J}(\bm P_s^\perp\bm U_s^*)\times_{s\notin J}(\bm P_s\bm U_s^*).
		\]
		Here $J$ indexes the modes for which the complementary component $\bm P_s^\perp\bm U_s^*$ is selected in the multilinear expansion.
		
		For each mode $s\in[d+m]$, let $\bm V_s = \text{QR}(\cm S_{[s]}^\top)$, which corresponds to the row space of $\cm S_{[s]}$, and define $\bm W_\ell = (\bm U_{d+m}\otimes\cdots\otimes\bm U_{\ell+1}\otimes\bm U_{\ell-1}\otimes\cdots\otimes\bm U_1)\bm V_\ell$. Then the orthogonal projection of $\cm A^*$ onto $\mathcal T_{\bbm r}(\boldsymbol{\mathscr A})$ can be written as
		\[
			\mathcal P_{\mathcal T_{\bbm r}(\boldsymbol{\mathscr A})}(\cm A^*) = \cm A^*\times_{\ell=1}^{d+m}\bm P_\ell + \sum_{\ell=1}^{d+m}\mathrm{Fold}_{[\ell]}\bigl(\bm P_\ell^\perp\cm A^*_{[\ell]}\bm P_{\bm W_\ell}\bigr),
		\]
		where $\mathrm{Fold}_{[\ell]}(\cdot)$ denotes the inverse tensorization associated with the mode-$\ell$ unfolding, and $\bm P_{\bm W_\ell}=\bm W_\ell\bm W_\ell^\top$. We now inspect the mode-$s$ column space of each term in $\mathcal P_{\mathcal T_{\bbm r}(\boldsymbol{\mathscr A})}(\cm A^*)$. The first term is $\cm A^*\times_{j=1}^{d+m}\bm P_j$, and its mode-$s$ matricization has the form $(\cm A^*\times_{j=1}^{d+m}\bm P_j)_{[s]} = \bm P_s\cm A^*_{[s]}(\otimes_{j\neq s}\bm P_j)^{\top}$. Hence its mode-$s$ column space is contained in $\operatorname{span}(\bm P_s)=\operatorname{span}(\bm U_s)$.

		Next, consider the summand indexed by $s$, $\mathrm{Fold}_{[s]}(\bm P_s^\perp\cm A^*_{[s]}\bm P_{\bm W_s})$. Its mode-$s$ matricization is exactly $\bm P_s^\perp\cm A^*_{[s]}\bm P_{\bm W_s}$. Since $\cm A^*_{[s]} = \bm U_s^*\cm S^*_{[s]}(\otimes_{i\neq s}\bm U_i^*)^\top$, we have $\operatorname{col}(\cm A^*_{[s]})\subseteq\operatorname{span}(\bm U_s^*)$. Therefore, $\operatorname{col}(\bm P_s^\perp\cm A^*_{[s]}\bm P_{\bm W_s})\subseteq\operatorname{span}(\bm P_s^\perp\bm U_s^*)$. Thus the summand indexed by $s$ contributes only mode-$s$ directions lying in $\operatorname{span}(\bm P_s^\perp\bm U_s^*)$.

		It remains to examine a summand indexed by $\ell\neq s$. Define $\cm Z^{(\ell)} = \mathrm{Fold}_{[\ell]}(\bm P_\ell^\perp \cm A^*_{[\ell]}\bm P_{\bm W_\ell})$. Let $\bm U_{-\ell} = \bm U_{d+m}\otimes\cdots\otimes\bm U_{\ell+1}\otimes\bm U_{\ell-1}\otimes\cdots\otimes\bm U_1$. Since $\bm W_\ell=\bm U_{-\ell}\bm V_\ell$, we have $\bm P_{\bm W_\ell}=\bm U_{-\ell}\bm V_\ell\bm V_\ell^\top\bm U_{-\ell}^\top$. Consequently,
		\[
			\cm Z^{(\ell)}_{[\ell]} = \bm P_\ell^\perp \cm A^*_{[\ell]}\bm P_{\bm W_\ell} = \bm P_\ell^\perp \cm A^*_{[\ell]}\bm U_{-\ell}\bm V_\ell\bm V_\ell^\top\bm U_{-\ell}^\top = \bm C_\ell\bm U_{-\ell}^\top,
		\]
		where $\bm C_\ell=\bm P_\ell^\perp \cm A^*_{[\ell]} \bm U_{-\ell}\bm V_\ell\bm V_\ell^\top$. Equivalently, denote $\cm C^{(\ell)}=\mathrm{Fold}_{[\ell]}(\bm C_\ell)$; then $\cm Z^{(\ell)}=\cm C^{(\ell)}\times_{j\neq\ell}\bm U_j$. Therefore, for the fixed mode $s\neq\ell$, the mode-$s$ matricization is
		\[
			\cm Z^{(\ell)}_{[s]} = \bm U_s\cm C^{(\ell)}_{[s]}\left(\bm U_{d+m}\otimes\cdots\otimes\bm U_{s+1}\otimes\cdots\otimes\bm I_{d_\ell}\otimes\cdots\otimes\bm U_{s-1}\otimes\cdots\otimes\bm U_1\right)^\top.
		\]
		Hence, $\operatorname{col}(\cm Z^{(\ell)}_{[s]})\subseteq\operatorname{span}(\bm U_s)$ for $\ell\neq s$. Combining the three parts above, we have
		\[
			\operatorname{col}\bigl(\left[\mathcal P_{\mathcal T_{\bbm r}(\boldsymbol{\mathscr A})}(\cm A^*)\right]_{[s]}\bigr) \subseteq \operatorname{span}(\bm U_s) + \operatorname{span}(\bm P_s^\perp\bm U_s^*) = \mathcal S_s.
		\]
		The space $\mathcal S_s$ has dimension at most $2r_s$, because $\dim\operatorname{span}(\bm U_s)=r_s$ and $\dim\operatorname{span}(\bm P_s^\perp\bm U_s^*)\leq r_s$. Moreover, since $\bm U_s^*=\bm P_s\bm U_s^*+\bm P_s^\perp\bm U_s^*$ and $\operatorname{span}(\bm P_s\bm U_s^*)\subseteq\operatorname{span}(\bm U_s)$,
		we have
		\[
			\operatorname{span}(\bm U_s^*) \subseteq \operatorname{span}(\bm U_s) + \operatorname{span}(\bm P_s^\perp\bm U_s^*) = \mathcal S_s.
		\]
		Therefore, $\operatorname{col}(\cm A^*_{[s]})\subseteq\mathcal S_s$. Noting that $\mathcal P_{\mathcal T_{\bbm r}^{\perp}(\boldsymbol{\mathscr A})}(\cm A^*)=\cm A^*-\mathcal P_{\mathcal T_{\bbm r}(\boldsymbol{\mathscr A})}(\cm A^*)$, then $\operatorname{col}([\mathcal P_{\mathcal T_{\bbm r}(\boldsymbol{\mathscr A})}(\cm A^*)]_{[s]})\allowbreak \subseteq \mathcal S_s$. Hence $\rank([\mathcal P_{\mathcal T_{\bbm r}^{\perp}(\boldsymbol{\mathscr A})}(\cm A^*)]_{[s]})\leq\dim(\mathcal S_s)\leq2r_s$. Since $\mathcal P_{\mathcal T_{\bbm r}^{\perp}(\boldsymbol{\mathscr A})}(\cm A-\cm A^*)=-\mathcal P_{\mathcal T_{\bbm r}^{\perp}(\boldsymbol{\mathscr A})}(\cm A^*)$, the $s$-th Tucker rank of $\mathcal P_{\mathcal T_{\bbm r}^{\perp}(\boldsymbol{\mathscr A})}(\cm A-\cm A^*)$ is at most $2r_s$ for every $s\in[d+m]$.
	\end{proof}

	\begin{proof}[\textbf{Proof of Lemma~\ref{lem:tangent-space-tucker-rank-2r}}]
		Write $\cm A=\cm S\times_{a=1}^{d+m}\bm U_a$, where $\bm U_a\in\mathbb O_{d_a,r_a}$ with $d_a=q_a$ for $a\in[m]$ and $d_a=p_{a-m}$ for $a\in[d+m]\setminus[m]$. By the tangent-space characterization of the Tucker-rank manifold as in \cite{luo2024tensor}, any $\cm T\in\mathcal T_{\bbm r}(\cm A)$ can be written as
		\[
			\cm T = \cm F\times_{a=1}^{d+m}\bm U_a + \sum_{j=1}^{d+m}\cm S\times_{j}\bm U_{j,\perp}\bm D_{j}\times_{a\neq j}\bm U_a,
		\]
		where $\cm F\in \mathbb R^{r_1 \times \cdots \times r_{d+m}}$ and $\bm D_j\in\mathbb R^{(d_j-r_j)\times r_j}$. Fix a mode $\ell\in[d+m]$. In the mode-$\ell$ matricization, $\cm F\times_{a=1}^{d+m}\bm U_a$ and $\cm S\times_{j}\bm U_{j,\perp}\bm D_{j}\times_{a\neq j}\bm U_a$ with $j\neq \ell$ have their mode-$\ell$ column spaces contained in $\operatorname{span}(\bm U_\ell)$. $\cm S\times_{\ell}\bm U_{\ell,\perp}\bm D_{\ell}\times_{a\neq \ell}\bm U_a$ has mode-$\ell$ column space contained in $\operatorname{span}(\bm U_{\ell,\perp}\bm D_\ell)$, whose dimension is at most $r_\ell$. Therefore
		\[
			\operatorname{col}(\cm T_{[\ell]})\subseteq\operatorname{span}(\bm U_\ell) + \operatorname{span}(\bm U_{\ell,\perp}\bm D_\ell),
		\]
		and hence $\rank(\cm T_{[\ell]})\leq 2r_\ell$. Since this holds for $\ell\in[d+m]$, we have $\tucrank(\cm T)\leq 2\bbm r$.
	\end{proof}

\section{Technical Lemmas}\label{append:technical lemmas}
	\renewcommand{\theequation}{E.\arabic{equation}}
	\renewcommand{\theHequation}{E.\arabic{equation}}

	\renewcommand{\thelemma}{E.\arabic{lemma}}
	\renewcommand{\theHlemma}{E.\arabic{lemma}}

	\setcounter{lemma}{0}
	\setcounter{equation}{0}

	\begin{lemma}[Hanson-Wright inequality]\label{lem:hanson-wright}
		Let $\bbm \zeta = (\zeta_1,\cdots,\zeta_p)\in \mathbb{R}^p$ be a random vector with independent mean zero sub-Gaussian coordinates. Then, for any $\bm A\in \mathbb{R}^{p\times p}$ and $t\geq0$,
		\[
			\mathbb{P}\left(|\bbm \zeta^\top \bm A \bbm \zeta - \mathbb{E}\bbm \zeta^\top \bm A \bbm \zeta|\geq t\right) \leq 2\exp\left(-C\min\left\{\frac{t^2}{\max_j\|\zeta_j\|_{\psi_2}^4\|\bm A\|_\F^2},\frac{t}{\max_j\|\zeta_j\|_{\psi_2}^2\|\bm A\|_\op}\right\}\right).
		\]
	\end{lemma}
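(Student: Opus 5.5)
The statement to prove is the Hanson--Wright inequality, Lemma~\ref{lem:hanson-wright}. I would follow the standard Rudelson--Vershynin argument.

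First, by homogeneity I rescale so that $K:=\max_j\|\zeta_j\|_{\psi_2}=1$. I then split $\bbm\zeta^\top\bm A\bbm\zeta-\mathbb E\bbm\zeta^\top\bm A\bbm\zeta$ into a diagonal part $\sum_j A_{jj}(\zeta_j^2-\mathbb E\zeta_j^2)$ and an off-diagonal part $S:=\sum_{i\neq j}A_{ij}\zeta_i\zeta_j$. It suffices to prove the tail bound for each part with $t/2$ and take a union bound.

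\textbf{Diagonal part.} The variables $\zeta_j^2-\mathbb E\zeta_j^2$ are independent, centered and sub-exponential, with $\psi_1$-norm $\lesssim 1$. Bernstein's inequality for sub-exponential sums then gives the tail
\[
2\exp\!\left(-C\min\left\{\frac{t^2}{\sum_j A_{jj}^2},\ \frac{t}{\max_j|A_{jj}|}\right\}\right).
\]
This is dominated by the claimed bound because $\sum_j A_{jj}^2\le\|\bm A\|_\F^2$ and $\max_j|A_{jj}|\le\|\bm A\|_{\op}$.

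\textbf{Off-diagonal part.} This is the main step. I would bound the moment generating function $\mathbb E\exp(\lambda S)$ for $|\lambda|\le c/\|\bm A\|_{\op}$, in four steps.
\begin{enumerate}
\item Decoupling: $\mathbb E\exp(\lambda S)\le \mathbb E\exp(4\lambda\,\bbm\zeta^\top\bm A\bbm\zeta')$, where $\bbm\zeta'$ is an independent copy of $\bbm\zeta$.
\item Conditioning on $\bbm\zeta'$: the variable $\bbm\zeta^\top\bm A\bbm\zeta'$ is sub-Gaussian with proxy $\|\bm A\bbm\zeta'\|_2^2$, so the moment generating function is at most $\mathbb E\exp(C\lambda^2\|\bm A\bbm\zeta'\|_2^2)$.
\item Gaussian comparison: I replace $\bbm\zeta,\bbm\zeta'$ by standard Gaussian vectors $\bm g,\bm g'$, using the fact that $\mathbb E\exp(\mu\,\bm g^\top\bm x)=\exp(\mu^2\|\bm x\|^2/2)$ dominates the sub-Gaussian moment generating function. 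This reduces the problem to bounding $\mathbb E\exp(C\lambda^2\|\bm A\bm g\|_2^2)$.
\item Rotation invariance: writing $\|\bm A\bm g\|_2^2=\sum_j s_j^2g_j^2$ with $s_j$ the singular values of $\bm A$, the expectation is at most $\exp(C\lambda^2\|\bm A\|_\F^2)$, provided $\lambda^2 s_j^2$ is small. That smallness is exactly the restriction $|\lambda|\le c/\|\bm A\|_{\op}$.
\end{enumerate}

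\textbf{Conclusion.} By Markov's inequality, $\mathbb P(S\ge t)\le\exp(-\lambda t+C\lambda^2\|\bm A\|_\F^2)$. I optimize over $\lambda$ in the admissible range, taking $\lambda=\min\{t/(2C\|\bm A\|_\F^2),\,c/\|\bm A\|_{\op}\}$, which yields the minimum of the two regimes. The same argument applies to $-\bm A$. Undoing the rescaling by $K$ restores the factors $K^4$ and $K^2$ in the two denominators.

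The delicate points are the decoupling inequality and the Gaussian replacement step (3); I would cite them as standard tools rather than reprove them.
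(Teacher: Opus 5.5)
Your argument is correct and is the standard Rudelson--Vershynin proof: Bernstein for the diagonal part, then decoupling, conditioning and Gaussian comparison for the off-diagonal part. The paper gives no proof of its own and only invokes this inequality as a known tool, so your sketch supplies exactly what it takes for granted. Two small points are worth tidying. First, step~3 is really the linearization $\exp(C\lambda^2\|\bm A\bbm\zeta'\|_2^2)=\mathbb E_{\bm g}\exp(\sqrt{2C}\lambda\,\bm g^\top\bm A\bbm\zeta')$ followed by the sub-Gaussian moment generating function bound in $\bbm\zeta'$; this yields $\|\bm A^\top\bm g\|_2^2$, which has the same law as $\sum_j s_j^2g_j^2$. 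Second, the union bound over the diagonal part and the two one-sided off-diagonal tails gives a prefactor $4$ rather than $2$. This is absorbed by shrinking $C$, since $\min\{1,4e^{-cx}\}\le 2e^{-cx/2}$.
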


	\begin{lemma}[Bernstein's inequality]\label{lem:bernstein-inequality}
		Let $\zeta_1,\cdots,\zeta_p$ be independent mean-zero sub-exponential random variables. Then, for any $t\geq 0$,
		\[
			\mathbb P\left(\left|\sum_{i=1}^p \zeta_i\right|\geq t\right) \leq 2\exp\left(-C\min\left\{\frac{t^2}{\sum_{i=1}^p \|\zeta_i\|_{\psi_1}^2}, \frac{t}{\max_{1\leq i\leq p}\|\zeta_i\|_{\psi_1}}\right\}\right).
		\]
	\end{lemma}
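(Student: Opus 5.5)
This is Bernstein's inequality for independent mean-zero sub-exponential variables, a standard tool; the plan is the classical Chernoff argument through moment generating functions. Write $K_i=\|\zeta_i\|_{\psi_1}$ and $K_{\max}=\max_i K_i$. By symmetry (apply the bound to $-\zeta_i$), it suffices to bound the upper tail $\mathbb P(\sum_i\zeta_i\geq t)$ and then double the probability.

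The first and main step is a uniform MGF bound. For a mean-zero variable with $\|\zeta_i\|_{\psi_1}=K_i$, we aim to show
\[
\mathbb E\exp(\lambda\zeta_i)\leq\exp(C_1\lambda^2K_i^2) \quad\text{whenever } |\lambda|\leq c_1/K_i .
\]
The route mirrors the moment computation already carried out in the proof of Theorem~\ref{thm:federated_representation_error} for the quadratic forms $Q_{k,i}$.
\begin{itemize}
\item From the $\psi_1$ definition, the tails satisfy $\mathbb P(|\zeta_i|\geq s)\leq 2e^{-s/K_i}$.
\item Tail integration then gives $\mathbb E|\zeta_i|^m\leq C^m m!\,K_i^m$.
\item Expand $\mathbb E e^{\lambda\zeta_i}=1+\sum_{m\geq2}\lambda^m\mathbb E\zeta_i^m/m!$. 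The linear term vanishes by the mean-zero assumption, and the remaining series is geometric, bounded by $1+\sum_{m\ge2}(C|\lambda|K_i)^m\leq 1+C'\lambda^2K_i^2$ once $|\lambda|K_i\leq 1/(2C)$.
\item Finally use $1+x\leq e^x$.
\end{itemize}
The main obstacle is keeping track of the constants so that $c_1,C_1$ are absolute and do not depend on $i$.

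Second, use independence. For $0<\lambda\leq c_1/K_{\max}$, Markov's inequality gives
\[
\mathbb P\Bigl(\sum_i\zeta_i\geq t\Bigr)\leq e^{-\lambda t}\prod_i\mathbb E e^{\lambda\zeta_i}\leq\exp\Bigl(-\lambda t+C_1\lambda^2\sum_iK_i^2\Bigr).
\]
Optimize over the admissible $\lambda$ by taking $\lambda=\min\{t/(2C_1\sum_iK_i^2),\,c_1/K_{\max}\}$. There are two regimes.
\begin{itemize}
\item If the first term is the minimum, the exponent equals $-t^2/(4C_1\sum_iK_i^2)$.
\item Otherwise $\lambda=c_1/K_{\max}$ and $C_1\lambda\sum_iK_i^2\leq t/2$, so the exponent is at most $-\lambda t/2=-c_1t/(2K_{\max})$.
\end{itemize}
In both regimes the exponent is at most $-C\min\{t^2/\sum_iK_i^2,\,t/K_{\max}\}$.

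Combining this with the same bound for the lower tail yields the stated inequality with the factor $2$. This argument is standard; see \citet[Equation~(2.18)]{wainwright2019high}, which the paper already invokes.
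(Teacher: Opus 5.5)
Your proposal is correct. The paper states this lemma without proof as a standard tool; it is the sub-exponential Bernstein bound the paper invokes via Wainwright's Equation (2.18). Your Chernoff argument is the standard derivation: the $\psi_1$ tail bound, tail integration to $\mathbb E|\zeta_i|^m\le 2\,m!\,K_i^m$, the local MGF bound for $|\lambda|\lesssim 1/K_{\max}$, and the two-regime choice of $\lambda$. It mirrors the moment-to-MGF computation the paper carries out inline for the quadratic forms $Q_{k,i}$ in the proof of Theorem~\ref{thm:federated_representation_error}. The one step you leave implicit is exchanging expectation with the power series of $e^{\lambda\zeta_i}$; dominated convergence justifies it, because $\mathbb E e^{|\lambda||\zeta_i|}<\infty$ for $|\lambda|<1/K_i$.
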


		\begin{lemma}[{\citealp[Quasi-projection property of T-HOSVD and ST-HOSVD (Proposition 3)]{luo2024tensor}}]\label{lem:quasi-projection-thosvd-sthosvd}
		Let $\mathcal M_{\bbm r}$ denote the set of tensors with Tucker rank at most $\bbm r=(r_1,\cdots,r_{d+m})$. Let $\mathcal R_{\bbm r}(\cdot)$ be either the T-HOSVD or ST-HOSVD truncation operator. Then, for any $\cm T\in\mathbb R^{q_1\times\cdots\times q_m\times p_1\times\cdots\times p_d}$, we have
		\[
			\bigl\|\cm T-\mathcal R_{\bbm r}(\cm T)\bigr\|_{\F} \leq \sqrt{d+m}\,\bigl\|\cm T-\mathcal P_{\mathcal M_{\bbm r}}(\cm T)\bigr\|_{\F},
		\]
		where $\mathcal P_{\mathcal M_{\bbm r}}(\cm T)$ denotes a best Tucker-rank-$\bbm r$ approximation of $\cm T$ under the Frobenius norm; that is, for any $\cm T'\in\mathbb R^{q_1\times\cdots\times q_m\times p_1\times\cdots\times p_d}$ with $\tucrank(\cm T')\leq\bbm r$, one has $\bigl\|\cm T-\mathcal P_{\mathcal M_{\bbm r}}(\cm T)\bigr\|_{\F} \leq \bigl\|\cm T-\cm T'\bigr\|_{\F}$.
	\end{lemma}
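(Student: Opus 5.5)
The approach is to split $\cm T-\mathcal R_{\bbm r}(\cm T)$ into $d+m$ mutually orthogonal pieces, one per mode. Each piece is then bounded by the Eckart--Young error of a single matricization, and that error is in turn bounded by $\|\cm T-\cm T'\|_\F^2$ for any $\cm T'$ of Tucker rank at most $\bbm r$. Throughout, write $\ell=d+m$, and let $\bm P_s$ be the orthogonal projector onto the span of the leading $r_s$ left singular vectors used by the algorithm in mode $s$. Set $\bm P_s^\perp=\bm I-\bm P_s$. In both variants the output has the form $\mathcal R_{\bbm r}(\cm T)=\cm T\times_{s=1}^{\ell}\bm P_s$. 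For T-HOSVD, $\bm P_s$ is computed from $\cm T_{[s]}$. For ST-HOSVD, it is computed from $(\cm T\times_{j<s}\bm P_j)_{[s]}$; if the algorithm stores the truncated factor instead, this is the same tensor up to orthonormal factors, so the Frobenius norms are unchanged.

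\emph{Step 1 (telescoping and orthogonality).} Write $\cm T_s=\cm T\times_{j\le s}\bm P_j$ with $\cm T_0=\cm T$. Then
\[
\cm T-\mathcal R_{\bbm r}(\cm T)=\sum_{s=1}^{\ell}(\cm T_{s-1}-\cm T_s)=\sum_{s=1}^{\ell}\cm T_{s-1}\times_s\bm P_s^\perp .
\]
For $s<t$, the $s$-th summand carries $\bm P_s^\perp$ in mode $s$. The $t$-th summand carries $\bm P_s$ in mode $s$, because $s\le t-1$. Since $\bm P_s\bm P_s^\perp=\bm 0$, the summands are pairwise Frobenius-orthogonal. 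Hence
\[
\|\cm T-\mathcal R_{\bbm r}(\cm T)\|_\F^2=\sum_{s=1}^{\ell}\|\bm P_s^\perp(\cm T_{s-1})_{[s]}\|_\F^2 .
\]

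\emph{Step 2 (per-mode bound).} Fix any $\cm T'$ with $\tucrank(\cm T')\le\bbm r$; eventually we take $\cm T'=\mathcal P_{\mathcal M_{\bbm r}}(\cm T)$. The two variants are handled separately.

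For T-HOSVD, the projections $\bm P_j$ in modes other than $s$ contract the Frobenius norm. This gives $\|\bm P_s^\perp(\cm T_{s-1})_{[s]}\|_\F\le\|\bm P_s^\perp\cm T_{[s]}\|_\F$. The right-hand side is the best rank-$r_s$ approximation error of $\cm T_{[s]}$. Since $\rank(\cm T'_{[s]})\le r_s$, Eckart--Young bounds it by $\|\cm T_{[s]}-\cm T'_{[s]}\|_\F=\|\cm T-\cm T'\|_\F$.

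For ST-HOSVD, $\bm P_s$ is optimal for $(\cm T_{s-1})_{[s]}$ itself. So the $s$-th term equals the rank-$r_s$ tail of $(\cm T_{s-1})_{[s]}$. Consider the competitor $\cm T'\times_{j<s}\bm P_j$: its mode-$s$ rank is still at most $r_s$. Eckart--Young therefore bounds the tail by
\[
\|(\cm T-\cm T')\times_{j<s}\bm P_j\|_\F\le\|\cm T-\cm T'\|_\F .
\]

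\emph{Step 3 (conclusion).} Summing the $\ell$ per-mode bounds gives $\|\cm T-\mathcal R_{\bbm r}(\cm T)\|_\F^2\le(d+m)\|\cm T-\cm T'\|_\F^2$. Taking $\cm T'=\mathcal P_{\mathcal M_{\bbm r}}(\cm T)$ and then square roots yields the claim.

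The only delicate point is the orthogonality bookkeeping in Step 1. In particular, one must check that in ST-HOSVD the later projectors are applied to the already-truncated tensor, so that each $\bm P_s^\perp$ summand is orthogonal to all subsequent summands. Everything else is Eckart--Young together with the non-expansiveness of the mode-wise orthogonal projections.
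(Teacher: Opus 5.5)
Your proof is correct. The paper does not prove this lemma; it imports it as Proposition~3 of \citet{luo2024tensor}, which in turn rests on the classical quasi-optimality analysis of T-HOSVD and ST-HOSVD. Your argument is that classical route: the telescoping identity $\cm T-\mathcal R_{\bbm r}(\cm T)=\sum_{s}\cm T_{s-1}\times_s\bm P_s^\perp$ with pairwise orthogonal summands, followed by a mode-wise Eckart--Young comparison. It therefore supplies the self-contained justification that the paper outsources.

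Writing it out gives two small things the citation does not. First, Step~2 compares against an arbitrary $\cm T'$ with $\tucrank(\cm T')\leq\bbm r$. You therefore obtain $\|\cm T-\mathcal R_{\bbm r}(\cm T)\|_\F\leq\sqrt{d+m}\,\|\cm T-\cm T'\|_\F$ directly, without passing through a best Tucker approximation. This is exactly the composite of inequalities $(i)$ and $(ii)$ used in the Stage-I recursions, where $\cm T'=\cm A_0^*$. Second, for T-HOSVD your argument yields the sharper intermediate bound $\|\cm T-\mathcal R_{\bbm r}(\cm T)\|_\F^2\leq\sum_{s=1}^{d+m}\sum_{i>r_s}\sigma_i^2(\cm T_{[s]})$.

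Your side remark on ST-HOSVD implementations that work with $\cm T\times_{j<s}\bm U_j^\top$ instead of the projected tensor is also right. The two mode-$s$ unfoldings differ by right multiplication by a matrix with orthonormal rows, so the singular values and leading left singular subspace are unchanged.
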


	\begin{lemma}[{\citealp[Lemma~3]{luo2024tensor}}]\label{lem:normal-component-bound}
		Let $\cm T,\cm T' \in \mathbb R^{q_1\times\cdots\times q_m\times p_1\times\cdots\times p_d}$ be two order-$(d+m)$ tensors satisfying $\tucrank(\cm T)=\tucrank(\cm T')=\bbm r = (r_1,\cdots,r_{d+m})$.
		Then
		\[
			\Bigl\|\mathcal P_{\mathcal T_{\bbm r}^{\perp}(\boldsymbol{\mathscr T'})}\bigl(\cm T'-\cm T\bigr)\Bigr\|_\F \leq \frac{2(d+m)\|\cm T'-\cm T\|_\F^2}{\min_{s\in[d+m]}\sigma_{r_s}\bigl(\cm T_{[s]}\bigr)}.
		\]
	\end{lemma}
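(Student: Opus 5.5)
Throughout, write $\bbm\Delta=\cm T'-\cm T$, $\mu=\min_{s\in[d+m]}\sigma_{r_s}(\cm T_{[s]})$, and let $\cm T'=\cm S'\times_{s=1}^{d+m}\bm U_s'$ be a Tucker decomposition of $\cm T'$ with $\bm P_s'=\bm U_s'\bm U_s'^\top$, $\bm P_s'^{\perp}=\bm I-\bm P_s'$, and $\bm W_s'$ as in Section~\ref{sec:basic-notation}. The plan is to reduce the normal component to an explicit sum of at most $2(d+m)$ pieces and to show that each piece is bounded by $\|\bbm\Delta\|_\F^2/\mu$.

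\emph{Reduction and two elementary bounds.} Since $\cm T'\in\mathcal T_{\bbm r}(\cm T')$, we have $\mathcal P_{\mathcal T_{\bbm r}^\perp(\boldsymbol{\mathscr T'})}(\cm T'-\cm T)=-\mathcal P_{\mathcal T_{\bbm r}^\perp(\boldsymbol{\mathscr T'})}(\cm T)=\mathcal P_{\mathcal T_{\bbm r}(\boldsymbol{\mathscr T'})}(\cm T)-\cm T$, so it suffices to bound $\|\cm T-\mathcal P_{\mathcal T_{\bbm r}(\boldsymbol{\mathscr T'})}(\cm T)\|_\F$. Two estimates drive the whole argument.
\begin{enumerate}
\item[(a)] $\bm P_s'^{\perp}\cm T'_{[s]}=\bm 0$, hence $\|\bm P_s'^{\perp}\cm T_{[s]}\|_\F=\|\bm P_s'^{\perp}\bbm\Delta_{[s]}\|_\F\le\|\bbm\Delta\|_\F$.
\item[(b)] Write the compact SVD $\cm T_{[s]}=\bm U_s\bm\Sigma_s\bm V_s^\top$. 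Then $\bm P_s'^{\perp}\bm U_s=\bm P_s'^{\perp}\cm T_{[s]}\bm V_s\bm\Sigma_s^{-1}=-\bm P_s'^{\perp}\bbm\Delta_{[s]}\bm V_s\bm\Sigma_s^{-1}$, so $\|\bm P_s'^{\perp}\bm U_s\|_{\op}\le\|\bbm\Delta\|_\F/\mu$.
\end{enumerate}
The same argument on the row side gives a dual estimate: $\cm T'_{[s]}(\bm I-\bm P_{\bm W_s'})=\bm 0$, because the row space of $\cm T'_{[s]}$ is contained in $\operatorname{span}(\bm W_s')$. Hence $\|\cm T_{[s]}(\bm I-\bm P_{\bm W_s'})\|_\F\le\|\bbm\Delta\|_\F$.

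\emph{Decomposing the normal part.} Expand $\cm T=\cm T\times_{s=1}^{d+m}(\bm P_s'+\bm P_s'^{\perp})$ and telescope over modes:
\[
\cm T-\cm T\times_{i=1}^{d+m}\bm P_i'=\sum_{s=1}^{d+m}\cm T\times_{i<s}\bm P_i'\times_s\bm P_s'^{\perp}.
\]
Subtracting the tangent projection from Section~\ref{sec:basic-notation}, the normal part becomes $\sum_s(\mathrm{I}_s+\mathrm{II}_s)$, where:
\begin{itemize}
\item[] $\mathrm{I}_s=\cm T\times_{i<s}\bm P_i'\times_s\bm P_s'^{\perp}-\cm T\times_{i\ne s}\bm P_i'\times_s\bm P_s'^{\perp}$ collects the terms with a perpendicular factor at mode $s$ and at least one further perpendicular factor at some mode $i>s$;
\item[] $\mathrm{II}_s=\mathrm{Fold}_{[s]}\bigl(\bm P_s'^{\perp}\cm T_{[s]}(\bm P_{\otimes_{i\neq s}\bm U_i'}-\bm P_{\bm W_s'})\bigr)$ is the single-perpendicular piece that lies outside $\bm W_s'$.
\end{itemize}
Checking that these are exactly the pieces left over is a bookkeeping step; it uses $\operatorname{span}(\bm W_s')\subseteq\operatorname{span}(\otimes_{i\ne s}\bm U_i')$.

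\emph{Bounding each piece.} For $\mathrm{II}_s$, insert $\bm P_s'^{\perp}\cm T_{[s]}=(\bm P_s'^{\perp}\bm U_s)(\bm U_s^\top\cm T_{[s]})$. The right factor is further multiplied by a projection onto $\operatorname{span}(\bm W_s')^\perp$, which kills $\cm T'_{[s]}$, so it is bounded in Frobenius norm by $\|\bbm\Delta\|_\F$ via the dual estimate. The left factor has operator norm at most $\|\bbm\Delta\|_\F/\mu$ by (b). Together, $\|\mathrm{II}_s\|_\F\le\|\bbm\Delta\|_\F^2/\mu$. For $\mathrm{I}_s$, rewrite it as $\cm T\times_{i<s}\bm P_i'\times_s\bm P_s'^{\perp}\times(\bm I-\bigotimes_{i>s}\bm P_i')$. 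On the mode-$s$ side we again factor $\bm P_s'^{\perp}\bm U_s$ out in operator norm, using (b). The remaining factor $\bm U_s^\top\cm T_{[s]}$ is multiplied on the right by the projection $\bm I-\bigotimes_{i>s}\bm P_i'$ (tensored with the identity on modes $i<s$), which annihilates $\cm T'_{[s]}$. Its Frobenius norm is therefore at most $\|\bbm\Delta\|_\F$, giving $\|\mathrm{I}_s\|_\F\le\|\bbm\Delta\|_\F^2/\mu$. Summing the $2(d+m)$ bounds gives the claim.

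The main obstacle is the bookkeeping in the decomposition step: one must verify that the tangent projection removes exactly the single-perpendicular pieces restricted to $\bm W_s'$ and the all-$\bm P'$ piece, and that each leftover piece retains a right-multiplying projection annihilating $\cm T'_{[s]}$. If the grouping into $\mathrm{I}_s$ does not close cleanly, the fallback is to bound every multi-perpendicular term individually with the same (a)/(b) pairing. That yields a constant of order $(d+m)^2$, which still suffices for the way the lemma is used in the main proofs.
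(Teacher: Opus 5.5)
Your proof is correct. The paper gives no argument of its own for this lemma: it imports it verbatim from \citet[Lemma~3]{luo2024tensor} and uses it only as a black box, in Term~I of the Stage-I recursions with $\cm T'=\cm A_0^{(t)}$ and $\cm T=\cm A_0^*$. Your proposal is therefore not an alternative to an in-paper proof. It is a self-contained verification written in the paper's own notation. It confirms that the constant $2(d+m)$ and the denominator $\min_s\sigma_{r_s}(\cm T_{[s]})$ hold for the specific tangent projector of Section~\ref{sec:basic-notation}. That denominator uses the singular values of the non-base point $\cm T$, which is what the applications need, since $\mu_{\mathscr A}^{\min}$ refers to $\cm A_0^*$. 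The citation buys brevity; your route buys transparency.

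The bookkeeping you flag as the main risk closes exactly, so the $(d+m)^2$ fallback is never needed. Let $\bm Q_s$ be the Kronecker projection acting as $\bm P_i'$ on modes $i<s$ and as the identity on modes $i>s$. Then $\cm T-\mathcal P_{\mathcal T_{\bbm r}(\boldsymbol{\mathscr T'})}(\cm T)=\sum_{s}\mathrm{Fold}_{[s]}\bigl(\bm P_s'^{\perp}\cm T_{[s]}(\bm Q_s-\bm P_{\bm W_s'})\bigr)$. Your $\mathrm I_s$ and $\mathrm{II}_s$ are just the splitting $\bm Q_s-\bm P_{\bm W_s'}=(\bm Q_s-\bm P_{\otimes_{i\neq s}\bm U_i'})+(\bm P_{\otimes_{i\neq s}\bm U_i'}-\bm P_{\bm W_s'})$ into two differences of nested projections, each annihilating $\cm T'_{[s]}$.

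One small imprecision: in $\mathrm I_s$ the right multiplier carries $\bm P_i'$, not the identity, on modes $i<s$. This is harmless, because the operator is still an orthogonal projection killing $\cm T'_{[s]}$.

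Your argument in fact gives more than the stated bound. With $\mu=\min_s\sigma_{r_s}(\cm T_{[s]})$, the operator $\bm Q_s-\bm P_{\bm W_s'}$ is itself an orthogonal projection killing $\cm T'_{[s]}$, so $\|\mathrm I_s+\mathrm{II}_s\|_\F\le\|\cm T'-\cm T\|_\F^2/\mu$ and the constant drops to $d+m$. Moreover, the merged terms are mutually Frobenius-orthogonal. The $s$-th has its mode-$s$ fibers in $\operatorname{span}(\bm U_s')^{\perp}$, while every term with index $s'>s$ has them in $\operatorname{span}(\bm U_s')$. This improves the constant further to $\sqrt{d+m}$.
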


\section{Algorithmic Details}\label{sec:Algorithmic_Details}

\subsection{Two-stage Single Client Estimation}\label{sec:two-stage-single-est}

	\begin{breakablealgorithm}
	\caption{Two-stage single-client learning for personalized tensor regression}
	\label{alg:single_two_stage}
	\begin{algorithmic}[1]
	\State \textbf{Input:} Local dataset $\mathsf D_k=\{(\cm X_{k,i},\cm Y_{k,i})\}_{i=1}^{n_k}$; number of Stage-I iterations $T_g^{(k)}$; number of Stage-II iterations $T_l^{(k)}$; Tucker rank $\bbm r$; step sizes $\eta_{\mathscr A,k}$ and $\eta_{D,k}$; regularization parameter $\omega_k$; initialization $\cm A_{0,k}^{loc,(0)}$.

	\Statex \textbf{Stage I: Local Representation Learning}
	\For{$t=0,1,\cdots,T_g^{(k)}-1$}
		\State Compute the local gradient $\cm G_{\mathscr A,k}^{loc,(t)}$ and update $\cm A_{0,k}^{loc,(t+1)}$ by \eqref{eq:local-rgd-update}.
	\EndFor
	\State Set $\widehat{\cm A}_{0,k}^{loc}\gets \cm A_{0,k}^{loc,(T_g^{(k)})}$.

	\Statex \textbf{Stage II: Local Personalized Refinement}
	\State Initialize $\cm B_k^{loc,(0)}\gets \mathbf 0$, $q_{k,0}^{loc}\gets 1$, and set $\cm U_k^{loc,(0)}\gets \cm B_k^{loc,(0)}$.
	\For{$t=0,1,\cdots,T_l^{(k)}-1$}
		\State Update $\cm B_k^{loc,(t+1)}$ by the same FISTA step as in \eqref{eq:personalized-estimator}, with $\widehat{\cm A}_0$ replaced by $\widehat{\cm A}_{0,k}^{loc}$, $\cm B_k^{(t)}$ replaced by $\cm B_k^{loc,(t)}$, and $\cm U_k^{(t)}$ replaced by $\cm U_k^{loc,(t)}$.
		\State Update the momentum parameter $q_{k,t+1}^{loc}$ and extrapolated point $\cm U_k^{loc,(t+1)}$ by the same extrapolation rule as in \eqref{eq:momentum-param-extra-iter}, with $q_{k,t}$, $\cm B_k^{(t)}$, and $\cm U_k^{(t)}$ replaced by their local counterparts.
	\EndFor
	\State Set $\widehat{\cm B}_k^{loc}\gets \cm B_k^{loc,(T_l^{(k)})}$ and $\widehat{\cm A}_k^{loc}\gets \widehat{\cm A}_{0,k}^{loc}+\widehat{\cm B}_k^{loc}$.

	\State \textbf{Output:} Local Stage-I estimator $\widehat{\cm A}_{0,k}^{loc}$ and personalized estimator $\widehat{\cm A}_k^{loc}$.
	\end{algorithmic}
	\end{breakablealgorithm}

\subsection{T-HOSVD and ST-HOSVD}\label{sec:THOSVD_STHOSVD}
	Given an mode-$\ell$ tensor $\cm Y\in\mathbb R^{p_1\times\cdots\times p_\ell}$ and a target Tucker rank $\bbm r=(r_1,\cdots,r_\ell)$, a common retraction onto the low Tucker-rank manifold is based on truncated high-order singular value decomposition (T-HOSVD) or sequentially truncated high-order singular value decomposition (ST-HOSVD).

	\begin{breakablealgorithm}
	\caption{T-HOSVD}\label{alg:thosvd}
	\begin{algorithmic}[1]
	\State \textbf{Input:} tensor $\cm Y\in\mathbb R^{p_1\times\cdots\times p_\ell}$; Tucker rank $\bbm r=(r_1,\cdots,r_\ell)$.
	\For{$k=1,2,\cdots,\ell$}
		\State Compute $\bm U_k^{(0)} \gets \mathrm{SVD}_{r_k}(\cm Y_{[k]})\in\mathbb R^{p_k\times r_k}$.
	\EndFor
	\State \textbf{Output:} $\widehat{\cm Y}\gets \cm Y\times_{k=1}^{\ell}(\bm U_k^{(0)}\bm U_k^{(0)\top})$.
	\end{algorithmic}
	\end{breakablealgorithm}

	\begin{breakablealgorithm}
	\caption{ST-HOSVD}\label{alg:sthosvd}
	\begin{algorithmic}[1]
	\State \textbf{Input:} tensor $\cm Y\in\mathbb R^{p_1\times\cdots\times p_\ell}$; Tucker rank $\bbm r=(r_1,\cdots,r_\ell)$.
	\State Compute $\bm U_1^{(0)}\gets \mathrm{SVD}_{r_1}(\cm Y_{[1]})$.
	\For{$k=2,3,\cdots,\ell$}
		\State Compute
		\[
		\bm U_k^{(0)}\gets \mathrm{SVD}_{r_k}\Bigl(\bigl(\cm Y\times_{j=1}^{k-1}(\bm U_j^{(0)}\bm U_j^{(0)\top})\bigr)_{[k]}\Bigr).
		\]
	\EndFor
	\State \textbf{Output:} $\widehat{\cm Y}\gets \cm Y\times_{k=1}^{\ell}(\bm U_k^{(0)}\bm U_k^{(0)\top})$.
	\end{algorithmic}
	\end{breakablealgorithm}

\subsection{ADMM for Single Client Estimation}\label{sec:ADMM}
	For a fixed client $k\in[K]$, the single-client benchmark estimates the low-rank component $\cm A_{0,k}$ and the sparse deviation $\cm B_k$ by
	\begin{align}\label{eq:single-client-admm-objective}
		(\widetilde{\cm A}_{0,k},\widetilde{\cm B}_k) = \argmin_{\cm A_{0,k},\cm B_k}\Biggl\{\frac{1}{2n_k}\sum_{i=1}^{n_k}\bigl\|\cm Y_{k,i}-\langle \cm A_{0,k}+\cm B_k,\cm X_{k,i}\rangle\bigr\|_{\F}^2 + \lambda_k\|(\cm A_{0,k})_{[S_X]}\|_* + \varpi_k\|\cm B_k\|_1\Biggr\},
	\end{align}
	subject to $\|(\cm B_k)_{[S_X]}\|_{\op}\leq \zeta$, where $S_X$ denotes the matricization used for the low-rank structure and the deviation constraint. To obtain separable ADMM subproblems, introduce an auxiliary tensor $\cm F$ for the regression term, an auxiliary matrix $\bm Z$ for the nuclear-norm penalty, and an auxiliary matrix $\bm V$ for the operator-norm constraint:
	\begin{align}\label{eq:admm-splitting}
		\cm F = \cm A_{0,k}+\cm B_k,
		\qquad
		\bm Z=(\cm A_{0,k})_{[S_X]},
		\qquad
		\bm V=(\cm B_k)_{[S_X]},
		\qquad
		\|\bm V\|_{\op}\leq \zeta.
	\end{align}
	Then \eqref{eq:single-client-admm-objective} is equivalently written as
	\begin{align}\label{eq:single-client-admm-split-objective}
		\min_{\boldsymbol{\mathscr A}_{0,k},\boldsymbol{\mathscr B}_k,\boldsymbol{\mathscr F},\bm Z,\bm V}
		&\ \frac{1}{2n_k}\sum_{i=1}^{n_k}\bigl\|\cm Y_{k,i}-\langle \cm F,\cm X_{k,i}\rangle\bigr\|_{\F}^2
		+\lambda_k\|\bm Z\|_*+\varpi_k\|\cm B_k\|_1+\iota_{\op,\zeta}(\bm V) \\
		\text{s.t.}
		&\ \cm F=\cm A_{0,k}+\cm B_k,
		\qquad
		\bm Z=(\cm A_{0,k})_{[S_X]},
		\qquad
		\bm V=(\cm B_k)_{[S_X]},
	\end{align}
	where the indicator function of the operator-norm ball is defined by
	\[
		\iota_{\op,\zeta}(\bm V)
			=\begin{cases}
			0, & \|\bm V\|_{\op}\leq \zeta,\\
			+\infty, & \text{otherwise}.
			\end{cases}
	\]

	Let $\cm U$ be the scaled dual variable for $\cm F=\cm A_{0,k}+\cm B_k$, let $\bm U_{\mathscr Z}$ be the scaled dual variable for $\bm Z=(\cm A_{0,k})_{[S_X]}$, and let $\bm U_{\mathscr V}$ be the scaled dual variable for $\bm V=(\cm B_k)_{[S_X]}$. For a penalty parameter $\rho>0$, the scaled augmented Lagrangian is
	\begin{align}\label{eq:aug-lag-single}
		\mathcal L_{\rho}(\cm A_{0,k},\cm B_k,\cm F,\bm Z,\bm V;\cm U,\bm U_{\mathscr Z},\bm U_{\mathscr V}) &= \frac{1}{2n_k}\sum_{i=1}^{n_k}\bigl\|\cm Y_{k,i}-\langle \cm F,\cm X_{k,i}\rangle\bigr\|_{\F}^2+\lambda_k\|\bm Z\|_*+\varpi_k\|\cm B_k\|_1+\iota_{\op,\zeta}(\bm V) \notag\\
		&\quad + \frac{\rho}{2}\|\cm F-\cm A_{0,k}-\cm B_k+\cm U\|_{\F}^2 + \frac{\rho}{2}\|(\cm A_{0,k})_{[S_X]}-\bm Z+\bm U_{\mathscr Z}\|_{\F}^2 \notag\\
		&\quad + \frac{\rho}{2}\|(\cm B_k)_{[S_X]}-\bm V+\bm U_{\mathscr V}\|_{\F}^2.
	\end{align}
	We derive the ADMM updates by minimizing \eqref{eq:aug-lag-single} with respect to each primal variable while fixing the remaining variables; see Algorithm~\ref{alg:admm-single} for a summary.

	\noindent
	\textbf{(I) $\cm F$-update.}
	The $\cm F$-subproblem is
	\[
		\cm F^{(t+1)} = \argmin_{\boldsymbol{\mathscr F}}\ \frac{1}{2n_k}\sum_{i=1}^{n_k}\bigl\|\cm Y_{k,i}-\langle \cm F,\cm X_{k,i}\rangle\bigr\|_{\F}^2 + \frac{\rho}{2}\|\cm F-\cm A_{0,k}^{(t)}-\cm B_k^{(t)}+\cm U^{(t)}\|_{\F}^2.
	\]
	Let $S_0=[m]$ and denote $\bm F=\cm F_{[S_0]}\in\mathbb R^{q\times p}$, where $q=\prod_{\ell=1}^{m}q_\ell$ and $p=\prod_{\ell=1}^{d}p_\ell$. Define
	\[
		\bm X_k=[\vect(\cm X_{k,1}),\cdots,\vect(\cm X_{k,n_k})]\in\mathbb R^{p\times n_k},
		\qquad
		\bm Y_k=[\vect(\cm Y_{k,1}),\cdots,\vect(\cm Y_{k,n_k})]\in\mathbb R^{q\times n_k}.
	\]
	Let $\bm W^{(t)}=(\cm A_{0,k}^{(t)}+\cm B_k^{(t)}-\cm U^{(t)})_{[S_0]}$. Then
	\[
		\bm F^{(t+1)} = \Bigl(\frac{1}{n_k}\bm Y_k\bm X_k^{\top}+\rho\bm W^{(t)}\Bigr)\Bigl(\frac{1}{n_k}\bm X_k\bm X_k^{\top}+\rho\bm I_p\Bigr)^{-1},
	\]
	and $\cm F^{(t+1)}$ is obtained by folding $\bm F^{(t+1)}$ back to the original tensor shape.

	\noindent\textbf{(II) $\bm Z$-update.}
	The $\bm Z$-subproblem is
	\[
		\bm Z^{(t+1)} = \argmin_{\bm Z}\ \lambda_k\|\bm Z\|_*+\frac{\rho}{2}\bigl\|\bm Z-\bigl\{(\cm A_{0,k}^{(t)})_{[S_X]}+\bm U_{\mathscr Z}^{(t)}\bigr\}\bigr\|_{\F}^2.
	\]
	Therefore,
	\[
		\bm Z^{(t+1)}=\operatorname{SVT}_{\lambda_k/\rho}\bigl((\cm A_{0,k}^{(t)})_{[S_X]}+\bm U_{\mathscr Z}^{(t)}\bigr).
	\]

	\noindent\textbf{(III) $\bm V$-update.}
	The $\bm V$-subproblem is
	\[
		\bm V^{(t+1)} = \argmin_{\bm V}\ \iota_{\op,\zeta}(\bm V) + \frac{\rho}{2}\bigl\|\bm V-\bigl\{(\cm B_k^{(t)})_{[S_X]}+\bm U_{\mathscr V}^{(t)}\bigr\}\bigr\|_{\F}^2.
	\]
	Thus $\bm V^{(t+1)}$ is the Euclidean projection of $(\cm B_k^{(t)})_{[S_X]}+\bm U_{\mathscr V}^{(t)}$ onto the operator-norm ball. If
	\[
		(\cm B_k^{(t)})_{[S_X]}+\bm U_{\mathscr V}^{(t)} = \bm P\operatorname{diag}(\sigma_1,\cdots,\sigma_r)\bm Q^{\top}
	\]
	is its singular value decomposition, then
	\[
		\bm V^{(t+1)} = \bm P\operatorname{diag}\bigl(\min\{\sigma_1,\zeta\},\cdots,\min\{\sigma_r,\zeta\}\bigr)\bm Q^{\top}.
	\]

	\noindent\textbf{(IV) $\cm A_{0,k}$-update.}
	The $\cm A_{0,k}$-subproblem is
	\[
		\cm A_{0,k}^{(t+1)} = \argmin_{\boldsymbol{\mathscr A}_{0,k}}\ \frac{\rho}{2}\|\cm A_{0,k}-(\cm F^{(t+1)}-\cm B_k^{(t)}+\cm U^{(t)})\|_{\F}^2 + \frac{\rho}{2}\|\cm A_{0,k}-\mathrm{Fold}_{[S_X]}(\bm Z^{(t+1)}-\bm U_{\mathscr Z}^{(t)})\|_{\F}^2.
	\]
	Since matricization preserves the Frobenius norm, the minimizer is
	\[
		\cm A_{0,k}^{(t+1)} = \frac{\cm F^{(t+1)}-\cm B_k^{(t)}+\cm U^{(t)} + \mathrm{Fold}_{[S_X]}\bigl(\bm Z^{(t+1)}-\bm U_{\mathscr Z}^{(t)}\bigr)}{2}.
	\]

	\noindent\textbf{(V) $\cm B_k$-update.}
	The $\cm B_k$-subproblem is
	\[
		\cm B_k^{(t+1)} = \argmin_{\boldsymbol{\mathscr B}_k}\ \varpi_k\|\cm B_k\|_1 + \frac{\rho}{2}\|\cm B_k-(\cm F^{(t+1)}-\cm A_{0,k}^{(t+1)}+\cm U^{(t)})\|_{\F}^2 + \frac{\rho}{2}\|\cm B_k-\mathrm{Fold}_{[S_X]}(\bm V^{(t+1)}-\bm U_{\mathscr V}^{(t)})\|_{\F}^2.
	\]
	This is the proximal mapping of the $\ell_1$ norm around the averaged center
	\[
		\widetilde{\cm B}_k^{(t+1)} = \frac{\cm F^{(t+1)}-\cm A_{0,k}^{(t+1)}+\cm U^{(t)} + \mathrm{Fold}_{[S_X]}\bigl(\bm V^{(t+1)}-\bm U_{\mathscr V}^{(t)}\bigr)}{2}.
	\]
	Therefore,
	\[
		\cm B_k^{(t+1)} = \operatorname{Soft}_{\varpi_k/(2\rho)}\bigl(\widetilde{\cm B}_k^{(t+1)}\bigr).
	\]

	\noindent\textbf{(VI) Dual updates.}
	The scaled dual variables are updated by
	\[
		\cm U^{(t+1)}=\cm U^{(t)}+\cm F^{(t+1)}-\cm A_{0,k}^{(t+1)}-\cm B_k^{(t+1)},
	\]
	\[
		\bm U_{\mathscr Z}^{(t+1)}=\bm U_{\mathscr Z}^{(t)}+(\cm A_{0,k}^{(t+1)})_{[S_X]}-\bm Z^{(t+1)},
		\qquad
		\bm U_{\mathscr V}^{(t+1)}=\bm U_{\mathscr V}^{(t)}+(\cm B_k^{(t+1)})_{[S_X]}-\bm V^{(t+1)}.
	\]

	\begin{breakablealgorithm}
	\caption{ADMM updates for the single-client estimator}
	\label{alg:admm-single}
	\begin{algorithmic}[1]
	\Require data $\mathcal D_k=\{(\cm X_{k,i},\cm Y_{k,i})\}_{i=1}^{n_k}$; parameters $\lambda_k,\varpi_k,\zeta$; penalty $\rho>0$; tolerances $\varepsilon_{\rm pri},\varepsilon_{\rm dual}>0$.
	\State Initialize $(\cm A_{0,k}^{(0)},\cm B_k^{(0)},\cm F^{(0)},\bm Z^{(0)},\bm V^{(0)},\cm U^{(0)},\bm U_{\mathscr Z}^{(0)},\bm U_{\mathscr V}^{(0)})$.
	\State Form $\bm X_k\gets[\vect(\cm X_{k,1}),\cdots,\vect(\cm X_{k,n_k})]$ and $\bm Y_k\gets[\vect(\cm Y_{k,1}),\cdots,\vect(\cm Y_{k,n_k})]$.
	\For{$t=0,1,2,\cdots$}
	\State $\bm W^{(t)}\gets(\cm A_{0,k}^{(t)}+\cm B_k^{(t)}-\cm U^{(t)})_{[S_0]}$ with $S_0=[m]$.
	\State $\bm F^{(t+1)}\gets\Bigl(\frac{1}{n_k}\bm Y_k\bm X_k^{\top}+\rho\bm W^{(t)}\Bigr)\Bigl(\frac{1}{n_k}\bm X_k\bm X_k^{\top}+\rho\bm I_p\Bigr)^{-1}$.
	\State $\cm F^{(t+1)}\gets\mathrm{Fold}_{[S_0]}(\bm F^{(t+1)})$.
	\State $\bm Z^{(t+1)}\gets\operatorname{SVT}_{\lambda_k/\rho}\bigl((\cm A_{0,k}^{(t)})_{[S_X]}+\bm U_{\mathscr Z}^{(t)}\bigr)$.
	\State Compute the singular value decomposition
	$(\cm B_k^{(t)})_{[S_X]}+\bm U_{\mathscr V}^{(t)}=\bm P^{(t)}\operatorname{diag}(\sigma_1^{(t)},\cdots,\sigma_r^{(t)})(\bm Q^{(t)})^{\top}$, and set
	\[
		\bm V^{(t+1)}
		\gets
		\bm P^{(t)}\operatorname{diag}\bigl(\min\{\sigma_1^{(t)},\zeta\},\cdots,\min\{\sigma_r^{(t)},\zeta\}\bigr)(\bm Q^{(t)})^{\top}.
	\]
	\State $\cm A_{0,k}^{(t+1)}\gets\bigl[\cm F^{(t+1)}-\cm B_k^{(t)}+\cm U^{(t)}+\mathrm{Fold}_{[S_X]}(\bm Z^{(t+1)}-\bm U_{\mathscr Z}^{(t)})\bigr]/2$.
	\State $\widetilde{\cm B}_k^{(t+1)}\gets\bigl[\cm F^{(t+1)}-\cm A_{0,k}^{(t+1)}+\cm U^{(t)}+\mathrm{Fold}_{[S_X]}(\bm V^{(t+1)}-\bm U_{\mathscr V}^{(t)})\bigr]/2$.
	\State $\cm B_k^{(t+1)}\gets\operatorname{Soft}_{\varpi_k/(2\rho)}\bigl(\widetilde{\cm B}_k^{(t+1)}\bigr)$.
	\State $\cm U^{(t+1)}\gets\cm U^{(t)}+\cm F^{(t+1)}-\cm A_{0,k}^{(t+1)}-\cm B_k^{(t+1)}$.
	\State $\bm U_{\mathscr Z}^{(t+1)}\gets\bm U_{\mathscr Z}^{(t)}+(\cm A_{0,k}^{(t+1)})_{[S_X]}-\bm Z^{(t+1)}$.
	\State $\bm U_{\mathscr V}^{(t+1)}\gets\bm U_{\mathscr V}^{(t)}+(\cm B_k^{(t+1)})_{[S_X]}-\bm V^{(t+1)}$.
	\State $\cm R_{\mathscr B}^{(t+1)}\gets\cm F^{(t+1)}-\cm A_{0,k}^{(t+1)}-\cm B_k^{(t+1)}$.
	\State $\cm R_{\mathscr A}^{(t+1)}\gets(\cm A_{0,k}^{(t+1)})_{[S_X]}-\bm Z^{(t+1)}$, \quad $\bm R_{\mathscr V}^{(t+1)}\gets(\cm B_k^{(t+1)})_{[S_X]}-\bm V^{(t+1)}$.
	\State $\cm S_{\mathscr A}^{(t+1)}\gets\rho\{(\cm A_{0,k}^{(t+1)}-\cm A_{0,k}^{(t)})+(\cm B_k^{(t+1)}-\cm B_k^{(t)})\}$.
	\State $\bm S_{\mathscr Z}^{(t+1)}\gets\rho(\bm Z^{(t+1)}-\bm Z^{(t)})$, \quad $\bm S_{\mathscr V}^{(t+1)}\gets\rho(\bm V^{(t+1)}-\bm V^{(t)})$.
	\State \textbf{Stop} if
	\[
		\max\{\|\cm R_{\mathscr B}^{(t+1)}\|_{\F},\|\cm R_{\mathscr A}^{(t+1)}\|_{\F},\|\bm R_{\mathscr V}^{(t+1)}\|_{\F}\}
		\leq\varepsilon_{\rm pri} \ \text{and} \ \max\{\|\cm S_{\mathscr A}^{(t+1)}\|_{\F},\|\bm S_{\mathscr Z}^{(t+1)}\|_{\F},\|\bm S_{\mathscr V}^{(t+1)}\|_{\F}\}\leq\varepsilon_{\rm dual}.
	\]
	\EndFor
	\Ensure $(\widetilde{\cm A}_{0,k},\widetilde{\cm B}_k)\gets(\cm A_{0,k}^{(t)},\cm B_k^{(t)})$.
	\end{algorithmic}
	\end{breakablealgorithm}

	Here $\operatorname{Soft}_{\tau}(\cdot)$ denotes the entrywise soft-thresholding operator,
	\[
		\operatorname{Soft}_{\tau}(x)=\operatorname{sign}(x)(|x|-\tau)_+,
		\qquad
		(\operatorname{Soft}_{\tau}(\bm M))_{ij}=\operatorname{Soft}_{\tau}(M_{ij}),
	\]
	where $(u)_+=\max\{u,0\}$. Moreover, $\operatorname{SVT}_{\tau}(\cdot)$ denotes the singular-value soft-thresholding operator: for a matrix $\bm M$ with singular value decomposition $\bm M=\bm P\operatorname{diag}(\sigma_1,\cdots,\sigma_r)\bm Q^{\top}$,
	\[
		\operatorname{SVT}_{\tau}(\bm M) = \bm P\operatorname{diag}\bigl((\sigma_1-\tau)_+,\cdots,(\sigma_r-\tau)_+\bigr)\bm Q^{\top}.
	\]

\subsection{Tuning Parameter Selection for Single-Client Estimation}\label{sec:tuning-single}
	For the two-stage single-client procedure, we use the same validation-based principle as the federated two-stage procedure to select the tuning parameters.For a fixed client $k$, we randomly split its local dataset $\mathsf D_k$ into a training set $\mathsf D_k^{\mathrm{tr}}$ and a validation set $\mathsf D_k^{\mathrm{val}}$. After fixing the Tucker rank, the number of Stage~I and Stage~II iterations, the retraction method, and the Stage~II stepsize $\eta_{D,k}$, the remaining tuning parameters are the Stage~I stepsize $\eta_{\mathscr A,k}$ and the Stage~II regularization parameter $\omega_k$.

	For each candidate pair $(\eta_{\mathscr A,k},\omega_k)$ on a prescribed grid, all model fitting steps are carried out using only the training set $\mathsf D_k^{\mathrm{tr}}$. We first run the local Stage~I procedure to obtain the low-rank estimator $\widehat{\cm A}_{0,k}^{loc}(\eta_{\mathscr A,k})$, and then run the local Stage~II personalization step to obtain the sparse deviation estimator $\widehat{\cm B}_{k}^{loc}(\widehat{\cm A}_{0,k}^{loc}(\eta_{\mathscr A,k}),\omega_k)$. Each candidate pair is evaluated by the validation squared Frobenius prediction error
	\[
		\mathrm{Error}_{\mathrm{val},k}^{loc} = \frac{1}{|\mathsf D_k^{\mathrm{val}}|}\sum_{(\boldsymbol{\mathscr X}_{k,i},\boldsymbol{\mathscr Y}_{k,i})\in \mathsf D_k^{\mathrm{val}}}\left\|\cm Y_{k,i} - \left\langle\widehat{\cm A}_{0,k}^{loc}(\eta_{\mathscr A,k}) + \widehat{\cm B}_{k}^{loc}\bigl(\widehat{\cm A}_{0,k}^{loc}(\eta_{\mathscr A,k}),\omega_k\bigr),\cm X_{k,i}\right\rangle\right\|_{\F}^2.
	\]
	The selected tuning parameters are the candidate pair that minimizes $\mathrm{Error}_{\mathrm{val},k}^{loc}$ over the prescribed grid.

\linespread{1.54}
\selectfont{}

\setlength{\bibsep}{1pt}
\bibliography{mybib}

@inproceedings{dwork2006calibrating,
  title={{Calibrating Noise to Sensitivity in Private Data Analysis}},
  author={Dwork, Cynthia and McSherry, Frank and Nissim, Kobbi and Smith, Adam},
  booktitle={Theory of Cryptography Conference},
  volume={3876},
  pages={265--284},
  year={2006},
  publisher={Springer}
}

@article{lock2018tensor,
  title={{Tensor-on-tensor Regression}},
  author={Lock, Eric F},
  journal={Journal of Computational and Graphical Statistics},
  volume={27},
  pages={638--647},
  year={2018},
  publisher={Taylor \& Francis}
}

@article{zhou2013tensor,
  title={{Tensor Regression with Applications in Neuroimaging Data Analysis}},
  author={Zhou, Hua and Li, Lexin and Zhu, Hongtu},
  journal={Journal of the American Statistical Association},
  volume={108},
  pages={540--552},
  year={2013},
  publisher={Taylor \& Francis}
}

@article{li2018tucker,
  title={{Tucker Tensor Regression and Neuroimaging Analysis}},
  author={Li, Xiaoshan and Xu, Da and Zhou, Hua and Li, Lexin},
  journal={Statistics in Biosciences},
  volume={10},
  pages={520--545},
  year={2018},
  publisher={Springer}
}

@article{koltchinskii2011nuclear,
  title={{Nuclear-norm Penalization and Optimal Rates for Noisy Low-rank Matrix Completion}},
  author={Vladimir Koltchinskii and Karim Lounici and Alexandre B. Tsybakov},
  journal={The Annals of Statistics},
  volume={39},
  pages={2302--2329},
  year={2011},
  publisher={Institute of Mathematical Statistics}
}

@article{negahban2012restricted,
  title={{Restricted Strong Convexity and Weighted Matrix Completion: Optimal Bounds with Noise}},
  author={Negahban, Sahand and Wainwright, Martin J},
  journal={Journal of Machine Learning Research},
  volume={13},
  pages={1665--1697},
  year={2012},
  publisher={JMLR. org}
}

@article{li2017parsimonious,
  title={{Parsimonious Tensor Response Regression}},
  author={Li, Lexin and Zhang, Xin},
  journal={Journal of the American Statistical Association},
  volume={112},
  pages={1131--1146},
  year={2017},
  publisher={Taylor \& Francis}
}

@book{seber2003linear,
  title={{Linear Regression Analysis}},
  author={Seber, George AF and Lee, Alan J},
  year={2003},
  publisher={John Wiley \& Sons}
}

@article{pfeiffer2021least,
  title={{Least Squares and Maximum Likelihood Estimation of Sufficient Reductions in Regressions with Matrix-valued Predictors}},
  author={Pfeiffer, Ruth M and Kapla, Daniel B and Bura, Efstathia},
  journal={International Journal of Data Science and Analytics},
  volume={11},
  pages={11--26},
  year={2021},
  publisher={Springer}
}

@article{ding2018matrix,
  title={{Matrix Variate Regressions and Envelope Models}},
  author={Ding, Shanshan and Dennis Cook, R},
  journal={Journal of the Royal Statistical Society Series B: Statistical Methodology},
  volume={80},
  pages={387--408},
  year={2018},
  publisher={Oxford University Press}
}

@book{montgomery2021introduction,
  title={{Introduction to Linear Regression Analysis}},
  author={Montgomery, Douglas C and Peck, Elizabeth A and Vining, G Geoffrey},
  year={2021},
  publisher={John Wiley \& Sons}
}

@inproceedings{mu2014square,
  title={{Square Deal: Lower Bounds and Improved Relaxations for Tensor Recovery}},
  author={Mu, Cun and Huang, Bo and Wright, John and Goldfarb, Donald},
  booktitle={International Conference on Machine Learning},
  pages={73--81},
  year={2014},
  organization={PMLR}
}

@article{garvesh2019convex,
title = {{Convex regularization for high-dimensional multiresponse tensor regression}},
author = {Garvesh Raskutti and Ming Yuan and Han Chen},
journal = {The Annals of Statistics},
volume = {47},
pages = {1554--1584},
year = {2019},
publisher = {Institute of Mathematical Statistics}
}

@article{agarwal2012noisy,
  title={{Noisy Matrix Decomposition via Convex Relaxation: Optimal Rates in High Dimensions}},
  author={Agarwal, Alekh and Negahban, Sahand and Wainwright, Martin J},
  journal={The Annals of Statistics},
  pages={1171--1197},
  year={2012},
  publisher={JSTOR}
}

@article{luo2024tensor,
  title={{Tensor-on-Tensor Regression: Riemannian Optimization, Over-Parameterization, Statistical-Computational Gap and Their Interplay}},
  author={Luo, Yuetian and Zhang, Anru R},
  journal={The Annals of Statistics},
  volume={52},
  pages={2583--2612},
  year={2024},
  publisher={Institute of Mathematical Statistics}
}

@article{beck2009fast,
  title={{A Fast Iterative Shrinkage-thresholding Algorithm for Linear Inverse Problems}},
  author={Beck, Amir and Teboulle, Marc},
  journal={SIAM Journal on Imaging Sciences},
  volume={2},
  pages={183--202},
  year={2009},
  publisher={SIAM}
}

@book{wainwright2019high,
  title={{High-dimensional Statistics: A Non-asymptotic Viewpoint}},
  author={Wainwright, Martin J},
  volume={48},
  year={2019},
  publisher={Cambridge University Press}
}

@article{xia2015consistently,
  title = {{Consistently Determining the Number of Factors in Multivariate Volatility Modelling}},
  author={Xia, Qiang and Xu, Wangli and Zhu, Lixing},
  journal = {Statistica Sinica},
  volume={25},
  pages={1025--1044},
  year={2015},
  publisher={JSTOR}
}

@article{dwork2014algorithmic,
  title = {{The Algorithmic Foundations of Differential Privacy}},
  author = {Dwork, Cynthia and Roth, Aaron},
  journal = {Foundations and Trends in Theoretical Computer Science},
  volume = {9},
  pages = {211--487},
  year = {2014},
}

@article{neal2011distributed,
  title={{Distributed Optimization and Statistical Learning via the Alternating Direction Method of Multipliers}},
  author={Neal, Parikh and Eric, Chu and Borja, Peleato and Jonathan, Eckstein},
  journal={Foundations and Trends{\textregistered} in Machine learning},
  volume={3},
  pages={1--122},
  year={2011},
  publisher={Emerald Publishing Limited}
}

@article{negahban2009unified,
  title={{A Unified Framework for High-dimensional Analysis of $M$-estimators with Decomposable Regularizers}},
  author={Negahban, Sahand and Yu, Bin and Wainwright, Martin J and Ravikumar, Pradeep},
  journal={Advances in Neural Information Processing Systems},
  volume={22},
  year={2009}
}

@article{cai2020semisupervised,
  title={{Semisupervised Inference for Explained Variance in High Dimensional Linear Regression and its Applications}},
  author={Tony Cai, T. and Guo, Zijian},
  journal={Journal of the Royal Statistical Society Series B: Statistical Methodology},
  volume={82},
  pages={391--419},
  year={2020}
}

@misc{ucsb_spectral_theorem_notes,
  author={Helena Verrill},
  title={Math 108b: Notes on the Spectral Theorem},
  institution={University of California, Santa Barbara},
  year={2011},
  url={https://web.math.ucsb.edu/~helena/teaching/math108b/handouts/notes_spectral_theorems.pdf}
}

@inproceedings{szarek1982nets,
  title={{Nets of Grassmann Manifold and Orthogonal Group}},
  author={Szarek, Stanislaw J},
  booktitle={Proceedings of Research Workshop on Banach Space theory (Iowa City, Iowa, 1981)},
  volume={169},
  pages={185},
  year={1982},
  organization={University of Iowa Iowa City, IA}
}

@article{zhang2018tensor,
  title={{Tensor SVD: Statistical and Computational Limits}},
  author={Zhang, Anru and Xia, Dong},
  journal={IEEE Transactions on Information Theory},
  volume={64},
  pages={7311--7338},
  year={2018},
  publisher={IEEE}
}

@article{cai2018rate,
  title={{Rate-optimal Perturbation Bounds for Singular Subspaces with Applications to High-dimensional Statistics}},
  author={T. Tony Cai and Anru Zhang},
  journal={The Annals of Statistics},
  volume={46},
  pages={60--89},
  year={2018},
  publisher={Institute of Mathematical Statistics}
}

@book{james2013introduction,
  title={{An Introduction to Statistical Learning: with Applications in R}},
  author={James, Gareth and Witten, Daniela and Hastie, Trevor and Tibshirani, Robert and others},
  volume={103},
  year={2013},
  publisher={Springer}
}

@article{luo2023low,
  title={{Low-Rank Tensor Estimation via Riemannian Gauss-Newton: Statistical Optimality and Second-Order Convergence}},
  author={Luo, Yuetian and Zhang, Anru R},
  journal={Journal of Machine Learning Research},
  volume={24},
  pages={1--48},
  year={2023}
}

@article{cai2013sparse,
  title={{Sparse PCA: Optimal Rates and Adaptive Estimation}},
  author={T. Tony Cai and Zongming Ma and Yihong Wu},
  volume={41},
  journal={The Annals of Statistics},
  publisher={Institute of Mathematical Statistics},
  pages={3074--3110},
  year={2013}
}

@article{yang1999information,
  title={{Information-theoretic Determination of Minimax Rates of Convergence}},
  author={Yang, Yuhong and Barron, Andrew},
  journal={The Annals of Statistics},
  pages={1564--1599},
  year={1999},
  publisher={JSTOR}
}

@article{raskutti2011minimax,
  title={{Minimax Rates of Estimation for High-dimensional Linear Regression over Lq-balls}},
  author={Raskutti, Garvesh and Wainwright, Martin J and Yu, Bin},
  journal={IEEE Transactions on Information Theory},
  volume={57},
  pages={6976--6994},
  year={2011},
  publisher={IEEE}
}

@inproceedings{hou2015hierarchical,
  title={{Hierarchical Tucker Tensor Regression: Application to Brain Imaging Data Analysis}},
  author={Hou, Ming and Chaib-draa, Brahim},
  booktitle={2015 IEEE International Conference on Image Processing (ICIP)},
  pages={1344--1348},
  year={2015},
  publisher={IEEE}
}

@article{lu2026versatile,
  title={{Versatile Differentially Private Learning for General Loss Functions}},
  author={Qilong Lu and Song Xi Chen and Yumou Qiu},
  volume={54},
  journal={The Annals of Statistics},
  pages={692--717},
  year={2026},
  publisher={Institute of Mathematical Statistics}
}

@article{dong2022gaussian,
  title={{Gaussian Differential Privacy}},
  author={Dong, Jinshuo and Roth, Aaron and Su, Weijie J},
  journal={Journal of the Royal Statistical Society Series B: Statistical Methodology},
  volume={84},
  pages={3--37},
  year={2022},
  publisher={Oxford University Press}
}

@inproceedings{mcmahan2017communication,
  title={{Communication-efficient Learning of Deep Networks from Decentralized Data}},
  author={McMahan, Brendan and Moore, Eider and Ramage, Daniel and Hampson, Seth and y Arcas, Blaise Aguera},
  booktitle={Artificial Intelligence and Statistics},
  volume={54},
  pages={1273--1282},
  year={2017},
  organization={PMLR}
}

@inproceedings{konecny2015federated,
  title={{Federated Optimization: Distributed Optimization Beyond the Datacenter}},
  author={Kone{\v{c}}n{\'y}, Jakub and McMahan, H. Brendan and Ramage, Daniel},
  booktitle={NIPS Workshop on Optimization for Machine Learning},
  year={2015}
}

@article{kairouz2021advances,
  title={{Advances and Open Problems in Federated Learning}},
  author={Kairouz, Peter and McMahan, H Brendan},
  journal={Foundations and Trends in Machine Learning},
  volume={14},
  pages={1--210},
  year={2021},
  publisher={Emerald Publishing Limited}
}

@article{kaissis2020secure,
  title = {{Secure, Privacy-preserving and Federated Machine Learning in Medical Imaging}},
  author = {Kaissis, Georgios A and Makowski, Marcus R and R{\"u}ckert, Daniel and Braren, Rickmer F},
  journal = {Nature Machine Intelligence},
  volume = {2},
  pages = {305--311},
  year = {2020},
  publisher = {Nature Publishing Group UK London}
}

@inproceedings{abadi2016deep,
  title={{Deep Learning with Differential Privacy}},
  author={Abadi, Martin and Chu, Andy and Goodfellow, Ian and McMahan, H Brendan and Mironov, Ilya and Talwar, Kunal and Zhang, Li},
  booktitle={Proceedings of the 2016 ACM SIGSAC Conference on Computer and Communications Security},
  pages={308--318},
  year={2016}
}

@article{geyer2017differentially,
  title={{Differentially Private Federated Learning: A Client Level Perspective}},
  author={Geyer, Robin C and Klein, Tassilo and Nabi, Moin},
  journal={arXiv preprint arXiv:1712.07557},
  year={2017}
}

@article{yamashita2019harmonization,
  title={{Harmonization of Resting-state Functional MRI Data across Multiple Imaging Sites via the Separation of Site Differences into Sampling Bias and Measurement Bias}},
  author={Yamashita, Ayumu and Yahata, Noriaki and Itahashi, Takashi and Lisi, Giuseppe and Yamada, Takashi and Ichikawa, Naho and Takamura, Masahiro and Yoshihara, Yujiro and Kunimatsu, Akira and Okada, Naohiro and others},
  journal={PLoS Biology},
  volume={17},
  pages={e3000042},
  year={2019},
  publisher={Public Library of Science San Francisco, CA USA}
}

@article{fortin2018harmonization,
  title={{Harmonization of Cortical Thickness Measurements across Scanners and Sites}},
  author={Fortin, Jean-Philippe and Cullen, Nicholas C. and Sheline, Yvette I. and Taylor, Warren D. and Aselcioglu, Irem and Cook, Philip A. and Adams, Phil and Cooper, Crystal and Fava, Maurizio and McGrath, Patrick J. and McInnis, Melvin G. and Phillips, Mary L. and Trivedi, Madhukar H. and Weissman, Myrna M. and Shinohara, Russell T.},
  journal={NeuroImage},
  volume={167},
  pages={104--120},
  year={2018},
  publisher={Elsevier}
}

@article{smith2017federated,
  title={{Federated Multi-task Learning}},
  author={Smith, Virginia and Chiang, Chao-Kai and Sanjabi, Maziar and Talwalkar, Ameet S},
  journal={Advances in Neural Information Processing Systems},
  volume={30},
  year={2017}
}

@article{li2020federated,
  title={{Federated Optimization in Heterogeneous Networks}},
  author={Li, Tian and Sahu, Anit Kumar and Zaheer, Manzil and Sanjabi, Maziar and Talwalkar, Ameet and Smith, Virginia},
  journal={Proceedings of Machine Learning and Systems},
  volume={2},
  pages={429--450},
  year={2020}
}

@article{konyar2024federated,
  title={{Federated Generalized Scalar-on-Tensor Regression}},
  author={Konyar, Elif and Reisi Gahrooei, Mostafa},
  journal={Journal of Quality Technology},
  volume={56},
  pages={20--37},
  year={2024},
  publisher={Taylor \& Francis}
}

@article{zhang2024federated,
  title={{Federated Multiple Tensor-on-Tensor Regression (fedmtot) for Multimodal Data under Data-sharing Constraints}},
  author={Zhang, Zihan and Mou, Shancong and Reisi Gahrooei, Mostafa and Pacella, Massimo and Shi, Jianjun},
  journal={Technometrics},
  volume={66},
  pages={548--560},
  year={2024},
  publisher={Taylor \& Francis}
}

@article{guhaniyogi2017bayesian,
  title={{Bayesian Tensor Regression}},
  author={Guhaniyogi, Rajarshi and Qamar, Shaan and Dunson, David B.},
  journal={Journal of Machine Learning Research},
  volume={18},
  pages={1--31},
  year={2017}
}

@article{sun2017store,
  title={{STORE: Sparse Tensor Response Regression and Neuroimaging Analysis}},
  author={Sun, Will Wei and Li, Lexin},
  journal={Journal of Machine Learning Research},
  volume={18},
  pages={1--37},
  year={2017}
}

@inproceedings{yao1982protocols,
  title={{Protocols for Secure Computations}},
  author={Yao, Andrew C.},
  booktitle={23rd Annual Symposium on Foundations of Computer Science},
  year={1982},
  organization={IEEE}
}

@inproceedings{gentry2009fully,
  title={{Fully Homomorphic Encryption Using Ideal Lattices}},
  author={Gentry, Craig},
  booktitle={Proceedings of the 41st Annual ACM Symposium on Theory of Computing},
  year={2009},
  publisher={ACM}
}

@inproceedings{bonawitz2017practical,
  title={{Practical Secure Aggregation for Privacy-Preserving Machine Learning}},
  author={Bonawitz, Keith and Ivanov, Vladimir and Kreuter, Ben and Marcedone, Antonio and McMahan, H. Brendan and Patel, Sarvar and Ramage, Daniel and Segal, Aaron and Seth, Karn},
  booktitle={Proceedings of the 2017 ACM SIGSAC Conference on Computer and Communications Security},
  year={2017},
  publisher={ACM}
}

@article{castellanos1996quantitative,
  title={{Quantitative Brain Magnetic Resonance Imaging in Attention-Deficit Hyperactivity Disorder}},
  author={Castellanos, F. Xavier and Giedd, Jay N. and Marsh, Wendy L. and Hamburger, Susan D. and Vaituzis, A. Catherine and Dickstein, Daniel P. and Sarfatti, S. E. and Vauss, Y. C. and Snell, J. W. and Lange, Nicholas and Kaysen, D. and Krain, A. L. and Ritchie, G. F. and Rajapakse, J. C. and Rapoport, Judith L.},
  journal={Archives of General Psychiatry},
  volume={53},
  pages={607--616},
  year={1996}
}

@article{berquin1998cerebellum,
  title={{Cerebellum in Attention-Deficit Hyperactivity Disorder: A Morphometric MRI Study}},
  author={Berquin, Patrick C. and Giedd, Jay N. and Jacobsen, Leslie K. and Hamburger, Susan D. and Krain, A. L. and Rapoport, Judith L. and Castellanos, F. Xavier},
  journal={Neurology},
  volume={50},
  pages={1087--1093},
  year={1998}
}

@article{kobel2010temporal,
  title={{Structural and Functional Imaging Approaches in Attention Deficit/Hyperactivity Disorder: Does the Temporal Lobe Play a Key Role?}},
  author={Kobel, Maja and Bechtel, Nina and Specht, Karsten and Klarh{\"o}fer, Markus and Weber, Peter and Scheffler, Klaus and Opwis, Klaus and Penner, Iris-Katharina},
  journal={Psychiatry Research: Neuroimaging},
  volume={183},
  pages={230--236},
  year={2010}
}

\end{document}